\documentclass[11pt, oneside]{article}   	
\usepackage{amsmath,amssymb,amsthm,color,url,framed}
\usepackage{amscd}
\usepackage{enumerate}
\usepackage{comment}
\usepackage[margin=2cm]{geometry}                		
\usepackage{graphicx}				
\usepackage{tikz-cd}								
\usepackage[color=yellow]{todonotes}
\usepackage{hyperref}
\hypersetup{
colorlinks=true,
}
\def\contract{\makebox[1.2em][c]{\mbox{\rule{.6em}
{.01truein}\rule{.01truein}{.6em}}}}
\usepackage{makeidx}
\usepackage{fancyhdr}
\numberwithin{equation}{section}
\theoremstyle{plain}
\newtheorem{theorem}{Theorem}[section]     
\newtheorem{corollary}[theorem]{Corollary}             
\newtheorem{lemma}[theorem]{Lemma}              
\newtheorem{proposition}[theorem]{Proposition}

\theoremstyle{definition}
\newtheorem{definition}[theorem]{Definition}             
\theoremstyle{remark}
\newtheorem{remark}[theorem]{Remark}              
\newtheorem{example}{Example}                    

\DeclareMathOperator{\Div}{div}

\def\dt{\textnormal{dt}}

\def\diff{\textnormal{d}}

\def\bx{{\bf x}}

\def\bm{{\bf m}}

\def\p{{\partial}}

\def\RR{{\bf R}}
\def\bu{\mathbf{u}}

\def\U{{\bf u}}

\def\u{{\bf{u}}}

\def\x{{\bf{x}}}

\def\y{{\bf{y}}}

\def\div{{\textrm{div}}}
\def\curl{{\textrm{curl}}}

\def\rmd{{\mathrm{d}}}

\newcommand{\scp}[2]{{\big\langle {#1}\, , \, {#2}\big\rangle}}
\newcommand{\Scp}[2]{{\Big\langle {#1}\, , \, {#2}\Big\rangle}}
\newcommand{\SCP}[2]{{\left\langle {#1}\, , \, {#2}\right\rangle}_{L^2}}

\newcommand{\dede}[2]{\frac{\delta {#1}}{\delta {#2}} }

\newcommand{\wh}[1]{\widehat{#1}}

\newcommand{\mb}[1]{\mbox{\boldmath{$#1$}}}

\newcommand{\bs}[1]{\boldsymbol{#1}}
\newcommand{\mbs}[1]{\boldsymbol{#1}}

\newcommand{\mbf}[1]{\mathbf{#1}}
\newcommand{\mc}[1]{\mathcal{#1}}

\newcommand{\E}[1]{\mathbb{E}\left[{#1}\right]}

\usepackage[scr=boondoxo, scrscaled=1]{mathalfa}

\newenvironment{answerwide}[1][Answer]
{\bigskip \noindent \textbf{#1.} }
{\mbox{ } \hfill{$\blacktriangle$} }

\begin{document}

\title{\textbf{The Geometry of Stochastic Fluid Dynamics}}
\author{Darryl D. Holm
\thanks{Department of Mathematics, Imperial College London}  
\footnotesize
d.holm@ic.ac.uk
}
\date{}

\maketitle

\begin{abstract}
Stochastic geometric mechanics (SGM) is known for its potential utility in quantifying uncertainty in  global climate modelling of the Earth's ocean and atmosphere while also preserving the fundamental advective transport properties of ideal fluid flow. This paper is a pedagogical review  of the recent developments of the mathematical framework of stochastic geometric mechanics obtained from Lie group-invariant stochastic variational principles  in the context of model building for upper ocean dynamics, 
\end{abstract}
\vspace{-2mm}

The paper is divided into the following five parts. 
\begin{enumerate}[(I)]
\item
The first part discusses the origins of geometric mechanics applications in deterministic fluid dynamics based on Poincar\'e's two-page paper \cite{poincare1901forme} and Noether's theorem \cite{noether1918invariante}. These properties underlie the efficacy of geometric mechanics on the manifold of diffeomorphisms in applications to ideal fluid dynamics.

\item
The second part focuses on the example of the deterministic 3D Euler Boussinesq (EB) equations. The 3D EB equations are 
the source of several well-known approximate models applied in geophysical fluid dynamics (GFD). 
See Figure \ref{fig:ErwinTree} for an idea 
of the genealogy of approximate GFD models descending from the 3D EB equations. 
The entire genealogy of approximations of the 3D EB equations in Figure \ref{fig:ErwinTree} may be derived by making a sequence of approximations 
of its Lie group invariant Lagrangian in Hamilton's variational principle. This second part provides the differential geometric background for understanding the fundamental properties of the equations discussed in rest of the parts.

\item
The third part adds stochastic transport to the 3D Euler Boussinesq (EB) and derives its SALT equations. 
SALT is the abbreviation of Stochastic Advection by Lie Transport derived from  
stochastic geometric mechanics (SGM) based on stochastic variational principles. SGM is known for its potential utility in quantifying uncertainty for global climate modelling of the Earth's ocean and atmosphere interactions, while also preserving the fundamental advective transport properties of ideal fluid flow. After an introduction to the origins of SALT transport noise, this part describes the mathematical development of the framework of stochastic geometric mechanics in the context of fluid flow and wave dynamics obtained from Lie group-invariant variational principles.

\item
The fourth part focuses on Lagrangian Averaged Stochastic Lie Transport, abbreviated as LA-SALT.
In SALT, atmospheric `weather' produces uncertainty in advection arising
from motion on unresolved time scales. In LA-SALT, atmospheric `climate' is taken as the
expectation, and the atmospheric `weather' is treated as a field of pathwise fluctuations, 
as discussed in Ed Lorenz's famous lecture \cite{Lorenz1995}. 

\item
The fifth part focuses on SALT and LA-SALT to create stochastic Ocean--Atmosphere Models, abbreviated as SOAM. 
The SOAM approach brings us back to Hasselmann's paradigm, which decomposes a general climate model into deterministic 
and stochastic parts \cite{hasselmann1976stochastic}. Recent numerics with this model \cite{sharma2026structure} verifies 
the red-shift in the ocean energy predicted by  \cite{hasselmann1976stochastic}.
\end{enumerate}

\tableofcontents

\part{The Geometry of Deterministic Fluid Dynamics}
\section{Introduction} 
\subsection{The origins of geometric mechanics for ideal fluid dynamics}

Euler's classic theory of ideal fluid dynamics represents the motion of all the fluid particles in a container as a time-dependent curve on the manifold of volume-preserving smooth invertible maps, now called diffeomorphisms. Moreover, this curve describing the sequential actions of the volume-preserving diffeomorphisms on the fluid domain is a special optimal curve that distills the fluid motion into a single statement. Namely, ``A fluid moves to get out of its own way.'' Put more mathematically, an Euler fluid flow occurs along a time-dependent curve in the manifold of volume-preserving diffeomorphisms which is a geodesic with respect to the metric on its tangent space supplied by its kinetic energy, as noted by \cite{arnold1966geometrie}.


Thus, the problem of determining the kinematics of an incompressible fluid was transformed by \cite{arnold1966geometrie} into a geometric mechanics problem which can be expressed mathematically as a variational principle, defined by $\delta S=0$ where $S:=\int_0^T\ell(u)\,dt$, in which the Lagrangian $\ell(u):=\tfrac12\|u\|_{L^2}^2$ is equal to the kinetic energy of the fluid. Variational principles are not restricted to the kinematic motion of ideal fluids, though. If the forces arise from a potential, then variational principles can also be extended to derive the equations of compressible fluids with advected quantities. The Lagrangian is the key to extending fluid models beyond the kinematic case. The Lagrangian needed to include the potential energy and thermodynamics of advected quantities is a functional whose stationary variations determine reversible dynamical transformations between kinetic and potential energy \cite{HMR1998}. 

Fluid dynamics may be represented in two equivalent ways. The first way is the Lagrangian representation, in which fluid parcels carry labels $l^A(x_t) = l^A(g_t x_0)$ as an array of scalar functions moving along Lagrangian paths in fixed space $x_t=g_t x_0$ generated by a smooth invertible map $g_t$ parametrised by time $t$. In a flow domain  ${\cal D}$ the action of the flow is given by $G\times{\cal D}\to {\cal D}$ where $G$ is a smooth diffeomorphism and $g_t\in G$ satisfies the flow criterion $g_tg_s = g_{t+s}$, in which composition of smooth functions is written as concatenation $g_t\circ g_s=g_tg_s$.  The second way is the Eulerian representation, in which the labels of the fluid parcels  $l^A(x_t)=l^A(x_0) g_t^{-1}=: g_{t*} l^A(x_0)$ satisfy the following partial differential equation 
\index{Lagrangian representation!fluid parcels} \index{fluid parcels!Lagrangian representation}
\index{fluid parcels!Lagrangian paths} \index{Lagrangian representation!fluid parcel labels} 
\index{Lagrangian representation!smooth invertible flow maps}
\index{Eulerian representation!fixed spatial coordinates} \index{Eulerian representation!right invariant velocity} 
\begin{align}
\begin{split}
\p_t l^A(x_t) &= \p_t (g_{t*} l^A(x_0)) = \p_t \big(l^A(x_0) g_t^{-1}\big) 
= -  \, l^A(x_0) g_t^{-1}\dot{g}_t  g_t^{-1}  
\\& = -\, l^A(x_t) \dot{g}_t  g_t^{-1}
=: - {\cal L}_{ \dot{g}_t  g_t^{-1} } l^A(x_t) =: - {\cal L}_{u } l^A(x_t) 
\,.\end{split}
\label{push-forward}
\end{align}
\index{Lie derivative}\index{Lie derivative!chain rule}
In this chain rule calculation, $\p_t (g_{t*} l^A(x_0)) = - \,g_{t*} ( {\cal L}_{ \dot{g}_t  g_t^{-1} } l^A(x_0) ) $ defines  ${\cal L}_{ \dot{g}_t  g_t^{-1} }$ as the Lie derivative with respect to the velocity vector field $u := \dot{g}_t  g_t^{-1}$ \cite{holm1998euler,cotter2013noether}. For Eulerian advected scalar quantities such as the components of the fluid labels $l^A$, equation \eqref{push-forward} is simply $\p_t l^A(x,t) = - \, \bu\cdot\nabla l^A(x,t)$. 

Note that the fluid transport velocity vector field $u := \dot{g}_t  g_t^{-1} $ is invariant under the right action of an arbitrary smooth constant invertible map, since $ \dot{g}_t h (g_t h)^{-1}= \dot{g}_t  g_t^{-1}$. This right action transforms the Lagrangian path formula as $x_t =g_t h (x_0)$, which simply relabels the starting points of the Lagrangian paths as $x_0\to h (x_0)$ and leaves Eulerian fluid variables invariant.

As we will see later, this particle relabelling symmetry of the Eulerian fluid representation induces an infinite family of conserved functionals defined on closed loops of fluid material moving with the flow. These functionals defined on material loop space are known as fluid circulations and they are determined from the Kelvin-Noether theorem in geometric mechanics, as shown, e.g., in \cite{holm1998euler,cotter2013noether}. In fact, a fluid could be defined as a dynamical system whose equation of motion may be formulated in terms of a Kelvin--Noether circulation theorem.

At this point, we have seen that the Eulerian transport velocity $u := \dot{g}_t  g_t^{-1} $ in the fluid Lagrangian $\ell(u)$ is right invariant under the action of smooth invertible maps which form a Lie group via their flow condition, $g_t\circ g_s=g_tg_s$. If the Lagrangian is invariant under the action of a Lie group, then a classical theorem by Noether \cite{noether1918invariante} determines a quantity known as the momentum map \index{momentum map} corresponding to that Lie symmetry. The state space of the fluid can now be split into two sets: the momentum map and the advected quantities (such as the mass density) which are constant along Lagrangian particle paths. Newton's law in terms of the momentum map for fluid dynamics then becomes, ``The time rate of change of the momentum map equals the sum over forces expressed in terms of the advected quantities.'' Likewise, the ratio of the momentum map and the mass density turns out to be the integrand in the Kelvin--Noether theorem on the space of material loops. This sort of equation derived from a variational principle with a continuous group symmetry is called an Euler--Poincar\'e equation, after Poincar\'e's two-page paper \cite{poincare1901forme}. The origins of geometric mechanics and its applications in fluid dynamics are based on Poincar\'e's two-page paper \cite{poincare1901forme} and Noether's theorem \cite{noether1918invariante}. 
\index{Noether's theorem!momentum map}
\index{Noether's theorem!Kelvin-Noether theorem} 
\index{Geometric Mechanics!Kelvin-Noether theorem}

Geometric mechanics is particularly useful in the context of geophysical fluid dynamics (GFD). This is because GFD applies at planetary scales such as Earth's ocean and atmosphere. Motion at planetary scales involves large masses whose fluid motions are essentially unaffected by viscosity. Hence, provided that the GFD models are energetically closed, Hamilton's principle is available for deriving fluid models whose underlying geometry helps to understand their conserved quantities and can be used as a guide for numerical discretisation. However, formulating appropriate GFD models for atmospheric or oceanic processes remains a challenging problem, since one must deal with vast ranges of spatial and temporal scales whose associated dynamical interactions involve a large number of disparate and often unknown processes. Such a lack of information can be interpreted as a representation error. Representation errors are not the only source of modelling errors, though. Together with numerical errors and observation errors, many  existing physical parameterisations such as the air-sea interaction dynamics contain uncontrolled approximations. The many sources of uncertainty arising from incomplete information 
motivate the usage of stochastic parametrisations. However, these stochastic parametrisations should preserve the fundamental geometric structure that underlies the GFD models, particularly the Kelvin-Noether circulation theorem. Stochastic parametrisations that preserve this geometric structure can be achieved via stochastic variational principles.

\subsection{The momentum map in Noether's theorem} \index{momentum map!Noether's theorem}
\subsection*{Key points}
\begin{center}
    $\bullet\,$ Lie symmetries $\qquad$ 
    $\bullet\,$ Noether's theorem $\qquad$
    $\bullet\,$ Momentum map $\qquad$
    $\bullet\,$ Reduction by symmetry  \index{momentum map!reduction by symmetry}
\end{center}
Geometric mechanics deals with group-invariant variational principles. Its origin goes back to the early 1900s, when Poincar\'e used Hamilton's variational principle to show in a two-page paper \cite{poincare1901forme} that when a Lie algebra acts locally transitively on the configuration space $M$ of a Lagrangian mechanical system, then the Euler-Lagrange equations are equivalent to a new system of differential equations defined via the action of the Lie algebra on the configuration space. These equations are now known as the Euler-Poincar\'e equations and they play a crucial role in fluid dynamics. The original work of Poincar\'e is presented in modern language in the paper \cite{marle2013henri}. 

Consider a Lagrangian $L(q,v)$ defined for a configuration manifold $M$ on its tangent bundle $TM$ with coordinates $(q,v)\in TM$ as  $L:TM\to \mathbb{R}$ in Hamilton's action integral $S=\int L(q,v)dt$ for Hamilton's variational principle $\delta S = 0$. Noether's theorem \cite{noether1918invariante} states that each continuous symmetry of such as Lagrangian implies a conserved quantity for the corresponding Euler-Lagrange equations. The proof of Noether's theorem is a straightforward application of Hamilton's principle:
\begin{align}
\begin{split}
0 = \delta S &= \delta \int_0^T L(q,v) + \scp{p}{\frac{dq}{dt}-v}_{T^*M\times TM}\,dt 
\\&= \int_0^T \scp{\frac{\delta L}{\delta q}-\frac{dp}{dt}}{\delta q} + \scp{ \delta p }{ \frac{dq}{dt} - v } + \scp{ \frac{\delta L}{\delta v} - p } {\delta v} \,dt  + \scp{ p }{\delta q }\Big|_0^T
\end{split}
\label{Noether-proof}
\end{align}
Note that the definitions $p=\delta L / \delta v$ and $v=\frac{dq}{dt}$ arise as variational constraints imposed by the pairing $\scp{}{}_{T^*M\times TM}$. Suppose the Euler--Lagrange equations hold for a Lagrangian $L(q,\frac{dq}{dt})$ in the action integral $S$ that is invariant under the infinitesimal push-foward transformation $\delta q := \frac{d \phi_{*,\epsilon} q(t)}{d\epsilon}\big|_{\epsilon=0} =: -  {\cal L}_\xi q$ which defines Lie derivative ${\cal L}_\xi $.%
\footnote{The transformation $\phi_\epsilon: G\times M\to M$ is taken to be an action on $M$ of a Lie group $G$ with parameter $\epsilon=0$ at the identity. Consequently, the quantity $\xi$ is an arbitrary element of the Lie algebra $\xi\in T_eG=:\mathfrak{g}$.} 
Then the following Noether endpoint quantity must vanish for every solution of the Euler--Lagrange equations,
\begin{align}
\scp{ p }{\delta q }_{T^*M\times TM}\big|_0^T = \scp{ p }{-  {\cal L}_\xi q}_{T^*M\times TM}\big|_0^T 
\,.\label{Noether-quantity}
\end{align}
Hence,  the quantity 
\begin{align}
\scp{ p }{-  {\cal L}_\xi q}_{T^*M\times TM}=:\scp{p\diamond q}{\xi}_{\mathfrak{g}^*\times\mathfrak{g}}
\,.\label{momap}
\end{align}
is a constant of motion for these Euler--Lagrange equations.
Moreover, the quantity $p\diamond q\in\mathfrak{g}^*$ takes 
values in the dual $\mathfrak{g}^*$ of the Lie algebra symmetry algebra $\mathfrak{g}$ with respect to the pairing 
$\scp{}{}_{\mathfrak{g}^*\times\mathfrak{g}}$ introduced by the \emph{diamond operation} $\diamond: T^*M\to \mathfrak{g}^*$. 
\index{diamond operation $(\diamond)$!momentum map} \index{diamond operation $(\diamond)$!broken symmetry force}  
\index{momentum map}

The quantity $J(p,q):=p\diamond q \in \mathfrak{g}^*$ in \eqref{momap} that transforms variables from the cotangent bundle $T^*M$ to the dual $\mathfrak{g}^*$ of the Lie algebra $\mathfrak{g}$ associated with the Lie symmetry group $G$ is known as a \emph{momentum map}. Historically, the momentum map was introduced by Sophus Lie himself, as explained in \cite{weinstein1983sophus} who refers to \cite{lie1890theorie}, page 237. It has been further developed at various levels of generality by \cite{kirillov1962unitary}, \cite{kostant1970quantization}, \cite{souriau1970structure} and \cite{smale1970topologya, smale1970topologyb}. The momentum map is now a central object in geometric mechanics.  \index{momentum map}

A caveat should be mentioned: In general, the configuration manifold $M$ is not a Lie group. However, as discussed by Poincar\'e in  \cite{poincare1901forme}, when a Lie group $G$ acts transitively on a configuration manifold, $M$, the proof of Noether's theorem induces a cotangent-lift momentum map $J: T^*M\to\mathfrak{g}^*$. 
The cotangent lift momentum map $J: T^*M\to\mathfrak{g}^*$ is ${\rm Ad}^*$-equivariant and Poisson, even if $G$ is not a Lie symmetry of the Lagrangian in Hamilton's principle. Momentum maps naturally lead from the Lagrangian side of the dynamics to the Hamiltonian side. The Hamiltonian dynamics on $T^*M$ involves symplectic transformations. However, as we shall discuss below, for the class of Hamiltonians which can be defined as $H\circ J: \mathfrak{g}^Chap2*\to \mathbb{R}$, the momentum map induces Euler-Poincar\'e motion on the Lagrangian side and Lie-Poisson motion on the Hamiltonian side. The momentum map connects the Hamiltonian reduction techniques of  \cite{marsden1974reduction} with the Lagrangian reduction techniques of \cite{holm1998euler}. To illustrate this equivalence, we consider the situation in which the configuration manifold, $M$, is a Lie group, $G$.

One may reconstruct the solution on $G$ from its representation on $T^*G\setminus G\simeq\mathfrak{g}^*$. In that case, solving the equations describing the evolution of the momentum map on the dual Lie algebra $\mathfrak{g}^*$ is equivalent to solving the equations on the cotangent bundle $T^*G$ when the configuration manifold is $G$. When the Lie group $G$ acts transitively, freely and properly on the configuration manifold $M$, then one may reconstruct the solution on $M$ from its representation on $T^*G\setminus G\simeq\mathfrak{g}^*$. The last statement is proved for finite-dimensional Lie groups $G$ in, e.g., \cite{abraham1978foundations}. Provided the Lagrangian or Hamiltonian is hyperregular (invertible), the Legendre transform is a diffeomorphism. The Euler-Poincar\'e reduction procedure can then be expressed in terms of the cube of linked commutative diagrams shown in Figure \ref{fig:GM-Cube1}. 
\begin{figure}[h!]
\centerline{\includegraphics[width=6in]{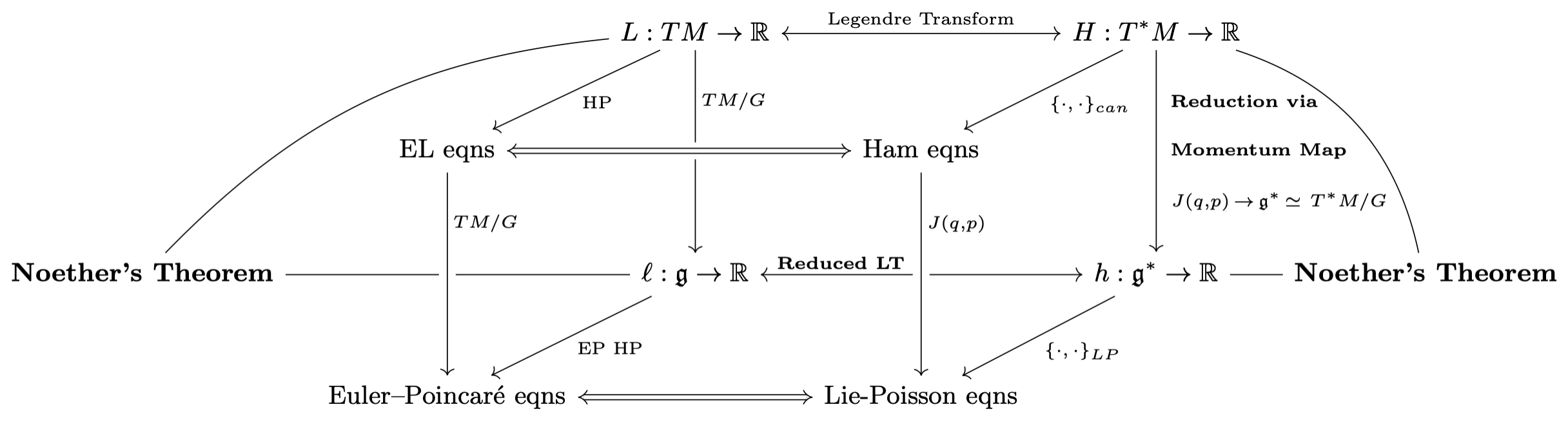} }
\caption{The relations among the fundamental concepts and transformations in 
geometric mechanics can be envisioned as a cube, on which each face is a commuting diagram.
Euler-Poincar\'e reduction (on the left side) and Lie-Poisson reduction (on the right side) are both indicated by the arrows pointing down,
provided the Legendre transformation \index{Legendre transformation} and reduced Legendre transformation are both invertible.
}\label{fig:GM-Cube1} 
\end{figure}\vspace{-3mm}

The notation in Figure \ref{fig:GM-Cube1} is as follows: $M$ denotes the configuration manifold which is assumed to be isomorphic to a Lie group, $G$; $TG$ is the tangent bundle; $T^*G$ is the cotangent bundle; $TG\setminus G \simeq \mathfrak{g}$ is the Lie algebra: and $T^*G\setminus G\simeq \mathfrak{g}^*$ is the dual of the Lie algebra. The Lagrangian is a functional $L:TG\to\mathbb{R}$ and the Hamiltonian is a functional $H:T^*G\to\mathbb{R}$. Euler-Poincar\'e reduction takes advantage of Lie group symmetries to transform the Lagrangian and Hamiltonian into group-invariant variables, which leads to a reduced Lagrangian $\ell:\mathfrak{g}\to\mathbb{R}$ and a reduced Hamiltonian $\hslash:\mathfrak{g}^*\to\mathbb{R}$. The diagram comprising the face of the cube involving these functionals in Figure \ref{fig:GM-Cube1} commutes if the Legendre transform is a diffeomorphism. This is guaranteed if the Lagrangian or Hamiltonian is hyperregular. The Euler-Lagrange equations and Hamilton's equations are related via an invertible change of variables, which also holds for the Euler-Poincar\'e equations and the Lie-Poisson equations. The invertibility of change of variables follows from the Legendre transform being a diffeomorphism. Many finite dimensional mechanical systems may be described naturally in this framework. The classic example is the rotating rigid body, discussed from the viewpoint of symmetry reduction by Poincar\'e in \cite{poincare1901forme}. In his 1901 paper, Poincar\'e also raised the issue of \emph{symmetry breaking}, by introducing the vertical acceleration of gravity, which breaks the  $SO(3)$  symmetry for free rotation and restricts it to  $SO(2)$ for rotations about the vertical axis. \index{symmetry breaking} 


Figure \ref{fig:GM-Cube1} describes geometric mechanics in the finite dimensional setting without symmetry breaking, but it cannot describe the infinite dimensional setting due to presence of the mass density, which breaks symmetry. We will show this in the next section, where we investigate the extension of Figure \ref{fig:GM-Cube1} to the setting of broken symmetries and stochasticity. 

Stochasticity can be included in the framework of Euler-Poincar\'e reduction by symmetry. The first attempt to include noise consistently in finite-dimensional symplectic Hamiltonian mechanics was by \cite{bismut1982mecanique} and reduction by symmetry of stochastic systems was studied by \cite{lazaro2008stochastic}. In the present work, we will review Euler-Poincar\'e reduction of stochastic infinite dimensional variational systems with symmetry breaking. 

The space of invertible maps with $H^s(M)$ smoothness, denoted $\mathfrak{D}^s$, is the configuration space for continuum mechanics and each map $\phi\in\mathfrak{D}^s$ is called a configuration. A fluid trajectory starting from $x_0\in M$ at time $t=0$ is given by $x(t):=\phi_t(x_0):=\phi(x_0,t)$, with $\mathfrak{D}^s\ni \phi:M\times\mathbb{R}^+\to M$ a continuous one-parameter subgroup of $\mathfrak{D}^s$. In continuum dynamics, one assumes the trajectory maps satisfy the \emph{flow property} under composition of maps. Namely, one assumes that $\phi_{t_1} \circ \phi_{t_2} = \phi_{t_1+ t_2}$, where the symbol $\circ$ denotes composition of functions.  Computing the time derivative of this one-parameter subgroup yields the \emph{reconstruction equation}, given by
\begin{equation}
\frac{\partial}{\partial t}\phi_t(x_0) = u(\phi_t(x_0),t).
\label{eq:reconstructiondeterministic1}
\end{equation}
The deterministic reconstruction equation \eqref{eq:reconstructiondeterministic1} defines Eulerian velocity in terms of Lagrangian velocity.

The infinite dimensional case is interesting because it is the natural setting for fluid dynamics, quantum mechanics and elasticity. We will discuss the infinite dimensional case in context of geophysical fluid dynamics, where symmetry under the smooth invertible maps of the flow domain is broken by the spatial dependence of the initial mass density.

\begin{definition}[Advected quantity]
A fluid variable is said to be \emph{advected}, if it keeps its value along Lagrangian particle trajectories. Advected quantities are sometimes called \emph{tracers}, because the evolution histories of scalar advected quantities with different initial values (labels) trace out the Lagrangian particle trajectories of each label, or initial value, via the \emph{push-forward} of the full diffeomorphism group, i.e., $a_t=\phi_{t\,*}a_0= (\phi_t^{-1})^*(a_0)$, where $\phi_t$ is a curve parametrised by $t$ on the manifold of diffeomorphisms that represents the fluid flow.
\end{definition}\smallskip

\begin{remark}[Advected quantities break symmetries] \index{advected quantities}
When several advected quantities are involved, the space $V^*$ is the direct sum of several vector spaces, in which each summand space hosts a different advected quantity. In general, each additional advected quantity decreases the dimension of the isotropy subgroup of the diffeomorphisms. For example, consider an ideal deterministic fluid with a scalar buoyancy variable, $b$. In this case, the Lagrangian corresponding to the model will depend on the mass form $\rho\,d\mu = (\phi^{-1})^*(d\mu)=\phi_*(d\mu)$%
\footnote{We denote the operations pull-back by a smooth invertible map $\phi$ as $\phi^*$ and its push-forward by $\phi_*$.}
 and on the buoyancy $b$ in a parametric manner. This Lagrangian will be right-invariant under the action of the isotropy subgroup $\mathfrak{D}^s_{\mu,b} = \{\phi\in\mathfrak{D}^s|\, \phi_*(d\mu)=d\mu \text{ and } \phi_*b=b\}$. Hence, each additional advected quantity breaks more symmetry. For brevity in notation, one usually writes $\mathfrak{D}^s_{a_0}$ for the isotropy subgroup, no matter how many advected quantities are present. One then writes $a_t=\phi_{t\,*}a_0$ to represent the push-forward by the flow map $\phi_t$ of all advected quantities and $a_0$ to denote the initial values of the advected quantities. The corresponding vector space for the $a_0$ is the coset denoted $a_0\in \mathfrak{D}^s/\mathfrak{D}^s_{a_0}\in V^*$.
\end{remark}

\begin{remark}[Coadjoint action and the diamond operator]
The coadjoint action is an important operator in geometric mechanics and representation theory. It was shown by \cite{kirillov1962unitary} and in further work by \cite{kostant1970quantization} and \cite{souriau1970structure} that the coadjoint orbits of a Lie group $G$ have the structure of symplectic manifolds and are connected with Hamiltonian mechanics. See \cite{kirillov1999merits} for a review. The infinitesimal adjoint and coadjoint Lie algebra actions $\mathrm{ad}: \mathfrak{g}\times \mathfrak{g}\to \mathfrak{g}$ and $\mathrm{ad}^*: \mathfrak{g}\times \mathfrak{g}^*\to \mathfrak{g}^*$ for a semidirect product group are valuable for fluid mechanics, as well, because they introduce the two fundamental operators that appear in the equations of motion. Namely, the Lie derivative is responsible for transport of tensors along vector fields and its dual action is given by the diamond operation $(\diamond)$ which arose in our discussion of Noether's theorem in equation \eqref{momap}. The diamond operation $(\diamond)$ is defined via two different real-valued symmetric non-degenerate pairings, $\scp{}{}_{V^*\times V}: V^*\times V \to \mathbb{R}$ and $\scp{}{}_{\mathfrak{g}^*\times\mathfrak{g}}: \mathfrak{g}^*\times\mathfrak{g}\to \mathbb{R}$, by the relation
\begin{equation}\label{def: diamond}
\scp{b\diamond a}{\xi}_{\mathfrak{g}^*\times\mathfrak{g}} := \scp{b}{- \mathcal{L}_\xi a}_{V^*\times V}
\,,\end{equation}
where  $ \mathcal{L}_\xi$ is the Lie derivative defined as the tangent at the identity of the Lie chain rule in \eqref{push-forward}. This is recalled in the present notation as \index{Lie derivative!chain rule}
\begin{align}
  \mathcal{L}_\xi f := \frac{d}{d\epsilon}({\phi_\epsilon}^* f )\Big|_{\epsilon=0} \,.
\label{def: LieDerivative2}
\end{align}
As we shall see, the diamond operation $(\diamond)$ encodes the forces arising from symmetry breaking in the Euler-Poincar\'e equations of motion. \index{symmetry breaking!forces}  \index{symmetry breaking!circulation}  \index{symmetry breaking!semidirect product action}
\end{remark}

\subsection{Geometric mechanics with diffeomorphisms}
\subsection*{Key points}
\begin{center}
    $\bullet\,$ Lagrangians and Hamiltonians $\qquad$ \index{diffeomorphism!Lagrangian} \index{diffeomorphism!Hamiltonian}
    $\bullet\,$ Legendre transform $\qquad$ \\ \index{diffeomorphism!Legendre transform} \index{Legendre transform!diffeomorphism}
    $\bullet\,$ Reduction by symmetry $\qquad$ \index{reduction by symmetry!reconstruction equation}
    $\bullet\,$ Reconstruction equation  \index{reconstruction equation!reduction by symmetry}
\end{center}

Euler-Poincar\'e reduction for a semidirect product group $\mathfrak{D}^s\times V$ as developed in \cite{holm1998euler} is sketched below in Figure \ref{fig:cubesdp} \cite{holm2019stochastic}. \index{semidirect product group}
\begin{figure}[h] 
\small
\centering
\begin{tikzcd}[row sep=3em, column sep=small]
& 
L:T\mathfrak{D}^s\times V^*\to\mathbb{R} \arrow[dl] \arrow[rr,  "\text{Legendre transform}", leftrightarrow] \arrow[dd] 
& 
& 
H:T^*(\mathfrak{D}^s\times V)\to\mathbb{R} \arrow[dl] \arrow[dd]
\\
\text{Euler-Lagrange eqns} \arrow[rr, crossing over, Leftrightarrow] 
& 
& \text{Hamilton's eqns}
\\
&
\ell:\mathfrak{X}^s\times V^*\to\mathbb{R} \arrow[dl] \arrow[rr, "\text{Legendre \hspace{0.25cm} transform}", leftrightarrow] 
& 
& 
\hslash:(\mathfrak{X}^s\times V)^*\to\mathbb{R} \arrow[dl] 
\\
\text{Euler-Poincar\'e eqns} \arrow[rr, Leftrightarrow] \arrow[from=uu, crossing over]
& 
& 
\text{Lie-Poisson eqns} \arrow[from=uu, crossing over]
\end{tikzcd}
\caption{The cube of continuum mechanics in the semidirect product group setting. Reductions by Lie group symmetry are indicated as arrows pointing down.}
\label{fig:cubesdp}
\end{figure}
 Comparison of Figure \ref{fig:GM-Cube1}  with Figure \ref{fig:cubesdp} shows several new features arise in semidirect product Lie group reduction which differ from Euler-Poincar\'e reduction by symmetry when the configuration space itself is a Lie group. These differences can be conveniently explained by introducing the physical concept of an \emph{order parameter}. 
 
The order parameters in continuum mechanics are the elements of $V^*$ which are advected by the action of the diffeomorphism group $\mathfrak{D}^s$. The advection is defined simply as the semidirect product action on the elements of $V^*$. The introduction of each additional advected state variable (or, order parameter) into the physical problem reduces the symmetry group $\mathfrak{D}^s$ of the Lagrangian in Hamilton's principle down to the isotropy subgroup $\mathfrak{D}^s_{a_0}$ of the initial conditions, $a_0$, for the entire set of advected quantities, $a$. The action of the diffeomorphism group $\mathfrak{D}^s$ on these initial conditions then describes their advection as the action of $\mathfrak{D}^s$ on its coset space $\mathfrak{D}^s\setminus\mathfrak{D}^s_{a_0}=V^*$. 
 
 Once the initial values of the order parameters, $a_0$,  have been set, one must still define a Legendre transform to pass from the Lagrangian formulation into the Hamiltonian formulation and vice versa. The Legendre transform in the setting of semidirect products is a partial Legendre transform, since it transforms between $T\mathfrak{D}^s$ and $T^*\mathfrak{D}^s$ or $T\mathfrak{D}^s\setminus\mathfrak{D}^s_{a_0} \simeq \mathfrak{X}^s$ and $T^*\mathfrak{D}^s\setminus\mathfrak{D}^s_{a_0} \simeq\mathfrak{X}^{s*}$ only after having fixed the value $a_0$ of the order parameters, which live in $V^*$. This coset reduction is what Figure \ref{fig:cubesdp} shows. The remaining right-invariance of a functional under the action of the isotropy subgroup is called its \emph{particle relabelling symmetry}.

Continuing on the Lagrangian side in Figure \ref{fig:cubesdp}, consider a Lagrangian $L:T\mathfrak{D}^s\times V^*\to\mathbb{R}$. By fixing the value of $a_0\in V^*$, one can construct $L_{a_0}:T\mathfrak{D}^s\to\mathbb{R}$. If this Lagrangian is right-invariant under the action of the isotropy subgroup $\mathfrak{D}_{a_0}^s$, then one can construct 
\begin{equation}
\begin{aligned}
L\left(\frac{d}{dt}\phi\circ \phi^{-1},e,a_0\right) &= L_{a_0}\left(\frac{d}{dt}\phi\circ \phi^{-1},e\right)\\
&= \ell_{a_0}\left(\frac{d}{dt}\phi\circ \phi^{-1}\right) = \ell\left(\frac{d}{dt}\phi\circ \phi^{-1}, \phi_*a_0\right).
\end{aligned}
\label{eq:lagrangians}
\end{equation}
Here $\circ$ means composition of functions. The same procedure applies to the Hamiltonian. Since the coadjoint action is known, it is straightforward to formulate the Lie-Poisson equations. The details of Hamiltonian semidirect product reduction and also more information on the Lagrangian semidirect product reduction may be found in \cite{holm1998euler,HolmGM2025}. 

The coadjoint action of the Lie algebra on its dual is also required for the Lagrangian semidirect product reduction. One can use the deterministic reconstruction equation to see that the argument of the Lagrangians in \eqref{eq:lagrangians} is
\begin{equation}
\frac{d}{dt}\phi\circ \phi^{-1} = u.
\label{reconstruct-eqn}
\end{equation}
Using this information, one may integrate the Lagrangian in time to construct the action functional. By requiring the variational derivative of the action functional to vanish, one may compute the equations of motion. However, due to the removal of symmetries, the variations are no longer free. 

\begin{remark}
After the deterministic background material discussed in Part \ref{Determ-EBeqns} of this paper, in Part \ref{EB-SALT} we will replace the deterministic reconstruction equation \eqref{reconstruct-eqn} by a semimartingale and formulate the stochastic Euler-Poincar\'e theorem, which is the basis for most of the modelling work performed in the STUOD project.
\end{remark}

\newpage

\part{Deterministic Euler Boussinesq (EB) equations}\label{Determ-EBeqns}

\section*{Key words}
\begin{center}
$\bullet\,$ Geometric mechanics $\qquad$ $\bullet\,$ Fluid dynamics $\qquad$ $\bullet\,$ Stochastic partial differential equations\\
$\bullet\,$ Lie groups $\qquad$ $\bullet\,$ Diffeomorphism group $\qquad$ $\bullet\,$ Sobolev spaces $\qquad$ $\bullet\,$ Momentum maps
\end{center}
 
\section{Elements of Geometric Mechanics for deterministic EB equations}\index{EB equations}

The 3D Euler-Boussinesq (EB) equations for the dynamics of a rotating, stratified, incompressible fluid are the fundamental equations for oceanography. 
This section follows the implications for the EB equations of the Lie symmetry known as fluid parcel relabelling in the Eulerian representation. Importantly, the Lie symmetry under fluid parcel relabelling in the Eulerian representation underlies the variational derivation of the 3D EB equations. Although we emphasise the EB equations, it turns out that sequential approximations of the Lagrangian in Hamilton's principle for the EB equations also produce the \emph{flow chart} of familiar approximate models for particular regimes of ocean physics, shown in Figure \ref{fig:ErwinTree}. \index{Hamilton's principle}
In fact, all of the approximate ocean models on the flow chart in Figure \ref{fig:ErwinTree} admit the same consequences of the Lie symmetry we investigate here for the EB equations.

The 3D EB models are derived from the Euler--Poincar\'e theorem discussed in this section for a fluid Lagrangian in Hamilton's principle that depends 
on \emph{advected quantities} such as mass density and buoyancy under gravity. These material properties
are carried along by the EB fluid's transport velocity. \index{advected quantities} \index{Euler--Poincar\'e!theorem} 

In general, the Euler--Poincar\'e theorem follows from variational principles written in terms of variables that are invariant under a Lie group action \cite{HMR1998}.  As mentioned earlier, a Lie group is a group of transformations that depends smoothly on a set of parameters. For Eulerian fluids, the Lie symmetry group comprises the smooth invertible maps acting to relabel fluid parcels while leaving the Eulerian fluid variables invariant. \index{Lie group!definition}

A variational principle seeks critical points of a time integral defined on the tangent space of a manifold. Well known examples of Lie group invariant variational principles include Fermat's principle for geometric optics, Euler's equations for rigid body motion, Euler's equations for ideal fluid dynamics, and Einstein's equations for general relativity. In all of these examples, Lie symmetries introduce conservation laws via Noether's theorem \cite{noether1983invariante}. Thus, the solutions are constrained to stay on level sets of these conservation laws.

Figure \ref{fig:LRBS} depicts the central role played by the Lagrangian in Hamilton's variational principle for fluid dynamics. Namely, approximations in the Lagrangian preserve the mathematical structure of the parent model. Thus, approximations in the Lagrangian comprise a foundational approach in designing approximate fluid equations. 
 
As shown in Figure \ref{fig:GM-Cube}, the Euler--Poincar\'e theorem reveals a \emph{framework of related results} in dynamical systems following from Lie group invariant variational principles. Figure \ref{fig:GM-Cube} envisions these related results as a cube whose six faces comprise commuting diagrams of smooth equivariant maps. These 
commuting diagrams are unfolded in Figure \ref{fig:GM-Unfold}. \index{Euler--Poincar\'e!theorem}  

This section begins by reviewing the properties of Lie derivatives. It then uses Lie symmetries to prove the Euler--Poincar\'e theorem for fluid dynamics 
with advected quantities. For the 3D EB equations, the advected quantities are mass density and buoyancy. 
\index{advected quantities}

The Euler--Poincar\'e theorem for fluid dynamics with advected quantities systematically formulates and derives fluid models which obey
the Kelvin--Noether theorem for the circulation dynamics on fluid material loops carried along (advected) by the fluid flow, as depicted in Figure \ref{fig:KelvinThm}.
\index{Kelvin--Noether!theorem}
\begin{figure}[h!]
\centerline{{\includegraphics[width=5in]{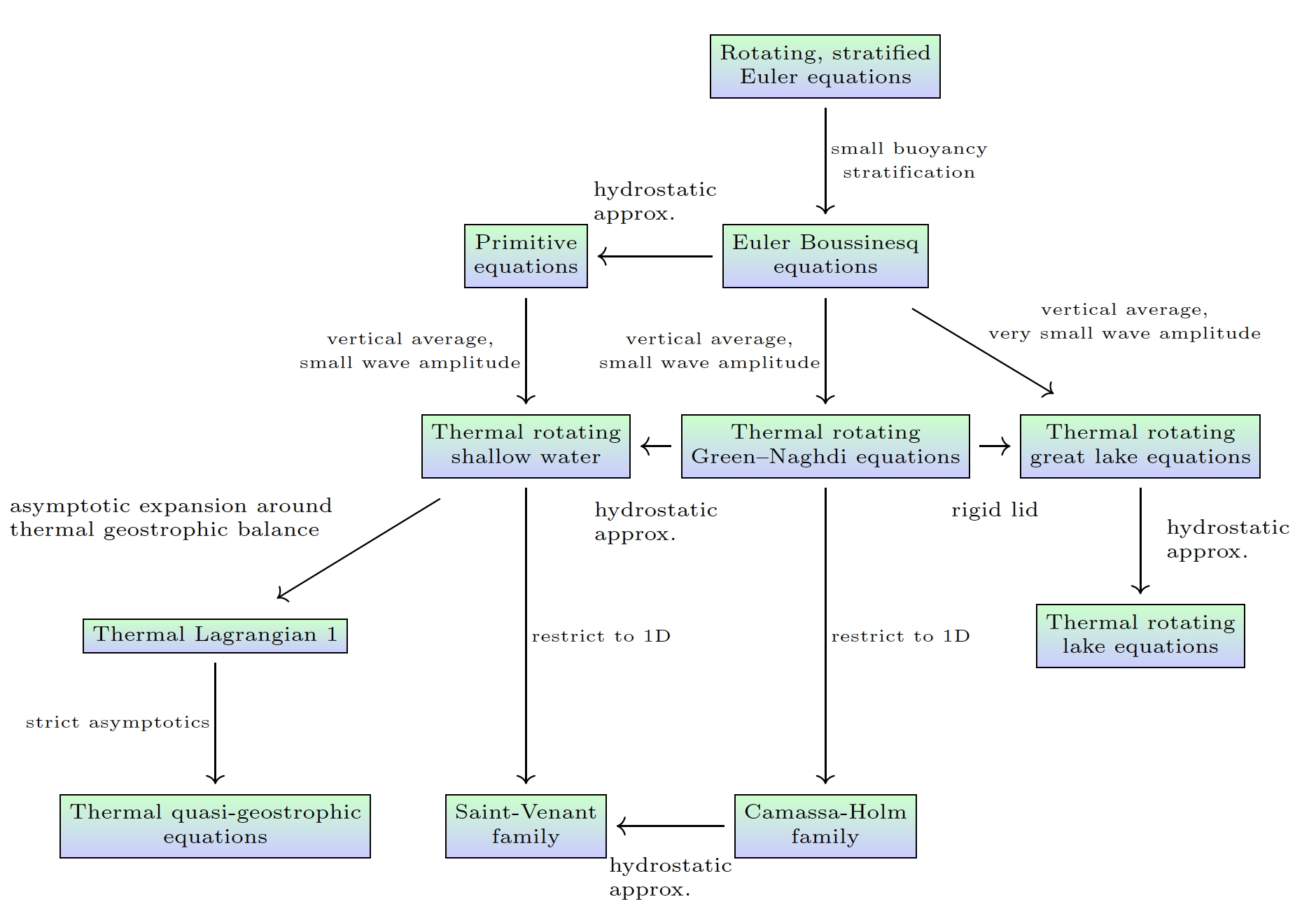}}}
\caption{The 3D Euler Boussinesq (EB) equations studied here have many approximate `descendants' along a `flow chart' of models obtained by
making sequential approximations of the Lagrangian for Hamilton's principle for EB flow. In doing so, each `descendant' in the flow chart
inherits the geometrical properties of 3D EB. Figure courtesy of E. Luesink.}\label{fig:ErwinTree}
\end{figure}

\begin{figure}[h!]
\centerline{{\includegraphics[width=5in]{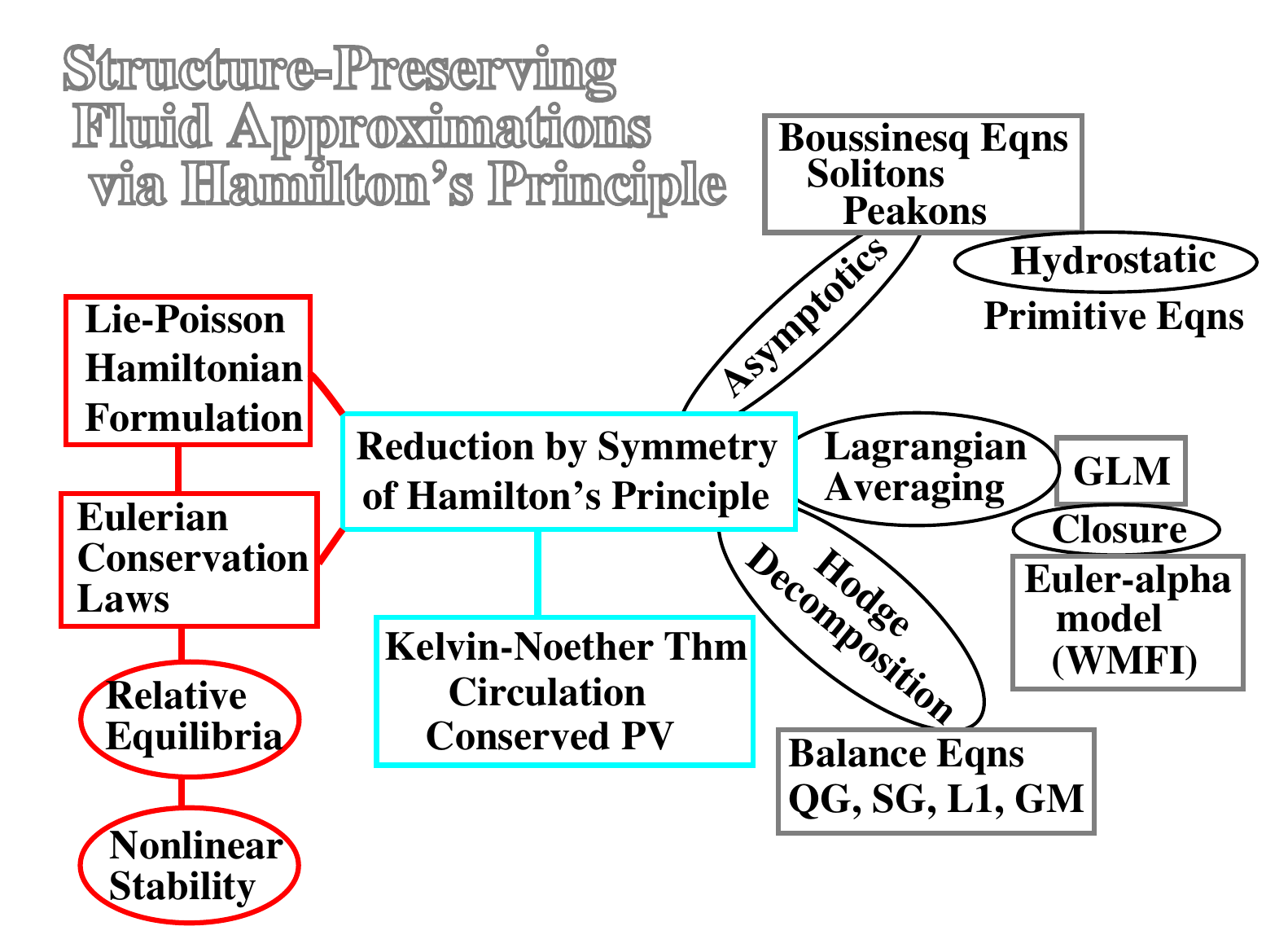}}}
\caption{All of the models in Figure \ref{fig:ErwinTree} stem from approximating the highest level Lagrangian in Hamilton's principle.
This figure depicts the central role played by the right-invariant Euler--Poincar\'e Hamilton principle as a foundational approach in designing approximate models 
in fluid dynamics which preserve its mathematical structure. The Euler--Poincar\'e theorem guarantees the Kelvin--Noether form of the fluid motion equations.
Moreover, each model's Legendre transformation \index{Legendre transformation} yields its Lie--Poisson Hamiltonian formulation, 
whose constants of motion in turn yields equilibrium solutions and play a role in their stability analysis. } \label{fig:LRBS}
\end{figure}
\index{Kelvin--Noether!form}

\begin{figure}[h!]
\centerline{{\includegraphics[width=6in]{./FiguresChap2/GM-Cube.png}}}
\caption{The relations among the fundamental concepts in 
geometric mechanics can be envisioned as a cube, on which each face is a commuting diagram.
As we will see, though, fluid dynamics with advected quantities requires the corresponding structure in 
Figure  \ref{fig:GM-Cube1}.}\label{fig:GM-Cube}
\end{figure}

\begin{figure}[h!]
\centerline{{\includegraphics[width=6in]{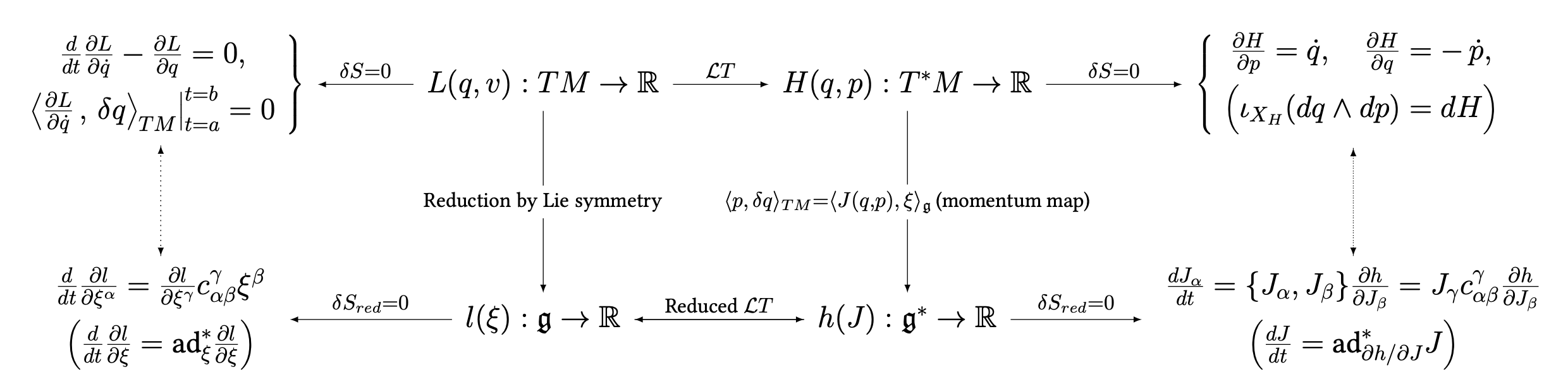}}}
\caption{Unfolding the cube of commuting diagrams for Geometric Mechanics shows how the 
classical equations descend from symplectic phase space to Euler--Poincar'e equations defined on the dual of the Lie algebra for 
the symmetry group of Eulerian fluid dynamics under relabelling of Lagrangian fluid parcels.}\label{fig:GM-Unfold}
\index{Geometric Mechanics!commuting diagrams}
\end{figure}

\begin{figure}[h!]
\centerline{{\includegraphics[width=3in]{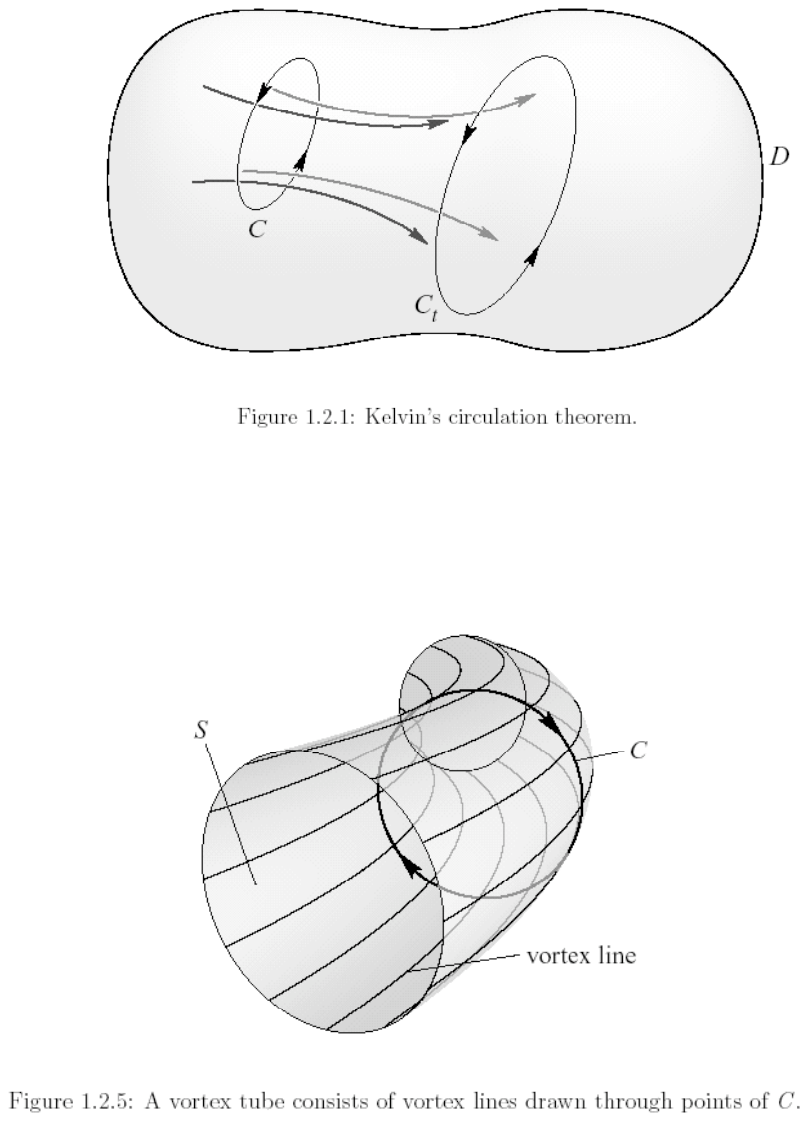}}}
\caption{The Kelvin--Noether theorem for the circulation dynamics on fluid material loops carried along (advected) by the fluid flow is a basic element of the mathematical structure of fluid dynamics.\label{fig:KelvinThm} 
}\index{advected quantities} \index{Kelvin--Noether!theorem}
\end{figure}



\subsection{Infinitesimal  \& finite Lie group actions} \index{Lie group!actions on a manifold}
Recall that a Lie group is a group of transformations which depends smoothly on a set of parameters. A Lie group is also a manifold, 
i.e., a space on which the rules of calculus apply. 

The \emph{action} of a Lie group $G$ on a
manifold $M$ is a group of transformations of $M$ associated to 
elements of the group $G$, whose composition acting on $M$ corresponds
to group multiplication in $G$.

\begin{definition}[Left and Right actions] Let $M$ be a manifold and let $G$ be a Lie group. A
\emph{left action} of a Lie group \index{left action!Lie group} $G$ on $M$ is a smooth mapping 
$\Phi{:\ }  G\times M\to M$ such that 
\begin{enumerate}
\item[(i)] $\Phi(e, x) = x \hbox{ for all } x \in M$, 
\item[(ii)] $\Phi(g, \Phi(h, x)) = \Phi(gh, x)$ for all $g, h \in G$ and $x \in
M$, and 
\item[(iii)] $\Phi(g, \cdot)$ is a diffeomorphism on $M$ 
for each $g \in G$. 
\end{enumerate}
One often uses the convenient notation $gx$ for $\Phi(g, x)$ and
thinks of the group element $g$ acting on the point $x \in M$. The
associativity condition (ii) above then simply reads $(gh)x = g(hx)$.
\end{definition}
Similarly, one can define a \emph{right action}, \index{right action!Lie group}
which is a map $\Psi{:\ } 
M\times G\to M$  satisfying $\Psi(x,e) = x$ and $\Psi(\Psi(x,g),h) =
\Psi(x,gh)$. The convenient notation for right action is $xg$ for
$\Psi(x, g)$, the right action of a group element $g$ on the point
$x\in M$. Associativity $\Psi(\Psi(x,g),h) = \Psi(x,gh)$ is
then be expressed conveniently as $(xg)h=x(gh)$.

\newpage

Consider a Lie group acting transitively on a manifold $M$ as $G\times M\to M$. 
Follow the flow of finite transformations of $G$ along a path $\phi_\epsilon\in G$ which is smoothly parametrised by $\epsilon\in \mathbb{R}$. 
The corresponding infinitesimal  transformation in the neighbourhood
of the identity $\epsilon=0$ is obtained from the linear term in the Taylor expansion of the finite Lie group transformation acting on the initial point $q_0\in M$ as,
\begin{align}
\begin{split}
q_\epsilon := \phi_\epsilon (q_0) &:= 
q_0 + \epsilon \left[\frac{d}{d\epsilon}\phi_\epsilon (q_0)\right]_{\epsilon=0}
+ O(\epsilon^2)
\\&=:
q_0 + \epsilon \Phi(q_0)
+ O(\epsilon^2)
\,.
\end{split}
\label{Taylor-exp}
\end{align}
The linear term in the Taylor expansion \eqref{Taylor-exp} is denoted as 
\[\delta q  = q_\epsilon' |_{\epsilon=0} = \frac{dq_\epsilon}{d\epsilon} \Big|_{\epsilon=0} =: \Phi(q)\,.\]

\textbf{Lie chain rule.} When the group operation $q_\epsilon := \phi_\epsilon (q_0)$ above is given by composition of smooth functions $f\in C^\infty(M)$, one obtains the \emph{pull-back relation} known as the \emph{Lie chain rule}:
\index{pull-back relation} \index{Lie chain rule} \index{vector field!definition}
\begin{align}
\frac{d}{d\epsilon}({\phi_\epsilon}^* f ) = {\phi_\epsilon}^*( \mathcal{L}_{w_\phi} f)\,,
\label{LieChainRule}
\end{align}
where $ \mathcal{L}_{w_\phi}$ is the Lie derivative with respect to the vector field ${w_\phi}\in \mathfrak{X}(M)$ whose characteristic curves generate the smooth flow $\phi_\epsilon$.%
\footnote{The vector field ${w_\phi}=\phi_\epsilon'\phi_\epsilon^{-1}|_{\epsilon=0}$ generates right-invariant action 
and ${w_\phi}=\phi_\epsilon^{-1}\phi_\epsilon'|_{\epsilon=0}$ generates left-invariant action of the smooth flow $\phi_\epsilon$ defined for smooth functions on $M$.}
Upon assuming that $\phi_\epsilon|_{\epsilon=0}=Id$, one defines the Lie
 derivative of a function as the tangent to the Lie chain rule at the identity, $\epsilon=0$. Namely, the Lie  derivative $\mathcal{L}_{w_\phi}$ with respect to vector field ${w_\phi}\in \mathfrak{X}(M)$ of a smooth function $f$ on manifold $M$ is given by
\begin{align}
 \mathcal{L}_{w_\phi} f := \frac{d}{d\epsilon}({\phi_\epsilon}^* f )\Big|_{\epsilon=0} \,.
\label{def: LieDerivative}
\end{align}
\index{Lie chain rule!Lie derivative}\index{Lie derivative!definition!Lie chain rule}

The properties of Lie derivatives are of central importance in the Euler--Poincar\'e theory of ideal fluid dynamics \cite{HMR1998}. In fact, the properties of Lie derivatives underlie the following sections of this paper which explain the derivation of the Euler--Poincar\'e theory of stochastic fluid mechanics. Hence, the next section is devoted to the actions of Lie derivatives on advected quantities.
\index{Lie derivative!action}
 \medskip
 
\subsection{Properties of Lie derivatives}

\begin{definition}[Lie derivative of a differential $k$-form] \index{Lie derivative!properties}
Differential forms are quantities one integrates, such as line elements, surface elements and volume elements. 
The Lie group $G$ for ideal fluid flows is the manifold of smooth invertible maps, aka \textit{diffeomorphisms}. 
The \emph{Lie derivative} of a differential $k$-form $\Lambda^k(M)$ by a vector field $w\in\mathfrak{X}(M)$ is defined 
by linearising the pull-back action $\phi^*_t\Lambda^k$ by the flow $\phi_t\in G$ around the identity map at $t=0$.%
\footnote{At this point, we begin denoting the map parameter as $t$ in $\phi_t$ to invoke dynamics. 
Later, we will denote infinitesimal variations using the map parameter $\epsilon$ in $\phi_\epsilon$.}
Namely,
on the domain of flow $M$,
\begin{eqnarray}
 \mathcal{L}_w\Lambda^k = \frac{d}{dt}\bigg|_{t=0}\phi^*_t\Lambda^k
\quad\hbox{maps}\quad
\mathfrak{X}\times \Lambda^k \mapsto  \Lambda^k
\,.
\end{eqnarray}
\end{definition}
\begin{corollary}[Product rule for the Lie derivative of a wedge product of $k$-forms]  \index{Lie derivative!product rule} \index{Lie derivative!$k$-forms}
\begin{equation}
 \mathcal{L}_w(\alpha\wedge\beta)
=
 \mathcal{L}_w\alpha\wedge  \beta
+
\alpha\wedge  \mathcal{L}_w\beta
\,.
\label{Lie product rule}
\end{equation}
\end{corollary}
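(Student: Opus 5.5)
The product rule \eqref{Lie product rule} follows directly from the definition of the Lie derivative as the linearisation of the pull-back at the identity, together with one structural fact about pull-backs. That fact is that pull-back by a smooth map commutes with the wedge product, $\phi_t^*(\alpha\wedge\beta)=\phi_t^*\alpha\wedge\phi_t^*\beta$. I will first justify this identity pointwise. The value of $(\phi_t^*\alpha)_x$ on tangent vectors $v_1,\dots,v_k$ is $\alpha_{\phi_t(x)}(T\phi_t v_1,\dots,T\phi_t v_k)$. The wedge product is defined by antisymmetrising the tensor product over permutations of the arguments. Pushing all arguments forward by the same linear map $T_x\phi_t$ commutes with that antisymmetrisation, so the identity follows by a purely algebraic, multilinear check at each point.

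With this in hand, I differentiate in $t$ at $t=0$:
\begin{align*}
\mathcal{L}_w(\alpha\wedge\beta)
= \frac{d}{dt}\Big|_{t=0}\big(\phi_t^*\alpha\wedge\phi_t^*\beta\big)
= \Big(\frac{d}{dt}\Big|_{t=0}\phi_t^*\alpha\Big)\wedge\phi_0^*\beta
+ \phi_0^*\alpha\wedge\Big(\frac{d}{dt}\Big|_{t=0}\phi_t^*\beta\Big).
\end{align*}
This second equality uses the Leibniz rule for the wedge product along a one-parameter family. The rule holds because $\wedge$ is a bilinear, pointwise operation. At each fixed $x$, the family $t\mapsto(\phi_t^*\alpha)_x$ is a smooth curve in the finite-dimensional vector space $\Lambda^k T_x^*M$, and the ordinary product rule for a bilinear map on such curves applies. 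Finally, $\phi_0=Id$ gives $\phi_0^*\beta=\beta$ and $\phi_0^*\alpha=\alpha$, and the remaining derivatives are by definition $\mathcal{L}_w\alpha$ and $\mathcal{L}_w\beta$, which yields \eqref{Lie product rule}.

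The main point requiring care is the differentiation step. The curve $(\phi_t^*\alpha)_x$ lives in a fibre over the fixed base point $x$, not over the moving point $\phi_t(x)$, and this is exactly why the pull-back is used in the definition. Once that is noted, the argument reduces to calculus in a fixed vector space, and smoothness of $\phi_t$ in $(t,x)$ guarantees differentiability.

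As a consistency check, I would also verify the result via Cartan's formula $\mathcal{L}_w=d\,i_w+i_w d$. Both $d$ and $i_w$ are graded derivations of degree $\pm1$, and their graded commutator is an even derivation. This is not needed, since Cartan's formula has not yet been established in the paper, so I would keep the pull-back argument as the proof.
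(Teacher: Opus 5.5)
Your proof is correct and follows the same route as the paper: naturality of pull-back under the wedge product, $\phi_t^*(\alpha\wedge\beta)=\phi_t^*\alpha\wedge\phi_t^*\beta$, followed by differentiation at the identity $t=0$ using the definition of the Lie derivative. You also supply two details the paper leaves implicit: the pointwise multilinear check of naturality, and the bilinear product rule for curves in the fixed fibre $\Lambda^k T_x^*M$.
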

\begin{proof}
The pull-back action on differential forms is natural, i.e.,
\[
\phi_t^*(\alpha\wedge\beta) = (\phi_t^*\alpha\wedge\phi_t^*\beta).
\]
The Lie derivative is defined in \eqref{def: LieDerivative} as the tangent at the identity of the Lie chain rule in \eqref{LieChainRule}. 
Consequently, the Lie derivative of a wedge product of differential forms satisfies the product rule in \eqref{Lie product rule}.  
\end{proof}

Pull-backs of vector fields also lead to Lie derivative expressions. \index{vector field!pull-back} \index{vector field!push-forward}
\begin{definition}[Lie derivative of a vector field]
The \emph{Lie derivative} of a vector field $Y\in\mathfrak{X}$ by another vector field $X\in\mathfrak{X}$ is defined 
by linearising the pull-back by $\phi_t$ of vector field $X$ around the identity $t=0$.\index{push-forward}%
\footnote{The push-forward of a vector field -- denoted as ${\phi_t}_*W$ -- is the pull-back by the inverse $\phi_t^{-1}$ 
for a smooth map $\phi_t$ acting on a vector field $W\in \mathfrak{X}(M)$. For the push-forward, one finds $ \frac{d}{dt}\big|_{t=0}{\phi_t}_*Y= -  \mathcal{L}_XY$.}
\begin{eqnarray*}
 \mathcal{L}_XY = \frac{d}{dt}\bigg|_{t=0}\phi^*_tY
\quad\hbox{maps}\quad
 \mathcal{L}_X\in\mathfrak{X} \mapsto  \mathfrak{X}
\,.
\end{eqnarray*}
\end{definition}
\begin{theorem}
The Lie derivative $ \mathcal{L}_XY$ of a vector field $Y$ by a vector field $X$ satisfies
\begin{equation}
 \mathcal{L}_XY = \frac{d}{dt}\bigg|_{t=0}\phi^*_tY
= [X,\,Y]
\,,
\label{LieBrkt}
\end{equation}
where $[X,\,Y]=XY-YX$ is the commutator of the vector fields $X$ and $Y$.
\end{theorem}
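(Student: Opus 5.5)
The plan is to reduce the statement about vector fields to the scalar Lie chain rule \eqref{LieChainRule} and the definition \eqref{def: LieDerivative}, by letting $Y$ act as a derivation on an arbitrary smooth test function $f\in C^\infty(M)$. Throughout, $\phi_t$ is the flow of $X$, so $\phi_0=Id$ and $\frac{d}{dt}\phi_t^*f=\phi_t^*(\mathcal{L}_Xf)$, with $\mathcal{L}_Xf = Xf$ at $t=0$. The pull-back of a vector field is $\phi_t^*Y := T\phi_t^{-1}\circ Y\circ\phi_t$, and the first step is to record its naturality with respect to functions:
\[
(\phi_t^*Y)(\phi_t^*f) = \phi_t^*(Yf),
\]
which is just the chain rule for $T\phi_t$. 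Equivalently,
\[
(\phi_t^*Y)f = \phi_t^*\big(Y(\phi_{-t}^*f)\big),
\]
using $(\phi_t^*)^{-1}=\phi_{-t}^*$ by the flow property.

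Next, differentiate the second identity at $t=0$. The right side depends on $t$ through two factors: the outer pull-back $\phi_t^*$ and the inner function $\phi_{-t}^*f$. Apply the product (Leibniz) rule, which is legitimate because $(t,x)\mapsto \phi_t^*\big(Y(\phi_{-s}^*f)\big)(x)$ is smooth jointly in $(t,s,x)$. The outer factor contributes $\mathcal{L}_X(Yf)=X(Yf)$ by \eqref{def: LieDerivative}. The inner factor contributes
\[
Y\Big(\frac{d}{dt}\Big|_{t=0}\phi_{-t}^*f\Big) = -Y(Xf),
\]
again by \eqref{LieChainRule}. Since $Y$ is linear and commutes with $d/dt$, and the left side differentiates to $(\mathcal{L}_XY)f$, we obtain
\[
(\mathcal{L}_XY)f = XYf - YXf = [X,Y]f .
\]

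Finally, since a vector field is determined by its action on all smooth functions (take $f$ to be local coordinate functions $x^i$), this gives $\mathcal{L}_XY=[X,Y]$. As a consistency check, I would also verify the identity in coordinates. Write $\phi_t(x)=x+tX(x)+O(t^2)$, so that $D\phi_t^{-1}=I-tDX+O(t^2)$ and $Y(\phi_t(x))=Y+tDY\cdot X+O(t^2)$. Then
\[
\frac{d}{dt}\Big|_{t=0}\phi_t^*Y = DY\,X - DX\,Y,
\]
whose components are $X^j\partial_jY^i - Y^j\partial_jX^i$, the components of $[X,Y]$.

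The main obstacle I expect is bookkeeping of conventions rather than analysis. The paper uses both $\phi_t^*$ and push-forward, and the footnote asserts $\frac{d}{dt}{\phi_t}_*Y=-\mathcal{L}_XY$. I must therefore fix $\phi_t^*Y=T\phi_t^{-1}\circ Y\circ\phi_t$ so that the sign comes out as $+[X,Y]$ with $[X,Y]=XY-YX$. Justifying the interchange of $d/dt$ with $Y$ and the two-factor differentiation is routine given smoothness of the flow.
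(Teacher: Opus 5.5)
Your argument is correct, and your main route differs from the paper's. The paper's proof is a direct component computation: it writes $\phi_t^*Y = Y^j(\phi_t q)\,\big[(\partial(\phi_t q)/\partial q)^{-1}\big]^k_j\,\partial/\partial q^k$ and differentiates at $t=0$ using the matrix identity $dM^{-1}=-M^{-1}\,dM\,M^{-1}$. This gives $(X^j\partial_jY^k - Y^j\partial_jX^k)\,\partial_k$, which is essentially your ``consistency check.''

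Your primary argument never uses components. You let $\phi_t^*Y$ act on a test function and use the naturality identity $(\phi_t^*Y)(\phi_t^*f)=\phi_t^*(Yf)$ to write $(\phi_t^*Y)f=\phi_t^*\big(Y(\phi_{-t}^*f)\big)$. You then split the $t$-derivative into an outer contribution $X(Yf)$ and an inner contribution $-Y(Xf)$.

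What this buys you:
\begin{itemize}
\item It produces the operator commutator $XY-YX$ in exactly the form the theorem uses to define $[X,Y]$. The coordinate route instead ends with components, which must still be identified with $XY-YX$ by noting that the second-derivative terms cancel; the paper passes over this silently.
\item It confines the sign bookkeeping to the single convention $\phi_t^*Y=T\phi_t^{-1}\circ Y\circ\phi_t$. This matches the paper's $\partial/\partial(\phi_t q)^j$ expansion and its footnote $\frac{d}{dt}{\phi_t}_*Y=-\mathcal{L}_XY$.
\end{itemize}

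The paper's component route is more hands-on. It also delivers directly the Euclidean formula $(\mathbf{u}\cdot\nabla\mathbf{w}-\mathbf{w}\cdot\nabla\mathbf{u})\cdot\nabla$ that the subsequent examples and the variation $\delta u$ rely on.

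Your two-parameter chain-rule step, and the interchange of $d/dt$ with $Y$, are justified by smoothness of the (local) flow near $t=0$.
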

\index{vector field!commutator}

\begin{proof}
Denote the vector fields in components as
\begin{eqnarray*}
X = X^i(q)\frac{\partial}{ \partial q^i} = \frac{d}{dt}\bigg|_{t=0}\phi^*_t
\quad\hbox{and}\quad
Y = Y^j(q)\frac{\partial}{ \partial q^j}
\,.
\end{eqnarray*}
By the pull-back relation $\phi_t^*\big[Y,\,Z\big] = \big[\phi_t^*Y,\,\phi_t^*Z\big]$ a direct computation yields, on using the matrix identity $dM^{-1}=-\,M^{-1}dM M^{-1}$,
\begin{align*}
 \mathcal{L}_XY 
&= \frac{d}{dt}\bigg|_{t=0}\phi^*_tY
= \frac{d}{dt}\bigg|_{t=0} \left(Y^j(\phi_tq)\frac{\partial}{ \partial (\phi_tq)^j}\right)
\\
&= \frac{d}{dt}\bigg|_{t=0} \left(Y^j(\phi_tq)
\left[\frac{ \partial (\phi_tq)}{\partial q}^{-1}\right]^k_j
\frac{\partial}{ \partial q^k}\right)
\\
&= 
\left(
X^j \frac{ \partial Y^k}{\partial q^j}
-
Y^j \frac{ \partial X^k}{\partial q^j}
\right)\frac{\partial}{ \partial q^k}
= 
[X,\,Y]
\,.
\end{align*}
\end{proof}

\begin{corollary}
The Lie derivative of the pull-back of the commutator of vector fields $\phi_t^*\big[Y,\,Z\big] = \big[\phi_t^*Y,\,\phi_t^*Z\big]$ 
see, e.g., \cite{HolmGM2025} implies the \emph{Jacobi identity} condition for the vector fields to form an algebra. \index{Jacobi identity}
\end{corollary}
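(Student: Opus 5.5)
The plan is to differentiate the naturality relation $\phi_t^*[Y,Z]=[\phi_t^*Y,\phi_t^*Z]$ in $t$ at $t=0$, using the definition of the Lie derivative of a vector field and the theorem just proved, $\mathcal{L}_XY=[X,Y]$ in \eqref{LieBrkt}. Here $\phi_t$ is the flow of a fixed vector field $X$, with $\phi_0=Id$. The left side of the differentiated relation gives $\mathcal{L}_X[Y,Z]$. The right side is a bracket of two $t$-dependent vector fields, and the bracket is bilinear, so a product rule applies, just as in the proof of the product rule \eqref{Lie product rule} for wedge products. This yields
\begin{equation*}
\mathcal{L}_X[Y,Z]=[\mathcal{L}_XY,\,Z]+[Y,\,\mathcal{L}_XZ]\,,
\end{equation*}
which says that $\mathcal{L}_X$ is a derivation of the bracket.

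The second step substitutes $\mathcal{L}_XW=[X,W]$ from \eqref{LieBrkt} into every Lie derivative in this identity. The result is $[X,[Y,Z]]=[[X,Y],Z]+[Y,[X,Z]]$. Using antisymmetry, $[A,B]=-[B,A]$, which is immediate from the commutator form $[X,Y]=XY-YX$, I will rewrite this as the cyclic identity
\begin{equation*}
[X,[Y,Z]]+[Y,[Z,X]]+[Z,[X,Y]]=0\,.
\end{equation*}
Bilinearity and antisymmetry of the bracket also come from the commutator formula, and \eqref{LieBrkt} shows that $\mathfrak{X}(M)$ is closed under the bracket. Together these give the algebra structure, which is the Lie algebra condition claimed in the statement.

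The step needing the most care is justifying the product rule for the bracket of $t$-dependent fields. I would argue this in local coordinates. The components of $[\phi_t^*Y,\phi_t^*Z]$ are bilinear expressions in the components of the two fields and their first derivatives, and both depend smoothly on $t$. Differentiation at $t=0$ therefore splits into the two terms, with $\tfrac{d}{dt}|_{t=0}\phi_t^*Y=\mathcal{L}_XY$. The naturality relation itself is taken from the cited reference \cite{HolmGM2025}, as the statement allows.

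As a cross-check independent of the flow argument, I would note that the Jacobi identity also follows by expanding $XYZ-\dots$ with vector fields viewed as derivations of $C^\infty(M)$, since the commutator of operators satisfies Jacobi identically. I would mention this only briefly.
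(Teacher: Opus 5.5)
Your proposal is correct and follows the paper's own argument. Like the paper, you differentiate the naturality relation $\phi_t^*[Y,Z]=[\phi_t^*Y,\phi_t^*Z]$ at $t=0$ using the product rule for the bracket, substitute $\mathcal{L}_XW=[X,W]$, and obtain $[X,[Y,Z]]=[[X,Y],Z]+[Y,[X,Z]]$, which is the form of the Jacobi identity the paper arrives at. Your two additions are sound and are left implicit in the paper: the coordinate justification of the product rule (which needs joint smoothness in $(t,x)$ so that $\partial_t$ and $\partial_j$ commute), and the rewriting into the cyclic form using antisymmetry.
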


\begin{proof}
By the product rule  \index{Lie derivative!product rule!Lie bracket}
\begin{align*}
\tfrac{d}{dt}\phi_t^*\big[Y,\,Z\big] = \big[\tfrac{d}{dt}\phi_t^*Y,\,\phi_t^*Z\big] +  \big[\phi_t^*Y,\,\tfrac{d}{dt}\phi_t^*Z\big]
\,,\end{align*}
and the definition of the Lie bracket \eqref{LieBrkt} we have
\begin{align*}
\frac{d}{dt}\bigg|_{t=0}\phi^*_t\big[Y,\,Z\big]
&=
\big[X,\big[Y,\,Z\big]\big]
\\&=
\big[ [X,Y],\,Z\big] + \big[ Y,\,[X,Z]\big]
=
\frac{d}{dt}\bigg|_{t=0}\big[\phi^*_tY,\,\phi^*_t Z\big]
\,.\end{align*}
This is the \emph{Jacobi identity} for vector fields.
\end{proof}
\index{vector field!Jacobi identity}
\index{Lie derivative!Cartan form}


\begin{remark}
An equivalent definition of Lie derivative of  $B\in\Lambda^k$ is given \textit{in Cartan form} by $ \mathcal{L}_uB= w\contract dB + d(u\contract B)$, where the operator 
$\contract$ reduces the degree of a $k$-form:  $\contract\times\Lambda^k \to \Lambda^{k-1}$.
\end{remark}\index{Lie derivative!Cartan form}


\subsection{Examples of Lie derivatives for 3D EB} \label{examp: LieDeriv}%
\index{Lie derivative!Examples for 3D EB}
\begin{description} 

\item [(Functions)]
$(\partial_t + \mathcal{L}_u)b(\mathbf{x},t)
=
\partial_t b + \mathbf{u}\cdot \nabla b \hbox{ for a vector field} \quad u=\dot{\phi}_t \phi_t^{-1}
$\,, 

\item [(1-forms)]
$(\partial_t + \mathcal{L}_u)( \mathbf{v}(\mathbf{x},t)\cdot d\mathbf{x})
= \big(\partial_t\mathbf{v}+ \mathbf{u}\cdot\nabla \,\mathbf{v} + v_j\nabla u\,^j\big)\cdot d\mathbf{x}$
\\ 
\hspace{1.82cm}$=\big(\partial_t\mathbf{v} 
-\,\mathbf{u}\times{\rm curl}\,\mathbf{v} + \nabla(\mathbf{u}\cdot\mathbf{v})\big)\cdot d\mathbf{x} 
$\,,  in Cartan form.
\smallskip

\item [(Vector fields)] 
$(\partial_t + \mathcal{L}_u) w(\mathbf{x},t)
=
\partial_t w - \mathrm{ad}_uw = \partial_t w - [u, w ] $\\
$= \big(\partial_t \mathbf{w} + \mathbf{u}\cdot\nabla \mathbf{w} - \mathbf{w}\cdot\nabla \mathbf{u} \big)\cdot\nabla$
\smallskip

\item [(2-forms)]
$(\partial_t + \mathcal{L}_u)({\boldsymbol{\omega}}(\mathbf{x},t)\cdot d\mathbf{S})
=
\big(\partial_t{\boldsymbol{\omega}} -\,{\rm curl}\,(\mathbf{u} \times
{\boldsymbol{\omega}} )
+
\mathbf{u}\,{\rm div}\,{\boldsymbol{\omega}}\big)\cdot d\mathbf{S}$\,,
\smallskip

\item [(3-forms)]
$(\partial_t + \mathcal{L}_u)(D(\mathbf{x},t)\, d\,^3x)
=
(\partial_tD + {\rm div}\,D\,\mathbf{u})\, d\,^3x$\,.
\end{description}
\begin{remark}
The action of the flow map $\phi_t\in G$ is assumed to satisfy the Lie group properties defined by composition
of maps: $\phi_s\circ \phi_t = \phi_{t+s}$. Also notice that the Lie derivative by vector field  $u=\mathbf{u}\cdot\nabla$ of a 1-form density 
$m:=\mathbf{m} \cdot d\mathbf{x} \otimes d\,^3x$ may be obtained from the above table of examples by applying the product rule.
\end{remark}
After this brief review, let's apply the definitions and key properties of Lie derivatives to ideal fluid dynamics. 

\subsection{Actions of Lie derivatives in EB fluid dynamics}\index{Lie derivative!action}
The Euler--Boussinesq (EB) equations for the incompressible motion of an ideal
flow of a stratified fluid and velocity $\mathbf{u}$ satisfying  
$\mathrm{div}(\mathbf{u})=0$  in a rotating frame with Coriolis 
parameter $\mathrm{curl}\,\mathbf{R}=2\boldsymbol{\Omega}$ are given by
\index{Ertel's theorem}
\begin{equation}
\underbrace{\
\partial_t \,\mathbf{u}
+
\mathbf{u}\cdot\nabla\mathbf{u}\
}_{\hbox{acceleration}}
=
\underbrace{\
-\,gb\nabla{z}\
}_{\hbox{buoyancy}}
+\
\underbrace{\
\mathbf{u}\times 2\boldsymbol{\Omega}\
}_{\hbox{Coriolis}}
\
-
\underbrace{\quad
\nabla p\quad
}_{\hbox{pressure}}
\label{EulBouss-motion-eqn-Newton}
\end{equation} 
where $-g\nabla{z}$ is the constant downward acceleration of gravity and $b$ is the buoyancy, a scalar function of space and time which satisfies the \textbf{ advection relation},
\begin{equation} 
\partial_t \,b
+
\mathbf{u}\cdot\nabla b 
=
0
\,.
\label{buoy-advection}
\end{equation} 
As for Euler's equations without buoyancy, requiring preservation of the divergence-free (volume-preserving) constraint $\nabla\cdot\mathbf{u}=0$ results in a Poisson equation for pressure $p$,  
\begin{equation} 
-\,\Delta \left(p + \frac{1}{2}|\mathbf{u}|^2\right)
=
\mathrm{div} (-\,\mathbf{u}\times\mathrm{curl}\,\mathbf{v}) 
+ g \partial_z b
\,.
\label{Lamb-pressure-eqn-buoy}
\end{equation}
The Poisson equation for pressure $p$ satisfies a Neumann boundary condition obtained from $\mathbf{u}\cdot \mathbf{\hat{n}}|_{\partial D}=0$
because the velocity $\mathbf{u}$ must be tangent to the boundary.

The Newton's law form of the EB equations (\ref{EulBouss-motion-eqn-Newton}) may be rearranged as
\begin{eqnarray} 
\partial_t \,\mathbf{v}
-\,\mathbf{u}\times\mathrm{curl}\,\mathbf{v} 
+
gb\nabla{z}
+
\nabla \left(p + \frac{1}{2}|\mathbf{u}|^2\right)
&=
0
\,,
\label{EulBous-motion-eqn-Lie}
\end{eqnarray} 
where
$ 
\mathbf{v}
\equiv
\mathbf{u}+\mathbf{R}
\,,$
$
\boldsymbol{\omega}=\mathrm{curl}\,\mathbf{v}=\mathrm{curl}\,\mathbf{u}+2\boldsymbol{\Omega}
\,,
$
and $\nabla\cdot\mathbf{u}=0$.


\begin{theorem} \label{Stokes-thm-strat}
The EB equations in \eqref{EulBous-motion-eqn-Lie} satisfy the following vector form  of the Kelvin-Stokes theorem,
\begin{align} 
\begin{split}
\frac{d}{dt}\oint_{c(\mathbf{u})} \mathbf{v} \cdot d \mathbf{x} 
&=
\frac{d}{dt}
\int\!\!\!\!\int_{S(\mathbf{u})}
\mathrm{curl}\,\mathbf{v} \cdot d \mathbf{S}
=
\int\!\!\!\!\int_{S(\mathbf{u})}
\bigg(
\frac{\partial  }{\partial t} 
+ 
 \mathcal{L}_{u}
\bigg)
(\mathrm{curl}\,\mathbf{v} \cdot d \mathbf{S})
   \\&\hspace{-1cm}
   = 
\int\!\!\!\!\int_{S(\mathbf{u})}
\Big(\partial_t \,\boldsymbol{\omega}
-
\mathrm{curl}\,(\mathbf{u}\times \boldsymbol{\omega}) \Big) \cdot d \mathbf{S} 
= 
\int\!\!\!\!\int_{S(\mathbf{u})}
\Big(-\,g\nabla{b}\times\nabla{z}\Big) \cdot d \mathbf{S} 
 \,,
\end{split}
 \label{Stokes-strat-vector}
\end{align} 
where the surface $S(\mathbf{u})$ is bounded by an arbitrary circuit $\partial S=c(\mathbf{u})$ moving with the fluid. 
Thus, non-alignment of the gradient of buoyancy $\nabla{b}$ with the vertical $\nabla{z}$ creates circulation. 
\end{theorem}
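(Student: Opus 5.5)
The plan is to verify the chain of equalities in \eqref{Stokes-strat-vector} from left to right. Each link comes either from Stokes' theorem, from the Lie chain rule \eqref{LieChainRule} applied to forms transported by the flow $\phi_t$ of $u$, or from the table of Lie derivatives in Section \ref{examp: LieDeriv}. The only dynamical input is the curl of the motion equation \eqref{EulBous-motion-eqn-Lie}.

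\emph{Step 1 (Stokes).} The first equality is the classical Kelvin--Stokes theorem applied at each fixed time. The circuit is $c(\mathbf{u}) = \phi_{t}\, c_0$ and the surface is $S(\mathbf{u}) = \phi_t S_0$ with $\partial S_0 = c_0$. Equivalently, in form language, $d(\mathbf{v}\cdot d\mathbf{x}) = \mathrm{curl}\,\mathbf{v}\cdot d\mathbf{S}$, and we integrate this over $S(\mathbf{u})$.

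\emph{Step 2 (transport).} To differentiate the moving surface integral, I will pull it back to the fixed reference surface:
\[
\int\!\!\!\!\int_{\phi_t S_0}\boldsymbol{\omega}\cdot d\mathbf{S} = \int\!\!\!\!\int_{S_0}\phi_t^*(\boldsymbol{\omega}\cdot d\mathbf{S}).
\]
The time derivative can then pass inside the integral. By the Lie chain rule \eqref{LieChainRule}, extended to time-dependent forms via the product rule, this gives
\[
\frac{d}{dt}\phi_t^*(\omega) = \phi_t^*\big((\partial_t + \mathcal{L}_u)\omega\big).
\]
Pushing forward again yields the second equality. The third equality follows from the 2-form entry of the table in Section \ref{examp: LieDeriv}, since $\mathrm{div}\,\boldsymbol{\omega} = \mathrm{div}\,\mathrm{curl}\,\mathbf{v} = 0$ removes the $\mathbf{u}\,\mathrm{div}\,\boldsymbol{\omega}$ term. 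Here $\mathrm{curl}\,\mathbf{R} = 2\boldsymbol{\Omega}$ is constant, so it contributes nothing to $\partial_t\boldsymbol{\omega}$.

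\emph{Step 3 (dynamics).} I take the curl of \eqref{EulBous-motion-eqn-Lie}. The gradient term $\nabla(p + \tfrac12|\mathbf{u}|^2)$ is annihilated by the curl. The remaining terms give
\[
\partial_t\boldsymbol{\omega} - \mathrm{curl}(\mathbf{u}\times\boldsymbol{\omega}) = -\,g\,\mathrm{curl}(b\nabla z) = -\,g\,\nabla b\times\nabla z.
\]
Substituting this into the surface integral gives the final equality. The concluding sentence of the theorem is then immediate: the source term vanishes exactly when $\nabla b \parallel \nabla z$.

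The main obstacle is Step 2. There one must justify differentiating a flux through a surface that moves with the flow, which is the Reynolds transport theorem for 2-forms. I would handle it through the pull-back as above, noting that $\frac{d}{dt}\phi_t^* = \phi_t^*\mathcal{L}_u$ holds for time-dependent $u = \dot\phi_t\phi_t^{-1}$, by the same calculation that produced \eqref{push-forward}. Alternatively, one can obtain the formula directly from the circulation side: differentiate $\oint_{c_0}\phi_t^*(\mathbf{v}\cdot d\mathbf{x})$ and use the 1-form Lie derivative from the table, where the exact term $\nabla(\mathbf{u}\cdot\mathbf{v})$ integrates to zero around the closed loop.
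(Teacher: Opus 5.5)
Your proposal is correct and follows essentially the same route as the paper, which justifies the statement by exactly this chain: Stokes' theorem, then transport of the vorticity flux through a material surface via $(\partial_t+\mathcal{L}_u)$ acting on the 2-form, with $\mathrm{div}\,\boldsymbol{\omega}=0$ removing the extra term in the 2-form entry of the Lie-derivative table, and finally the curl of \eqref{EulBous-motion-eqn-Lie}. Your alternative route through the circulation 1-form is the paper's own geometric restatement \eqref{Stokes-strat-geom}: applying $d$ to \eqref{EulBousseqns-1form}, $d$ commutes with $\mathcal{L}_u$, and $d^2=0$ eliminates the pressure term.
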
 \index{Stokes theorem} \index{Helmholtz vorticity equation}

Geometrically, equation (\ref{EulBous-motion-eqn-Lie}) may be written in Lie derivative form as 
 \begin{equation}
\big(\partial_t +  \mathcal{L}_u\big)v^\flat  + gbdz + d\varpi =0
\,,
\label{EulBousseqns-1form}
\end{equation}
where $v^\flat = \mathbf{v}\cdot d \mathbf{x}$ is the circulation 1-form and $\varpi$ is an augmented pressure variable,
 \begin{equation}
 \varpi: = p+\frac{1}{2}|\mathbf{u}|^2
-\mathbf{u}\cdot\mathbf{v}
\,.
\label{varpi-def}
\end{equation}
In addition, the buoyancy is advected as a scalar and the volume form is preserved
 \begin{equation}
\big(\partial_t +  \mathcal{L}_u\big)b = 0
\,,
\quad\hbox{with}\quad
 \mathcal{L}_u\,d\,^3x = \mathrm{div}\mathbf{u}\,d\,^3x = 0
\,.
\label{EulBousseqns-tracers}
\end{equation}

\begin{remark} \label{Stokes-thm-strat}
The EB equations in \eqref{EulBousseqns-1form} satisfy the following geometric form of the Kelvin-Stokes theorem,
\begin{align} 
\begin{split}
\frac{d}{dt}\oint_{c(\mathbf{u})} v^\flat
&=\frac{d}{dt}
\int\!\!\!\!\int_{S(\mathbf{u})} dv^\flat
=
\int\!\!\!\!\int_{S(\mathbf{u})}
d\bigg(
\Big(
\frac{\partial  }{\partial t} 
+ 
 \mathcal{L}_{u}
\Big)
v^\flat
\bigg)
= 
 - \, g
\int\!\!\!\!\int_{S(\mathbf{u})}
db\wedge dz 
\,,
\end{split}
 \label{Stokes-strat-geom}
\end{align} 
where the surface $S(\mathbf{u})$ is bounded by an arbitrary circuit $\partial S=c(\mathbf{u})$ moving with the fluid.  
\end{remark}

\begin{remark}[\emph{Conservation of  potential vorticity (PV) on EB Lagrangian particle trajectories}]$\,$\\
The fluid velocity vector field is denoted as $u=\mathbf{u}\cdot \nabla$ and the circulation one-form as $v^\flat=\mathbf{v}\cdot d\mathbf{x}$. 
The exterior derivatives of the two equations in (\ref{EulBousseqns-1form}) are written as
 \begin{equation}
\big(\partial_t +  \mathcal{L}_u\big)dv^\flat = -gdb\wedge dz
\quad\hbox{and}\quad
\big(\partial_t +  \mathcal{L}_u\big)db = 0
\,.
\label{EulBousseqns-2form}
\end{equation}
Consequently, the product rule for the Lie derivative of a wedge product in \eqref{Lie product rule} implies that%
\footnote{Equation \eqref{EulBouss-PV-def} introduces the Hodge star operator on $\mathbb{R}^3$ discussed next. \label{Hodge-Laplace-DeRham-sec}}
 \begin{equation}
\big(\partial_t +  \mathcal{L}_u\big)\star(db \wedge dv^\flat) = 0
\quad\hbox{or}\quad
\partial_t \,q
+
\mathbf{u}\cdot\nabla q 
=
0
\,,
\label{EulBousseqns-3form}
\end{equation}
in which the scalar quantity
 \begin{equation}
 q  = \star(db \wedge dv^\flat) =  \nabla b\cdot \mathrm{curl}\,\mathbf{v}
 \,,
\label{EulBouss-PV-def}
\end{equation}
is called \emph{potential vorticity} and is abbreviated as PV.
The potential vorticity is an important diagnostic for many processes in geophysical fluid dynamics. 
\begin{remark}
Conservation of PV on fluid parcels is called \emph{Ertel's theorem}.\index{potential vorticity!PV}\end{remark} \index{Ertel's theorem}
\end{remark}

\begin{definition}\index{Hodge star operator}
The \emph{Hodge star operator} establishes a linear correspondence between the space of $k$-forms and the space of $(3-k)$-forms, for $k=0,1,2,3$. This correspondence may be defined by its usage: 
\begin{eqnarray*}
*1&:=& d\,^3x = dx^1\wedge dx^2\wedge dx^3\,,\\
*d\mathbf{x} &:=& d\mathbf{S}\,,\\
(*dx^1,*dx^2,*dx^3) &:=& (dS_1,dS_2,dS_3)\,,\\
&:=&(dx^2\wedge dx^3,\,dx^3\wedge dx^1,\,dx^1\wedge dx^2)\,,\\
*d\mathbf{S} &=&d\mathbf{x} \,,\\
(*dS_1,*dS_2,*dS_3)&:=&(dx^1,dx^2,dx^3)  \,,\\
*d\,^3x &=& 1
\,,
\end{eqnarray*}
in which each formula admits cyclic permutations of the set $\{1,2,3\}$. 
\end{definition}

\begin{remark}
Note that $*\!*\alpha=\alpha$ for these $k$-forms.
\end{remark}

\begin{definition}[$L^2$ inner product of forms]\label{innerproductforms-def}
The Hodge star induces an \emph{inner product} $(\,\cdot\,,\,\cdot\,):\Lambda^k(M)\times\Lambda^k(M)\to\mathbb{R}$ on the space of $k$-forms. Given two $k$-forms $\alpha$ and $\beta$ defined on a smooth manifold $M$, one defines their $L^2$ inner product as
\begin{eqnarray} \index{inner product!differential forms}
(\alpha,\,\beta)
: =
\int_M \alpha\wedge *\beta = \int_M \langle \alpha, \beta \rangle\;d\,^3x
\,,\label{innerproductforms-eqn}
\end{eqnarray}
where $d\,^3x$ is the volume form. The main examples of the inner product are for $k=0,1$. These are given by the $L^2$ pairings,
\begin{eqnarray*}
(f,\,g) = \int_M f\wedge *g&:=& \int_M fg\,d\,^3x \,,\\
(\mathbf{u}\cdot d\mathbf{x},\,\mathbf{v}\cdot d\mathbf{x})
=
\int_M \mathbf{u}\cdot d\mathbf{x}
\wedge *(\mathbf{v}\cdot d\mathbf{x})
&:=& \int_M \mathbf{u}\cdot \mathbf{v}\,d\,^3x\,.
\end{eqnarray*}
\end{definition}

Combining the Hodge star operator with the exterior derivative yields the following vector calculus operations:
\begin{eqnarray*}
*\,d*(\mathbf{v}\cdot d\mathbf{x}) &=& \mathrm{div}\,\mathbf{v}
\,,\\
*\,d(\mathbf{v}\cdot d\mathbf{x}) &=& (\mathrm{curl}\,\mathbf{v})\cdot d\mathbf{x}
\,,\\
d*d*(\mathbf{v}\cdot d\mathbf{x}) &=& (\nabla\mathrm{div}\,\mathbf{v})\cdot d\mathbf{x}
\,,\\
*\,d*d(\mathbf{v}\cdot d\mathbf{x}) &=&
\mathrm{curl}\,(\mathrm{curl}\,\mathbf{v})\cdot d\mathbf{x}
\,.
\end{eqnarray*}%

\index{Ertel's theorem}
\begin{remark}[\emph{Ertel's theorem for the vorticity vector field}]$\,$\\ \textrm
Writing the vorticity vector field $\omega=\boldsymbol{\omega}\cdot\nabla = ( \star dv^\flat)^\sharp$, we have the 
vorticity equation in terms of vector fields
 \begin{align*}
\big(\partial_t +  \mathcal{L}_u\big)\omega 
=
\partial_t\omega + [u,\,\omega]  = g(\star( dz\wedge db)) )^\sharp := g\nabla{z}\times\nabla{b}\cdot\nabla
\,. \end{align*}
Thus, conservation of the potential vorticity $q= \star(db \wedge dv^\flat) = \omega \contract db  =  \nabla b\cdot \mathrm{curl}\,\mathbf{v} $ 
may also be proved by the product rule upon regarding $\omega$ as  a vector field,
 \begin{align*}
 \big(\partial_t +  \mathcal{L}_u\big)q
 &=
 \big(\partial_t +  \mathcal{L}_u\big)\star(db \wedge dv^\flat)
 = 
\big(\partial_t +  \mathcal{L}_u\big)(\omega \contract db )
\\&=
\big(\big(\partial_t +  \mathcal{L}_u\big)\omega\big) \contract db 
+ \omega\contract  \big(\partial_t +  \mathcal{L}_u\big) db = 0
\,. \end{align*}
 \end{remark}
 
 \begin{remark}[\emph{Material derivative formulation}]$\,$\\ \textrm
Denoting with standard musical notation with superscripts $\flat: \mathfrak{X}\to \Lambda^1$ and $\sharp: \Lambda^1 \to \mathfrak{X}$
\[
\frac{D}{Dt} := \partial_t +  \mathcal{L}_u
\quad\hbox{and}\quad
\omega := ( \star dv^\flat  )^\sharp = \boldsymbol{\omega}\cdot\nabla
\]
provides an intuitive expression of Ertel's theorem (\ref{EulBousseqns-3form}) that helps understand it in terms of the time derivative $\frac{D}{Dt}$ following the flow of the fluid particles. Namely, it suggests writing in vector form
\[
\frac{D\omega}{Dt}
=
\frac{D}{Dt}(\boldsymbol{\omega}\cdot\nabla )
= g\nabla{z}\times\nabla{b}\cdot\nabla
= g (\star( dz\wedge db)) )^\sharp
\quad\hbox{and}\quad
\frac{Db}{Dt} = 0
\,,
\]
so that the product rule for derivatives yields conservation of PV on fluid parcels, as 
\[
\frac{Dq}{Dt}
=
\frac{D}{Dt}(\boldsymbol{\omega}\cdot\nabla{b})
=
\Big(\frac{D}{Dt}(\boldsymbol{\omega}\cdot\nabla)\Big)b
+
(\boldsymbol{\omega}\cdot\nabla)\frac{Db}{Dt}
= g\nabla{z}\times\nabla{b}\cdot\nabla{b} + 
(\boldsymbol{\omega}\cdot\nabla)\frac{Db}{Dt} = 0
\,.
\]
 
 \end{remark}

\begin{remark}[\emph{Ertel's theorem implies conserved quantities}]$\,$\\ \textrm
The constancy of the scalar quantities $b$ and $q$ on fluid parcels implies conservation of the spatially integrated quantity,
 \begin{equation}
 C_\Phi
 =
 \int_D \Phi(b,q)\,d\,^3x
 \,,
\label{EulBouss-conlaws}
\end{equation}
for any differentiable function $\Phi$  for which the integral exists. 
\index{potential vorticity!PV}\index{Ertel's theorem}
\begin{proof}
\begin{align*}
 \frac{d}{dt}C_\Phi
 &=
 \int_D \Phi_{,b}\partial_t b + \Phi_{,q}\partial_t q\,d\,^3x
=
- \int_D \Phi_{,b}\mathbf{u}\cdot\nabla{b} 
+ \Phi_{,q}\mathbf{u}\cdot\nabla{q}\,d\,^3x 
\\&=
- \int_D \mathbf{u}\cdot\nabla\Phi(b,q)\,d\,^3x 
=
- \int_D \nabla\cdot\big(\mathbf{u}\,\Phi(b,q)\big)\,d\,^3x 
\\&=
- \oint_{\partial D} \Phi(b,q)\,\mathbf{u}\cdot\mathbf{\hat{n}}\,d\,S 
=
0\,,
\end{align*}
when the normal component of velocity $\mathbf{u}\cdot\mathbf{\hat{n}}$ vanishes at the boundary $\partial D$. 

\end{proof}
\end{remark}

\subsection{Variational formulae in three dimensions}  \label{sec-2.1}
In fluid dynamical applications, the advected Eulerian variables $b$
and $D\,d^3x$ represent the buoyancy $b$ (or specific entropy, for the
compressible case) and volume element (or mass density) $D\,d^3x$,
respectively. In the Euler--Poincar\'e theory, the variations of the tensor
functions $a$ at fixed $\mathbf{x}$ and $t$ are also given by advective Lie
derivatives, namely
$\delta a = - \mathcal{L}_w\,a$, for a vector field $w=(\delta g_\epsilon) g_\epsilon^{-1}$.
In particular, we have
\begin{align}
\begin{split}
\delta b
&= - \mathcal{L}_w\, b = -\mathbf{w}\cdot\nabla\,b\,,  \hfil \hbox{for a vector field }\quad w:=\partial_\epsilon g_\epsilon g_\epsilon^{-1}\big|_{\epsilon=0}
\\
\delta (D\ d^3x)&=- \mathcal{L}_w\,(D\,d^3x)
= -\nabla\cdot(D\mathbf{w})\ d^3x\,,
\\
\delta u = \delta(\mathbf{u}\cdot\nabla) &= \partial_t w - \mathrm{ad}_u w 
= \big(\partial_t \mathbf{w} + \mathbf{u}\cdot\nabla \mathbf{w} - \mathbf{w}\cdot\nabla \mathbf{u} \big)\cdot\nabla\,.
\\ 
\end{split}
\label{vars: u-D-b}
\end{align}
Hence, Hamilton's principle with this
dependence and these variations yields the general form for GFD models, \index{Hamilton's principle!GFD models}
\begin{align}
\begin{split}
0 &=\delta \int _0^T\ell(u, b,D)\,dt
\\
&=\int _0^T
\scp{\frac{\delta \ell}{\delta {u}}}{\delta{u}}
+ \scp{\frac{\delta \ell}{\delta b}}{\delta b}
+\scp{\frac{\delta \ell}{\delta D}}{\delta D}\,dt
\\
&=\int _0^T \!\!\! \int _{\cal D}\bigg[\frac{\delta \ell}{\delta \mathbf{u}}
\cdot \Big(\frac{\partial \mathbf{w}}{\partial t}
-\mathrm{ad}_{\mathbf{u}}\,\mathbf{w}\Big)
-\frac{\delta \ell}{\delta b}\ \mathbf{w}\cdot\nabla\,b
-\frac{\delta \ell}{\delta D}\ \Big(\nabla
\cdot(D\mathbf{w})\Big)\bigg]d\,^3x\,dt
\\
&=\int _0^T \!\!\! \int _{\cal D}\mathbf{w}\cdot
\bigg[-\frac{\partial }{\partial t}
\frac{\delta \ell}{\delta \mathbf{u}}
-\mathrm{ad}^*_{\mathbf{u}}\ \frac{\delta \ell}{\delta \mathbf{u}}
-\frac{\delta \ell}{\delta b}\ \nabla\,b
+D\ \nabla\frac{\delta \ell}{\delta D}\bigg]d\,^3x\,dt
\\
&=-\int _0^T \!\!\! \int _{\cal D}\mathbf{w}\cdot
\bigg[\Big(\frac{\partial }{\partial t}
 + \mathcal{L}_u \Big)\frac{\delta \ell}{\delta \mathbf{u}}
+\frac{\delta \ell}{\delta b}\ \nabla\,b
-D\ \nabla\frac{\delta \ell}{\delta D}\bigg]d\,^3x\,dt
\\
&=-\int _0^T \!\!\! \int _{\cal D}\mathbf{w}\cdot
\bigg[\Big(\frac{\partial }{\partial t}
 + \mathcal{L}_u \Big)\frac{1}{D}\frac{\delta \ell}{\delta \mathbf{u}}
+\frac{1}{D}\frac{\delta \ell}{\delta b}\ \nabla\,b
- \nabla\frac{\delta \ell}{\delta D}\bigg] D\, d\,^3x\,dt
\,\end{split}
\label{eq-EP-Eul}
\end{align}
where in the last step we have used the auxiliary Lie-derivative relation for the continuity equation
$({\partial}/{\partial t}+ \mathcal{L}_u)Dd^3x = 0$ to factor out $Dd^3x$. 
We have also dropped boundary terms arising from
integrations by parts, by invoking `natural boundary conditions'.
Specifically, we have imposed $\mathbf{\hat{n}}\cdot\mathbf{w}=0$ on the
boundary, where $\mathbf{\hat{n}}$ is the boundary's outward unit normal
vector and ${w} = \delta g_t \circ g_t^{-1}$ vanishes at
the endpoints in time.

\textbf{Question}
Compute $\mathrm{ad}^*_{\mathbf{u}} \frac{\delta \ell}{\delta \mathbf{u}} $ by integrating by parts 
in passing from line 3 to line 4 of equation \eqref{eq-EP-Eul};
\[
\int _{\cal D} \mathbf{w} \cdot \mathrm{ad}^*_{\mathbf{u}} \frac{\delta \ell}{\delta \mathbf{u}}  \,d\,^3x
=
 \int _{\cal D} \frac{\delta \ell}{\delta \mathbf{u}} \cdot {\mathrm{ad}_{\mathbf{u}}\,\mathbf{w}}  \,d\,^3x
= 
- \int _{\cal D} \frac{\delta \ell}{\delta \mathbf{u}} \cdot \big(\mathbf{u}\cdot\nabla \mathbf{w} - \mathbf{w}\cdot\nabla \mathbf{u} \big)\,d\,^3x
\,.\]

\textbf{Answer}
In equation \eqref{eq-EP-Eul}, set $\frac{\delta \ell}{\delta \mathbf{u}}=: \mathbf{m}$ and compute $\mathrm{ad}^*_{\mathbf{u}} \mathbf{m} $
from the following definition,
\begin{align*}
\int _{\cal D} \mathbf{w} \cdot \mathrm{ad}^*_{\mathbf{u}} \mathbf{m} \,d\,^3x
&:=
\int _{\cal D} \mathbf{m} \cdot {\mathrm{ad}_{\mathbf{u}}\,\mathbf{w}}  \,d\,^3x
 =
- \int _{\cal D} \mathbf{m} \cdot \big(\mathbf{u}\cdot\nabla \mathbf{w} - \mathbf{w}\cdot\nabla \mathbf{u} \big)\,d\,^3x
\\ \scp{  \mathrm{ad}^*_u m}{w}  & = 
- \int _{\cal D} m_j \big(u^k w_{,\,k}^j - w^lu_{,\,l}^j \big)\,d\,^3x
= \int _{\cal D} \big(\partial_k(m_j u^k) + m_k u_{,\,j}^k \big)w^j\big)\,d\,^3x
\\&= 
 \int _{\cal D} \big( (\partial_km_j + m_k \partial_j ) u^k  \big) w^j\,d\,^3x
 =:
  \langle  \mathcal{L}_u m\,,\,w\rangle
\,.\end{align*}
Thus, the definitions \emph{coincide}: $\mathrm{ad}^*_um = \mathcal{L}_um := (\partial_km_j + m_k \partial_j ) u^k dx^j \otimes d^3x$ 
is the Lie derivative by vector field $u:=\mathbf{u}\cdot \nabla$ of 1-form density $m:=\mathbf{m} \cdot d\mathbf{x} \otimes d\,^3x$.
A 1-form density $m$ is \emph{dual} to a vector field $u$ in the $L^2$ integral pairing $\scp{}{}$ on $\mathcal{D}\in \mathbb{R}^n$,
in the sense that $\scp{u} = \int_{\cal D} u\contract m =  \int_{\cal D} u^j\partial_j\contract m_kdx^k\otimes d^3x  =  \int_{\cal D} \mathbf{u}\cdot \mathbf{m}\,d^3x$, since $\partial_j \contract dx^k = \delta_j^k$. 

\subsection{Euler--Poincar\'e framework for 3D continuum motion}
The Euler--Poincar\'e equations for continua may
now be summarized in vector form for advected Eulerian variables $a$ in the
example set \ref{examp: LieDeriv}. The last step in the calculation \eqref{eq-EP-Eul} 
expresses the Euclidean components of the Euler--Poincar\'e equations for continua
in Kelvin--Noether theorem form \index{Euler--Poincar\'e!equation} \index{Euler--Poincar\'e!equation} \index{Kelvin--Noether!theorem}
\begin{equation}
\Big(\frac{\partial }{\partial t} +  \mathcal{L}_u\Big)
\Big(\frac{1}{D}\frac{\delta \ell}{\delta
\mathbf{u}}\cdot d\mathbf{x}\Big)
\,+\,\frac{1}{D}\frac{\delta \ell}{\delta b}\nabla b \cdot d\mathbf{x}
\,-\nabla\Big(\frac{\delta \ell}{\delta D}\Big)\cdot d\mathbf{x}
 = 0,
\label{EP-Kthm}
\end{equation}
in which the variational derivatives of the Lagrangian $\ell(\mathbf{u}, b,D)$ are
computed as Fr\'echet derivatives, according to the usual physical conventions. 
Formula \eqref{EP-Kthm} is the Kelvin--Noether form of the equation of motion for ideal continua.
Hence, we have the explicit Kelvin theorem expression, \index{Kelvin--Noether!theorem}\index{Kelvin--Noether!form} 
\begin{equation} \label{KN-theorem-bD}
\frac{d}{dt}\oint_{{c}_t(\mathbf{u})} \frac{1}{D}\frac{\delta \ell}{\delta \mathbf{u}}\cdot d\mathbf{x}
=
\oint_{{c}_t (\mathbf{u})} \Big(\frac{\partial }{\partial t} +  \mathcal{L}_u\Big) \Big(\frac{1}{D}\frac{\delta \ell}{\delta \mathbf{u}}\cdot d\mathbf{x}\Big)
= 
-\oint_{{c}_t(\mathbf{u})}
\frac{1}{D}\frac{\delta \ell}{\delta b}\nabla b \cdot d\mathbf{x}\,,
\end{equation}
where the closed curve ${c}_t(\mathbf{u})$ moves with the fluid velocity
$\mathbf{u}$. 

Then, by applying Stokes' theorem to the last term in \eqref{KN-theorem-bD}, one finds that the Euler equations in \eqref{EP-Kthm} generate circulation of
$\mathbf{v}:=(D^{-1}\delta{l}/\delta\mathbf{u})$
unless the gradients~$\nabla b$ and $\nabla(D^{-1}\delta{l}/\delta{b})$ are functionally dependent and thus collinear. 

The corresponding \emph{conservation of potential vorticity} $q$ on fluid
parcels is given by
\begin{equation} \label{pv-cons-EP}
\frac{\partial{q}}{\partial{t}}+\mathbf{u}\cdot\nabla{q} = 0,
\quad \hbox{where}\quad
q:=\frac{1}{D}\nabla{b}\cdot\mathrm{curl}
\left(\frac{1}{D}\frac{\delta \ell}{\delta \mathbf{u}}\right).
\end{equation}
The continuity equation for $D$ and the advection of $b$ and $q$ combine to conserve the following integral quantity, 
\begin{equation} \label{pv-Casimir}
C_\Phi = \int D \Phi(b,q)\,d^3x
\,,\end{equation}
for any differential function $\Phi(b,q)$.

\textbf{Question:}
Prove PV advection equation in \eqref{pv-cons-EP} either by direct calculus manipulations, or by using the following properties: 
product rule for the Lie derivative $(\partial_t +  \mathcal{L}_u)$; the continuity equation for $D\,d\,^3x$; the commutativity of differential $d$ 
and Lie derivative $ \mathcal{L}_u$; the antisymmetry of the wedge product, 
$da\wedge db=-db\wedge da=(\nabla a\times \nabla b)\cdot d\mathbf{S}$; the relation $d^2=0$, 
for example $d^2b = d(\nabla b \cdot d\mathbf{x})=({\rm curl} \nabla b )\cdot d\mathbf{S}=0$, and 
the Kelvin theorem form of the EP equation motion in \eqref{EP-Kthm}.

\begin{answerwide}
PV may be expressed naturally as a density which is intimately associated with the circulation in Kelvin's theorem, 
\begin{align*}
qD\,d^3x = db\wedge d\Big(\frac{1}{D}\frac{\delta \ell}{\delta \mathbf{u}}\cdot d\mathbf{x}\Big)
\,.\end{align*}
Taking the Eulerian transport time derivative $(\partial_t  +  \mathcal{L}_u)$, applying the product rule and 
substituting the continuity equation $(\partial_t  +  \mathcal{L}_u)(D\,d^3x)=0$ and advection of buoyancy $(\partial_t  +  \mathcal{L}_u)b=0$ yields 
\begin{align*}
\Big(\partial_t  +  \mathcal{L}_u\Big)(qD\,d^3x) &= \big(\partial_t q + \mathbf{u}\cdot\nabla q) (D\,d^3x) 
\\&= 
db\wedge d \bigg( \big(\partial_t  +  \mathcal{L}_u\big)\Big(\frac{1}{D}\frac{\delta \ell}{\delta \mathbf{u}}\cdot d\mathbf{x}\Big)\bigg)
= 
- db\wedge d \bigg( 
\frac{1}{D}\frac{\delta \ell}{\delta b} db\bigg)
\\&= 
- db\wedge d \bigg( 
\frac{1}{D}\frac{\delta \ell}{\delta b}\bigg) \wedge db
= 0
\,.\end{align*}
This derivation shows the robustness of PV advection. Namely, PV advection is independent of the details
of how the Lagrangian depends on the buoyancy, $b$. 
\end{answerwide}

\subsection*{Two equivalent representations of EP GFD motion equations}
Equations \eqref{EP-Kthm}--\eqref{pv-cons-EP} embody most of the panoply
of equations for GFD.  The vector form of equation \eqref{EP-Kthm} is,
\begin{align}\label{vec-EP-eqn1}
\underbrace{\Big({\partial\over \partial t} + \mathbf{u}\cdot\nabla\Big)
\Big({1\over D}{\delta l\over \delta \mathbf{u}}\Big)
+{1\over D}{\delta l\over \delta u^j}\nabla u^j}
_{\hbox{Geodesic Nonlinearity: Kinetic energy}}
= \underbrace{\nabla {\delta l\over \delta D}
-{1\over D} {\delta l\over \delta b}\nabla b}_{\hbox{Potential energy}}
\,.\end{align}
In geophysical applications, the Eulerian variable $D$ represents
the frozen-in volume element and $b$ is the buoyancy. In this case,
\emph{Kelvin's theorem} is
\[
\frac{d I}{dt}
=\int\!\!\!\int_{S(t)}
\nabla \left({1\over D} {\delta l\over \delta b}\right)
\times\nabla b \cdot d \mathbf{S} 
\,,\]
with circulation integral
\[
I=\oint_{\gamma(t)} {1\over D}{\delta l\over \delta \mathbf{u}}\cdot d\mathbf{x}
\,.\]
An alternative vector form of equation \eqref{vec-EP-eqn1} is obtained via a vector calculus identity,
\begin{align}\label{vec-EP-eqn2}
{\partial\over \partial t} \Big({1\over D}{\delta l\over \delta \mathbf{u}}\Big)
-  \mathbf{u}\times {\nabla}{\rm curl} \Big({1\over D}{\delta l\over \delta \mathbf{u}}\Big)
+ \nabla \Big(\mathbf{u}\cdot{1\over D}{\delta l \over \delta \mathbf{u}}\Big)
= \nabla {\delta l\over \delta D}
- {1\over D} {\delta l\over \delta b}\nabla b
\,.\end{align}
The first form of the Euler--Poincar\'e motion equation in \eqref{vec-EP-eqn1} uses the dynamic definition of Lie derivative,
while the equivalent second form in equation \eqref{vec-EP-eqn2} uses Cartan's geometric definition of Lie derivative.

\begin{remark}[\emph{Hamiltonian formulation for 3D EB}]$\,$\\ \textrm
In addition to the Casimir $C_\Phi$ in \eqref{pv-Casimir}, the Euler--Boussinesq fluid equations (\ref{EulBous-motion-eqn-Lie}) also conserve the total energy 
 \begin{equation}
 E =\int_D 
 \frac{1}{2}|\mathbf{u}|^2+bz 
 \
 d\,^3x
 \,,
\label{EulBouss-erg-def}
\end{equation}
which is the sum of the kinetic and potential energies.


\end{remark}

\section{Two additional examples of geometric fluid dynamics}

\subsection{Rotating shallow water (RSW) equations} 
Consider dynamics of rotating shallow water (RSW)  
on a two dimensional domain with horizontal planar coordinates $\mathbf{x}=(x,y)$.
This RSW motion is governed by the following nondimensional
equations for variables depending on $(\mathbf{x},t)$ comprising the horizontal fluid velocity vector $\mathbf{u}=(u,v)$ and the total depth is  
$\mathscr{D}(\mathbf{x},t)$,
\small
\begin{equation}   
\epsilon\frac{d}{dt}\mathbf{u}+ f (\mathbf{x}) \mathbf{\hat{z}} \times\mathbf{u}   
+ \nabla \zeta = 0\, ,   
\qquad 
\frac{\partial \mathscr{D}}{\partial t} + \nabla\cdot ( \mathscr{D}\mathbf{u} )= 0\, ,   
\label{rsw}   
\end{equation}
with notation
\begin{equation*}   
\frac{d}{dt}:=\left(\frac{\partial}{\partial t} + \mathbf{u}\cdot\nabla\right) 
\quad\hbox{and}\quad  
\zeta:=\bigg(\frac{\mathscr{D} - \mathscr{B}(\mathbf{x})}{\epsilon {\rm Fr}^2}\bigg)\, ,
\label{Notation}   
\end{equation*} 
where the Rossby number $\epsilon\ll1$ and Froude number ${\rm Fr}=O(1)$ are non-dimensional constants.
\normalsize

$\bullet$  These equations include spatially variable Coriolis parameter $f(\mathbf{x})\mathbf{\hat{z}}= \mathrm{curl}\mathbf{R}(\mathbf{x})$ and
mean depth $\mathscr{B}=\mathscr{B}(\mathbf{x})$.
\medskip

$\bullet$  Their geometric properties will be revealed efficiently as an exercise using the Euler--Poincar\'e approach.
\medskip

\textbf{Question}
\begin{enumerate}[(i)]
\item
Show that the RSW motion equation \eqref{rsw} follows as an Euler--Poincar\'e equation \eqref{EP-Kthm} \index{Euler--Poincar\'e!equation}
\[
\Big(\frac{\partial }{\partial t} +  \mathcal{L}_u\Big)
\Big(\frac{1}{D}\frac{\delta \ell}{\delta
\mathbf{u}}\cdot d\mathbf{x}\Big)
\,+\,\frac{1}{D}\frac{\delta \ell}{\delta b}\nabla b \cdot d\mathbf{x}
\,-\nabla\Big(\frac{\delta \ell}{\delta D}\Big)\cdot d\mathbf{x}
 = 0,
\]
arising from Hamilton's variational principle $ \delta S=0$ for the action integral,
\begin{align}
S=\int_0^T l(\mathbf{u},\mathscr{D}) dt \quad\hbox{and}\quad 
l(\mathbf{u},\mathscr{D}) = \int \frac{\epsilon}{2}\mathscr{D} |\mathbf{u}|^2 
+ \mathscr{D}\mathbf{u}\cdot\mathbf{R}(\mathbf{x}) - \frac{(\mathscr{D}-\mathscr{B}(\mathbf{x}))^2}{2\epsilon\cal{F}} \,d^2x\,.
\label{rsw-Lag}
\end{align}
in which $\mathscr{D}(\mathbf{x},t)\,d^2x$ is an advected quantity. 

\item
Use the Euler-Poincar\'e equations to show that the RSW equations satisfy Kelvin's circulation theorem
\[
\frac{d}{dt} \oint_{c_t} \mathbf{v} \cdot d \mathbf{x} = 0 \,,
\]
 with $\mathbf{v}=\epsilon \mathbf{u} + \mathbf{R}(\mathbf{x})$.

\item
Use the Euler-Poincar\'e equations to show that the RSW equations satisfy
\[
\big(\partial_t +  \mathcal{L}_u\big) d \big( \mathbf{v} \cdot d \mathbf{x}\big) = 0\,,
\]
 with $\mathbf{v}=\epsilon \mathbf{u} + \mathbf{R}(\mathbf{x})$. 

 \item
Show that $ d (\mathbf{v} \cdot d \mathbf{x}) = \omega \,d^2x$, with $\omega := \mathbf{\hat{z}}\cdot\mathrm{curl}\mathbf{v}$.

 \item
Use $(\partial_t +  \mathcal{L}_u)(\omega \,d^2x)=0$ obtained in the previous two parts to derive conservation of potential vorticity on fluid particles.

\end{enumerate}

\textbf{Answer}
\begin{enumerate}
\item
The Euler-Poincar\'e equations are 
\[
\big(\partial_t +  \mathcal{L}_u\big)\frac{1}{\mathscr{D}}\frac{\delta l}{\delta u} = d\frac{\delta l}{\delta \mathscr{D}}
\quad\hbox{and}\quad
\big(\partial_t +  \mathcal{L}_u\big) \big(\mathscr{D}\,d^2x\big) = 0\,,
\]
where $\mathscr{D}^{-1}\frac{\delta l}{\delta u}=(\epsilon \mathbf{u} + \mathbf{R}(\mathbf{x}))\cdot d \mathbf{x}=: \mathbf{v} \cdot d \mathbf{x}$ and 
$\nabla\Big(\frac{\delta \ell}{\delta D}\Big)\cdot d\mathbf{x}=d\frac{\delta l}{\delta \mathscr{D}}=d(\frac{\epsilon}{2} |\mathbf{u}|^2+\mathbf{u}\cdot\mathbf{R} - \zeta)$. Thus,
\[
\big(\partial_t +  \mathcal{L}_u\big) (\mathbf{v} \cdot d \mathbf{x})
= d\Big(\frac{\epsilon}{2} |\mathbf{u}|^2+\mathbf{u}\cdot\mathbf{R} - \zeta\Big)
\]
 with $\mathbf{v}=\epsilon \mathbf{u} + \mathbf{R}(\mathbf{x})$. 
\item
Integrating the previous equation around a loop moving with the fluid produces 
\[
\frac{d}{dt} \oint_{c_t} \mathbf{v} \cdot d \mathbf{x} =  \oint_{c_t} d\Big(\frac{\epsilon}{2} |\mathbf{u}|^2+\mathbf{u}\cdot\mathbf{R} - \zeta\Big) = 0 \,,
\]
 with $\mathbf{v}=\epsilon \mathbf{u} + \mathbf{R}(\mathbf{x})$. 
 \item
 The differential of the Euler-Poincar\'e equation yields with $\omega := \mathbf{\hat{z}}\cdot\mathrm{curl}\mathbf{v}$
\[
\big(\partial_t +  \mathcal{L}_u\big) (\omega d^2x)
= \big(\partial_t +  \mathcal{L}_u\big) d(\mathbf{v} \cdot d \mathbf{x})
= d^2\Big(\frac{\epsilon}{2} |\mathbf{u}|^2+\mathbf{u}\cdot\mathbf{R} - \zeta\Big) = 0
\]
upon commuting the differential $d$ with the Lie derivative and using $d^2=0$. 
\item
By direct computation,
\begin{align*}
d (\mathbf{v} \cdot d \mathbf{x}) = v_{i,j}dx^j\wedge dx^i 
&= v_{1,2}dx^2\wedge dx^1 + v_{2,1}dx^1\wedge dx^2
\\&= (v_{2,1} - v_{1,2})\,d^2x
= \mathbf{\hat{z}}\cdot\mathrm{curl}\mathbf{v}\,d^2x = \omega \,d^2x
\end{align*}
\item
We have $(\partial_t +  \mathcal{L}_u) (\omega d^2x)$ and $(\partial_t +  \mathcal{L}_u) (\mathscr{D} d^2x)$. Therefore, by the product rule for the evolutionary operator $(\partial_t +  \mathcal{L}_u)$ we have
\[
0 = (\partial_t +  \mathcal{L}_u)\Big( \frac{\omega}{\mathscr{D}} (\mathscr{D} d^2x) \Big) 
= \Big( (\partial_t +  \mathcal{L}_u) \frac{\omega}{\mathscr{D}}\Big) (\mathscr{D} d^2x)
+ \frac{\omega}{\mathscr{D}} (\partial_t +  \mathcal{L}_u) (\mathscr{D} d^2x)
\]
Since the second term vanishes via the continuity equation, $(\partial_t +  \mathcal{L}_u) (\mathscr{D} d^2x)$, the first term yields
\[
0 = (\partial_t +  \mathcal{L}_u) \frac{\omega}{\mathscr{D}} = \left(\frac{\partial}{\partial t} + \mathbf{u}\cdot\nabla\right)\frac{\omega}{\mathscr{D}}
\,.\quad \hbox{Hence,}\quad 
\frac{dq}{dt} = 0
\,,\quad \hbox{with}\quad q:= \omega/\mathscr{D}\,.
\]
This is conservation of potential vorticity on fluid particles.

\vspace{2cm}

\item
Derive the Euler-Poincar\'e equations when Hamilton's variational principle $ \delta S=0$ for the 
action integral.in \eqref{rsw-Lag} is modified to include the advected buoyancy $b(\mathbf{x},t)$ satisfying 
$\partial_t b + \mathbf{u}\cdot\nabla b = 0$, as follows,
\begin{align}
\begin{split}
S&=\int_0^T l(\mathbf{u},\mathscr{D},b) dt \quad\hbox{and}\quad 
\\
l(\mathbf{u},\mathscr{D},b) &= \int \frac{\epsilon}{2}\mathscr{D} |\mathbf{u}|^2 
+ \mathscr{D}\mathbf{u}\cdot\mathbf{R}(\mathbf{x}) - \frac{b\,\mathscr{D}\big(\mathscr{D}-2\mathscr{B}(\mathbf{x})\big)}{2\epsilon {\rm Fr}^2} \,d^2x\,.
\end{split}
\label{rsw-Lag-b}
\end{align}
In particular, write the modified evolution equation for $q=\omega/\mathscr{D}$.
\end{enumerate}

\subsection{Adiabatic compressible ideal fluid flow in 3D} 
In the case of adiabatic compressible ideal fluid flow in 3D, the action in Hamilton's
principle is given by
\begin{equation}
{S}_{red}=\int_0^T l(\mathbf{v},D,s)\,dt = \int_0^T   \left(\frac{ D}{2}\
|\mathbf{v}|^2 - De(D,s)\right)d^3x\,dt\,,
\label{mhdact}
\end{equation}
where $e(D,s)$ is the fluid's \emph{specific internal energy}, whose dependence
on the \emph{mass density} $D$ and \emph{specific entropy} $s$ is given as the \emph{equation of
state}. An isotropic medium satisfies the \emph{thermodynamic first law}, written in the form 
\index{specific internal energy}\index{equation of state}\index{thermodynamic first law}
\begin{equation}
de=-p\,d(D^{-1})+Tds
\,.\label{1st-law}
\end{equation}
The \emph{thermodynamic first law} defines pressure as $p(D,s):=-\,\partial e/\partial D^{-1}$ and temperature as $T(D,s):=\partial e/\partial s$. 
\\That is, ``volume flows to equalise pressure" and ``heat flows to equalise temperature". The variation of the Lagrangian $l(\mathbf{v},D,s)$ in (\ref{mhdact}) is
\begin{equation}
0 = \delta {S}_{red} = \int_0^T dt\, \int_0^T
D\mathbf{v}\cdot\delta\mathbf{v}-DT\,\delta s +\left(\frac{1}{2}
|\mathbf{v}|^2 - h(p,s)\right)\delta D\, d^3x\,.
\label{var-red}
\end{equation}
The quantity $h=e+p/D$ in \eqref{var-red} denotes the \emph{specific enthalpy}, which satisfies
$dh(p,s)=(1/D)dp+Tds$\,. \index{specific enthalpy}

The variations in equation \eqref{var-red} are given by
\begin{align}
\begin{split}
\delta\mathbf{v} &= \partial_t\xi - \mathbf{ad}_v \xi = \partial_t\xi -  \mathcal{L}_v \xi
\,,\quad 
\delta s = - \,\xi\cdot \nabla s = -  \mathcal{L}_\xi s
\,,
\\
\delta D &= - \,\nabla \cdot (D\xi) = -  \mathcal{L}_\xi (Dd^3x)
\,.
\end{split}
\end{align}
With these variations, Hamilton's principle yields \index{Hamilton's principle}
\begin{align*}
\begin{split}
0 &= \delta {S}_{red} = \int_0^T 
\Big\langle D{v} \,,\,\partial_t \xi - \mathrm{ad}_v \xi\Big\rangle\,dt 
\\&+  \int_0^T DT\,\xi \cdot \nabla s - \left(\frac{1}{2}
|\mathbf{v}|^2 - h(p,s)\right)\nabla \cdot (D\xi) \,d^3x\,dt
\\&= \int_0^T  \Big\langle -(\partial_t + \mathrm{ad}^*_v )(D{v} )\,,\,\xi \Big\rangle \,dt 
\\&+  \int_0^T \Big((DT\nabla s) \cdot \xi + \nabla\big(\tfrac{1}{2}
|\mathbf{v}|^2 - h(p,s)\big) \cdot (D\xi)\Big) \,d^3x\,dt
\\&= \int_0^T 
\Big\langle\underbrace{ - D\big(\partial_t +  \mathcal{L}_{v}\big)
\big({\mathbf{v}}\cdot d\mathbf{x}\big)
+ D Tds + Dd\big(\tfrac{1}{2} \mathbf{v}|^2 - h(p,s)\big)}_{\hbox{1-form density}}\,,\, \xi\Big\rangle \,dt \,.
\end{split}
\label{EPvar-Ham}
\end{align*}

The Euler--Poincar\'e formula in the Kelvin-Noether form
yields the compressible ideal fluid motion equation as
\begin{equation}
-\left(\frac{ \partial}{\partial t}+  \mathcal{L}_{v}\right)
\left({\mathbf{v}}\cdot d\mathbf{x}\right)
+ Tds
+ d\left(\frac{1}{2} |\mathbf{v}|^2 - h(p,s)\right) = 0\,,
\end{equation}
or, in three dimensional vector form after using $Tds - dh= - D^{-1}dp$ and 
$ \mathcal{L}_{v}{\mathbf{v}}\cdot d\mathbf{x}-d|\mathbf{v}|^2/2=\big((\mathbf{v}\cdot\nabla)\mathbf{v}\big)\cdot d\mathbf{x}$,
\begin{equation} 
\frac{ \partial\mathbf{v}}{\partial t} + (\mathbf{v}\cdot\nabla)\mathbf{v}
+\frac{ 1}{ D}\nabla p = 0\,.
\label{3DCcompressFlow}\end{equation}
By definition, the advected variables (specific entropy $s$ and mass density $Dd^3x$ here) satisfy 
the following Lie-derivative relations which close the ideal compressible ideal fluid 
system, 
\begin{align}
\begin{split}
\left(\frac{ \partial}{\partial t}+  \mathcal{L}_\mathbf{v}\right) s=0,
&\hbox{\quad or \quad}
\frac{ \partial s}{\partial t} + \mathbf{v}\cdot\nabla\,s = 0\,,
\\
\left(\frac{ \partial}{\partial t}+  \mathcal{L}_\mathbf{v}\right)(D d^3x)=0,
&\hbox{\quad or \quad}
\left(\frac{ \partial D}{\partial t} + \nabla\cdot(D\mathbf{v})\right)d^3x = 0,
\end{split}
\end{align}
and the pressure function $p(D,s) = D^2\partial e/\partial D$ is obtained from
the equation of state of the fluid, $e=e(D,s)$.
\smallskip


\section{Lie--Poisson brackets for deterministic ideal fluids}

\index{Lie algebra!semidirect product}\index{broken symmetry!semidirect product}\index{Lie--Poisson!bracket} 

\subsection{Semidirect products and symmetry breaking} \index{symmetry breaking!semidirect products}
The Hamiltonian representation of ideal fluid dynamics in general is dual to the \emph{semidirect product\/}, denoted $\circledS$,
of diffeomeorphisms acting on vector spaces (the advected quantities). \index{advected quantities}
In general, the Lie bracket for
semidirect product action $\mathfrak{g}\,\circledS\,V$ of Lie algebra $\mathfrak{g}$
on vector space $V$ is given by \index{semidirect product!Lie algebra}
\begin{align}
\big[(X,a), (\overline{X},\overline{a})\big]
=
\big([X,\overline{X}\,],\overline{X}(a)-X(\overline{a})\big)
\,,\label{SDPaction}
\end{align}
in which $X,\overline{X}\in\mathfrak{g}$ and $a,\overline{a}\in V$.
The action of the Lie algebra $\mathfrak{g}$ on the vector space $V$ is denoted, for
example, $X(\overline{a})$. Usually, the action $\mathfrak{g}\times V\to V$ would be the Lie derivative,
$ \mathcal{L}$, or its transpose  $ \mathcal{L}^T$. Let variables $\mu\in\mathfrak{g}^*$ and $b\in V^*$ be dual, 
respectively, to $X,\overline{X}\in\mathfrak{g}$ and $a,\overline{a}\in V$ as
\begin{align}
\left[\left(\frac{\delta F}{\delta \mu} , \frac{\delta F}{\delta b}\right), \left(\frac{\delta H}{\delta \mu},\frac{\delta H}{\delta b}\right)\right]
= \mp
\left( \left[\frac{\delta F}{\delta \mu},\frac{\delta H}{\delta \mu}\,\right] \,,\,
 \mathcal{L}^T_{\frac{\delta H}{\delta \mu}} \frac{\delta F}{\delta b} -  \mathcal{L}^T_{\frac{\delta F}{\delta \mu}} \frac{\delta H}{\delta b}\right)
\label{Dual-vars}
\end{align}
where in Lie symmetry reduction of the Lagrangian in Hamilton's principle one chooses $(-)$ resp. $(+)$ 
signs for Lie algebras which are invariant under right (resp. left) Lie-group action. 
\index{semidirect product!Lie-Poisson bracket} \index{Hamilton's principle!Lie symmetry reduction}
\index{Lagrangian!Lie symmetry reduction}

Consequently, one may write the Lie-Poisson bracket dual to the semidirect-product 
action of vector fields on vector spaces as, \vspace{-5mm}
\index{broken symmetry!semidirect product}
\begin{align}
\begin{split}
\Big\{ F,H \Big\}(\mu,b) :&= \mp\scp{(\mu,b)}{\left[\left(\frac{\delta F}{\delta \mu} , \frac{\delta F}{\delta b}\right), \left(\frac{\delta H}{\delta \mu},\frac{\delta H}{\delta b}\right)\right]}
\\&= \mp
\scp{(\mu,b)} { \left( \left[\frac{\delta F}{\delta \mu},\frac{\delta H}{\delta \mu}\,\right] ,
 \mathcal{L}^T_{\frac{\delta H}{\delta \mu}}\frac{\delta F}{\delta b} -  \mathcal{L}^T_{\frac{\delta F}{\delta \mu}}\frac{\delta H}{\delta b}\right) }
\\&= 
\pm\scp{\mu} {\mathrm{ad}_{\frac{\delta F}{\delta \mu}} \frac{\delta H}{\delta \mu} }
\pm\scp{ \mathcal{L}^T_{\frac{\delta H}{\delta \mu} }\frac{\delta F}{\delta b} -  \mathcal{L}^T_{\frac{\delta F}{\delta \mu}}\frac{\delta H}{\delta b}}{b}
\\&=
\pm\scp{\mathrm{ad}^*_{\frac{\delta H}{\delta \mu}}\mu} {\frac{\delta F}{\delta \mu}}
\pm \scp{\frac{\delta F}{\delta b}} {- \mathcal{L}_{\frac{\delta H}{\delta \mu} } b}
\mp \scp{\frac{\delta H}{\delta b}} {- \mathcal{L}_{\frac{\delta F}{\delta \mu}}b}
\\&=
\mp\scp{\mathrm{ad}^*_{\frac{\delta H}{\delta \mu}}\mu} {\frac{\delta F}{\delta \mu}}
\mp \scp { \mathcal{L}_{\frac{\delta H}{\delta \mu} } b} {\frac{\delta F}{\delta b}}
\mp \scp{\frac{\delta H}{\delta b}\diamond b} {\frac{\delta F}{\delta \mu} }
\,,\end{split}
\label{SDP-LPB}
\end{align}
where the diamond operation $(\diamond)$ in the last step is defined via integration by parts in the previous step.
The diamond operation $(\diamond)$ figures prominently in both the implications of symmetry of the Lagrangian 
in Hamilton's principle. The increase of that symmetry leads to conservation laws by Noether's theorem 
and the decrease of that symmetry leads to additional forces by the Euler--Poincar\'e theorem. 

For example, the Lie-Poisson Hamiltonian equations for semidirect-product action in equation \eqref{SDP-LPB} 
are written in their matrix operator form below.
In these equations, the diamond term $\frac{\delta H}{\delta b}\diamond b = - b \nabla \frac{\delta H}{\delta b}$ 
in the last line of \eqref{SDP-LPB} represents the force of vertical gravity acting on horizontal gradients of buoyancy, 
\index{Lie algebra!semidirect product} \index{Lie--Poisson bracket!semidirect product}
\index{Lie--Poisson!equations} \index{diamond operation $(\diamond)$} 
\begin{align}
\frac{d}{dt} 
\begin{bmatrix}
\mu \\ b
\end{bmatrix}
=
\mp
\begin{bmatrix}
\mathrm{ad}^*_{\Box}\mu & \Box \diamond b 
\\
 \mathcal{L}_\Box b & 0
\end{bmatrix}
\begin{bmatrix}
{\delta H}/{\delta \mu} \\  {\delta H}/{\delta b}
\end{bmatrix}
=
\mp
\begin{bmatrix}
\mathrm{ad}^*_{\frac{\delta H}{\delta \mu}}\mu + \frac{\delta H}{\delta b}\diamond b
\\  \mathcal{L}_{\frac{\delta H}{\delta \mu} } b
\end{bmatrix}
,
\label{SDP-LPB-eqns}
\end{align}
with $(-)$ resp. $(+)$ signs for Lie algebras invariant under right (resp. left) Lie-group action \cite{HolmGM2025}.
\index{broken symmetry!semidirect product}

\subsection{Example: Lie-Poisson Hamiltonian structure for deterministic 2D EB equations} 

\paragraph{2D EB Notation.}
Upon introducing local temperature notation $\Theta(x,z,t)$ as the buoyancy quantity in a vertical plane, the 
Euler-Boussinesq (EB) equations \eqref{EulBouss-motion-eqn-Newton} in a 2D vertical $(x,z)$ plane without rotation reduce to
\begin{align}
\begin{split}
\frac{\partial}{\partial t}\textbf{u} + \textbf{u}\cdot\nabla \textbf{u} &= -\nabla p +   \Theta\hat{\textbf{z}}
\,,\quad
(\p_t +  \mathcal{L}_u)u^\flat = d(\tfrac12|\textbf{u}|^2 - p) + \Theta\,dz
\,,\\
\frac{\partial}{\partial t} {\Theta} + \textbf{u}\cdot \nabla {\Theta} &= 0
\,,\hspace{27mm}
(\p_t +  \mathcal{L}_u) \Theta = 0
\,,\\
\div\mathbf{u} &= 0
\,,\hspace{25.66mm}
 \mathcal{L}_u (dx\wedge dz) = d(u\contract (dx\wedge dz)) = 0
\,.
\end{split}
\label{DEBeqns}
\end{align}
Taking the curl of the 2D EB motion equation in \eqref{DEBeqns} yields the equations,
\begin{align}
\begin{split}
\partial_t \omega + \bm{u}\cdot \nabla \omega
&= \Theta_x 
\,,
\quad \hbox{since}\quad \star d(\Theta dz) = \star \Theta_x dx\wedge dz = \Theta_x
\,,\\
\partial_t \Theta + \bm{u}\cdot \nabla \Theta &= 0
\,.\end{split}
\label{EB-eqns-def1}
\end{align}
In coordinates, one also has the relations
\begin{align}
\begin{split}
\bm{u} &= -\,\bs{\hat{y}}\times \nabla\psi =: \nabla^\perp\psi = (-\,\psi_z\,,\, \psi_x)\,,
\quad\hbox{so}\quad
{\div}\bm{u} = 0
\,,\\
\quad\hbox{with}\quad 
\omega &:= -\,\bs{\hat{y}}\cdot {\curl}\bm{u} = \Delta \psi  = \psi_{xx} + \psi_{zz}
\quad\hbox{and}\quad 
-\,\bs{\hat{y}}\cdot {\curl}(\Theta \hat{\textbf{z}}) = \Theta_x
\,.
\end{split}
\label{EB-eqns-def2}
\end{align}
Here, the quantity $\omega:=-\,\bs{\hat{y}}\cdot {\curl}\bm{u} = \Delta \psi$ is the $\bs{\hat{x}}\times\bs{\hat{z}}=-\,\bs{\hat{y}}$ component of the vorticity, pointing \emph{into the $(x,z)$ plane}. We may regard both $\omega$ and $\Theta$ as scalar functions. 
 
\paragraph{Jacobian notation.}
The misalignment of the gradients $\nabla a$ and $\nabla b$ of scalar functions $a,b$ in a vertical (x,z) plane may be expressed by the Jacobian, denoted
\begin{align}
J(a,b):=-\,\bs{\hat{y}}\cdot \nabla a\times \nabla b = a_x b_z-a_z b_x =: \{a,b\}
\quad\hbox{or equivalently,}\quad
J(a,b)dx\wedge dz = da\wedge db\,.
\label{def-Jac}
\end{align}
In terms of the Jacobian and the local temperature $\Theta(x,z,t)$, the 2D EB equations in \eqref{EB-eqns-def1} may be expressed as
\begin{align}
\begin{split}
\partial_t \omega  &= J(\omega,\psi) + J(\Theta,z)  \,,
\\
\partial_t \Theta &= J(\Theta, \psi )
\,.
\end{split}
\label{EB-eqns-def3}
\end{align}
The vertical boundary conditions are $\bu\cdot\bs{\wh{n}} = 0 $ and $\bs{\wh{n}}\times \nabla \Theta = 0$. Equivalently, one may take  $\psi\big|_{z=0}=0$, $\psi_{z=H}=0$ 
and $\Theta\big|_{z=0}=\Theta_0>0$,  $\Theta\big|_{z=H}=0$; and periodic in the horizontal coordinate $x$. 

\paragraph{Kelvin circulation theorem for 2D EB.}
2D EB circulation dynamics for velocity 
$\bu=\nabla^\perp\psi =  (-\,\psi_z\,,\, \psi_x)$ and  vorticity $\omega=\Delta \psi= \psi_{xx}+\psi_{zz}$ 
 follows immediately from equation \eqref{EB-eqns-def1} as 
\begin{align} 
\frac{d}{dt}\oint_{c(u)} \!\!\!\! \mbf{u}\cdot d\bx
= 
\frac{d}{dt}\int\!\!\!\!\int_{\p S = c(u)} \!\!\!\! \omega \,dxdz
=  \int\!\!\!\!\int_{\p S = c(u)} \!\!\!\!  dz\wedge d{\Theta}
\,,\label{EB-KelvinThm}
\end{align}
for a closed material loop, $c(u)$, moving with the flow velocity $\bu=\nabla^\perp\psi$. 
\smallskip

Hence, in 2D EB, circulation in a vertical plane is driven by misalignment between the gradients $\nabla z$ and $\nabla {\Theta}$.

\paragraph{Conservation Laws for 2D EB.}
The 2D EB equations in \eqref{EB-eqns-def2} preserve the sum of the kinetic and thermal energies of the EB system, made definite in sign by completing squares to find
\begin{align}
\begin{split}
H_{EB}(\omega,\Theta) 
&= \int  \frac12\,  \omega \Delta^{-1}\omega - z \Theta \,dxdz
\\&= \frac12 \int  \omega \Delta^{-1}\omega + \big(\Theta -   z\big)^2  dxdz
- \frac12\int(\Theta^2 +  z^2) \,dxdz
\,.
\end{split}
\label{EB-eqns-Ham}
\end{align}
The last (negative) term in the second line of \eqref{EB-eqns-Ham} is immaterial, because $\int z^2dxdz$ is a non-dynamical constant and, as mentioned next, $\int{\Theta}^2dxdz$ is a Casimir in the Lie-Poisson Hamiltonian formulation of these EB equations.

The EB equations in \eqref{EB-eqns-def2} also preserve an infinity of integral conservation laws, determined by two arbitrary differentiable functions of temperature $\Phi(\Theta)$ and $\Psi(\Theta)$ as
\begin{equation}
\mc{C}_{\Phi,\Psi}(\Theta,\omega) = \int_{\mathcal D}\Phi(\Theta ) + \omega \Psi(\Theta )\,\mathrm{d}x\mathrm{d}z\,.
\label{eq:casimirsEB}
\end{equation}
That the EB dynamics in \eqref{EB-eqns-def2} preserves the energy in \eqref{EB-eqns-Ham} and the family of integral quantities in \eqref{eq:casimirsEB} can be verified by direct computations. However, the infinity of conservation laws in \eqref{eq:casimirsEB} also indicates that the EB system in \eqref{EB-eqns-def1} possesses a rich mathematical structure which will help in diagnosing its solution behaviour. We investigate the geometrical aspects of this mathematical structure in the next few paragraphs.

\paragraph{Lie-Poisson Hamiltonian form of 2D EB.}
The conservation laws in $\mc{C}_{\Phi,\Psi}$ in \eqref{eq:casimirsEB} signal a deeper mathematical content. In particular, the EB equations \eqref{EB-eqns-def3} may be expressed in Lie-Poisson Hamiltonian form, in terms of the semidirect-product Lie-Poisson bracket given in equation \eqref{SDP-LPB}, as
\begin{equation}
 \frac{\partial (\omega \,; \Theta)}{\partial t}  = \big\{(\omega \,;  \Theta)\,,\, \mc{H}_{EB}\big\},
\label{short-note}
\end{equation}
written in the matrix form as
\begin{equation}
 \frac{\partial }{\partial t} 
\begin{bmatrix}
\omega \\ \Theta
\end{bmatrix}
=
\begin{bmatrix}
J(\omega,\,\cdot\,) & J(\Theta,\,\cdot\,) 
\\ 
J(\Theta,\,\cdot\,)  & 0
\end{bmatrix}
\begin{bmatrix}
{\delta \mc H_{EB}}/{\delta \omega} = \psi
\\ 
{\delta \mc H_{EB}}/{\delta \Theta} = \Theta - z
\end{bmatrix}
=
\begin{bmatrix}
J(\omega,\psi) + J(\Theta,z)  
\\
J(\Theta, \psi )
\end{bmatrix} 
\,.
\label{EB-LPHam}
\end{equation}

\paragraph{Energy conservation of 2D EB flows.}
\begin{itemize}
\item
The Hamiltonian $\mc{H}_{EB}$ is conserved because the Lie-Poisson matrix operator defined in \eqref{EB-LPHam} is skew symmetric in the $L^2$ pairing, so that $\frac{d}{dt}\mc{H}_{EB} = \{\mc{H}_{EB},\mc{H}_{EB}\}=0$. 
\item
The quantities $\mc{C}_{\Phi,\Psi}$ in \eqref{eq:casimirsEB} are conserved because their variational derivatives are null eigenvectors of the Lie-Poisson matrix operator in \eqref{EB-LPHam}. The type of conserved quantity such as  $\mc{C}_{\Phi,\Psi}$ that Poisson commutes with any Hamiltonian are called the \emph{Casimir functions} of the Lie-Poisson bracket \cite{HMR1998}.
\end{itemize} \index{Casimirs!Lie-Poisson bracket} \index{Lie-Poisson bracket!Casimirs}

\paragraph{Geometric considerations of 2D EB flows.} Upon identifying $(\mu,b)=(\omega,\Theta)$ and writing the matrix form of equation \eqref{SDP-LPB-eqns} to recover equation \eqref{EB-LPHam}  in terms of  $(\omega,\Theta)$, the EB model is found to fit into the standard semidirect-product Lie--Poisson Hamiltonian framework for all ideal fluids with advected quantities, including their Casimir conservation laws. Remarkably, the same universal geometric considerations of 2D EB flows transfer from these deterministic considerations to both the SALT and LA-SALT stochastic dynamics we shall consider next.

\subsection{Geometric properties of the Lie-Poisson bracket for 2D EB dynamics}

\begin{proposition}[Jacobi identity]\label{LPB-EB5}
The Lie-Poisson bracket in \eqref{short-note} satisfies the Jacobi identity.  
\end{proposition}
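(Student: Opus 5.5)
The plan is to write the bracket in \eqref{short-note} explicitly and reduce the Jacobi identity to that of the canonical Jacobian bracket $J(a,b)=a_xb_z-a_zb_x$ on functions of $(x,z)$. Reading off the matrix operator in \eqref{EB-LPHam}, the bracket is
\begin{align*}
\{F,H\}(\omega,\Theta)=\int_{\mathcal D} \omega\, J\Big(\frac{\delta F}{\delta \omega},\frac{\delta H}{\delta \omega}\Big)
+\Theta\Big[J\Big(\frac{\delta F}{\delta \omega},\frac{\delta H}{\delta \Theta}\Big)+J\Big(\frac{\delta F}{\delta \Theta},\frac{\delta H}{\delta \omega}\Big)\Big]\,dxdz ,
\end{align*}
up to an overall sign convention that plays no role. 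This is the pairing $\langle (\omega,\Theta), [(f_1,g_1),(f_2,g_2)]\rangle$ with the semidirect-product bracket $[(f_1,g_1),(f_2,g_2)]=(J(f_1,f_2),\,J(f_1,g_2)+J(g_1,f_2))$, which is the 2D instance of \eqref{SDPaction}, where Hamiltonian vector fields $\nabla^\perp f$ act on scalars through the Jacobian.

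The first key step is algebraic. I will check that this bracket on pairs $(f,g)$ satisfies the Jacobi identity. The first component is just the Jacobi identity for the Jacobian bracket, i.e.\ the Poisson bracket on $\mathbb R^2$, which is classical. This also follows from the corollary above on the Jacobi identity for vector fields, since $f\mapsto \nabla^\perp f$ sends $J(f,g)$ to $\pm[\nabla^\perp f,\nabla^\perp g]$ modulo constants. For the second component, the cyclic sum collects terms of the form $J(f_i,J(f_j,g_k))$ and $J(J(f_i,f_j),g_k)$. These cancel after applying the Jacobi identity of $J$ to each triple $(f_i,f_j,g_k)$ and using antisymmetry. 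The $g$'s enter linearly and never pair with one another, so the abelian ideal causes no obstruction. This is the standard statement that $\mathfrak g\,\circledS\,V$ is a Lie algebra whenever $V$ is a representation.

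The second step is the general Lie--Poisson argument. For linear functionals the bracket is exactly the dual of the Lie algebra bracket, so Jacobi holds there directly. For general $F,G,H$ one expands $\{F,\{G,H\}\}$. The extra terms contain second variations of the form $\delta^2 G/\delta(\omega,\Theta)^2$ paired against brackets, and in the cyclic sum they cancel pairwise by symmetry of second variational derivatives. What remains is $\langle(\omega,\Theta),\text{cyclic sum of double brackets}\rangle$, which vanishes by the first step.

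I expect the main obstacle to be keeping this cancellation of second-derivative terms clean. The identity $\delta\{G,H\}/\delta\mu=[\delta G/\delta\mu,\delta H/\delta\mu]+(\text{terms involving }\delta^2)$ requires integrating by parts against $J$. That uses the skew-adjointness $\int aJ(b,c)=\int bJ(c,a)$, which in turn relies on the boundary conditions $\psi|_{z=0,H}=0$ and periodicity in $x$. I would establish this cyclic integral identity first as a lemma. After that, the cancellation becomes a routine symmetric/antisymmetric pairing argument.
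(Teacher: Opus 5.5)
Your proposal is correct and follows the same route as the paper. The paper uses \eqref{SDP-ad-action-fns}--\eqref{SDP-adstar-action-fns} to identify the bracket as the Lie--Poisson bracket on the dual of the semidirect-product algebra of functions under the Jacobian commutator, and then leaves Jacobi to the general theory, whereas you also prove the two facts it takes for granted: Jacobi for that semidirect-product commutator, and the cancellation of second-variation terms. Your explicit bracket, read off the matrix operator in \eqref{EB-LPHam}, agrees up to overall sign with \eqref{SDP-LPB-EB}; the opposite sign of the $\Theta\,J(\delta\mathcal{F}/\delta\omega,\delta\mathcal{H}/\delta\Theta)$ term in the second line of \eqref{EB-brkt3} is a misprint in the paper (as printed that line is not antisymmetric), not an error of yours.
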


$\bullet$  This proposition could be demonstrated by direct computation using the properties of the Jacobian of function pairs. 
\medskip

$\bullet$  The proof given here, though, will illustrate the geometric properties of the 2D EB system in \eqref{EB-eqns-def2} and thereby place it into the wider class of ideal fluid dynamics with advected quantities. \medskip \index{advected quantities}

$\bullet$  In particular, the proof will identify the Poisson bracket in \eqref{short-note} as being defined over domain $\mathcal D$ on functionals of the \emph{dual space}\footnote{Dual with respect to the  $L^2$ pairing on ${\mathcal D}$} given by $(f_1\circledS f_2)^*$ of the Lie algebra  $(f_1\circledS f_2)$ of semidirect-product symplectic transformations. 
\medskip

$\bullet$  Thus, EB dynamics is understood as coadjoint motion generated by the semidirect-product action of the Lie algebra $(f_1\circledS f_2)$ on function pairs $(f_1;f_2)\in (f_1\circledS f_2)^*$. 
\medskip

$\bullet$  This proof of coadjoint motion also identifies the potential vorticity and buoyancy, $(\omega;b)$, as a semidirect-product momentum map.  \index{momentum map!semidirect product} \index{semidirect product!momentum map}

\begin{proof}
The Lie algebra commutator $[\,\cdot\,,\,\cdot\,]$ action for the adjoint (ad) representation of the action of the Lie algebra $f_1\circledS f_2$ on itself is defined by \cite{HMR1998}
\begin{align}
{\mathrm{ad}}_{\big(\overline{f}_1;\overline{f}_2\big)}\big(f_1;f_2\big) 
=
\Big[ \big(f_1;f_2\big) , \big(\overline{f}_1;\overline{f}_2\big)\Big] 
:=
\Big( \big[ f_1,\overline{f}_1 \big] ; \big[ f_1,\overline{f}_2 \big] - \big[ \overline{f}_1,f_2 \big]\Big)\,,
\label{SDP-ad-action-fns}
\end{align}
where the commutator $[\,\cdot\,,\,\cdot\,]$ is given by the Jacobian of the functions $f_1$ and $\overline{f}_2$. For example,
\begin{align}
[ f_1,\overline{f}_2 ]:= J(f_1,\overline{f}_2)
\,,\label{SDP-commutator-fns}
\end{align}
which is also the commutator of symplectic vector fields. Thus, the adjoint (ad) action in \eqref{SDP-ad-action-fns} is the semidirect product Lie algebra action among symplectic vector fields. 
\bigskip

The definition in \eqref{SDP-ad-action-fns} of the semidirect product Lie algebra action among functions defined on the plane $\mathbb{R}^2$ enables the Lie-Poisson bracket \eqref{EB-brkt3} for functionals of $(\omega,\Theta)$ defined on domain $\mathcal D$ to be identified with the coadjoint action of this Lie algebra. 

\smallskip
This is because the variational derivatives of such functionals live in the Lie algebra of symplectic vector fields on domain $\mathcal D$. Hence, the Lie--Poisson bracket will be defined by the pairing 
\begin{align}
\begin{split}
 \frac{\mathrm{d}}{\dt} { \mathcal{F}}(\omega,\Theta)
 =
\Big\{\mathcal{F}\,,\,{\mathcal{H}}\Big\}(\omega,\Theta)
&=
- \Bigg\langle
\big(\omega ; \Theta\big) \,,\,\Bigg[ \bigg(\frac{\delta \mathcal{F}}{\delta \omega} ; \frac{\delta \mathcal{F}}{\delta \Theta}\bigg), 
\bigg(\frac{\delta \mathcal{H}}{\delta \omega} ;\frac{\delta \mathcal{H}}{\delta \Theta} \bigg)\Bigg] \Bigg\rangle
\\&=:
- \Bigg\langle \big(\omega ; \Theta\big)
\,,\,
{\mathrm{ad}}_{\big(\frac{\delta \mathcal{H}}{\delta \omega};\frac{\delta \mathcal{H}}{\delta \Theta}\big)}
\bigg(\frac{\delta \mathcal{F}}{\delta \omega};\frac{\delta \mathcal{F}}{\delta \Theta}\bigg)
 \Bigg\rangle
\\&=:
- \Bigg\langle {\mathrm{ad}}^*_{\big(\frac{\delta \mathcal{H}}{\delta \omega};\frac{\delta \mathcal{H}}{\delta \Theta}\big)}\big(\omega ; \Theta\big)
\,,\,
\bigg(\frac{\delta \mathcal{F}}{\delta \omega};\frac{\delta \mathcal{F}}{\delta \Theta}\bigg)
 \Bigg\rangle
\,.
\end{split}
\label{SDP-LPB-EB}
\end{align}
The angle brackets $\langle\,\cdot\,,\,\cdot\,\rangle$ represent the pairing of the Lie algebra commutator of functions $f_1\circledS f_2$ in \eqref{SDP-ad-action-fns} with its dual Lie algebra $(f_1\circledS f_2)^*$ via the $L^2$ pairing on the domain $\mathcal D$ as will be shown below, in equation \eqref{EB-brkt3}. 

The calculation below will enable us to rewrite the EB equations in \eqref{EB-eqns-def1} in terms of this coadjoint action and thereby reveal its geometric nature as,
\begin{align}
\frac{\partial \big(\omega ; \Theta\big)}{\partial t}
= - \,{\mathrm{ad}}^*_{\big(\frac{\delta \mathcal{H}}{\delta \omega};\frac{\delta \mathcal{H}}{\delta \Theta}\big)}
\big(\omega ; \Theta\big)
= - \,{\mathrm{ad}}^*_{\big(\psi\,;\, (\Theta -z)\big)}
\big(\omega ; \Theta\big)
\,.
\label{SDP-EB-LP-ad-star}
\end{align}
This calculation showcases 2D EB dynamics as the semidirect-product coadjoint action shown in equation \eqref{SDP-LPB-EB} of symplectic vector fields acting on the momentum map $(\omega ; \Theta)$ defined in the dual space of vorticity and buoyancy functions. 

For completeness, it remains to write the coadjoint (ad$^*$) representation of the (right) action of the Lie algebra $f_1\circledS f_2$ on its dual Lie algebra explicitly in terms of the Jacobian operator between pairs of functions as 
\begin{align}
\begin{split}
\Big\langle \big(\omega ; \Theta\big) \,,\,
{\mathrm{ad}}_{\big(\overline{f}_1;\overline{f}_2\big)}\big(f_1;f_2\big) 
\Big\rangle
&:= -\,
\Big\langle \big(\omega ; \Theta\big) \,,\,
\Big[ \big(f_1;f_2\big) , \big(\overline{f}_1;\overline{f}_2\big)\Big] 
\Big\rangle
\\
\hbox{By \eqref{SDP-ad-action-fns} and \eqref{SDP-commutator-fns}}\quad
&:= -\,
\Big\langle \big(\omega ; \Theta\big) \,,\,
\Big( J\big( f_1,\overline{f}_1 \big) ; J\big( f_1,\overline{f}_2 \big) 
- J\big( \overline{f}_1,f_2 \big)\Big)
\Big\rangle
\\
\hbox{Defining the SDP pairing}\quad
&:= -\,
\int_{\mathcal D} 
\omega \,J\big( f_1 , \overline{f}_1 \big)
+ \Theta \Big( J\big( f_1,\overline{f}_2 \big) - J\big( \overline{f}_1,f_2 \big) \Big)
\,\mathrm{d}x \mathrm{d}y
\\& = -\,
\int_{\mathcal D} 
\omega \,J\big( f_1,\overline{f}_1 \big)
+  \Big( \Theta J\big( f_1,\overline{f}_2 \big) + \Theta J\big( f_2 , \overline{f}_1 \big) \Big)
\,\mathrm{d}x \mathrm{d}y
\\\hbox{By integrating by parts}\quad
& = 
\int_{\mathcal D} 
f_1\,\Big( J\big( \omega ,\overline{f}_1 \big) +  J\big( \Theta  ,\overline{f}_2 \big)\Big) 
- f_2J\big( \Theta , \overline{f}_1 \big) 
\,\mathrm{d}x \mathrm{d}y
\\
\hbox{Applying the SDP pairing}\quad
&= 
\Big\langle 
\Big( J\big( \omega,\overline{f}_1 \big) + J\big( \Theta ,\overline{f}_2\big)  \,;\, - J\big( \Theta,\overline{f}_1 \big)
\Big)
\,,\,
\big( f_1 ; f_2\big) 
\Big\rangle
\\
\hbox{By \eqref{SDP-LPB-EB}}\quad
&=: 
\Big\langle {\mathrm{ad}}^*_{\big(\overline{f}_1;\overline{f}_2\big)}\big(\omega ; \Theta\big) \,,\,
\big(f_1;f_2\big) 
\Big\rangle
\quad\hbox{with}\quad
\big(\overline{f}_1;\overline{f}_2\big)
=\Big(\frac{\delta \mathcal{H}}{\delta \omega};\frac{\delta \mathcal{H}}{\delta \Theta}\Big)
\,.
\end{split}
\label{SDP-adstar-action-fns}
\end{align}
\noindent
$\bullet$  The calculation in \eqref{SDP-adstar-action-fns} identifies the bracket in \eqref{EB-brkt3} as the Lie-Poisson bracket for functionals of $(\omega ; \Theta)$ defined on the dual $(f_1\circledS f_2)^*$ of semidirect-product Lie algebra $f_1\circledS f_2$, whose commutator is defined in equation  \eqref{SDP-ad-action-fns}. 
\end{proof}

\subsection{Casimir conservation laws for 2D EB flows}\label{subsec: Casimirs} \index{Casimirs}
We are now in a position to explain the conservation laws for the Hamilton matrix operator in \eqref{EB-LPHam}. Namely, those conservation laws comprise Casimir functions $\mc{C}_{\Phi,\Psi}(\omega,\Theta)$ whose variational derivatives are null eigenvectors of the Hamilton matrix operator in \eqref{EB-eqns-Ham} which defines the \emph{semidirect-product Lie-Poisson bracket} as
 \begin{align}
  \begin{split}
 \frac{\mathrm{d}}{\dt} \mc{F}(\omega,\Theta)
 &=
\int_{\mathcal D}
\begin{bmatrix}
{\delta \mc F}/{\delta \omega} 
\\ 
{\delta \mc F}/{\delta \Theta} 
\end{bmatrix}^T
\begin{bmatrix}
J(\omega,\,\cdot\,) & J(\Theta,\,\cdot\,) 
\\ 
J(\Theta,\,\cdot\,)  & 0
\end{bmatrix}
\begin{bmatrix}
{\delta \mathcal{H}}/{\delta \omega} 
\\ 
{\delta \mathcal{H}}/{\delta \Theta} 
\end{bmatrix}
\mathrm{d}x\mathrm{d}z
=:
\Big\{\mc F\,,\,{\mathcal{H}}\Big\}(\omega,\Theta)
\\&=-
\int_{\mathcal D} 
\omega \,J\bigg(\frac{\delta \mc F}{\delta \omega},\frac{\delta \mathcal{H}}{\delta \omega}\bigg)
+ \Theta \Bigg[J\bigg(\frac{\delta \mc F}{\delta \Theta},\frac{\delta \mathcal{H}}{\delta \omega}\bigg)
- J\bigg(\frac{\delta \mc F}{\delta \omega},\frac{\delta \mathcal{H}}{\delta \Theta}\bigg)\Bigg] \mathrm{d}x\mathrm{d}z
\,.
\end{split}
\label{EB-brkt3}
\end{align}
In particular, when $\mc{F}(\omega,\Theta)=\mc{C}_{\Phi,\Psi}(\omega,\Theta)= \int_{\mathcal D}\Phi(\Theta ) + \omega \Psi(\Theta )\,\mathrm{d}x\mathrm{d}z$ in \eqref{eq:casimirsEB} we have   
\begin{align}
  \begin{split}
 \frac{\mathrm{d}}{\dt} \mc{C}_{\Phi,\Psi}(\omega,\Theta)
 &=
\Big\{\mc{C}_{\Phi,\Psi}\,,\,{\mathcal{H}}\Big\}(\omega,\Theta)
\\&=-
\int_{\mathcal D} 
\omega \,J\bigg(\Psi(\Theta),\frac{\delta \mathcal{H}}{\delta \omega}\bigg)
+  \Bigg(\Theta J\bigg(\Phi'(\Theta)+\omega\Psi'(\Theta),\frac{\delta \mathcal{H}}{\delta \omega}\bigg)
- \Theta J\bigg(\Psi(\Theta),\frac{\delta \mathcal{H}}{\delta \Theta}\bigg)\Bigg) \mathrm{d}x\mathrm{d}z
\\&=-
\int_{\mathcal D} 
\frac{\delta \mathcal{H}}{\delta \omega}J\bigg(\omega ,\Psi(\Theta)\bigg)
+ \Bigg( \frac{\delta \mathcal{H}}{\delta \omega}J\bigg(\Theta,\Phi'(\Theta)+\omega\Psi'(\Theta)\bigg)
- \frac{\delta \mathcal{H}}{\delta \Theta}J\bigg(\Theta,\Psi(\Theta)\bigg)\Bigg) \mathrm{d}x \mathrm{d}y
\,.
\\&=-
\int_{\mathcal D} 
\frac{\delta \mathcal{H}}{\delta \omega}\Big(J\big(\omega ,\Theta\big)
+  J\big(\Theta,\omega\big)\Big)
\Psi'(\Theta) \mathrm{d}x \mathrm{d}y
\\& = 0 \quad\hbox{for all}\quad {\mathcal{H}}(\omega,\Theta)
\,.
\end{split}
\label{EB-CasimirBrkt3}
\end{align}

\begin{remark}[Casimir functions]$\,$\\
$\bullet$  The previous calculation proves the conservation of $\mc{C}_{\Phi,\Psi}(\omega,\Theta)= \int_{\mathcal D}\Phi(\Theta ) + \omega \Psi(\Theta )\,\mathrm{d}x\mathrm{d}z$ in \eqref{eq:casimirsEB} for an arbitrary Hamiltonian ${\mathcal{H}}(\omega,\Theta)$. These quantities are called \emph{Casimir functions} and as shown in \eqref{EB-CasimirBrkt3} their conservation follows because their variational derivatives comprise { null eigenvectors} of the semidirect-product Lie-Poisson bracket in \eqref{EB-CasimirBrkt3}. 
\smallskip

$\bullet$  Lie-Poisson brackets on dual Lie algebras generate coadjoint orbits on level sets of the bracket's Casimir functions. \\This feature limits the function space available to 2D EB solutions. In particular, the 2D EB solutions are restricted to stay on the same Casimir level set $\mc{C}_{\Phi,\Psi}(\omega ,\Theta)=const$ as that of their initial conditions.
\end{remark}

\begin{remark}
As we will discover next, the geometric mechanics properties such as semidirect-product action and its dual action in the Lie--Poisson operator in \eqref{SDP-LPB-eqns} for deterministic fluid equations derived from deterministic Hamilton's principles also apply for the stochastic cases. For the stochastic cases, we will also exemplify the geometric mechanics properties for fluid dynamics in the cases of 2D and 3D Euler Boussinesq (EB) equations.
\end{remark}


\newpage
\part{3D Euler Boussinesq (EB) SALT equations}\label{EB-SALT}

\section*{Key words}
\begin{center}
$\bullet\,$ Geometric mechanics $\qquad$ $\bullet\,$ Fluid dynamics $\qquad$ $\bullet\,$ Stochastic partial differential equations\\
$\bullet\,$ Lie groups $\qquad$ $\bullet\,$ Diffeomorphism group $\qquad$ $\bullet\,$ Sobolev spaces $\qquad$ $\bullet\,$ Momentum maps
\end{center}

\section{The origins of transport noise} \index{transport noise}

{\bf  History of transport noise in turbulence theory and stochastic mathematics.}
Transport noise in fluid dynamics has several forebears in turbulence theory, especially in the works of Kraichnan \cite{kraichnan1994anomalous} and Pope \cite{pope1994lagrangi}, who each sought in the early 1990's to explain the anomalous spectral scaling  in turbulent flows (which signals non-Gaussian intermittency) by computationally simulating stochastic transport of passive scalars (Kraichnan) and stochastic Lagrangian particle paths (Pope). For recent discussions of the Kraichnan model, see, e.g., \cite{flandoli2023stochastic,gess2025stabilization}. At about the same time, Brze\'zniak, Capi\'nski and Flandoli, \cite{brzezniak1991stochastic,brzezniak1992stochastic} introduced stochasticity into \emph{active} transport of vorticity in the 2D Navier-Stokes equation. See also Cipriano and Cruzeiro \cite{cipriano2007navier} who established a stochastic variational principle for the 2D Navier-Stokes equation. Also see Constantin and Iyer \cite{constantin2008stochastic} who derived a probabilistic representation of the deterministic 3D Navier-Stokes equations based on stochastic Lagrangian paths. Stochasticity eventually found its way into both the motion and transport equations for 3D fluid dynamics via Hamilton's variational principle in Holm \cite{Holm2015} which provided a unifying theory of stochastic transport which preserves Kelvin's circulation theorem for 3D ideal fluid dynamics of all types; including, e.g., compressibility, buoyancy, magnetic fields and other advected quantities. This variational stochastic model for fluid transport arising from Lie group symmetries of a stochastic Hamilton's principle is called Stochastic Advection by Lie Transport, whose acronym is SALT. For an alternative derivation of the SALT equations, see \cite{cotter2017stochastic}.
\index{SALT!history} \index{SALT!advected quantities}

SALT has provided one of the main assets in the STUOD program for modelling transport noise in fluid dynamics and evaluating methods of stochastic data assimilation with a variety of test models, in preparation for applying them to satellite observations of transport in upper ocean dynamics. For a description of the STUOD program's success in determining the analytical properties  of SALT, see \cite{crisan2022variational,crisan2022solution,CrFlHo2019}. For a summary of the development of the particle filtering method for applying data assimilation methods to computational simulations of SALT at various levels of fluid dynamics approximations, 
see \cite{cotter2018modelling,CCHOS18a,CCHOS18b,cotter2019particle,cotter2019numerically}.

{\bf  Hamiltonian Stochastic Differential Equations (SDE).}
Stochasticity of finite dimensional system of was made Hamiltonian by Bismut \cite{bismut1982mecanique} and stochastic Hamiltonian dynamical systems were reformulated in the Lie-Poisson framework by L\'azarro-Cam\'i and Ortega in \cite{lazaro2008stochastic}. 

{\bf  Hamilton's Principle for Finite-Dimensional Dynamics (SDE).}
It was natural to expect that stochastic Hamiltonian dynamical systems should also emerge from a phase-space version of Hamilton's principle, although a larger class of variations $(\delta q,\delta p)$ would be necessary in phase space. A variant of such a theorem was proved in L\'azarro-Cam\'i and Ortega in \cite{lazaro2008stochastic}. Another variant for Lagrangian systems was proved by Bou-Rabee and Owhadi \cite{bou2009stochastic} in the special case when the Hamiltonian in the Legendre transformation \index{Legendre transformation} is independent of the momentum. The approach of Holm and Tyranowski \cite{holm2018stochastic} presents a general methodology for deriving new structure-preserving numerical schemes and enables recasting a number of integrators previously studied in the literature so that: (i) the resulting integrators are symplectic; (ii) they preserve integrals of motion related to Lie group symmetries; and (iii) they include stochastic symplectic Runge--Kutta methods as a special case. 
These discoveries for finite-dimensional dynamics quickly led Cruziero, Holm and Ratiu \cite{cruzeiro2018momentum} to derive the entire spectrum of finite-dimensional geometric mechanics from reduction by symmetry of the Lagrangian to derive a stochastic version of the Euler-Poincar\'e variational principle arising from symmetry reduction in Hamilton's principle for Lie group invariant Lagrangians by following the methods developed in Holm, Marsden and Ratiu \cite{HMR1998}. 
\index{Geometric Mechanics!reduction by symmetry}

{\bf  Hamilton's Principle for Infinite-Dimensional Dynamics (SPDE).}
Following the recognition of Lie-group reduction of Hamilton principles for deriving finite-dimensional stochastic Euler-Poincar\'e (EP) variational principles in Cruzeiro, Holm and Ratiu \cite{cruzeiro2018momentum}, it was natural to attempt the same feat for the Eulerian representation of fluid dynamics. An EP variational principle was derived for ideal fluid systems in Holm \cite{Holm2015} and a variational derivation of the Navier-Stokes equation was given in Chen, Cruzeiro and Ratiu \cite{chen2023stochastic}. As it turned out, in 2D, these variational approaches recovered the planar stochastic Euler vorticity equations with transport noise studied earlier by Brze\'zniak, Capi\'nski and Flandoli, \cite{brzezniak1991stochastic,brzezniak1992stochastic}.

In fact, when all is said and done in following the EP approach, the stochasticity appears only in the transport velocity of the material loop in Kelvin's theorem for the conservation of circulation around material loops in 3D Euler fluid dynamics. 
Thus, the EP variational approach is natural for deriving stochastic fluid dynamics equations with transport noise.

{\bf  Stochastic analysis of the EP 3D Euler fluid equations with transport noise.}
The next natural step was then for Crisan, Flandoli and Holm \cite{CrFlHo2019} to examine the analytical properties of the newly derived EP Euler fluid equations with transport noise in three dimensions. Remarkably, because the noise involved Lie derivatives whose commutator admits a certain type of inequality estimate, the analytical properties of the 3D Euler fluid equations with Stochastic Advection by Lie Transport (SALT) turned out to be essentially the \emph{same} as for the deterministic Euler fluid case in three dimensions. This agreement in analytic regularity was not entirely a surprise, though, because Cotter, Gottwald and Holm \cite{cotter2017stochastic} had earlier derived the SALT model for the Euler fluid equations by applying the mathematical method of \emph{homogenisation} in time to the Euler fluid equations. The contrast between transport noise which conserves circulation of Euler fluids, but not their energy is contrasted with energy conserving stochasticity which does not conserve circulation in 
Drivas and Holm \cite{DrivasHolm2019,DrivasHolmLeahy2020}. 

{\bf  Other types of transport noise: Geometric Rough Paths.} The recent work in Crisan, Holm, Leahy and Nilsen \cite{crisan2022variational} entitled ``Variational Principles for Fluid Dynamics on Rough Paths" aims to transfer the fundamental properties of deterministic fluid dynamics derived by Hamilton's principle into their formulation on geometric rough paths. Another recent work \cite{crisan2022solution} by the same authors entitled ``Solution Properties of the Incompressible Euler System with Rough Path Advection" analyses the solution properties of Euler fluid dynamics on rough paths and demonstrates the efficacy of the approach in this work to produce previously unavailable results, such as the Beale-Kato-Majda blowup criterion for Euler fluid solutions on geometric rough paths.

{\bf  Use of transport noise in data assimilation for oceanography.}
Recent work on stochastic Geophysical Fluid Dynamics (GFD) has led to collaboration to investigate  Stochastic Dynamical Data Assimilation (SDDA) for quantifying and reducing uncertainty in numerical simulations of weather, climate and ocean circulation. Remarkably, the SDDA approach for GFD also has been found to work well for quantifying and reducing uncertainty in applications of stochastic shape analysis for image registration in Arnaudon, Holm and Sommer \cite{arnaudon2021stochastic}. 
In particular, foundational development of the new science of Stochastic Geometric Mechanics (SGM) for spatial smooth invertible maps with stochastic time dependence has produced fruitful applications to uncertainty quantification and reduction of uncertainty via data assimilation in fluid dynamics. A promising goal is to apply stochastic geometric mechanics for the derivation, analysis, numerical simulation and assimilation of computational data, e.g., satellite observations of various types of transport (heat, mass, colour, texture, and other observable order parameters) in upper ocean dynamics. Reaching this goal will require a variety of developments in modelling and data assimilation. The fundamental advances in the development of data assimilation methods based on transport noise include \cite{alonso2020modelling,de2020implications,cotter2019numerically,GH19,holm2020stochastic,holm2021astochastic}. 
\index{Geometric Mechanics!Stochastic} \index{Stochastic!Geometric Mechanics} 

\section{Stochastic geometric mechanics with diffeomorphisms}
\subsection*{Key points}
\begin{center}
    $\bullet\,$ Semimartingale $\qquad$ \index{stochastic!semimartingale}
    $\bullet\,$ Stratonovich integral $\qquad$ \index{stochastic!Stratonovich integral}
    $\bullet\,$ Stochastic Lie chain rule \index{stochastic!Lie chain rule} \index{Lie chain rule!stochastic} 
\end{center}
Geometric mechanics can be made stochastic in two distinct ways while preserving structure. The first option is called stochastic advection by Lie transport (SALT) and was introduced by \cite{Holm2015}. SALT leaves the geometric structure invariant, but at the cost of losing conservation of energy. It corresponds to replacing the deterministic reconstruction equation in \eqref{eq:reconstructiondeterministic} by the semimartingale \index{SALT!history}
\begin{equation}
{\rmd}\phi_t(x_0) = u(\phi_t(x_0),t)dt + \sum_{i=1}^M \xi_i(\phi_t(x_0))\circ dW_t^i,
\label{eq:reconstructionstochastic}
\end{equation}
where here the symbol $\circ$ means that the stochastic integral is taken in the Stratonovich sense. Note that in all other instances $\circ$ means composition of functions. The initial data is given by $\phi(x_0,0)=x_0$. The $W_t^i$ are independent, identically distributed Brownian motions, defined with respect to the standard stochastic basis $(\Omega,\mathcal{F},(\mathcal{F}_t)_{t\geq 0},\mathbb{P})$, see \cite{karatzas1998brownian}. Such a noise was shown to arise from a multi-time homogenisation argument in \cite{cotter2017stochastic}. This replacement \eqref{eq:reconstructionstochastic} need not be by a semimartingale, one can introduce geometric rough paths by the same approach, which was done in \cite{crisan2022variational}. The $\xi_i(\,\cdot\,)\in\mathfrak{X}^s$ are called data vector fields and are prescribed. These data vector fields represent the effects of unresolved degrees of freedom on the resolved scales of motion and are meant to account for unrepresented processes. 

The data vector fields $\xi_i$ can be determined by applying empirical orthogonal function analysis to appropriate numerical and/or observational data. For instance, for an application to the two dimensional Euler equations for an ideal fluid, see \cite{cotter2019numerically} and \cite{ephrati2023data}. An application of this framework to a two-layer quasi-geostrophic model can be found in \cite{cotter2018modelling}. Stochastic models enable the use of a variety of methods in data assimilation, which are discussed in \cite{cotter2019particle}. It is not difficult to make sense of \eqref{eq:reconstructiondeterministic}, but understanding \eqref{eq:reconstructionstochastic} is more complicated and requires stochastic analysis. Of particular importance is a stochastic chain rule, which is shown to exist in \cite{de2020implications}. This stochastic chain rule is called the \emph{Kunita-It\^o-Wentzell (KIW) formula} and helps interpret the semimartingale in \eqref{eq:reconstructionstochastic}. The KIW formula will also be used later to prove the stochastic Kelvin--Noether circulation theorem.  \index{Kunita-It\^o-Wentzell formula!KIW formula} 
\index{empirical orthogonal function} \index{empirical orthogonal function!data vector field} 
\index{stochastic!data vector field!empirical orthogonal function}

In deriving equations in continuum mechanics, one needs to keep track of the mass form as well. As discussed in the previous section, an appropriate mathematical setting for this is an \emph{outer semidirect product group}. This means that one constructs a new group from two given groups with a particular type of group operation. For continuum mechanics, the ingredients are $\mathfrak{D}^s$ and $V^*$, where $V^*$ is a vector space of tensor fields. The reason for starting with $V^*$ rather than just $V$ is historical and will become clear when we discuss the extension of the diagram in Figure \ref{fig:GM-Cube}. The vector space $V^*$ is the space of \emph{SALT!advected quantities} and it will always contain at least the mass form $\rho\,d\mu=(\phi^{-1})^*(d\mu)$.  \smallskip
\index{semidirect product group!outer}

\subsection{Euler-Poincar\'e theorem} \index{stochastic!Euler-Poincar\'e theorem} \index{Euler-Poincar\'e theorem!stochastic}
\subsection*{Key points}
\begin{center}
    $\bullet\,$ Variational derivative $\qquad$ \index{variational derivative!stochastic}
    $\bullet\,$ Symmetry-reduced variations $\qquad$ \index{variation!symmetry-reduced}
    $\bullet\,$ Stochastic Euler-Poincar\'e theorem $\qquad$ \\ \index{Euler-Poincar\'e theorem!stochastic} 
    \index{stochastic!Euler-Poincar\'e theorem}  \index{stochastic!Kelvin-Noether theorem}
    $\bullet\,$ Stochastic Kelvin-Noether theorem $\qquad$  \index{Kelvin-Noether theorem!stochastic}
    $\bullet\,$ Stochastic Euler-Boussinesq equations \index{Euler-Boussinesq equations!stochastic}
    \index{stochastic!Euler-Boussinesq equations}
\end{center}
In the situation where noise is present, that is, when the reconstruction equation is  \eqref{eq:reconstructionstochastic}, the Euler-Poincar\'e variations become stochastic. Consider $\phi:\mathbb{R}^2\to\mathfrak{D}^s$ with $\phi_{t,\epsilon}=\phi(t,\epsilon)$ to be a two parameter subgroup with smooth dependence on $\epsilon$, but only continuous dependence on $t$. Let us denote 
\[
{\rmd}\chi_{t,\epsilon}(X) = ({\rmd}\phi_{t,\epsilon}\circ \phi_{t,\epsilon})(X) = u_{t,\epsilon}(X)dt + \sum_{i=1}^N \xi_i(X)\circ dW_t^i
\]
 and  
 \[
 v_{t,\epsilon}(X) = (\frac{\partial}{\partial \epsilon}\phi_{t,\epsilon}\circ \phi_{t,\epsilon})(X)\,.
 \] 
As mentioned before, when a $\circ$ symbol is followed by $dW_t$ it means Stratonovich integration and in every other context the $\circ$ symbol is used to denote composition. Note that the data vector fields $\xi_i$ are prescribed and hence will not have a dependence on $\epsilon$. 
 
In order to compute with these stochastically parametrised subgroups and their associated vector fields, one needs a stochastic Lie chain rule. The Kunita-It\^o-Wentzell (KIW) formula is the stochastic generalisation of the Lie chain rule \eqref{LieChainRule}. A proof of the KIW formula is given  in \cite{de2020implications} for differential $k$-forms and vector fields. The proof includes the technical details on regularity that will be omitted here. In the KIW formula, the $k$-form is allowed to be a semimartingale itself. Let $K$ be a continuous adapted semimartingale that takes values in the $k$-forms and satisfies 
\begin{equation}
K_t = K_0 + \int_0^t G_s ds + \sum_{i=1}^N\int_0^t H_{i\,s}\circ dB_s^i,
\label{eq:kformsemimartingale}
\end{equation}
where the $B_t^i$ are independent, identically distributed Brownian motions. The drift of the semimartingale $K$ is determined by $G$ and the diffusion by $H_i$, both of which are $k$-form valued continuous adapted semimartingales with suitable regularity. Let $\phi_t$ satisfy \eqref{eq:reconstructionstochastic}, then \cite{de2020implications} shows that the following holds
\begin{equation}
{\rmd}(\phi_t^*K_t) = \phi_t^*\big({\rmd}K_t + \mathcal{L}_{u_t} K_t\,dt + \mathcal{L}_{\xi_i}K_t \circ dW_t^i\big).
\label{eq:kiwformula}
\end{equation}
Equation \eqref{eq:kformsemimartingale} helps to interpret the ${\rmd}K_t$ term in the KIW formula \eqref{eq:kiwformula}. This formula will be particularly useful in computing the variations of the variables in the Lagrangian. To compute these variations, one needs the variational derivative. 

\paragraph{The variational derivative.} The variational derivative of a functional $F:\mathcal{B}\to\mathbb{R}$, where $\mathcal{B}$ is a Banach space, is denoted $\delta F/\delta \rho$ with $\rho\in\mathcal{B}$. The variational derivative can be defined by the first variation of the functional
\begin{equation}
\delta F[\rho]:= \frac{d}{d\epsilon}\Big|_{\epsilon=0} F[\rho+\epsilon \delta\rho] = \int \frac{\delta F}{\delta \rho}(x)\delta\rho(x)\,dx = \left\langle\frac{\delta F}{\delta \rho},\delta \rho\right\rangle.
\end{equation}
In the definition above, $\epsilon\in\mathbb{R}$ is a parameter, $\delta\rho\in\mathcal{B}$ is an arbitrary function and the first variation can be understood as a Fr\'echet derivative. A precise and rigorous definition can be found in \cite{gelfand2000calculus}. With the definition of the functional derivative in place, the following lemma can be formulated.
\medskip

\begin{lemma}
With the notation as above, the variations of $u$ and any advected quantity $a$ are given by 
\begin{equation}
\delta u(t) = {\rmd}v(t) + [{\rmd}\chi_t,v(t)],\quad \delta a(t) = -\mathcal{L}_{v(t)}a(t),
\label{def:delta-var}
\end{equation}
where $v(t)\in\mathfrak{X}^s$ is arbitrary.
\index{stochastic!variation of velocity} \index{variation!stochastic}
\end{lemma}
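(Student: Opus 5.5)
We derive both formulas from the two-parameter family $\phi_{t,\epsilon}$ by computing mixed derivatives in $t$ and $\epsilon$. We interpret the definitions in the right-invariant sense, ${\rmd}\chi_{t,\epsilon} = ({\rmd}\phi_{t,\epsilon})\circ\phi_{t,\epsilon}^{-1}$ and $v_{t,\epsilon}=(\partial_\epsilon\phi_{t,\epsilon})\circ\phi_{t,\epsilon}^{-1}$, and we use the stochastic Lie chain rule (the KIW formula \eqref{eq:kiwformula}) in place of the deterministic Lie chain rule \eqref{LieChainRule}. The advected-quantity formula is the easier half, so we treat it first.

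For the advected quantity, write $a_{t,\epsilon} = \phi_{t,\epsilon\,*}a_0 = (\phi_{t,\epsilon}^{-1})^*a_0$. Then $\phi_{t,\epsilon}^*a_{t,\epsilon}=a_0$ is independent of $\epsilon$. The dependence on $\epsilon$ is smooth, so no stochastic calculus is needed here. Differentiating in $\epsilon$ with the ordinary Lie chain rule \eqref{LieChainRule}, now applied to a time-dependent tensor, gives
\[
0=\partial_\epsilon(\phi_{t,\epsilon}^*a_{t,\epsilon}) = \phi_{t,\epsilon}^*\big(\partial_\epsilon a_{t,\epsilon} + \mathcal{L}_{v_{t,\epsilon}}a_{t,\epsilon}\big).
\]
Setting $\epsilon=0$ yields $\delta a = -\mathcal{L}_v a$.

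For the velocity variation, we compare the two mixed derivatives of $\phi_{t,\epsilon}$ acting on an arbitrary test tensor, say a smooth function $f$. We compute $\partial_\epsilon\,{\rmd}(\phi_{t,\epsilon}^*f)$ and ${\rmd}\,\partial_\epsilon(\phi_{t,\epsilon}^*f)$.
\begin{itemize}
\item Applying KIW in $t$ gives ${\rmd}(\phi^*f)=\phi^*(\mathcal{L}_{{\rmd}\chi}f)$.
\item The $\epsilon$-chain rule then gives $\partial_\epsilon\,{\rmd}(\phi^*f)=\phi^*\big(\mathcal{L}_{\partial_\epsilon {\rmd}\chi}f+\mathcal{L}_v\mathcal{L}_{{\rmd}\chi}f\big)$.
\item In the other order, $\partial_\epsilon(\phi^*f)=\phi^*(\mathcal{L}_v f)$.
\item Applying KIW with the semimartingale $K_t=\mathcal{L}_{v_t}f$ gives ${\rmd}\,\partial_\epsilon(\phi^*f)=\phi^*\big(\mathcal{L}_{{\rmd}v}f+\mathcal{L}_{{\rmd}\chi}\mathcal{L}_v f\big)$.
\end{itemize}
Equating the two orders and using $\mathcal{L}_X\mathcal{L}_Y-\mathcal{L}_Y\mathcal{L}_X=\mathcal{L}_{[X,Y]}$, which is the content of \eqref{LieBrkt} and the Jacobi identity corollary, we obtain
\[
\partial_\epsilon {\rmd}\chi = {\rmd}v + [{\rmd}\chi, v]
\]
as vector fields, with the sign fixed by the convention in \eqref{LieBrkt}. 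Since the $\xi_i$ are independent of $\epsilon$, $\partial_\epsilon{\rmd}\chi = \delta u\,dt$. Reading the identity as a semimartingale identity, with $\delta u$ the $dt$-component and the noise terms carried inside ${\rmd}\chi$, gives the stated formula $\delta u = {\rmd}v + [{\rmd}\chi_t, v]$ at $\epsilon=0$. Arbitrariness of $v$ follows because $\phi_{t,\epsilon}$ can be chosen as $e_{\epsilon}\circ\phi_t$ with $e_\epsilon$ generated by any $v$ vanishing at the time endpoints.

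The main obstacle is justifying the interchange of $\partial_\epsilon$ with the stochastic differential ${\rmd}$, together with the application of KIW to $K_t=\mathcal{L}_{v_t}f$, which is itself a semimartingale in $t$. We would handle this by writing everything in integrated Stratonovich form: differentiate the integral identity in $\epsilon$ under the integral sign, which is legitimate for smooth $\epsilon$-dependence and suitable regularity as in \cite{de2020implications}. Because Stratonovich calculus obeys the ordinary product and chain rules, the commutator computation then goes through exactly as in the deterministic case.
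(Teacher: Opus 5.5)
Your derivation of both formulas is correct and is essentially the paper's own argument: you equate the mixed $\partial_\epsilon$/${\rmd}$ derivatives of the flow, using KIW in $t$ and the ordinary chain rule in $\epsilon$. The difference is only that you write it intrinsically through a test function $f$; taking $f$ to be the coordinate functions turns your four bullets into the paper's computation of $\partial_\epsilon{\rmd}x_{t,\epsilon}$ and ${\rmd}\,\partial_\epsilon x_{t,\epsilon}$, and your commutator identity replaces its final subtraction.

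The one step that fails is your closing justification that $v$ is arbitrary.

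\begin{itemize}
\item \textbf{Left composition breaks the setup.} With $\phi_{t,\epsilon}=e_{t,\epsilon}\circ\phi_t$ and $e_{t,\epsilon}$ deterministic, the chain rule gives ${\rmd}\chi_{t,\epsilon}=(\partial_t e_{t,\epsilon}\circ e_{t,\epsilon}^{-1})\,dt+(e_{t,\epsilon})_*{\rmd}\chi_t$. The noise fields therefore become $(e_{t,\epsilon})_*\xi_i$, which depend on $\epsilon$. This contradicts the $\epsilon$-independence of the $\xi_i$ that you used to write $\partial_\epsilon{\rmd}\chi=\delta u\,dt$.
\item \textbf{The $t$-independent reading also fails.} If $e_\epsilon$ is meant not to depend on $t$, then $v$ is a fixed field and cannot vanish at the endpoints unless it is zero.
\item \textbf{Your own identity rules out a naive arbitrary $v$.} Since $\delta u\,dt$ has no martingale part, the noise part of ${\rmd}v$ is forced to equal $-\sum_i[\xi_i,v]\circ dW_t^i$. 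So $v$ cannot be, e.g.\ deterministic, and is not arbitrary in the naive sense.
\end{itemize}

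The admissible variations come from composing on the right: take $\phi_{t,\epsilon}=\phi_t\circ e_{t,\epsilon}$, with $e_{t,\epsilon}$ deterministic, $e_{t,0}=\mathrm{id}$, and generated by an arbitrary $\eta_t$ vanishing at $t_1,t_2$. Then:
\begin{itemize}
\item ${\rmd}\chi_{t,\epsilon}={\rmd}\chi_t+\phi_{t*}(\partial_t e_{t,\epsilon}\circ e_{t,\epsilon}^{-1})\,dt$, which keeps the $\xi_i$ fixed;
\item $v_t=\phi_{t*}\eta_t$;
\item KIW applied to $\phi_t^*v_t=\eta_t$ gives ${\rmd}v+[{\rmd}\chi_t,v]=\phi_{t*}\dot\eta_t\,dt$, i.e.\ $\delta u=\phi_{t*}\dot\eta_t$.
\end{itemize}
This is the sense in which ``$v$ arbitrary'' in the lemma should be read. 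The paper simply asserts arbitrariness without proof, so the core of your argument is unaffected, but that sentence as written is wrong.
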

\begin{proof}
The proof of the variation of $a(t)$ is a direct application of the Kunita-It\^o-Wentzell formula to $a(t,\epsilon)=g_{t,\epsilon*}a_0$. Note that the data vector fields $\xi_i$ are prescribed and do not depend on $\epsilon$. Denote by $x_{t,\epsilon} = g_{t,\epsilon}(X)$. Then one has
\begin{equation}
{\rmd}\phi_{t,\epsilon}(X) = {\rmd}x_{t,\epsilon} = u_{t,\epsilon}(x_{t,\epsilon})\,dt + \sum_{i=1}^N \xi_i(x_{t,\epsilon})\circ dW_t^i =: {\rmd}\chi_{t,\epsilon}(x_{t,\epsilon}).
\label{eq:twoparameterstochu}
\end{equation}
The vector field associated to the $\epsilon$-dependence of the two parameter subgroup is given by
\begin{equation}
\frac{\partial}{\partial \epsilon}\phi_{t,\epsilon} = \frac{\partial}{\partial \epsilon}x_{t,\epsilon} = v_{t,\epsilon}(x_{t,\epsilon}).
\label{eq:twoparameterstochv}
\end{equation}
Computing the derivative with respect to $\epsilon$ of \eqref{eq:twoparameterstochu} gives
\begin{equation}
\begin{aligned}
\frac{\partial}{\partial \epsilon}{\rmd}x_{t,\epsilon} &= \frac{\partial}{\partial \epsilon}\big({\rmd}\chi_{t,\epsilon}(x_{t,\epsilon})\big)\\
&= \left(\frac{\partial}{\partial \epsilon}u_{t,\epsilon} + v_{t,\epsilon}\cdot\frac{\partial}{\partial x_{t,\epsilon}}{\rmd}\chi_{t,\epsilon}\right)(x_{t,\epsilon}),
\end{aligned}
\end{equation}
where the independence of the data vector fields $\xi_i$ on $\epsilon$ was used. Taking the differential with respect to time of \eqref{eq:twoparameterstochv} gives
\begin{equation}
\begin{aligned}
{\rmd}\left(\frac{\partial}{\partial \epsilon} x_{t,\epsilon}\right) &= {\rmd}\big(v_{t,\epsilon}(x_{t,\epsilon})\big)\\
&=  \left( {\rmd}v_{t,\epsilon}(x_{t,\epsilon}) + {\rmd}\chi_{t,\epsilon}\cdot\frac{\partial}{\partial x_{t,\epsilon}}v_{t,\epsilon}\right)(x_{t,\epsilon}).
\end{aligned}
\end{equation}
One can then evaluate at $\epsilon=0$ and call upon equality of cross derivative-differentials to obtain the result by subtracting. Since $g_{t,\epsilon}$ depends on $t$ in a $C^0$ manner, the integral representation is required. The particle relabelling symmetry permits one to stop writing the explicit dependence on space,
\begin{equation}
\delta u(t)\,dt = {\rmd}v(t) + [{\rmd}\chi_t,v(t)].
\end{equation}
This completes the proof of formula \eqref{def:delta-var} for the variation of $u(t)$.
\end{proof}
The notation in \eqref{eq:twoparameterstochu} needs careful explanation, because it comprises both a stochastic differential equation and a definition. The symbol ${\rmd}\chi_{t,\epsilon}$ is used to define a vector field, whereas ${\rmd}x_{t,\epsilon}$ denotes a stochastic differential equation. This lemma makes the presentation of the stochastic Euler-Poincar\'e theorem particularly simple.
\index{stochastic!Euler-Poincar\'e theorem} \index{Euler-Poincar\'e theorem!stochastic} \index{stochastic!partial differential equation}
\medskip

\begin{theorem}[Euler-Poincar\'e theorem for the diffeomorphisms]\label{thm:SEP}
With the notation as above, the following are equivalent.
\begin{enumerate}[i)]
\item The constrained variational principle
\begin{equation}
\delta\int_{t_1}^{t_2}\ell(u,a)\,dt = 0
\end{equation}
holds on $\mathfrak{X}^s\times V^*$, using variations $\delta u$ and $\delta a$ of the form
\begin{equation}\label{eq:epconstraints}
\delta u = {\rmd}v + [{\rmd}\chi_t,v], \qquad \delta a = -\mathcal{L}_v a,
\end{equation}
where $v(t)\in \mathfrak{X}^s$ is arbitrary and vanishes at the endpoints in time for arbitrary times $t_1,t_2$.
\item The stochastic Euler-Poincar\'e equations hold on $\mathfrak{X}^s\times V^*$
\begin{equation}
{\rmd}\frac{\delta \ell}{\delta u} + \mathcal{L}_{{\rmd}\chi_t}\frac{\delta \ell}{\delta u} = \frac{\delta \ell}{\delta a}\diamond a\,dt,
\label{eq:stochep}
\end{equation}
and the advection equation
\begin{equation}
{\rmd}a + \mathcal{L}_{{\rmd}\chi_t}a = 0.
\label{eq:stochadv}
\end{equation}
\end{enumerate}
\end{theorem}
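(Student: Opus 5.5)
The plan is to prove i) $\Rightarrow$ ii) by computing the first variation of the action with the constrained variations \eqref{eq:epconstraints}, which the preceding Lemma supplies, and then to reverse each step for ii) $\Rightarrow$ i). The key point is that the action must be read as a stochastic integral. The variation $\delta u$ in \eqref{def:delta-var} contains the semimartingale differential ${\rmd}v + [{\rmd}\chi_t,v]$. So I would write
\begin{align*}
0=\delta\int_{t_1}^{t_2}\ell(u,a)\,dt
=\int_{t_1}^{t_2}\scp{\frac{\delta \ell}{\delta u}}{{\rmd}v + [{\rmd}\chi_t,v]}
+\scp{\frac{\delta \ell}{\delta a}}{-\mathcal{L}_v a}\,dt ,
\end{align*}
with all stochastic integrals taken in the Stratonovich sense. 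This choice matters because the ordinary product rule then holds pathwise.

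The next step treats the two terms in turn. In the second term I use the definition \eqref{def: diamond} of the diamond operation to get $\scp{\frac{\delta\ell}{\delta a}\diamond a}{v}\,dt$. In the first term I write $[{\rmd}\chi_t,v]=-\mathrm{ad}_{{\rmd}\chi_t}v$ with the sign convention of the Lemma. Transposing, exactly as in the Question/Answer computation after \eqref{eq-EP-Eul}, gives $\scp{\mathcal{L}_{{\rmd}\chi_t}\frac{\delta\ell}{\delta u}}{v}$ up to sign, where $\mathrm{ad}^*_{{\rmd}\chi_t}$ coincides with the Lie derivative on 1-form densities. Here ${\rmd}\chi_t=u\,dt+\sum_i\xi_i\circ dW^i_t$, and the transposition is valid because the $\xi_i$ are $\epsilon$-independent and the pairing is linear in ${\rmd}\chi_t$.

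For the term $\scp{\frac{\delta\ell}{\delta u}}{{\rmd}v}$ I apply the Stratonovich product rule, ${\rmd}\scp{m}{v}=\scp{{\rmd}m}{v}+\scp{m}{{\rmd}v}$, and integrate in time. Since $v$ vanishes at $t_1,t_2$, the boundary term drops and this leaves $-\int\scp{{\rmd}\frac{\delta\ell}{\delta u}}{v}$. Collecting terms gives
\[
0=-\int_{t_1}^{t_2}\scp{{\rmd}\frac{\delta \ell}{\delta u}+\mathcal{L}_{{\rmd}\chi_t}\frac{\delta \ell}{\delta u}-\frac{\delta \ell}{\delta a}\diamond a\,dt}{v}
\]
for arbitrary $v$ and arbitrary $t_1,t_2$, which yields \eqref{eq:stochep}. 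The advection equation \eqref{eq:stochadv} does not come from the variation. It follows from $a_t=\phi_{t*}a_0$ and the KIW formula \eqref{eq:kiwformula} applied to the constant $a_0=\phi_t^*a_t$, which gives $0={\rmd}(\phi_t^*a_t)=\phi_t^*({\rmd}a+\mathcal{L}_{{\rmd}\chi_t}a)$. For the converse, I run the same identities backwards: if \eqref{eq:stochep} holds, the integrand above vanishes, and every step was an equality, so the action is stationary.

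The main obstacle is the meaning of ``arbitrary $v$'' in the stochastic setting. From the vanishing of $\int\scp{{\rmd}M}{v}$ for all $v$, where ${\rmd}M$ is the left side of \eqref{eq:stochep}, I must conclude ${\rmd}M=0$ as a semimartingale, that is, both the $dt$ part and each $dW^i$ part vanish. I would handle this with a fundamental-lemma argument for Stratonovich integrals. Take $v$ adapted, compactly supported in time, and arbitrary on subintervals, and use the arbitrariness of $t_1,t_2$ to conclude that $M_t$ has zero increments. I would also take care to justify the integration by parts pathwise, noting that it is legitimate only because Stratonovich calculus obeys the classical Leibniz rule, with regularity deferred to \cite{de2020implications}.
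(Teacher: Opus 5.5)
Your proposal is correct and follows the paper's proof essentially step for step. You pair $\frac{\delta\ell}{\delta u}$ with ${\rmd}v+[{\rmd}\chi_t,v]$, integrate by parts in time using $v(t_1)=0=v(t_2)$, rewrite the bracket term as $\mathcal{L}_{{\rmd}\chi_t}\frac{\delta\ell}{\delta u}$ via $\mathrm{ad}^*=\mathcal{L}$ on 1-form densities, convert the $\delta a$ term with the diamond operation, and get the advection equation separately from the KIW formula applied to $a_t=\phi_{t*}a_0$. Your Stratonovich Leibniz-rule justification, the fundamental-lemma argument for arbitrary $v$, and the explicit converse only spell out what the paper leaves tacit. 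One small misattribution: the $\epsilon$-independence of the $\xi_i$ is what gives the form of $\delta u$ in the preceding Lemma, not what licenses the transposition step.
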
 

\begin{proof}
Using integration by parts and the endpoint conditions $v(t_1)=0=v(t_2)$, the variation can be computed to be
\begin{equation}
\begin{aligned}
\delta\int_{t_1}^{t_2}\ell(u,a)\,dt 
&= 
\int_{t_1}^{t_2}\left\langle\frac{\delta\ell}{\delta u},\delta u\right\rangle + \left\langle\frac{\delta\ell}{\delta a},\delta a\right\rangle\,dt\\
&= \int_{t_1}^{t_2}\left\langle\frac{\delta\ell}{\delta u},{\rmd}v + [{\rmd}\chi_t,v]\right\rangle + \left\langle\frac{\delta\ell}{\delta a}\,dt,-\mathcal{L}_v a\right\rangle\\
&= \int_{t_1}^{t_2}\left\langle -{\rmd}\frac{\delta\ell}{\delta u} - \mathcal{L}_{{\rmd}\chi_t}\frac{\delta\ell}{\delta u} + \frac{\delta\ell}{\delta a}\diamond a\,dt,v\right\rangle\\
&= 0\,.
\end{aligned}
\end{equation}
Since the vector field $v$ is arbitrary, one obtains the stochastic Euler-Poincar\'e equations. Finally, the advection equation \eqref{eq:stochadv} follows by applying the KIW formula to $a(t)=\phi_{t*}a_0$. \index{SALT!Euler--Poincar'e formulation} \index{Euler--Poincar'e equations!SALT}
\end{proof}

\begin{remark}
The stochastic Euler-Poincar\'e theorem is equivalent to the version presented in \cite{Holm2015}, which uses stochastic Clebsch constraints. In \cite{Holm2015} one can also find an investigation of the It\^o formulation of the stochastic Euler-Poincar\'e equation. See also \cite{cruzeiro2018momentum}. The general form of the Euler-Poincar\'e theorem uses the ${\rm ad}$ operator in place of the commutator in \eqref{eq:epconstraints} and the derivative of the representation of the group action in place of the Lie derivative. The resulting Euler-Poincar\'e equations are then formulated using ${\rm ad}^*$ for the group that is involved. The  general form can be found in \cite{holm1998euler}. 
\end{remark}

\subsection{Lie-Poisson formulation.}
The stochastic Euler-Poincar\'e equations have an equivalent stochastic Lie-Poisson formulation. To obtain the Lie-Poisson formulation, one must Legendre transform the reduced Lagrangian. The Legendre transformation \index{Legendre transformation} in the presence of stochasticity becomes itself stochastic in the following way
\begin{equation}
m := \frac{\delta\ell}{\delta u}, \qquad \hslash(m,a)\,dt + \sum_{i=1}^N\langle m,\xi_i\rangle \circ dW_t^i = \langle m,{\rmd}\chi_t\rangle - \ell(u,a)\,dt.
\label{eq:reducedstochlegendre}
\end{equation}
The stochasticity enters the Legendre transformation \index{Legendre transformation} because the momentum map \index{momentum map} $m$ is coupled to the stochastic vector field ${\rmd}\chi_t$. The left hand side of the transformation determines the Hamiltonian, which is a semimartingale. The underlying semidirect product group structure has not changed, it is still the $H^s$ diffeomorphisms with a vector space, but the Hamiltonian has become a semimartingale. This implies that in the stochastic case the energy is not conserved, because Hamiltonian depends on time explicitly. Note that \eqref{eq:reducedstochlegendre} emphasises that the Lagrangian does not feature stochasticity in this framework. Instead, the Lagrangian represents the physics in the problem, which does not change. The stochasticity is supposed to account for the difference between observed data and deterministic modelling. The stochastic Lie-Poisson equations are given by
\begin{equation}
{\rmd}(m,a) = -{\rm ad}^*_{(\frac{\delta\hslash}{\delta m},\frac{\delta\hslash}{\delta a})}(m,a)\,dt - \sum_{i=1}^N{\rm ad}^*_{(\xi_i,0)}(m,a)\circ dW_t^i,
\label{eq:stochliepoisson}
\end{equation}
where ${\rm ad}^*$ is dual to ${\rm ad}$ under $L^2$ pairing. Since both the drift and the diffusion part use the same operator (the ${\rm ad}^*$ operator) in \eqref{eq:stochliepoisson}, the stochastic Lie-Poisson equations preserve the same family of Casimirs (or integral conserved quantities) as the deterministic Lie-Poisson equations. 
Hence, the stochastic Euler-Poincar\'e theorem yields a stochastic Kelvin-Noether circulation theorem as a corollary. 
\index{SALT!Lie-Poisson formulation} \index{Lie-Poisson!SALT} \index{stochastic!Euler-Poincar\'e theorem}
 \index{stochastic!Kelvin-Noether theorem}

\subsection{Kelvin-Noether theorem.} 
Let $\mathfrak{C}^s$ be the space of loops $\gamma:S^1\to\mathfrak{D}^s$, which is acted upon from the left by $\mathfrak{D}^s$. Given an element $m\in\mathfrak{X}^{s*}$, one obtains a covector-valued density whose density is constant in space by dividing the momentum $m$ by the density $\rho$. By considering only the covector-valued part of the momentum, the Kelvin-Noether theorem is as follows. The circulation map $\mathcal{K}:\mathfrak{C}^s\times V^*\to\mathfrak{X}^{s**}$ is defined by 
\begin{equation}
\langle \mathcal{K}(\gamma,a),m\rangle = \oint_\gamma\frac{m}{\rho}\,.
\end{equation}
Given a Lagrangian $\ell:\mathfrak{X}^s\times V^*\to \mathbb{R}$,  the \emph{Kelvin-Noether quantity} is defined by
\begin{equation}
I(\gamma,u,a) := \oint_\gamma\frac{1}{\rho}\frac{\delta\ell}{\delta u}\,.
\end{equation}
One can now formulate the following stochastic Kelvin-Noether circulation theorem. 
\medskip

\begin{theorem}[Stochastic Kelvin-Noether theorem]\label{thm:KelThm}
Let $u_t=u(t)$ satisfy the stochastic Euler-Poincar\'e equation \eqref{eq:stochep} and $a_t=a(t)$ the stochastic advection equation \eqref{eq:stochadv}. Let $\phi_t$ be the flow associated to the vector field ${\rmd}\chi_t$. That is, ${\rmd}\chi_t = {\rmd}\phi_t\circ \phi_t^{-1} = u_t\,dt + \sum_{i=1}^N \xi_i\circ dW_t^i$. Let $\gamma_0\in \mathfrak{C}^s$ be a loop. Denote by $\gamma_t = \phi_t\circ \gamma_0$ and define the Kelvin-Noether quantity $I(t):= I(\gamma_t,u_t,a_t)$. Then 
\index{SALT!Kelvin-Noether theorem}
\begin{equation}
{\rmd}I(t) = \oint_{\gamma_t}\frac{1}{\rho}\frac{\delta\ell}{\delta a}\diamond a\,dt\,.
\label{eqn:KelThm}
\end{equation}
\end{theorem}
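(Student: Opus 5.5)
The plan is to pull the loop integral back to the fixed reference loop $\gamma_0$, differentiate under the integral using the Kunita--It\^o--Wentzell formula \eqref{eq:kiwformula}, and then substitute the stochastic Euler--Poincar\'e equation \eqref{eq:stochep}. Since $\gamma_t=\phi_t\circ\gamma_0$, the change of variables for line integrals gives
\[
I(t)=\oint_{\gamma_t}\frac{1}{\rho_t}\frac{\delta\ell}{\delta u_t}=\oint_{\gamma_0}\phi_t^*\Big(\frac{1}{\rho_t}\frac{\delta\ell}{\delta u_t}\Big).
\]
The integration domain is now time independent. Hence the stochastic differential can be moved inside the integral, granted the regularity assumptions of \cite{de2020implications} that justify exchanging $\oint_{\gamma_0}$ with the Stratonovich differential. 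Here the one-form $K_t:=\rho_t^{-1}\,\delta\ell/\delta u_t$ is a continuous adapted semimartingale of the type \eqref{eq:kformsemimartingale}, because $m=\delta\ell/\delta u$ and $\rho$ both evolve by the semimartingale equations \eqref{eq:stochep} and \eqref{eq:stochadv}.

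Next I apply the KIW formula \eqref{eq:kiwformula}. Writing $\mathcal{L}_{{\rm d}\chi_t}=\mathcal{L}_{u_t}\,dt+\sum_i\mathcal{L}_{\xi_i}\circ dW^i_t$, it gives
\[
{\rm d}I(t)=\oint_{\gamma_0}\phi_t^*\big(({\rm d}+\mathcal{L}_{{\rm d}\chi_t})K_t\big)=\oint_{\gamma_t}({\rm d}+\mathcal{L}_{{\rm d}\chi_t})\Big(\frac{1}{\rho}\frac{\delta\ell}{\delta u}\Big).
\]
The key algebraic step is the product rule for the operator $({\rm d}+\mathcal{L}_{{\rm d}\chi_t})$. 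Stratonovich calculus obeys the ordinary Leibniz rule, and the Lie derivative is a derivation, as in \eqref{Lie product rule}. I regard $m=\delta\ell/\delta u$ as a one-form density $(m/\rho)\otimes\rho\,d^3x$, with $\rho\,d^3x$ the mass form, which is an advected quantity. Then
\[
({\rm d}+\mathcal{L}_{{\rm d}\chi_t})m=\Big[({\rm d}+\mathcal{L}_{{\rm d}\chi_t})\frac{m}{\rho}\Big]\otimes\rho\,d^3x+\frac{m}{\rho}\otimes({\rm d}+\mathcal{L}_{{\rm d}\chi_t})(\rho\,d^3x).
\]
The second term vanishes by the advection equation \eqref{eq:stochadv} applied to the mass form. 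Dividing by the density therefore yields
\[
({\rm d}+\mathcal{L}_{{\rm d}\chi_t})\frac{m}{\rho}=\frac{1}{\rho}({\rm d}+\mathcal{L}_{{\rm d}\chi_t})m=\frac{1}{\rho}\frac{\delta\ell}{\delta a}\diamond a\,dt,
\]
where the last equality is \eqref{eq:stochep}. Substituting this into the loop integral gives \eqref{eqn:KelThm}. The noise terms $\mathcal{L}_{\xi_i}\circ dW^i_t$ have been absorbed entirely into the transport of the loop, so only the $dt$ diamond forcing survives.

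I expect the main obstacle to be making the first step rigorous. One must check that $K_t$ and $\phi_t$ have enough spatial regularity (for instance $H^s$ with $s$ large) for the KIW formula to hold, and for the loop integral to commute with the stochastic differential. I would handle this by citing \cite{de2020implications} and working throughout in Stratonovich form, so that no It\^o correction terms appear in the Leibniz rule.
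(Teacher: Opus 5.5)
Your proposal is correct and follows essentially the same route as the paper: pull the loop integral back to the fixed loop $\gamma_0$, apply the Kunita--It\^o--Wentzell formula, substitute the stochastic Euler--Poincar\'e equation, and push forward again. The only difference is cosmetic. The paper removes the factor $1/\rho$ by pulling it back to the time-independent $1/\rho_0$ before applying KIW, whereas you keep it on the Eulerian side and remove it with the Stratonovich Leibniz rule and the continuity equation for the mass form; both use the same fact, that the mass form is advected.
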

\begin{proof}
The statement of the stochastic Kelvin-Noether circulation theorem involves a loop that is moving with the stochastic flow. One can transform to stationary coordinates by pulling back the flow to the initial condition. This pull-back yields
\begin{equation}
I(t) = \oint_{\gamma_t}\frac{1}{\rho}\frac{\delta\ell}{\delta u} = \oint_{\gamma_0}\phi_t^*\left(\frac{1}{\rho}\frac{\delta\ell}{\delta u}\right) = \oint_{\gamma_0}\frac{1}{\rho_0}\phi_t^*\left(\frac{\delta\ell}{\delta u}\right).
\end{equation}
An application of the Kunita-It\^o-Wentzell formula \eqref{eq:kiwformula} then leads to 
\begin{equation}
{\rmd}I(t) = \oint_{\gamma_0}\frac{1}{\rho_0}\phi_t^*\left({\rmd}\frac{\delta\ell}{\delta u} + \mathcal{L}_{{\rmd}\chi_t}\frac{\delta \ell}{\delta u}\right) = \oint_{\gamma_0}\frac{1}{\rho_0}\phi_t^*\left(\frac{\delta\ell}{\delta a}\diamond a\right)\,dt,
\end{equation}
since $\delta\ell/\delta u$ satisfies the stochastic Euler-Poincar\'e equation in \eqref {eq:stochep}. Transforming back to the moving coordinates by pushing forward with $\phi_t$ yields the Kelvin circulation equation \eqref{eqn:KelThm}.
\end{proof}

Thus, Theorem \ref{thm:KelThm} explains how particle relabelling symmetry gives rise to the Kelvin-Noether circulation theorem via Noether's theorem. When the only advected quantity present is the mass density, the loop integral of the diamond terms vanishes. This means that circulation is conserved according to Noether's theorem for an incompressible fluid, or for a barotropically compressible fluid. The presence of other advected quantities breaks the symmetry further and introduces the  \emph{diamond terms} as forces which generate circulation on the Kelvin-Noether circulation theorem in equation \eqref{eqn:KelThm}. Consequently, symmetry breaking due to additional order parameters can provide additional mechanisms for the generation of Kelvin-Noether circulation in ideal fluid dynamics.
\index{symmetry breaking!Kelvin-Noether circulation}

\subsection{Euler-Boussinesq equations.}
Let us now use the theory introduced so far to derive the Euler-Boussinesq equations in a domain $\Omega$, with a given bottom topography $\mathscr{B}(x,y)$, a free surface $\zeta(x,y,t)$ and non-penetration boundary conditions on the lateral walls.
where $\mathbf{u}_3=(\mathbf{u},w)$ is the three dimensional velocity field, $\mathbf{u}$ is the horizontal velocity field and $w$ is the vertical velocity. The $\boldsymbol\xi_{3i}(x,y,z)$ are data vector fields and they represent spatial velocity-velocity correlations. The $W_t^i$ are independent, identically distributed Brownian motions for each $i=1,\hdots,M$. In what follows, we will employ Einstein's convention of summing over repeated indices to shorten the notation. Before deriving the equations, let us set up the appropriate boundary conditions. When a surface in a moving fluid consists of the same particles for all time, then it is a bounding surface of the fluid. The converse is also true, every bounding surface is a material surface. Let
\begin{equation}
F(x,y,z,t) = 0
\end{equation}
be the equation for the material surface. For $F$ to be a material surface, it must be a Lagrangian invariant, i.e., the stochastic material derivative must vanish. This means that $F$ is required to satisfy
\begin{equation}
\frac{1}{|\nabla_3 F|}\big({\rmd}F + ({\rmd}\boldsymbol\chi_{3t}\cdot\nabla_3)F\big) = 0.
\end{equation}
On the free surface boundary, one has $F(x,y,z,t)=z-\zeta(x,y,t)$. Hence, the free surface boundary condition is given by
\begin{equation}\label{eq:freesurface}
w\,dt + \hat{\mathbf{z}}\cdot\boldsymbol \xi_{3i}\circ dW_t^i = {\rmd}\zeta + ({\rmd}\boldsymbol \chi_t\cdot\nabla)\zeta \quad \text{ at } z=\zeta(x,y,t).
\end{equation} 
The notation in \eqref{eq:freesurface} uses horizontal and vertical components of ${\rmd}\boldsymbol\chi_{3t}=({\rmd}\boldsymbol \chi_t,w\,dt + \hat{\mathbf{z}}\cdot\boldsymbol \xi_{3i}\circ dW_t^i)$. For the bottom topography, i.e., the bottom boundary condition, we set $F(x,y,z,t) = z + \mathscr{B}(x,y)$, which yields
\begin{equation}\label{eq:bathymetry}
w\,dt + \hat{\mathbf{z}}\cdot\boldsymbol \xi_{3i}\circ dW_t^i = -({\rmd}\boldsymbol \chi_t\cdot\nabla)h \quad \text{ at } z=-h(x,y).
\end{equation}
The non-penetration boundary condition on the lateral boundaries is given by
\begin{equation}
{\rmd}\boldsymbol \chi_t\cdot\hat{\mathbf{n}} = 0, \quad \text{ on any vertical lateral boundary},
\end{equation}
where $\hat{\mathbf{n}}$ is the outward pointing unit vector normal to the lateral boundaries. This condition can be obtained from the incompressiblity condition by an application of the divergence theorem to
\begin{equation}
\nabla_3\cdot{\rmd}\boldsymbol \chi_{3t} = 0, i.e., \quad \nabla_3\cdot\mathbf{u}_3 = 0 \text{ and } \nabla_3\cdot\boldsymbol \xi_{3i}=0  \hbox{ for all}\, i.
\end{equation}
For incompressible flows, the pressure is determined by the divergence-free condition, rather than by an equation of state. Since the fluid velocity field is semimartingale, the pressure must also be a semimartingale. We use the notation
\index{semimartingale!velocity} \index{semimartingale!pressure}
\begin{equation}
p\circ d\mathbf{S}_t = p_d\,dt + p_i\circ dW_t^i,
\end{equation}
where the notation $\circ d\mathbf{S}_t=(dt,\circ dW_t^1,\hdots,\circ dW_t^M)$ is borrowed from \cite{street2021semi} for semimartingale driven variational principles. Also, here $p_d$ corresponds to the drift part and $p_i$ for $i=1,\hdots,M$ comprise the pressures associated with the diffusion part. We do not include surface tension in this model. The dynamic boundary condition for the pressure is then given by
\begin{equation}
p\circ d\mathbf{S}_t = 0 \text{ at } z=\zeta(x,y,t) \quad \text{ or } \quad p\circ d\mathbf{S}_t = {\rmd}\zeta(x,y,t) \text{ at } z=0.
\end{equation}
The final boundary condition is on the buoyancy $\mathscr{B}$, which is given by
\begin{equation}
\hat{\mathbf{n}}_3\times \nabla_3 \mathscr{B} = 0 \text{ on } \partial\Omega
\,.\end{equation}
This boundary condition implies that the boundary $\partial\Omega$ of the domain $\Omega$ is a level set of buoyancy. We can now derive the equations for an Euler-Boussinesq fluid flow in the domain $\Omega$. The starting point is the Lagrangian 
\begin{equation}
\ell_{EB} = \int_{\Omega}\left(\frac{1}{2}|\mathbf{u}|^2 + \frac{\sigma^2}{2}w^2 + \frac{1}{{\rm Ro}}\mathbf{u}\cdot\mathbf{R} - \frac{1}{{\rm Fr}^2}(1+\mathfrak{s}b)z\right)D\,dx\,dy\,dz.
\end{equation}
Here $D$ denotes the dimensionless density of the fluid. $\mathbf{R}$ is the vector potential for the Coriolis parameter $f(x,y)$, that is, $\nabla_3\times\mathbf{R} = f(x,y)\hat{\mathbf{z}}$. Furthermore, $\sigma$ denotes the aspect ratio, ${\rm Ro}$ is the Rossby number, ${\rm Fr}$ is the Froude number and $\mathfrak{s}$ is the stratification parameter. To obtain the equations of motion, we need a constrained variational principle due to the incompressibility. This means that we formulate a constrained action principle where we set the density equal to unity by a Lagrange multiplier. This Lagrange multiplier can be identified as the pressure. The dimensionless action for the Euler-Boussinesq model is given by
\begin{equation}
S_{EB} = \int_{t_1}^{t_2}\ell_{EB}\,dt - \left\langle\frac{1}{{\rm Fr}^2} p, D-1\right\rangle\circ d\mathbf{S}_t =: \int_{t_1}^{t_2} c\ell_{EB}\circ d\mathbf{S}_t,
\end{equation}
We can now compute the variational derivatives of the constrained Lagrangian $c\ell_{EB}$, which are given by 
\begin{equation}
\begin{aligned}
\frac{\delta c\ell_{EB}}{\delta \mathbf{u}} &= D\left(\mathbf{u}+\frac{1}{{\rm Ro}}\mathbf{R}\right)\\
\frac{\delta c\ell_{EB}}{\delta w} &= \sigma^2 Dw\\
\frac{\delta c\ell_{EB}}{\delta b} &= -\frac{1}{{\rm Fr}^2}Dz,\\
\frac{\delta c\ell_{EB}}{\delta D} &= \left(\frac{1}{2}|\mathbf{u}|^2 + \frac{\sigma^2}{2}w^2 + \frac{1}{{\rm Ro}}\mathbf{u}\cdot\mathbf{R} - \frac{1}{{\rm Fr}^2}(1+\mathfrak{s}b)z\right) - \frac{1}{{\rm Fr}^2} p \circ d\mathbf{S}_t,\\
\frac{\delta c\ell_{EB}}{\delta p} &= \frac{1}{{\rm Fr}^2}(D-1)\circ d\mathbf{S}_t.
\end{aligned}
\end{equation}
Inserting the variational derivatives into the stochastic Euler-Poincar\'e theorem \ref{thm:SEP} yields the stochastic Euler-Boussinesq equations
\index{SALT!stochastic Euler-Boussinesq equations}
\begin{equation}
\begin{aligned}
{\rmd}\mathbf{u} + ({\rmd}\boldsymbol\chi_{3t}\cdot\nabla_3)\mathbf{u} + (\nabla\boldsymbol\xi_{3i})\cdot\mathbf{u}_3\circ dW_t^i &= -\frac{1}{{\rm Fr}^2}\nabla (p_d \,dt + p_i\circ dW_t^i) - \frac{1}{{\rm Ro}}f\hat{\mathbf{z}}\times{\rmd}\boldsymbol \chi_t - \frac{1}{{\rm Ro}}\nabla(\boldsymbol \xi_i\cdot\mathbf{R})\circ dW_t^i,\\
\sigma^2\left({\rmd}w +({\rmd}\boldsymbol\chi_{3t}\cdot\nabla_3)w + \Big(\frac{\partial}{\partial z}\boldsymbol\xi_{3i}\Big)\cdot\mathbf{u}_3\circ dW_t^i \right) &= -\frac{1}{{\rm Fr}^2}\frac{\partial}{\partial z}(p_d \,dt + p_i\circ dW_t^i) + \frac{1}{{\rm Fr}^2}(1+\mathfrak{s}b)dt,\\
{\rmd}b + ({\rmd}\boldsymbol \chi_{3t}\cdot\nabla_3)b &= 0,\\
\nabla_3\cdot{\rmd}\boldsymbol\chi_{3t} &= 0.
\end{aligned}
\end{equation}
The Kelvin circulation theorem is given by
\begin{equation}
\begin{aligned}
{\rmd}\oint_{c({\rmd}\boldsymbol \chi_{3t})}\left(\mathbf{u}_3 + \frac{1}{{\rm Ro}}(\mathbf{R},0)\right)\cdot d\mathbf{x}_3 &= -\frac{\mathfrak{s}}{{\rm Fr}^2}\oint_{c({\rmd}\boldsymbol \chi_{3t})} b\,dz\,dt\\
&= -\frac{\mathfrak{s}}{{\rm Fr}^2}\int\!\!\int_{\partial S = c({\rmd}\boldsymbol\chi_{3t})}\hat{\mathbf{z}}\times\nabla_3 b\cdot d\mathbf{S}_3\,dt.
\end{aligned}
\end{equation}
The Kelvin circulation theorem shows that for the Euler-Boussinesq model circulation is being generated whenever the buoyancy gradient does not align with the vertical unit vector. The stochastic Euler--Poincar\'e equation for the Euler-Boussinesq model also preserves the geometric structure of the determinant equations, including the Lagrangian invariant known as the potential vorticity and an infinite family of conserved integral quantities known as Casimirs, for which we refer to \cite{holm2021stochastic, luesink2021stochastic}. \index{Casimirs}


\newpage
\part{Lagrangian Averaged Stochastic Lie Transport}

\section{Introduction to LA-SALT}
The two stages of our stochastic approach are called stochastic advection by Lie transport
(SALT) \cite{Holm2015} and Lagrangian averaged SALT, written as LA-SALT, \cite{DrivasHolm2019,DrivasHolmLeahy2020}. These two stages represent
two different viewpoints or modelling philosophies depending on the time scales of the intended application. 
For SALT, atmospheric `weather' produces uncertainty in advection arising
from motion on unresolved time scales. In LA-SALT, atmospheric `climate' is taken as the
expectation, and the atmospheric `weather' is treated as a field of pathwise fluctuations, as discussed in Ed Lorenz's famous lecture \cite{Lorenz1995}. 
\index{LA-SALT!history} \index{LA-SALT!Hasselmann's paradigm}

The LA-SALT approach also brings us back to Hasselmann's paradigm, which decomposes a general climate model into deterministic 
and stochastic parts \cite{hasselmann1976stochastic}. Namely, the LA-SALT approach
results in deterministic linear fluctuation equations that govern the nonlinear dynamics of the climate
statistics themselves, including variance, covariance and higher statistical moments. In 
the LA-SALT framework, these higher order statistical moments are governed by linear equations. This
convenient result offers potential computational advantages and opens new perspectives for the theoretical
analysis of these moments.

The prediction of climate change and its impact on extreme weather events is one of the great societal and intellectual challenges of our time. 
The climate-change problem requires a model that:  (i) distinguishes between weather and climate; (ii) governs the fluctuation dynamics 
of the physical variables; and (iii) predicts how the variances of the fluctuations are affected by statistical correlations in the full fluctuating nonlinear dynamics. 

This part shows that the LA-SALT framework can meet these three challenges of climate dynamics in the example of   
the 2D Euler--Boussinesq (EB) equations for an incompressible stratified fluid flowing under gravity in a vertical plane with no other external forcing. 
All three parts of the problem are solved for this case. In fact, for this problem, the LA-SALT framework also delivers global 
well-posedness of the dynamics 
of the physical variables and closed dynamical equations for the moments of their fluctuations. 
Thus, in a well-posed mathematical setting, the framework developed in this paper shows that the mean field dynamics combines 
with an intricate array of correlations in the fluctuation dynamics to drive the evolution of the mean statistics. 
The results of the framework exemplified here for the 2D EB model define its climate, as well as supporting 
the concepts of climate change, weather dynamics, and change of weather statistics, 
all in the context of a model system of SPDEs with unique global strong solutions. 
\index{climate variances!prediction} 
\index{climate!definition!expectation} \index{weather v climate!expectation v pathwise fluctuations}

\subsection{Background} To meet the challenge of climate change prediction in practice, one must predict the coarse-grained dynamic changes of an extremely complex atmosphere/ocean system which is only partially observed by using a suite of imperfect theoretical and computational simulation models. This means that predictions of quantities of climate interest may be strongly affected by uncertainty arising from unknown model errors and incomplete knowledge of state variables. In addition, one must assess the impacts of climate change over a wide range of significant temporal and spatial scales. For example, one must predict and understand the seasonal, yearly, decadal, and centennial impacts of climate change for issues ranging from extreme weather events, to sea level rise, and the dynamic distributions of deserts and forests.

\paragraph{\bf Previous approaches.} Deterministic physics characterises the climate change problem as a high-dimensional complex dynamical system with sensitivity to initial conditions on essentially all spatial and temporal scales. To estimate the level of difficulty of the climate change problem, one notes that the turbulence problem falls into the same class of problems. There is an additional difficulty in climate science, though. The governing Navier-Stokes equations for turbulence are known; but  the dynamical equations for the actual climate are unknown. In fact, even the definition of climate is still under discussion in the literature \cite{Bothe2018}.  

Climate science would face an extensive closure problem, if it were assumed that the weather and the climate obey the same equations. 
Following the Hasselmann paradigm \cite{hasselmann1976stochastic} and echoing the celebrated distinction between weather and climate in \cite{Lorenz1995}, the following proposition then arises: ``Suppose the climate were defined as simply `what you expect' as a statistical property of a stochastic dynamical system." 

This proposition parallels recent computational approaches in climate/weather numerical simulations. The modern computational approach typically involves the introduction of \textit{stochastic parameterisation}, in which mean quantities of interest do have a precise sense of `expectation' and the remainder at a given instant has a sense of `fluctuation'. For recent reviews of this approach, see, e.g. \cite{BYP2012, Berner-etal-2017,GCF2015}.   In the approach to stochastic parameterisation of weather prediction, the summary conclusion of \cite{BYP2012} is that stochasticity must be incorporated at a very basic level within the design of physical process parameterisations and improvements to the dynamical core.

The SALT (stochastic advection by Lie transport) approach introduced in \cite{Holm2015} discussed in Part I combines stochasticity at the `basic level' of Kelvin's circulation theorem. A protocol for applying the SALT approach in data assimilation based on comparing fine scale and coarse scale computational simulations has recently been developed in \cite{CCHOS18a,CCHOS18b}. This part will concentrate on developing a Lagrangian-averaged (LA) version of SALT which was recently proposed in \cite{DrivasHolm2019} and developed further in \cite{DrivasHolmLeahy2020} for potential use in climate change prediction.

\subsection{Aims}
This section aims to describe results of a stochastic version of the two-dimensional Euler-Boussinesq fluid system which is non-local in \emph{probability space}, rather than in physical space, in the sense that the expected velocity is assumed to replace the drift velocity in the transport operator for the stochastic fluid flow. The LA-SALT stochastic fluid model is derived by exploiting a novel idea introduced in \cite{DrivasHolm2019}, of applying Lagrangian-averaging (LA) in \emph{probability space} to the fluid equations governed by stochastic advection by Lie transport (SALT) which were introduced in \cite{Holm2015}.  \index{LA-SALT!climate modelling}

The LA-SALT approach yields three results of interest in climate modelling based on the Kelvin circulation theorem for stochastic transport of the Kelvin loop. 
\begin{itemize}
\item
First, it answers Lorenz's question in \cite{Lorenz1995} about determinism of the climate in the affirmative. Namely, by replacing the drift velocity of the stochastic vector field by its expected value, one finds that the expected fluid motion becomes deterministic. This first result implies the second one.
\item
As a second result, the LA-SALT fluctuation dynamics reduces to a \emph{linear} stochastic transport problem with a deterministic drift velocity.  Such problems tend to be well-posed. 
\item
The third result involves the dynamics of the variances of the fluctuations. Namely, the variances and higher moments of the fluctuation statistics are found to evolve deterministically, driven by a certain set of correlations of the fluctuations among themselves. 
Having identified these driving correlations could be a fundamental asset in identifying conditions probabilistically ripe for extreme events.
\end{itemize}
In summary, the first result makes the distinction between climate and weather for the case at hand. Namely, the LA-SALT fluid equations may be regarded as a dissipative system akin to the Navier--Stokes equations for the expected motion (climate) which is embedded into a larger conservative system which includes the statistics of the fluctuation dynamics (weather). The second result provides a set of linear stochastic transport equations for predicting the fluctuations (weather) of the physical variables, as they are driven by the deterministic expected motion. The third result produces closed deterministic evolutionary equations for the evolution of the variances and covariances of the stochastic fluctuations, as well as their $p$-th order central moments in certain cases.

The next section applies the LA-SALT approach to the 2D Euler--Boussinesq (EB) system on a vertical plane and reveals the dynamics of its statistical properties. Specifically, the analysis in the next section defines climate, as well as climate change, weather, and change of weather statistics for 2D EB LA-SALT, all in the context of a model system of SPDEs with unique global strong solutions \cite{alonso2020modelling}.

\section{The Euler-Boussinesq (EB) fluid system in a vertical plane}\label{sec:EBsystem}

In concert with the idea that the climate should be computed with the same fundamental equations as the weather, this section addresses a representative model of stratified incompressible flow which may be a component of any climate model. Namely, it addresses the familiar 2D Euler-Boussinesq (EB) fluid system in a vertical plane. The issue of global existence of regular solutions of the deterministic Boussinesq model still remains an outstanding open problem. Its SALT version inherits most of the properties of its deterministic counterpart and its local well-posedness has been recently established in \cite{DieAytJNLS}.  

We first recall the introduction into 2D EB of Stochastic Advection by Lie Transport (SALT) as discussed in that work.  
\index{SALT!2D EB equations}  \index{LA-SALT!2D EB equations}

We then apply the Lagrangian averaging (LA) concept in probability space to derive and analyse the LA-SALT version of the 2D EB equations.  We establish global well-posedness of the LA-SALT EB system and investigate the solution behaviour of this stochastic PDE system, following the work of \cite{DieAytJNLS}. 

Section \ref{sec:EBsystem} introduces the 2D EB LA-SALT system and computes the dynamics of the expectation and fluctuation components of its solutions, as well as their variances. 

Section \ref{sec:LASDPsystems} computes expectation and fluctuation dynamics for LA-SALT equations, as well as their variances, covariances and $p$-th central moments, in a general setting. In general, the dynamics of these statistics for LA-SALT does not close. However, the fluctuation statistics for the 2D EB LA-SALT system in fact does close. The properties resulting from this closure are discussed in Example \ref{2DEB-example} of Section \ref{sec:LASDPsystems}. 

We begin with the following question. What is Kelvin's circulation theorem for the 2D EB climate/weather system?

\subsection{Kelvin circulation theorem for the 2D EB climate/weather system}

The Kelvin circulation theorem is a statement of Newton's Force Law for the motion of distributions of mass on closed material loops $c(\mb{u}_t^L)$, where the subscript $t$ denotes explicit time dependence. By definition, such material loops move with the transport velocity $\mb{u}_t^L$ of the fluid flow. Newton's Force Law states that the time rate of change of the momentum $\mb{P}$ of such a loop of a given mass distribution is equal to the force $\mb{F}$ applied to it. For the fluid situation, this is written as
\begin{align}
\frac{\diff \mb{P}}{\diff t} := \frac{\diff}{\diff t}\oint_{c(\mb{u}_t^L)} \mb{u}_t(\mb x)\cdot {\rmd} \mb x
= \oint_{c(\mbs{u}_t^L)} \mb{f}(\mb x)\cdot {\rmd} \mb x =: \mb{F}
\,.
\label{Kel-forcelaw}
\end{align}
The Kelvin-Newton relation in \eqref{Kel-forcelaw} for loop momentum dynamics apparently involves two kinds of velocity. The first velocity is $\mb{u}_t^L$, which is the velocity of the material masses distributed in the line elements along the moving loop. Since it refers to the fluid parcel transport, the velocity $\mb{u}_t^L$ is a Lagrangian quantity. A second quantity with dimensions of velocity $(\mb{u}_t)$ appears in the integrand of the Kelvin circulation. This quantity is physically the momentum per unit mass, defined in the fixed inertial frame which is required for Newton's force law \eqref{Kel-forcelaw} to be valid. This means that $\mb{u}_t$ is an Eulerian quantity, defined in the fixed frame through which the Lagrangian parcels move at velocity $\mb{u}_t^L$. Mathematically, the momentum per unit mass $(\mb{u}_t)$ is the product of the  inverse of the mass density (which itself is a subset of the advected quantities, $D\subset a$) times the variational derivative at fixed spatial coordinate of the Lagrangian $\ell(\mb{u}_t^L,a)$ in Hamilton's principle with respect to the velocity, $\mb{u}_t^L$. In Euler--Poincar\'e form, this is the Kelvin--Noether theorem of \cite{HMR1998}. Namely, \index{Kelvin--Noether!theorem} 
\begin{align}
\frac{\diff \mb{P}}{\diff t} 
:= \frac{\diff}{\diff t}\oint_{c(\mb{u}_t^L)}
\frac{1}{D}\frac{\delta \ell(\mbs{u}_t^L,a)}{\delta \mb{u}_t^L}\cdot {\rmd} \mb x
= \oint_{c(\mbs{u}_t^L)} \frac{1}{D} \frac{\delta \ell}{\delta a}\diamond a \cdot {\rmd} \mb x =: \mb{F}
\,,
\label{KelNoether-forcelaw}
\end{align}
where the diamond operation $(\diamond)$ is defined in \cite{HMR1998} and is discussed further in the present context below. 

Note, in the discussion below, when the Lagrangian velocity happens to be equal to the momentum per unit mass, then $\mb{u}_t^L\to \mb{u}_t$ and we may drop the superscript $L$, although the distinction in their definitions still remains. This slight abuse of notation should cause no confusion, because the transport velocity is a vector field which acts on the momentum per unit mass which, in turn, is the 1-form appearing in the integrand of the Kelvin circulation integral. 
\bigskip

The modelling approach of Stochastic Advection by Lie Transport (SALT) modifies the Kelvin theorem in \eqref{Kel-forcelaw} for deterministic fluids by replacing the transport velocity of the loop $\mb{u}_t^L$ in the deterministic Kelvin theorem by a Stratonovich stochastic vector field ${\rm d} x_t$ whose drift velocity is the same as the Eulerian velocity in the \emph{integrand} of the deterministic Kelvin theorem  \cite{Holm2015},
\begin{align}\label{SALT-Kel}
\oint_{c(\mb u^L_t)} \mb u_t\cdot {\rmd} \mb x
\quad\to\quad 
\oint_{c(\diff \chi_t)}
\mb u_t\cdot {\rmd} \mb x\,,
\end{align}
where $\diff \chi_t$ denotes the following stochastic process,
\begin{align}\label{dx-form}
\diff {\chi_t} := \u^L_t (x_t)\diff t+ \displaystyle\sum_k \mb{\xi}_k (x_t)\circ \diff W_t
\,.
\end{align}

The vector fields $\mb{\xi}_k$ are to be determined from data analysis as in  \cite{CCHOS18b,CCHOS18a}. In this section we will work formally, by simply assuming that these vector fields are already known from appropriate data analysis for a given application.

\begin{remark}[Notation temporal ($\diff$\,) vs spatial (${\rmd}$)]
In the literature, the letter $d$ is typically used to denote either (1) stochastic time evolution, or (2) exterior derivative/spatial differential.
To avoid confusion, here we will use the roman font $``\diff"$ to denote the former and the sans serif $``\sf{d}"$ to denote the latter.
\end{remark}

The same stochastic transport velocity ${\diff } x_t$ advects the Lagrangian parcels, which may carry advected quantities $(a)$, such as heat, mass and magnetic field lines, by Lie transport along with the flow, as
 \cite{HMR1998}
 \begin{align}\label{advec-qty}
\diff  {a} + \mathcal{L}_{\diff  {\chi_t} }a = 0\,.
\end{align}

In this section, we apply the LA-SALT (Lagrangian-averaged SALT) approach proposed in \cite{DrivasHolm2019} and developed in \cite{DrivasHolmLeahy2020}. The LA-SALT approach modifies the SALT Kelvin circulation in \eqref{SALT-Kel} by replacing the drift velocity in the stochastic transport loop velocity in \eqref{dx-form} by its expectation, plus the same noise as in SALT. Namely, cf. equation \eqref{dx-form},
\begin{align}\label{KelThm-form}
\oint_{c({\diff} \chi_t)} \mb{u}_t  \cdot {\rmd} \mb x
\quad\to \quad
\oint_{c({\diff} X_t )} \mb{u}_t \cdot {\rmd} \mb x\,,
\end{align}
where 
\begin{align}\label{dX-form}
 \diff  X_t := \E{\u^L_t}(X_t) \diff t+ \displaystyle\sum_k {\mb \xi_k (X_t)} \circ \diff W_t
\,.
\end{align}
Since the expectation in \eqref{dX-form} refers to the transport velocity $u^L_t$ of Lagrangian loop in Kelvin's theorem, we refer to this process as probabilistic Lagrangian Average (denoted as LA), reminiscent of the time average at fixed Lagrangian coordinate in the LANS-alpha turbulence model,\cite{Chen-etal1998,Chen-etal1999,Chen-etal1999+,Foiasetal01,Foiasetal02}.
For example, in the Euler fluid case the modified Kelvin theorem reads,
\begin{align}\label{KelThm-Eul}
\diff  \oint_{c\big({\diff} X_t \big)} \mb{u}_t  \cdot {\rmd} \mb x
 =  
\oint_{c\big({\diff} X_t \big)}
\big[ \diff{\mb{u}_t} \cdot {\rmd} \mb x+  \mathcal{L}_{\diff X_t} u_t \big]
= 0 \,,
\end{align}
where $ \mathcal{L}_{ {\diff}X_t}u_t $ denotes the Lie derivative of the one-form 
$u_t = \mb{u}_t \cdot \sf d \mb{x}$ with respect to the vector field $ \diff X_t$ given in equation \eqref{dX-form}. The LA-SALT motion equation leading to the modified Kelvin theorem in \eqref{KelThm-Eul} was first stated along with additional noisy and viscous terms in Lemma 3 of \cite{DrivasHolm2019}. 



The work in \cite{DieAytJNLS} extended the work in \cite{DrivasHolm2019} by following the LA-SALT (Lagrangian Averaged SALT) approach along the same lines as \cite{DrivasHolmLeahy2020} in applying expectations of the variations with respect to advected variables in combination with the known semidirect-product structure of the Lie--Poisson Hamiltonian formulation of ideal fluid dynamics. The semidirect-product structure of ideal fluid dynamics is reviewed for example in \cite{MR2013, HSS2009}. 

To express the LA-SALT equations discussed in \cite{DrivasHolmLeahy2020}, one may act with the semidirect-product (SDP) Lie--Poisson Hamiltonian matrix operator on the expected values of the variational derivatives of the Hamiltonian. In the absence of advected fluid quantities, the corresponding expected-quantity equations produce a Lie-Laplacian version of the Navier-Stokes equation, which reduces to the Navier--Stokes equation in a special choice of the functions $ \xi^{(k)}=\{(1,0,0)^T,(0,1,0)^T,(0,0,1)^T\}$ for $k=1,2,3$, as discussed in \cite{Hoch2018}. After writing the expectation equations with advected quantities in the SDP Hamiltonian matrix form, one observes that the fluctuation equations comprise a linear transport system which is slaved to the expectation equations whose solutions are deterministic and can be obtained for all time for a certain class of Hamiltonians. This slaving relation enables one to calculate the evolution equations for the local and spatially integrated variances of the fluctuations. We will describe this process here for the LA-SALT modification of the two-dimensional Euler--Boussinesq equations for a stratified incompressible fluid in a vertical plane.


\bigskip

\subsection{The LA-SALT 2D Euler--Boussinesq equations in a vertical plane}
\subsection*{The deterministic case}
The deterministic Euler--Boussinesq (EB) equations for an incompressible, inviscid 2D fluid flow in a vertical plane under gravity are given by
\begin{equation}\label{Inviscid_Boussinesq}
\left\{
\begin{array}{rl}
\partial_t\u+\left(\u\cdot\nabla\right)\u &=-\nabla p+ g\theta \hat{\y}, \qquad (\x,t)\in \mathbb{T}^2\times\RR^{+},\\
\partial_t\theta +\u\cdot\nabla \theta &= 0, \\
\nabla\cdot\u&=0,
\end{array}
\right.
\end{equation}
where $\u=(u_{1},u_{2})$ is the incompressible vector velocity field, $p$ is the scalar pressure, $g$ is the acceleration due to gravity, $\theta$  corresponds to the temperature, or buoyancy, which is transported by the fluid, and $\hat{\y}$ is the unit vector in the vertical direction. 

The 2D EB equations \eqref{Inviscid_Boussinesq} are regarded as a fundamental model of large scale atmospheric and oceanic flows, \cite{Ped87,Ric07}.  
From a mathematical point of view, the 2D EB equations retain some key features of the 2D Euler equations, including a vortex stretching mechanism for $\nabla\theta\times\hat{\y}\ne0$. The problem has attracted considerable attention in the PDE community, and local existence results and regularity criteria, as well as numerical experiments, are available, \cite{CanBen,HouLi1,Chae,ElgJeo}. The fundamental issue of whether classical solutions of the 2D incompressible Boussinesq equations can develop finite time singularities remains an outstanding open problem which still seems to be out of reach. For reviews of the history and recent analytical results for this class of problems, see, e.g., \cite{wang5645122inviscid}.

\subsection*{Deterministic computational simulations of 2D EB Rayleigh-Taylor instability} The inviscid non-diffusive 2D EB equations \eqref{Inviscid_Boussinesq} may be applied in most oceanic flows. This is because diffusive processes on oceanic flows have little effect because of their immense inertia. In particular, non-diffusive 2D EB equations have been used to model the process of oceanic deep water formation driven by Rayleigh-Taylor instability, which produces rapidly down-welling water columns called \emph{chimneys}. This process occurs in regions where warm ocean currents run into cold Arctic seas,  such as the region where the Gulf Stream runs into the Greenland Sea. 
\smallskip

Animations of computational simulations from \cite{HolmPan2022} of Rayleigh-Taylor instability in initial value problems for the deterministic EB equations appear online at the following websites:\\
Example 1: \url{https://youtu.be/zorhwJ0pmUI}\\
Example 2: \url{https://youtu.be/pXU5mJqQjuA}\\
Example 3: \url{https://youtu.be/FFdxxyyRVk8}

The stark contrasts among the simulated dynamical results of the three different simulations of Rayleigh-Taylor instability for 2D EB show a strong sensitivity to initial conditions and no particular tendency toward equipartition of energy.

\subsection*{The LA-SALT model of EB in a vertical plane} 

The LA-SALT model of EB in a vertical plane may be written in horizontal $x$ and vertical $y$ Cartesian coordinates in the Euclidean velocity representation 

\begin{equation}\label{LA:SALT:Lie:Bou}
\left\{
\begin{array}{rl}
{\diff\,}{u}\ +& \mathcal L_{\mathbb E[u]} {u} \diff t + \displaystyle \displaystyle\sum_k \mathcal L_{\xi_k} {u} \circ \diff W_t^k 
\\=& - \,{\sf{d}} \E{ p -  |\u|^2/2 }\, \diff t + g \E{\theta} \mb{\hat{y}} \diff t
\,- gy {\sf{d}} (\theta - \E{\theta}) \diff t
\,,\\ \\
\diff \theta\ +& \mathcal L_{\mathbb E[u]} {\theta} \diff t + \displaystyle \displaystyle\sum_k \mathcal L_{\xi_k} {\theta} \circ \diff W_t^k = 0
\,,\hfill \nabla \cdot \mathbb{E} \left[ \u \right]=0.
\end{array}
\right.
\end{equation}
\begin{remark}[Divergence-free condition on the expectation of the velocity] We note that although the more restrictive divergence-free condition $\nabla \cdot \u = 0$ might seem more natural to consider at first sight than our current condition $\nabla \cdot \mathbb{E} \left[ \u \right]=0$, it would make equations \eqref{LA:SALT:Lie:Bou} ill-posed. This is due to the presence of the term $\nabla \E{ p -  |\u|^2/2 },$ which imposes the pressure to be deterministic. Further insight into this will be provided once we present our approach for solving equations \eqref{LA:SALT:Lie:Bou}. Here, we simply note that if the expectation in the term $\nabla \E{ p -  |\u|^2/2 }$ is removed, the condition $\nabla \cdot \u = 0$ could be considered.

\end{remark}
In the equations above, we have employed the notation $\mathcal{L}_{\xi_k}$ to indicate Lie derivative along a vector field $\xi_k$. As stressed in Subsection \ref{2-1}, the Lie derivative on one-forms 
\[
\mathcal{L}_{\xi} u = (\mb{\xi} \cdot \nabla) \u + \displaystyle\sum_{j} \u^j \nabla {\mb \xi}^j
\]
is different from the Lie derivative applied to scalar fields $\mathcal{L}_{\xi} \theta = \mb{\xi} \cdot \nabla \theta.$ 

As explained below in Example \ref{EB-Ham-example2.1} the system \eqref{LA:SALT:Lie:Bou} 
can be rewritten in Hamiltonian operator form as 
\begin{align}
\diff 
\begin{bmatrix}
\mu \\ \\  \theta \\ \\ \rho 
\end{bmatrix}
= -
\begin{bmatrix}
 \mathcal{L}_{\Box} \mu &  - \Box  (\nabla \theta)    &  \rho \nabla {\Box}
\\ \\
\Box \cdot (\nabla \theta) & 0 & 0 \\ \\
 \nabla \cdot  (\rho \Box) & 0 & 0
\end{bmatrix}
\begin{bmatrix}
 \E{u}\diff t + \sum_k {\mb \xi_k}\circ \diff W^{(k)}_t
\\ \\ -g y \E{\rho} \diff t \\ \\ \E{ p  - \frac{ |\mb{u}|^2}{2} }\diff t  - g\E{\theta} y \diff t
\end{bmatrix},
\label{Ham-matrix-Boussinesq2-1}
\end{align}
which yields equations \eqref{LA:SALT:Lie:Bou}. 
Upon passing to the It\^o formulation, the LA-SALT EB system \eqref{LA:SALT:Lie:Bou} transforms into 
\begin{equation}\label{LA:SALT:Ito:Bou}
\left\{
\begin{array}{rl}
\diff {u} + \mathcal L_{\mathbb E[u]} {u} \diff t 
+  \displaystyle\sum_k \mathcal L_{\xi_k} {u} \, \diff W_t^k 
&= - {\sf{d}} \E{p -  |\mb{u}|^2/2} \, \diff t + g \E{\theta} \mb{\hat{y}} \diff t 
\\ & \quad
- gy {\sf{d}} (\theta - \E{\theta}) \diff t 
+ \displaystyle \frac12 \sum_k \mathcal L_{\xi_k}^2 {u} \diff t, 
\\
\diff \theta + \mathcal L_{\mathbb E[u]} {\theta} \diff t + \displaystyle \displaystyle\sum_k \mathcal L_{\xi_k} {\theta}\diff W_t^k &=  \displaystyle \displaystyle \frac12 \sum_k \mathcal L_{\xi_k}^2 {\theta} \diff t\,,
\end{array}
\right.
\end{equation}
where we denote the composition of Lie derivatives as, for example,  $\mathcal L_{\xi_k}(\mathcal L_{\xi_k} {\theta}) =: \mathcal L_{\xi_k}^2 {\theta}$.

Next, taking expectation at both sides of the equations above yields a deterministic equation for the evolution of the expectations given by

\begin{equation}\label{LA:SALT:Exp:Bou}
\left\{
\begin{array}{rl}
\partial_t \mathbb E[u] + \mathcal L_{\mathbb E[u]} \mathbb E[u] &= -{\sf{d}} \left(\mathbb E[p] - \mathbb E \left[|\u|^2/2\right]\right) + g \mathbb E[\theta] \mb{\hat{y}} +  \displaystyle \displaystyle \frac12 \sum_k \mathcal L_{\xi_k}^2 \mathbb E[u], \\
\partial_t \mathbb E[\theta] + \mathcal L_{\mathbb E[u]} \mathbb E[\theta] &= \displaystyle \displaystyle \frac12 \sum_k \mathcal L_{\xi_k}^2 \mathbb E[\theta].
\end{array}
\right.
\end{equation} 
It is straightforward to check that in vorticity form where $\omega=\nabla^{\perp}\cdot \mb u = \hat{\y}\cdot{\rm curl} \mb u$, we have that 
\begin{equation}\label{LA:SALT:Stra:Bou}
\left\{
\begin{array}{rl}
\diff \omega + \mathcal L_{\mathbb E[u]} \omega \diff t + \displaystyle \displaystyle\sum_k \mathcal L_{\xi_k} \omega \,\circ \diff W_t^k &= g \p_{x}\theta \diff t, \\
\diff \theta + \mathcal L_{\mathbb E[u]} {\theta} \diff t + \displaystyle \displaystyle\sum_k \mathcal L_{\xi_k} {\theta} \circ \diff W_t^k &=0.
\end{array}
\right.
\end{equation}
We stress here again that since $\omega$ is a scalar quantity for incompressible planar flow, its Lie derivative is to be understood as $\mathcal{L}_{\xi} \omega = \xi \cdot \nabla \omega.$ The corresponding equation for the expectation is given by
\begin{equation}\label{LA:Vor:Exp:Bou}
\left\{
\begin{array}{rl}
\partial_t \mathbb E[\omega] + \mathcal L_{\mathbb E[u]} \mathbb E[\omega] &= g \mathbb \p_{x} \E\theta+  \displaystyle \frac12 \displaystyle\sum_k \mathcal L_{\xi_k}^2 \mathbb E[\omega], \\
\partial_t \mathbb E[\theta] + \mathcal L_{\mathbb E[u]} \mathbb E[\theta] &=  \displaystyle \displaystyle \frac{1}{2} \sum_k \mathcal L_{\xi_k}^2 \mathbb E[\theta].
\end{array}
\right.
\end{equation}

\section{Lagrangian-averaged (LA) semidirect product systems with transport noise}\label{sec:LASDPsystems} 
In subsequent discussions, we will employ the following notations:
\begin{itemize}
\item $M$ is a smooth, orientable manifold,
\item $\rm{Diff}(M)$ denotes the group of diffeomorphisms on $M$,
\item $\mathfrak X(M)$ denotes the set of smooth vector fields on $M$,
\item $\Omega^1(M)$ denotes the set of differential one-forms on $M$,
\item $\rm{Den}(M)$ denotes the set of volume forms (densities) on $M$,
\item $V$ is any tensor field such that $\rm{Diff}(M)$ acts on it from the right (e.g. $V = C^\infty(M,\mathbb R)$ and $\rm{Diff}(M)$ acts on $V$ by composition from the right).
\end{itemize}

\subsection{Lie--Poisson structure of fluid equations with advected quantities.} \label{2-1}
\index{Lie--Poisson equations!advected quantities}

We have introduced a class of stochastic partial differential equations (SPDE) for continuum dynamics. 
This class of equations is Hamiltonian with a Lie--Poisson bracket given by the $L^2$ pairing between $\mathfrak{X}(M) \circledS V$ and its dual \cite{HMR1998} 
\index{Lie--Poisson!bracket} 
\begin{align}
\frac{\diff F}{\diff t}=\{ F, H\} 
= 
-\,\SCP{(\mu,a)}{\left[ \frac{\delta F}{\delta (\mu,a)}\,,\, \frac{\delta H}{\delta (\mu,a)} \right]}
\label{LP-Brkt}
\end{align}
where $F,H\in C^\infty(\mathfrak X^*(M) \times V^* \to \mathbb R)$, $\mu \in \mathfrak X^*(M) \cong \Omega^1(M) \otimes \rm{Den}(M)$, $a\in V^*$, ${\delta F}/{\delta (\mu,a)}\in \mathfrak{X}(M) \circledS V$ is the variational derivative (see \cite{marsden1983coadjoint}), and $\mathfrak{X}(M) \circledS V$ denotes the semidirect product Lie algebra of vector fields on $M$ acting on the vector space $V$. The square brackets $[\,\cdot\,,\,\cdot\,]$ denote the adjoint action of the semidirect product Lie algebra $\mathfrak{X}(M) \circledS V$ on itself.
\index{Lie algebra!adjoint action}

Upon integration by parts, the Lie--Poisson bracket in \eqref{LP-Brkt} may be expressed in terms of a Hamiltonian operator as \index{Lie--Poisson!bracket} 
\begin{align}
\frac{\diff F}{\diff t}=\{ F, H\}
=
-\int_M 
\begin{bmatrix}
{\delta F}/{\delta \mu} \\  {\delta F}/{ \delta a} 
\end{bmatrix}^T
\begin{bmatrix}
{\rm ad}^*_{\Box}\mu & \Box \diamond a
\\ 
\mathcal{L}_{\Box}a & 0
\end{bmatrix}
\begin{bmatrix}
{\delta H}/{\delta \mu} \\  {\delta H}/{ \delta a}  
\end{bmatrix}
{\rmd}V
\label{Ham-matrix-det}
\end{align}
where ${\rm ad^*} : \mathfrak X(M) \times \mathfrak X^*(M) \rightarrow \mathfrak X^*(M)$ is the coadjoint action, $\mathcal L_u \alpha$ is the Lie derivative of a tensor field $\alpha$ with respect to a vector field $u$, and the diamond operation $\diamond: V\times V^*\to \mathfrak{X}^*(M)$  is defined in terms of the Lie derivative as, 
\index{vector field!$\mathrm{ad}$ and $\mathrm{ad}^*$ actions} \index{vector field!Lie derivative} \index{diamond operation $(\diamond)$} \index{Lie derivative!tensor field} \index{Lie algebra!coadjoint action}
\begin{align}
\Scp{b\diamond a}{v}_{\mathfrak{X}(M)} := \Scp{b }{- \mathcal{L}_v a}_V 
\,,\label{diamond-def}
\end{align}
where $a\in V^*$ and $b\in V$. 
The definition \eqref{diamond-def} makes the Lie--Poisson bracket skew-symmetric in $L^2$ under integration by parts.
\index{Lie--Poisson!bracket} \index{Lie derivative!local expressions}

We note that the Lie derivative $\mathcal L$ has different local expressions depending on which type of tensor field it acts on, which we will list below. Let $u \in \mathfrak X(\mathbb R^n)$ for all examples below.
\begin{itemize} 
    \item (Scalar functions) Given a scalar field $f$, we have
    $$\mathcal L_u f = \mb{u} \cdot \nabla f.$$
    \item (Vector fields) If $v \in \mathfrak X(\mathbb R^n)$ is another vector field, then
    $$\mathcal L_u v = \left(\mb{u}\cdot \nabla \mb{v} - \mb{v}\cdot \nabla \mb{u}\right)\cdot \nabla = [u,v] = -{\rm ad}_u v.$$
    \item (One-forms) Given a one-form $\alpha \in \Omega^1(\mathbb R^n)$, the corresponding Lie derivative reads
    $$\mathcal L_{u}\alpha = \left(\mb{u} \cdot \nabla \mb{\alpha} + \sum_{j=1}^n \alpha_j \nabla u^j\right) \cdot {\rmd} \x.$$
    \item (Densities) Given a density $D = \rho {\rmd}^nx \in \Omega^n(\RR^n)$, we have
    $$\mathcal L_{u} D = \div(\rho \u) {\rmd}^nx.$$
    \item (One-form densities) Given a one-form density $\mu = \alpha \otimes \rho\,{\rmd} ^nx,$ where $\alpha \in \Omega^1(\mathbb R^n)$ and $\rho\,{\rmd}^nx \in \Omega^n(\RR^n)$, its Lie derivative is given by
    $$\mathcal{L}_u (\alpha \otimes \rho\,{\rmd} ^nx) = (\rho\, \mathcal{L}_u \alpha + \div(\rho \u) \alpha) \otimes {\rmd} ^nx. $$
    It is well-known that for one-form densities (which are dual under $L^2$ pairing to the Lie algebra of vector fields), the coadjoint representation of the Lie algebra is equivalent to the Lie derivative, i.e., ${\rm ad}^*_u (\alpha \otimes \rho\,{\rmd} ^nx) \equiv \mathcal{L}_u (\alpha \otimes \rho\,{\rmd} ^nx)$, a fact we will use throughout this section.
\end{itemize}
We refer the readers to \cite{HMR1998} for further examples of Lie derivatives arising in continuum dynamics and the corresponding expressions for the diamond operator. We also remark that all the previous definitions take the same form on the torus $\mathbb{T}^2.$
\begin{example}[The deterministic 2D Euler-Boussinesq equations] \label{EB-Ham-example2.1}
We recall that the Boussinesq system is given by
\begin{equation}\label{Inviscid_Boussinesq00}
\left\{
\begin{array}{rl}
\partial_t\u+\left(\u\cdot\nabla\right)\u &=-\nabla p + g\theta \hat{\y}, \qquad (\x,t)\in \mathbb{T}^2\times\RR^{+},\\
\partial_t\theta +\u\cdot\nabla \theta &= 0, \\
\nabla\cdot\u&=0,
\end{array}
\right.
\end{equation}
where $\u=(u_{1},u_{2})$ is the incompressible vector velocity field, $p$ is the scalar pressure, $g$ is the acceleration due to gravity, and $\theta$  corresponds to the temperature, which is transported by the fluid. In Lie--Poisson form with $(\mu,\theta,D)$ denoting momentum one-form density, potential temperature, and density respectively, where $\mu (x,t) :=  \mb u \cdot {\rmd} \x \otimes \rho \, {\rmd} ^2x,$ $D := \rho \, {\rmd} ^2x,$ and the advected potential temperature $\theta = \theta(x,t)$ is understood as a scalar quantity. In the semidirect product formalism presented in \eqref{Ham-matrix-det}, this can be expressed as
\begin{align}
\diff{F} =\{ F, h\}
=
-\hbox{\Large$\int$}_{\mathbb{T}^2}
\begin{bmatrix}
{\delta F}/{\delta \mu} \\  {\delta F}/{ \delta \theta}  \\  {\delta F}/{ \delta D}
\end{bmatrix}^T
\begin{bmatrix}
{\rm ad}^*_{\Box} \mu & \Box \diamond \theta & \Box \diamond D
\\ 
\mathcal{L}_{\Box} \theta & 0 & 0 \\
\mathcal{L}_{\Box} D & 0 & 0
\end{bmatrix}
\begin{bmatrix}
{\delta H } / {\delta \mu}
\\  {\delta H}/{ \delta \theta} \\
 {\delta H}/{ \delta D} 
\end{bmatrix}
{\rmd} ^2x,
\label{Ham-matrix-Boussinesq}
\end{align}
for Boussinesq Hamiltonian $h$ given in terms of $(\mu,\theta,D)$ by the sum of the kinetic and potential energies, plus a constraint applied by the Lagrange multiplier $p$ (the pressure) which enforces incompressibility
\begin{align}
\begin{split}
h(\mu,\theta,\rho) &=  \int_{\mathbb{T}^2} \left(\frac{1}{2\rho} |\mu|^2  
-  g\rho  \theta y + p (\rho-1) \right) \,{\rmd} ^2x 
\\&= \int_{\mathbb{T}^2} \left \langle \mu, u \right \rangle 
- \int_{\mathbb{T}^2} \left(\frac{\rho}{2} |\mb{u}|^2 + g\rho \theta y - p (\rho-1) \right) \,{\rmd} ^2x
\,,
\end{split}
\label{Boussinesq-Ham}
\end{align}
so that
\begin{align}
\frac{\delta h}{\delta \mu} = u := \mb u\cdot\nabla
\,,\quad
\frac{\delta h}{ \delta u} = \mu - \rho u = 0
\,,\quad
\frac{\delta h}{ \delta \theta} = -g\rho y
\,,\quad
\frac{\delta h}{\delta \rho} = p - \frac{|\mu|^2}{2\rho^2}- g\theta y.
\label{EBHam-var}
\end{align}
We note that the constraint coming from the Lagrangian multiplier $p$ yielding $\rho = 1$ is only to be imposed once the variations are taken and the final equations derived.
The definitions for the Lie-derivative, diamond, and coadjoint operator ${\rm ad}^*$ have been specified above. We note that these depend on the type of object they are being applied to (i.e. $\mu$ is a one-form density, whereas $\theta$ a scalar, and $D$ a volume form). Upon applying these definitions, we can rewrite \eqref{Ham-matrix-Boussinesq} as 
\begin{align}
\partial_t
\begin{bmatrix}
\mu \\  \theta \\ \rho 
\end{bmatrix}
= -
\begin{bmatrix}
 \mathcal{L}_{\Box} \mu &  - \Box  (\nabla \theta)    &  \rho \nabla {\Box}
\\ 
\Box \cdot (\nabla \theta) & 0 & 0 \\
 \nabla \cdot  (\rho \Box) & 0 & 0
\end{bmatrix}
\begin{bmatrix}
 u
\\  -g\rho y \\ p - |\mb{u}|^2/2 - g\theta y
\end{bmatrix},
\label{Ham-matrix-Boussinesq2}
\end{align}
which yields equations \eqref{Inviscid_Boussinesq00}. 
\end{example}

\subsection{SALT equations.}

The class of Hamiltonian SPDE treated here may be obtained by extending the Hamiltonian function to make it stochastic by adding the $L^2$ pairing of the momentum density $\mu$ with a Stratonovich stochastic process (denoted with the symbol $\circ \diff W_t$) whose spatial correlations are specified by a set of smooth vector fields, $\mb{\xi}_k(\x)$, $k=1,\dots,N$, as in 
\cite{Holm2015}, as
\begin{align}
H(\mu,a) \to \diff {h(\mu,a;\xi_k) }
:= H(\mu,a)\diff t + \displaystyle\sum_k\scp{\mu}{\xi_k}\circ \diff W_t^k
\,.\label{Stoch-Ham}
\end{align}
The Lie--Poisson bracket then yields \index{Lie--Poisson!bracket} 
\begin{align}
\diff  {F} =\{ F, \diff {h}\}
=
-\int_M 
\begin{bmatrix}
{\delta F}/{\delta \mu} \\  {\delta F}/{ \delta a} 
\end{bmatrix}^T
\begin{bmatrix}
{\rm ad}^*_{\Box}\mu & \Box \diamond a
\\ 
\mathcal{L}_{\Box}a & 0
\end{bmatrix}
\begin{bmatrix}
({\delta H } / {\delta \mu})\diff t + \displaystyle\sum_k\mb{\xi}_k(\x)\circ \diff W_t^k
\\  ({\delta H}/{ \delta a} ) \diff t 
\end{bmatrix}
{\rmd}V.
\label{Ham-matrix-SALT}
\end{align}
These equations describe stochastic advection by Lie transport (SALT)
\cite{Holm2015} and they comprise the basis for a new approach for data analysis, uncertainty quantification and uncertainty reduction by data assimilation using particle filtering \cite{CCHOS18b,CCHOS18a}. By defining the stochastic vector field 
\begin{align}
\diff{x_t} := 
({\delta H } / {\delta \mu})\diff t + \displaystyle\sum_k \mb{\xi}_k(\x)\circ \diff W_t^k
\label{Stoch-Lag-traj}
\end{align}
and recalling that ${\rm ad}^*_{{\rm d}x_t}\mu
= \mathcal{L}_{{\rm d}x_t}\mu,$ 
the SALT equations \eqref{Ham-matrix-SALT} may be rewritten in a compact form as 
\begin{align}
\begin{split}
\diff{\mu} + \mathcal{L}_{\diff{x_t}}\mu 
&= - \frac{\delta H }{\delta a}\diamond a \,\diff t
\,,\\
\diff{a} + \mathcal{L}_{\diff{x_t}}a &= 0
\,.
\end{split}
\label{SALT-adv-form}
\end{align}
The SALT equations in this form have been studied extensively, for example, in 
wave-current interactions \cite{Holm_WCI_JNLS2019}, uncertainty prediction \cite{GH19}, solution properties of stochastic fluid dynamics  \cite{CrFlHo2019, AloBetTak}, and turbulent cascades \cite{HolmTurbulent}, even when the spatial correlations are nonstationary \cite{GH18a,GH19}. 

\begin{example}[SALT 2D Euler-Boussinesq system]
The 2D SALT Boussinesq equations are given by
\begin{align}\label{Ham-matrix-Boussinesq-SALT}
\diff  {F} =\{ F, h\}
=
-\hbox{\Large$\int$}_{\mathbb{T}^2}
\begin{bmatrix}
{\delta F}/{\delta \mu} \\  {\delta F}/{ \delta \theta}  \\  {\delta F}/{ \delta D}
\end{bmatrix}^T
\begin{bmatrix}
{\rm ad}^*_{\Box} \mu & \Box \diamond \theta & \Box \diamond D
\\ 
\mathcal{L}_{\Box} \theta & 0 & 0 \\
\mathcal{L}_{\Box} D & 0 & 0
\end{bmatrix}
\begin{bmatrix}
{\delta h } / {\delta \mu}
\\  {\delta h}/{ \delta \theta} \\
 {\delta h}/{ \delta D} 
\end{bmatrix}
{\rmd} ^2x,
\end{align}
\end{example}
with
\begin{align}\label{Boussinesq-Ham-SALT}
h(\mu,\theta,D) = \int_{0}^t \int_{\mathbb{T}^2} \left(\frac{1}{2\rho} |\mu|^2  +  g \rho \theta y + p (\rho-1) \right) \,{\rmd} ^2x \diff s +  \displaystyle\sum_k\int_{0}^t \int_{\mathbb{T}^2} \left \langle \mu (x,t), \xi_k \right \rangle \, \circ \diff W_s^k
\,,
\end{align}
where $\mu = \rho \mb{u}  \cdot \sf{d} \mb{x} \otimes$ $\diff^2 x$ and $D = \rho \diff^2 x$ giving rise to the SALT 2D Euler--Boussinesq (EB) system
\begin{equation}\label{Inviscid_Boussinesq00-SALT}
\left\{
\begin{array}{rl}   
\diff \u + \u \cdot \nabla \u \diff t + \displaystyle\sum_k \mb{\xi_k} \cdot \nabla u \circ \diff W_t^k + \displaystyle\sum_k u^j \nabla \xi_k^j \circ \diff W_t^k 
&= -{\sf{d}}(p- |\u|^2/2)  + g\theta \hat{\y} \diff t,\\
\diff \theta + \u \cdot \nabla \theta \diff t + \displaystyle\sum_k \mb{\xi_k} \cdot \nabla \theta \circ \diff W_t^k &= 0, \\
\nabla\cdot\u&=0.
\end{array}
\right.
\end{equation}

We note that the well-posedness of this equation and a blow-up criterion for it were derived in \cite{DieAytJNLS}. In this section, by considering the Lagrangian-averaged version of \eqref{Inviscid_Boussinesq00-SALT}, we construct the LA-SALT 2D EB model, which will turn out to be \emph{globally} well-posed.

\subsection{Lagrangian-averaged (LA) SALT equations.}

LA-SALT is a modification of the SALT introduced in \cite{DrivasHolm2019} and analysed in \cite{DrivasHolmLeahy2020} for 3D stochastic fluid motion. This modification preserves the Lie--Poisson bracket structure of the SALT equations, while replacing the variational derivatives of the Hamiltonian by their expected values, denoted $\mathbb{E}[\,\cdot\,]$, as follows. First, the Lagrangian trajectory equation \eqref{Stoch-Lag-traj} is modified by taking the expectation of the drift velocity, as \index{Lie--Poisson!bracket} 
\begin{align}
\diff{X_t} := 
\E{\frac{\delta H }{\delta \mu}} \diff t + \displaystyle\sum_k\xi_k(x)\circ \diff W_t^k,
\label{Exp-Lag-traj}
\end{align}
where $H$ is the same Hamiltonian as in the SALT equations. We also take the expectation of the variational derivatives with respect to advected quantities $\E{{\delta H }/{\delta a}}$.

The Poisson operator then yields 
\begin{align}
\diff  {F} =\{ F, \diff {h}\}
=
-\int_M 
\begin{bmatrix}
{\delta F}/{\delta \mu} \\  {\delta F}/{ \delta a} 
\end{bmatrix}^T
\begin{bmatrix}
{\rm ad}^*_{\Box}\mu & \Box \diamond a
\\ 
\mathcal{L}_{\Box}a & 0
\end{bmatrix}
\begin{bmatrix}
\mathbb{E}\left[\frac{\delta H }{\delta \mu}\right] \diff t + \displaystyle\sum_k\xi_k(x)\circ \diff W_t^k
\\  \mathbb{E}\left[\frac{\delta H }{\delta a} \right] \diff t 
\end{bmatrix}
{\rmd}V \,.
\label{Ham-matrix-LASALT}
\end{align}
These equations describe \emph{Lagrangian-averaged} stochastic advection by Lie transport (LA-SALT). That is, the Lagrangian path ${\rm d}X_t$ in equation \eqref{Exp-Lag-traj} has been acquired by taking the expectation (averaging in probability space) of the drift velocity of the SALT Lagrangian path \eqref{Stoch-Lag-traj} at \emph{fixed Lagrangian label}.
The SALT equations in advective form \eqref{SALT-adv-form} now become the LA-SALT equations, given by
\begin{equation}\label{LASALT-adv-form}
\left\{
\begin{array}{rl}
{\rm d}{\mu} + \mathcal{L}_{\E{\frac{\delta H }{\delta \mu}}} {\mu} \,\diff t
+ \displaystyle\sum_k \mathcal{L}_{\xi_k} {\mu} \circ \diff W_t^k
&= - \,\mathbb{E}\Big[ \frac{\delta H}{\delta a}\Big] \diamond { a }\,\diff t
\,,\\
 {\rm d} {a} + \mathcal{L}_{\E{\frac{\delta H }{\delta \mu}}} {a} \,\diff t
 + \displaystyle\sum_k \mathcal{L}_{\xi_k} {a} \circ \diff W_t^k  &= 0
\,,
\end{array}
\right.
\end{equation}
with ${\rm d}X_t$ defined in equation \eqref{Exp-Lag-traj}. If there are several advected quantities, one sums over all of them in the diamond term in \eqref{LASALT-adv-form}. Notice that the LA-SALT equations in \eqref{Ham-matrix-LASALT} have the same Poisson matrix operator as for the SALT equations in \eqref{Ham-matrix-SALT} and therefore many key features of the Lie-Poisson system are preserved, such as the conservation of Casimirs and Kelvin's circulation theorem (see Remark \ref{kelvin} below).
Thus, between equations \eqref{Ham-matrix-SALT} and \eqref{Ham-matrix-LASALT}, only the variational derivatives of the deterministic parts of the Hamiltonian have been changed to accommodate the differences between Lagrangian trajectories for SALT and LA-SALT in equations \eqref{Stoch-Lag-traj} and \eqref{Exp-Lag-traj}. \index{Casimirs}

\begin{remark}[Comparing SALT and LA-SALT]\rm
The LA-SALT approach applies to the same physical class of equations as for SALT. Following the deterministic route set in \cite{HMR1998}, the class of SALT fluid equations was first derived in \cite{Holm2015} from the symmetry-reduced Lagrangians $\ell(u,a)$ for the Euler--Poincar\'e Hamilton's principle with $\mu=\delta \ell/\delta u$, whose variations were constrained to respect stochastic advection laws in \eqref{SALT-adv-form}. The LA-SALT approach modifies the stochastic process $ \diff{x_t}$ for the transport vector field in \eqref{Stoch-Lag-traj} which defines the stochastic Lagrangian trajectory in SALT to become $ \diff{X_t}$ as in \eqref{Exp-Lag-traj}. The Euler--Poincar\'e version of the Lie--Poisson expression of the motion equation in \eqref{LASALT-adv-form} is, 
\begin{align}\label{LASALT-EP}
 \diff{\frac{\delta \ell}{\delta u}} + \mathcal{L}_{ \diff{X_t}} \frac{\delta \ell}{\delta u}  
 = \mathbb{E}\Big[ \frac{\delta \ell}{\delta a}\Big] \diamond a\,\diff t
 \quad\hbox{and}\quad
  {\rm d}a + \mathcal{L}_{ {\rm d}X_t} a = 0
\,.
\end{align}
The comparisons between them can be derived from the relations ${\delta \ell}/{\delta u}=\mu$ and ${\delta \ell}/{\delta a}=-{\delta h}/{\delta a}$ which are obtained from the deterministic Legendre transform from the reduced Lagrangian to the reduced Hamiltonian,
\begin{align}
\diff h(\mu,a) = \scp{\mu}{u} - \ell(u,a)
\,,
\label{Legendre-xform}
\end{align}
and the assumption that the reduced Lagrangian is hyperregular, which almost always holds in continuum mechanics. \hfill $\square$
\end{remark}

\begin{remark}[The Kelvin circulation theorem for LA-SALT]\label{kelvin} 
In fluid dynamics, the mass density $D {\rmd}^3x$ is always an advected quantity, satisfying 
the continuity equation, which in this case is expressed as,
\index{LA-SALT!Kelvin circulation theorem}
\begin{align}
\diff {(D {\rmd}^3x)} + \mathcal{L}_{\diff{X_t}}(D {\rmd}^3x)  
= \big(\diff{D} + {\rm div}(\diff{X_t} D)\big) {\rmd}^3x
= 0
\,.
\label{Contin-eqn}
\end{align}
Consequently, if we define the circulation one-form $v= \mb{v}\cdot {\rmd} \mb{x}$ by 
\begin{align}
\mu = \mb{m}\cdot {\rmd} \mb{x} \otimes {\rmd}^3x = \mb{v}\cdot {\rmd} \mb{x} \otimes D {\rmd}^3x
= {v} \otimes D {\rmd}^3x
\,,\label{Circ-1form}
\end{align}
and use the continuity equation \eqref{Contin-eqn},
and then the advective form of the motion equation in \eqref{LASALT-adv-form}, we can write the Kelvin circulation theorem for LA-SALT as
\begin{align}
\rmd \oint_{c(\diff{X_t})} \!\!\!{\mb v}\cdot {\rmd} {\mb x}  
= \oint_{c(\diff{X_t})} \!\!\!
\big(\diff +  \mathcal{L}_{\diff{X_t}}\big)({\mb v}\cdot {\rmd}{\mb x})  
= - \oint_{c(\diff{X_t})} \frac{1}{D} 
\mathbb{E}\Big[\frac{\delta H }{\delta a}\Big]\diamond a
\,.\label{Kel-thm-LASALT}
\end{align}
This relation may be proved, for example, by following the corresponding proof of the stochastic Kelvin calculation for SALT in  \cite{de2019implications}. Thus, because the LA-SALT modification in \eqref{Exp-Lag-traj} of the SALT transport vector field in \eqref{Stoch-Lag-traj} preserves the Lie--Poisson Hamiltonian structure of SALT, one also acquires the Kelvin circulation theorem for LA-SALT in \eqref{Kel-thm-LASALT}. Note that for compressible fluids, the right-hand side of the relation in \eqref{Kel-thm-LASALT} can be nonlinear in the stochastic variables. \hfill $\square$
\end{remark}

\color{black}

\subsection{Evolution of the covariance tensor}
We have seen that the expectation of the variables in the LA-SALT equation form a closed system. Could we say the same about the covariance? For general semi-direct product LA-SALT systems \eqref{LASALT-adv-form}, the answer is no. However, the covariance for the advected quantities {\em does} always form a closed system. The proof of this covariance property in \cite{DieAytJNLS} is shown below. \index{climate!covariance tensor!evolution} \index{climate!covariance tensor!LA-SALT}
\index{weather v climate!expectation v pathwise fluctuations}  






\begin{proposition} \label{covariance-eq}
Let $a_t$ be any tensor field that satisfies the linear stochastic advection equation
\begin{align}\label{a-trans-eq}
    \diff a + \mathcal L_{\E{\frac{\delta H}{\delta \mu}}} a \diff t + \displaystyle\sum_k \mathcal L_{\xi_k} a \circ \diff W_t^k = 0,
\end{align}
and let $A^{(2)} := \mathbb E\left[(a - \mathbb E[a])^2\right]$ be the covariance tensor for the tensor field $a$, where $(\cdot)^2$ here means taking the tensor product with itself.
Then $A^{(2)}$ satisfies the following PDE:
\begin{align}\label{covar-eq}
\partial_t A^{(2)} + \mathcal L_{\E{\frac{\delta H}{\delta \mu}}}A^{(2)} = \sum_k \left(\frac12 \mathcal L_{\xi_k}^2 A^{(2)} + \left(\mathcal L_{\xi_k} \E{a}\right)^2 \right).
\end{align}
This is closed since $\E{a}$ and $\E{\frac{\delta H}{\delta \mu}}$ are determined by the closed system \eqref{LASALT-adv-form}. 
\end{proposition}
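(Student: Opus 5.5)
The plan is to pass to It\^o form, derive the equations for the mean and the fluctuation, and then compute the differential of the fluctuation's tensor square with It\^o's product rule. Write $\bar u := \E{\frac{\delta H}{\delta \mu}}$, which is deterministic. The Stratonovich equation \eqref{a-trans-eq} has It\^o form
\begin{align*}
\diff a + \mathcal L_{\bar u} a\,\diff t + \sum_k \mathcal L_{\xi_k} a\,\diff W_t^k = \frac12\sum_k \mathcal L_{\xi_k}^2 a\,\diff t ,
\end{align*}
in exactly the way \eqref{LA:SALT:Ito:Bou} was obtained from \eqref{LA:SALT:Bou}. The $\xi_k$ are time independent, so the Stratonovich-to-It\^o correction is just $\frac12\mathcal L_{\xi_k}^2 a$.

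Taking expectations kills the It\^o integrals. Because $\bar u$ and $\xi_k$ are deterministic, the operators $\mathcal L_{\bar u}$ and $\mathcal L_{\xi_k}$ are linear and commute with $\mathbb E$. This gives
\begin{align*}
\partial_t \E{a} + \mathcal L_{\bar u}\E{a} = \frac12\sum_k \mathcal L_{\xi_k}^2 \E{a}.
\end{align*}
Subtracting, the fluctuation $a' := a - \E{a}$ satisfies
\begin{align*}
\diff a' + \mathcal L_{\bar u} a'\,\diff t + \sum_k \mathcal L_{\xi_k} a\,\diff W_t^k = \frac12 \sum_k\mathcal L_{\xi_k}^2 a'\,\diff t .
\end{align*}
Note that the noise coefficient is $\mathcal L_{\xi_k} a = \mathcal L_{\xi_k} a' + \mathcal L_{\xi_k}\E{a}$. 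This linearity, which rests on the expected drift being deterministic, is the essential point; in SALT the drift would depend on $a$ and the computation would not close.

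Next I will apply It\^o's product rule to $a'\otimes a'$:
\begin{align*}
\diff(a'\otimes a') = \diff a'\otimes a' + a'\otimes \diff a' + \sum_k (\mathcal L_{\xi_k} a)\otimes(\mathcal L_{\xi_k} a)\,\diff t .
\end{align*}
Three facts handle the drift terms.
\begin{enumerate}
\item The Lie derivative obeys the product rule \eqref{Lie product rule} on tensor products, so $\mathcal L_{\bar u}a'\otimes a' + a'\otimes\mathcal L_{\bar u}a' = \mathcal L_{\bar u}(a'\otimes a')$.
\item Similarly, $\mathcal L_{\xi}^2(a'\otimes a') = \mathcal L_\xi^2 a'\otimes a' + 2\,\mathcal L_\xi a'\otimes\mathcal L_\xi a' + a'\otimes\mathcal L_\xi^2 a'$. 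Hence the two halves of the It\^o correction combine into $\frac12\mathcal L_{\xi_k}^2(a'\otimes a') - \mathcal L_{\xi_k}a'\otimes\mathcal L_{\xi_k}a'$. Here the cross term is understood as the symmetric tensor square, matching the notation $(\cdot)^2$.
\item The quadratic covariation expands as $(\mathcal L_{\xi_k} a)^{\otimes 2} = (\mathcal L_{\xi_k}a')^{\otimes2} + \text{cross terms linear in } a' + (\mathcal L_{\xi_k}\E{a})^{\otimes 2}$.
\end{enumerate}

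Finally, I take expectations. The martingale terms vanish, assuming enough integrability, which is the same regularity the paper takes for granted. The cross terms linear in $a'$ have zero mean. The $-(\mathcal L_{\xi_k}a')^{\otimes2}$ coming from the It\^o correction cancels the $+(\mathcal L_{\xi_k}a')^{\otimes 2}$ coming from the covariation. Commuting $\mathbb E$ with the deterministic Lie derivatives then yields \eqref{covar-eq}.

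The main obstacle is the bookkeeping in step 2 and its cancellation against step 3. Getting the factor $2$ in the Leibniz expansion of $\mathcal L_\xi^2$ right is exactly what leaves only $(\mathcal L_{\xi_k}\E{a})^2$ as the source term. I will also verify that $\E{a}$ and $\bar u$ are fixed by the closed expectation system, which follows from \eqref{LASALT-adv-form} by the same averaging, so that \eqref{covar-eq} is indeed a closed linear deterministic PDE for $A^{(2)}$.
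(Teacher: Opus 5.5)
Your proof is correct and follows essentially the same route as the paper. You split $a$ into its mean and the fluctuation $a'$, apply the product rule to $a'\otimes a'$, expand $\mathcal L_{\xi_k}^2(a'\otimes a')$ by the Leibniz rule, and take expectations so that the martingale terms and the terms linear in $a'$ vanish. The only difference is the order of the bookkeeping: you convert to It\^o form first and cancel the $-(\mathcal L_{\xi_k}a')^{\otimes 2}$ from the Leibniz expansion against the quadratic covariation, whereas the paper applies the Stratonovich product rule to the fluctuation equation and only afterwards converts $-\sum_k(\mathcal L_{\xi_k}a\otimes a' + a'\otimes\mathcal L_{\xi_k}a)\circ \mathrm{d}W^k_t$ to It\^o, which yields the same terms.
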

\begin{proof}[Proof of Proposition \ref{covariance-eq}]
Let $a' := a - \E{a}$ be the fluctuation about the mean, which can be shown using \eqref{a-trans-eq}-\eqref{covar-eq} to satisfy
\begin{align} \label{a-fluctuation-eq}
    \diff a' + \mathcal L_{\E{\frac{\delta H}{\delta \mu}}} a' \diff t + \displaystyle\sum_k \mathcal L_{\xi_k} a \circ \diff W_t^k = -\frac12 \displaystyle\sum_k \mathcal{L}_{\xi_k}^2 \E{a} \diff t.
\end{align}
Then by It\^o's product rule, we have
\begin{align*}
    \diff \, (a')^2 &= (\circ \diff a') \otimes a' + a' \otimes (\circ \diff a') \\
    &= -\mathcal L_{\E{\frac{\delta H}{\delta \mu}}} a' \otimes a' \diff t - \displaystyle\sum_k \mathcal L_{\xi_k} a \otimes a' \circ \diff W_t^k - \frac12 \displaystyle\sum_k \mathcal L_{\xi_k}^2 \E{a} \otimes a' \diff t \\
    &\quad\, - a' \otimes \mathcal L_{\E{\frac{\delta H}{\delta \mu}}} a'  \diff t - \displaystyle\sum_k  a' \otimes \mathcal L_{\xi_k} a \circ \diff W_t^k - \frac12 \displaystyle\sum_k a' \otimes \mathcal L_{\xi_k}^2 \E{a} \diff t,
\end{align*}
and using the Leibniz property of the Lie derivative, i.e., $\mathcal{L}(S \otimes T) = \mathcal{L} S \otimes T + S \otimes \mathcal{L} T$, for any tensors $S$ and $T$, we have
\begin{align}
    &\diff \,(a')^2 + \mathcal L_{\E{\frac{\delta H}{\delta \mu}}} (a')^2 \diff t + \frac12 \displaystyle\sum_k \left(\mathcal L_{\xi_k}^2 \E{a} \otimes a' + a' \otimes \mathcal L_{\xi_k}^2 \E{a}\right) \diff t \nonumber\\
    &= -\displaystyle\sum_k \left(\mathcal L_{\xi_k} a \otimes a' + a' \otimes \mathcal L_{\xi_k} a\right) \circ \diff W_t^k \nonumber\\
    &= \frac12 \displaystyle\sum_k \left(\mathcal L_{\xi_k}^2 a \otimes a' + 2\left(\mathcal L_{\xi_k} a\right)^2 + a' \otimes \mathcal L_{\xi_k}^2 a\right) \diff t - \displaystyle\sum_k \left(\mathcal L_{\xi_k} a \otimes a' + a' \otimes \mathcal L_{\xi_k} a\right) \diff W_t^k \nonumber\\
    &= \frac12 \displaystyle\sum_k \underbrace{\left(\mathcal L_{\xi_k}^2 a' \otimes a' + 2\left(\mathcal L_{\xi_k} a'\right)^2 + a' \otimes \mathcal L_{\xi_k}^2 a'\right)}_{\hbox{$= \mathcal{L}_{\xi_k}(\mathcal L_{\xi_k} (a')^2)$}} \diff t  + \sum_k\left(\mathcal L_{\xi_k} \E{a}\right)^2 \diff t -\displaystyle\sum_k \left(\mathcal L_{\xi_k} a \otimes a' + a' \otimes \mathcal L_{\xi_k} a\right) \diff W_t^k \nonumber\\
    &\quad  +\frac12 \displaystyle\sum_k \left(\mathcal L_{\xi_k}^2 \E{a} \otimes a' + 2\left(\mathcal L_{\xi_k} \E{a} \otimes \mathcal L_{\xi_k} a' + \mathcal L_{\xi_k} a'  \otimes \mathcal L_{\xi_k} \E{a} \right) + a' \otimes \mathcal L_{\xi_k}^2 \E{a}\right) \diff t,
\label{a'-squared-eq}
\end{align}
where in the second equality we converted from Stratonovich to It\^o integral (see appendix \ref{app-LASALT}) and in the last equality, we expanded the Stratonovich-to-It\^o correction term using $a = a' + \E{a}$ and the linearity of Lie derivatives. Taking expectations on both sides of \eqref{a'-squared-eq} and noting that (1) the expectation of the It\^o integral vanishes by the martingale property, and (2) $\E{a'} = 0$ by definition, we obtain
\begin{align*}
    \partial_{t} A^{(2)} + \mathcal L_{\E{\frac{\delta H}{\delta \mu}}}A^{(2)} = \displaystyle\sum_k \left(\frac12 \mathcal L_{\xi_k}^2 A^{(2)} + \left(\mathcal L_{\xi_k} \E{a}\right)^2 \right),
\end{align*}
as expected, where $A^{(2)} = \E{(a')^2}$.
\end{proof}

The covariance for the $\mu$ variable in \eqref{LASALT-adv-form} is unlikely to form a closed equation in general due to presence of the coupling term $\E{\frac{\delta H}{\delta a}} \diamond a$, however in the special example of the 2D Boussinesq equation, this is indeed possible as we will illustrate in the next example.

\begin{example}[Covariance of 2D LA-SALT Boussinesq]\label{2DEB-example}
Let us consider the special case of 2D LA-SALT Boussinesq system \eqref{LA:SALT:Ito:Bou}. Letting $u' := u - \E{u}$ and $\theta' := \theta - \E{\theta}$, we have the following equations for the fluctuations
\begin{equation}\label{theta-prime-eq}
\left\{
\begin{array}{rl}
\diff u' + \mathcal L_{\E{u}} u' \diff t + \displaystyle\sum_{k}\mathcal L_{\xi_k} u \circ \diff W_t^k &= -\displaystyle\sum_{k}\frac12 \mathcal L_{\xi_k}^2 \E{u} \diff t - g y {\rmd} \theta' \diff t, \\
    \diff \theta' + \mathcal L_{\E{u}} \theta' \diff t + \displaystyle\sum_{k}\mathcal L_{\xi_k} \theta \circ \diff W_t^k &= -\displaystyle\sum_{k}\frac12 \mathcal L_{\xi_k}^2 \E{\theta} \diff t. 
\end{array}
\right.
\end{equation}

Then by similar arguments as in the proof of Proposition \ref{covariance-eq}, we can show that the covariance fields satisfy the following PDEs
\begin{equation}\label{theta-var-eq}
\left\{
\begin{array}{rl}
    \partial_{t} U^{(2)}+ \mathcal L_{\E{u}}U^{(2)} &= \displaystyle\sum_k \left(\frac12 \mathcal L_{\xi_k}^2 U^{(2)} + \left(\mathcal L_{\xi_k} \E{u}\right)^2\right) - gy \E{u' \otimes {\rmd} \theta' + {\rmd} \theta' \otimes u'},  \\
    \partial_{t}\Theta^{(2)}+ \mathcal L_{\E{u}}\Theta^{(2)} &= \displaystyle\sum_k \left(\frac12 \mathcal L_{\xi_k}^2\Theta^{(2)} + \left(\mathcal L_{\xi_k} \E{\theta}\right)^2\right),
\end{array}
\right.
\end{equation}
where $U^{(2)} := \E{(u')^2}$ and $\Theta^{(2)} := \E{(\theta')^2}$. Clearly, this system is not closed due to the presence of the term $\E{u' \otimes {\rmd} \theta' + {\rmd} \theta' \otimes u'}$ in the $U^{(2)}$ equation. However, applying the exterior derivative $\sf d$ on both sides of the $\theta$-equation and its corresponding fluctuation \eqref{theta-prime-eq}, and noting that the exterior derivative and the Lie derivative commute as a consequence of Cartan's formula, we obtain the following system for $\sf{d}\theta$ and $\sf{d}\theta'$:
\begin{equation}
    \left\{
    \begin{array}{rl}
&\diff \,({\rmd} \theta) + \mathcal L_{\E{u}} {\rmd} \theta \diff t + \displaystyle\sum_k\mathcal L_{\xi_k} {\rmd} \theta \circ \diff W_t^k = 0, \\
&\partial_{t}\E{{\rmd} \theta} + \mathcal L_{\E{u}} {\rmd} \E{ \theta} = \frac12 \displaystyle\sum_k \mathcal L_{\xi_k}^2 \E{{\rmd}\theta},\\
&\diff\, ({\rmd} \theta') + \mathcal L_{\E{u}} {\rmd} \theta' \diff t + \displaystyle\sum_k\mathcal L_{\xi_k} {\rmd} \theta \circ \diff W_t^k = -\frac12\displaystyle\sum_k \mathcal L_{\xi_k}^2 \E{{\rmd}\theta} \diff t.
\end{array}
\right.
\end{equation}

By Proposition \ref{covariance-eq}, the covariance for $\sf{d}\theta$ evolves as
\begin{align}
    \partial_{t}({\sf{d}}\Theta^{(2)})+ \mathcal L_{\E{u}}{\sf{d}}\Theta^{(2)} = \displaystyle\sum_k \left(\frac12 \mathcal L_{\xi_k}^2 {\sf{d}}\Theta^{(2)} + \left(\mathcal L_{\xi_k} \E{{\sf{d}}\theta}\right)^2\right),\label{d-theta-var-eq}
\end{align}
where ${\sf{d}}\Theta^{(2)} := \E{(\sf{d}\theta')^2}$. We show that obtaining an equation for $\E{u' \otimes {\rmd} \theta' + {\rmd} \theta' \otimes u'}$ closes the system \eqref{theta-var-eq}.

By the stochastic product rule, we have
\begin{align}
    &\diff \,(u' \otimes {\rmd}\theta') = u' \otimes (\circ \diff \,({\rmd} \theta')) + (\circ \diff u') \otimes {\rmd} \theta'\nonumber \\
    &=-\mathcal L_{\E{u}}(u' \otimes {\rmd}\theta') \diff t - \left(u'\otimes \mathcal L_{\xi_k}({\rmd}\theta) + \mathcal L_{\xi_k} u \otimes {\rmd}\theta'\right)\circ \diff W_t^k \nonumber\\
    &\qquad - \frac12 \left(u' \otimes \mathcal L_{\xi_k}^2 \E{{\rmd}\theta} + \mathcal L_{\xi_k}^2 \E{u} \otimes {\rmd}\theta'\right) - gy ({\sf{d}}\theta')^2\diff t \nonumber\\
    &=-\mathcal L_{\E{u}}(u' \otimes {\rmd}\theta') \diff t - \left(u'\otimes \mathcal L_{\xi_k}({\rmd}\theta) + \mathcal L_{\xi_k} u \otimes {\rmd}\theta'\right) \diff W_t^k \nonumber\\
    &\qquad - \frac12 \left(u' \otimes \mathcal L_{\xi_k}^2 \E{{\rmd}\theta} + \mathcal L_{\xi_k}^2 \E{u} \otimes {\rmd}\theta'\right) - gy ({\sf{d}}\theta')^2\diff t \nonumber \\
    &\qquad \qquad + \frac12 \left(u' \otimes \mathcal L_{\xi_k}^2 ({\rmd} \theta) + 2(\mathcal L_{\xi_k} u) \otimes \left(\mathcal L_{\xi_k} ({\rmd} \theta)\right) + \mathcal L_{\xi_k}^2 u \otimes {\rmd} \theta'\right) \diff t, \label{omega-theta-prime}
\end{align}
By the Leibniz property of Lie derivatives, we have
\begin{align*}
    &\mathcal L_{\xi_k} \mathcal L_{\xi_k} \left(u' \otimes {\sf{d}}\theta'\right) = \mathcal L_{\xi_k} \left(u' \otimes \mathcal L_{\xi_k}({\sf{d}}\theta') +\mathcal L_{\xi_k}(\omega') \otimes {\sf{d}}\theta'\right) \\
    &= u' \otimes \mathcal L_{\xi_k}^2 ({\rmd} \theta') + 2(\mathcal L_{\xi_k} u') \otimes \left(\mathcal L_{\xi_k} ({\rmd} \theta')\right) + \mathcal L_{\xi_k}^2 \omega' \otimes {\rmd} \theta'.
\end{align*}

Now using the above expression and taking expectations on both sides of \eqref{omega-theta-prime} give us the PDE:
\begin{align*}
    \begin{split}
   \partial_{t} \E{u' \otimes \sf d\theta'} + \mathcal L_{\E{u}}\E{u' \otimes {\rmd}\theta'}
    &= \frac12 \mathcal L_{\xi_k}^2 \E{u' \otimes  {\sf{d}}\theta'} \\
    &\qquad + \left(\mathcal L_{\xi_k} \E{u}\right) \otimes \left(\mathcal L_{\xi_k} \E{{\rmd} \theta}\right) - gy\E{({\sf{d}}\theta')^2}.
    \end{split}
\end{align*}
Similarly, we get an equation for $\E{{\sf{d}}\theta' \otimes u'}$ and combining them gives us an equation for $\E{u' \otimes {\rmd} \theta' + {\rmd} \theta' \otimes u'}$, which reads
\begin{align}\label{cross-variance-eq}
    \begin{split}
    &\partial_{t} \E{u' \otimes \sf d\theta' + {\rmd} \theta' \otimes u'} + \mathcal L_{\E{u}}\E{u' \otimes {\rmd}\theta' + {\rmd} \theta' \otimes u'}
    = \frac12 \mathcal L_{\xi_k}^2 \E{u' \otimes  {\sf{d}}\theta' + {\rmd} \theta' \otimes u'} \\
    &\qquad + \left(\mathcal L_{\xi_k} \E{u}\right) \otimes \left(\mathcal L_{\xi_k} \E{{\rmd} \theta}\right) + \left(\mathcal L_{\xi_k} \E{{\rmd} \theta}\right) \otimes \left(\mathcal L_{\xi_k} \E{u}\right) - 2gy\E{({\sf{d}}\theta')^2}.
    \end{split}
\end{align}

Since the last term  $\E{({\sf{d}}\theta')^2}$ is just the covariance tensor ${\sf{d}}\Theta^{(2)}$, which we can solve for, we conclude that equations \eqref{theta-var-eq},\eqref{d-theta-var-eq} and \eqref{cross-variance-eq} form together a closed system for the covariance of the fields $(u, \theta)$ in the 2D LA-SALT Boussinesq system. For more details, see \cite{DieAytJNLS}.
\end{example}
\color{black}

\begin{remark}
By having a closed system of PDEs for the evolution of the covariance, we may deduce for instance its growth behaviour through the application of standard PDE methods. For instance if we consider the equation for the evolution of $\Theta^{(2)}$ \eqref{LASALT-adv-form}, where we assume incompressibility $\div{(\E{\u})} = 0,$ and choose $\xi^{(1)} = \hat{\mb{x}},$ $\xi^{(2)}=\hat{\mb{y}}$, then we can check directly that its $L^2$-norm satisfies
\begin{align*}
    \|\Theta^{(2)}_t\|_{L^2}^2 + \frac12 \int^t_0 \|\nabla \Theta^{(2)}_s\|_{L^2}^2 \diff s = \int^t_0 \|\nabla \E{\theta_s}\|_{L^2}^2 \diff s,
\end{align*}
where we have taken into account that $\Theta^{(2)}(0) = 0$. Since by the parabolicity of the expectation equation \eqref{LASALT-EPX}, we have the estimate
\begin{align*}
    \int^T_0 \|\nabla \E{\theta_t}\|_{L^2}^2 \diff t \leq C \|\theta_0\|_{L^2}^2 \, e^T,
\end{align*}
where $C > 0$ is some constant, we can deduce that the space-averaged covariance $\|\Theta^{(2)}\|_{L^2}^2$ evolves at most exponentially fast.
\end{remark}

\begin{remark}[Extension to $p$-th central moments]\label{p-th-moment-remark}
One may also ask if closed equations for the higher moments of the advected tensor field $a$ can be derived, thus providing a generalisation of Proposition \ref{covariance-eq}, which may help us to understand for instance the non-Gaussianity of the system. In the case where $a$ is a scalar field, the $p$-th central moment $A^{(p)} := \E{(a-\E{a})^p}$ indeed satisfies a closed, iterated system:
\begin{align} \label{pth-moment}
    \begin{split}
   \partial_{t} A^{(p)} + \mathcal L_{\E{\frac{\partial H}{\partial \mu}}}A^{(p)} &= \displaystyle\sum_k \left(\frac12 \mathcal L_{\xi_k}^2 A^{(p)} + p\left(\mathcal L_{\xi_k} A^{(p-1)}\right)\left(\mathcal L_{\xi_k} \E{a}\right) + \frac{p(p-1)}{2}A^{(p-2)} \left(\mathcal L_{\xi_k} \E{a}\right)^2\right),
    \end{split}
\end{align}
which recovers \eqref{covar-eq} in the case $p=2$ (see \cite{DieAytJNLS} for the proof). However, when $a$ is a general tensor field, we have not been able to obtain a closed system for its $p$-th central moment due to the non-commutativity of the tensor product (which is commutative only in the scalar field case).
\end{remark}


\part{SALT / LA-SALT Ocean--Atmosphere Models} 

As we have been discussing in the previous parts, our modelling strategy formulates two complementary stochastic idealised climate models called SALT and LA-SALT. The SALT climate model couples a stochastic PDE for the atmospheric circulation to a deterministic PDE for the circulation of the ocean. The stochasticity is incorporated by assuming that Lagrangian particles in the atmosphere follow a stochastic path along a Stratonovich process which guides the motion of the material loop in Kelvin's circulation theorem. The stochastic Lagrangian path of the material loop is a semimartingale stochastic process in the SALT approach and is a McKean-Vlasov  process in the LA SALT approach \cite{McKean}. The paper \cite{crisan2023implementation} applies the SALT and LA-SALT approaches to an established class of idealised climate models and proves a local well-posedness theorem for both sets of of augmented nonlinear stochastic partial differential equations (SPDE). The two stochastic models also both possess a Kelvin circulation theorem. 

Recent numerical simulations have been performed for both the SALT and LA-SALT stochastic models \cite{sharma2026structure}. A key element of interpreting such a numerical experiment is the sensible specification of the stochastic process. For the purpose of the discussion here, one may assume that the stochastic process is of Stratonovich \emph{type}. In contrast, numerical experiments require one to choose a specific Stratonovich process by incorporating externally obtained information either from observations or from high-resolution simulations \cite{sharma2026structure}. 

\color{black}

\bigskip

 In the remainder of the introduction we detail our modelling approach for the deterministic model in Section \ref{subsect_DetModel} and for the stochastic 
 model in Section \ref{subsect_StochModel}.   


Contents of \cite{crisan2023implementation}:
 \begin{enumerate}
  \item
  Adaptation of the deterministic Gill-Matsuno \cite{gill1980some,matsuno1966quasi} class of ocean-atmosphere climate model (OACM) to the geometric variational framework. This adaptation produces a Kelvin circulation theorem which retains the transformation properties which are the basis for the remainder of the paper. These transformation properties are inherited from the variational framework. They enable the formulation of the deterministic and stochastic models in terms of the same type of Kelvin circulation theorem. 
  \item
  Derivations of the SALT and LA-SALT stochastic versions of the OACM, whose flows all possess the same geometric transformation properties. This shared geometric structure enables the analysis to develop sequentially from deterministic to stochastic models.
  \item
  Mathematical analysis for the deterministic, SALT and LA-SALT versions of the OACM. Specifically we prove existence and uniqueness of local solutions for the deterministic  OACM, the existence of a martingale solution 
  for the SALT version of the OACM and existence and uniqueness of local solution for the LA-SALT version. 
  
  \item
  Outlook -- open problems, including further study of the dynamical equations derived here for the dynamics of OACM statistics. 
  \end{enumerate}
 
\color{black} 
  \bigskip

 \subsection{The Deterministic Climate Model}\label{subsect_DetModel}
The model of the atmospheric component of our idealized climate model consists of  the compressible 2D Navier-Stokes equation coupled to an advection-diffusion equation for  temperature $\theta^a$. The atmospheric velocity field $\U^a$ transports the temperature that provides the gradient term of the velocity equation. The ocean component of the coupled system consists of a 2D incompressible Navier-Stokes equations and an equation for the oceanic temperature variable $\theta^o$ that is passively advected by the ocean velocity field $\U^o$. Here the pressure acts here as a Lagrange multiplier to impose incompressibility. More specifically, the deterministic coupled PDE's for the ocean and the atmosphere are given by
the derivations of the deterministic and stochastic models using Hamilton's variational principle. 

\begin{align}
\text{Atmosphere: }\quad&\frac{\partial \U^a}{\partial t}
+(\U^a\cdot\nabla)\U^a
+\frac{1}{Ro^a}\U^{a\bot}
+\frac{1}{Ro^a}\nabla \theta^a
=
\frac{1}{Re^a}\triangle\U^a,\label{COUPLED_SWE_VELOC_A}
\\
& \frac{\partial \theta^a}{\partial t}
+ (\U^a\cdot\nabla)\theta^a
= 
 \gamma(\theta^a - \theta^o)
+\frac{1}{Pe^a}\triangle \theta^a.\label{COUPLED_SWE_T_A}\\
\text{Ocean: }\quad&\frac{\partial \U^o}{\partial t}
+(\U^o\cdot\nabla)\U^o
+\frac{1}{Ro^o}\U^{o\bot}
+\frac{1}{Ro^o}\nabla (p^o+q^a)
=
\sigma(\U^o-\bar{\U}^a_{sol})
+\frac{1}{Re^o}\triangle\U^o,\label{COUPLED_SWE_VELOC_O}\\
&\frac{\partial \theta^o}{\partial t}
+(\U^o\cdot\nabla)\theta^o
=\frac{1}{Pe^o}\triangle\theta^o,\label{COUPLED_SWE_T_O}\\
&\Div(\U^o)=0,\label{COUPLED_SWE_INCOMPRESS_O}\\
\text{with }& \text{ initial conditions }\nonumber\\
&\U^a(t_0)=\U^a_0,\ \theta^a(t_0)=\theta^a_0, \ \U^o(t_0)=\U^o_0,\ \theta^o(t_0)=\theta^o_0\nonumber.
\end{align}
In these equations, the ocean velocity $\U^o$ is coupled to the atmospheric velocity $\U^a$ and the atmospheric temperature $\theta^a$ is coupled to the oceanic temperature $\theta^o$. The {\it coupling constants} $\gamma,\sigma<0$ regulate the strength of the interaction between the two components. 

The velocity coupling between the compressible atmosphere and the incompressible ocean model deserves some consideration. To preserve the incompressibility of the oceanic velocity field during the coupling we apply the Leray-Helmholtz Theorem to decompose the atmospheric velocity $\U^a$into a solenoidal component $\U^a_{sol}$ and a gradient term $q^a$ such that $\U^a=\U^a_{sol}+\nabla q^a$. The gradient part is combined with the oceanic pressure. In a second step we remove the space average via 
$\bar{\U}:=\U-\frac{1}{|\Omega|}\int_\Omega \U dx$ such that the oceanic velocity fields remains in the space of periodic flows with vanishing average. This property allows to determine the oceanic pressure.
Physically, this step removes the rapid mean velocity of the atmosphere relative to the slower ocean velocity in the frame of motion of the Earth's rotation. This means the ocean momentum responds to the shear force, which is proportional to the difference between the local ocean velocity at a given time and the local deviation of the atmospheric velocity away from its mean velocity.   


The model above belongs to the class of  {\it intermediate coupled models}. These models are much simpler than the coupled general circulation models of the atmosphere-ocean system that are used for climate research. Intermediate coupled models allow to study fundamental aspects of the atmosphere-ocean interaction. The most prominent example is {\it El Ni\~{n}o-Southern Oscillation (ENSO)} in the tropical Pacific. As originally hypothesized by Bjerknes in 1969 \cite{bjerknes1969atmospheric} this climate phenomenon crucially depends on the coupled interaction of both ocean and atmosphere. 
According to Bjerknes stronger trade winds increase the upwelling in the east Pacific, thereby creating a temperature gradient in the sea-surface temperature
that amplifies the trade winds. This interaction between the trade winds and sea surface temperature in the tropical Pacific generates a quasi-periodic oscillation between the three ENSO-phases: the neutral phase, El Ni\~{n}o and La Ni\~{n}a. Intermediate coupled models have been used successfully to shed light on the fundamental principle of ENSO, thereby confirming Bjerknes hypothesis. 

The story of intermediate coupled models began with (uncoupled) models to study equatorial waves and their response to external forcing.
Matsuno \cite{matsuno1966quasi} investigated an (uncoupled) divergent barotropic model (single layer of incompressible fluid of homogeneous density, 
with a free surface, on the beta plane), given by
\begin{equation}\begin{split}\label{MATSUNO}
&\frac{\partial \U}{\partial t}
+\frac{1}{Ro^a}\U^{a\bot}
+\frac{1}{Ro^a}\nabla \theta
=0,\\
&\frac{\partial \theta}{\partial t}
+H\Div(\U)= Q.
\end{split}\end{equation}
Matsuno \cite{matsuno1966quasi} refers to $\theta$ as {\it surface elevation} above a mean depth $H$, and in this context $Q$ appears as a source/sink of mass. Gill \cite{gill1980some} studied the steady response to heating anomalies of a tropical atmosphere, as described by the Matsuno model. Systems of equations in the following class are often called  {\it Gill models}
\begin{equation}\begin{split}\label{GILL}
&\frac{\partial \U}{\partial t}
+\frac{1}{Ro^a}\U^{a\bot}
+\frac{1}{Ro^a}\nabla \theta + a\U
=0,\\
&\frac{\partial \theta}{\partial t}
+H\Div(\U) +b\theta= Q,
\end{split}\end{equation}
where $a,b$ are Raleigh friction and Newtonian cooling and where $Q$ is a heating term. In Gill's work $\theta$ is proportional to the {\it surface pressure}.
Since surface hydrostatic pressure is proportional to surface height, this identification is consistent with Matsuno's interpretation.

Atmospheric models of Gill-Matsuno type are often used to understand the atmospheric response during an El Ni\~{n}o to observed 
sea surface temperature anomalies. Zebiak \cite{zebiak1982simple} parametrized the heat flux from the ocean to the atmosphere in terms of the ocean sea surface temperature . This relation can be motivated by a linearization of Clausius-Clapeyron relation. See, e.g., \cite{zebiak1986atmospheric}. 
The famous {\it Cane-Zebiak model} \cite{cane1985theory} applied a steady state atmosphere following Gill (\ref{GILL}) and a two-layer ocean model, with two equations for layer thickness and two equations for temperature. This model produced the first ENO forecast \cite{zebiak1987model}.
For further discussion of the historical development of this class of models, see \cite{crisan2023implementation}.

In \cite{crisan2023implementation}, we have modified the equations in the models above by including nonlinear terms 
in the velocity and temperature equations, derived from a deterministic Hamilton's principle. 
We also introduce SALT and LA-SALT as new stochastic transport approaches for modelling ENSO's  well-known irregularity.

 \subsection{The Stochastic Atmospheric Climate Model}\label{subsect_StochModel}



The fundamental principle in modelling stochastic fluid advection is the Kelvin circulation theorem. As we shall see, each component of the deterministic atmosphere-ocean model in equations \eqref{COUPLED_SWE_VELOC_A} - \eqref{COUPLED_SWE_INCOMPRESS_O} above possesses its own Kelvin theorem, and the two components are coupled together by their relative velocity. The model \eqref{COUPLED_SWE_VELOC_A} - \eqref{COUPLED_SWE_INCOMPRESS_O} describes their interaction as the exchange of circulation between the atmosphere and ocean. Later we treat the atmospheric component of the model as being stochastic either in the sense of weather (SALT) or in the sense of climate (LA-SALT). In either case, the stochastic modification of the atmospheric dynamics will retain a Kelvin circulation theorems.

\begin{theorem}[Kelvin theorem for the deterministic atmospheric model in \eqref{COUPLED_SWE_VELOC_A} - \eqref{COUPLED_SWE_INCOMPRESS_O}]$
\label{KelvinThm_atmosphere_determnistic}\,$\\
The deterministic model for atmospheric dynamics satisfies the following Kelvin theorem for circulation around a loop $c(u^a)$ moving with the flow of the atmospheric velocity $\bs{u}^a$. Namely,
\begin{align*}
\frac{d}{dt}\oint_{c(u^a)} (\bs{u}^a + \frac{1}{Ro^a}\bs{R}(\bs{x}))\cdot d\bs{x}
= \frac{1}{Re^a}\oint_{c(u^a)} \triangle\U^a \cdot d\bs{x}
\,,
\end{align*}
where ${\rm curl}\bs{R} = 2\bs{\hat{z}}\Omega(\bs{x})$ is the Coriolis parameter in nondimensional units.
\end{theorem}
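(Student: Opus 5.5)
The plan is to cast the atmospheric momentum equation \eqref{COUPLED_SWE_VELOC_A} in the Lie-derivative (Kelvin--Noether) form used for the EB equations in \eqref{EulBousseqns-1form}. Then I will integrate around a loop $c(u^a)$ moving with $\bs{u}^a$, as in Theorem \ref{Stokes-thm-strat} and the RSW exercise. Set $\bs{v}:=\bs{u}^a+\frac{1}{Ro^a}\bs{R}(\bs{x})$ and $v^\flat:=\bs{v}\cdot d\bs{x}$. The kinematic starting point is the transport formula
\[
\frac{d}{dt}\oint_{c(u^a)} v^\flat=\oint_{c(u^a)}\big(\partial_t+\mathcal{L}_{u^a}\big)v^\flat ,
\]
obtained by pulling back to the fixed initial loop and applying the Lie chain rule \eqref{LieChainRule}. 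No incompressibility is needed, which matters because the atmosphere is compressible. Combined with the Cartan form of the Lie derivative of a 1-form from the examples list,
\[
\big(\partial_t+\mathcal{L}_{u^a}\big)v^\flat=\big(\partial_t\bs{v}-\bs{u}^a\times{\rm curl}\,\bs{v}+\nabla(\bs{u}^a\cdot\bs{v})\big)\cdot d\bs{x},
\]
this reduces the theorem to an identity for $\partial_t\bs{v}$.

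Next I rewrite the momentum equation. Using $(\bs{u}\cdot\nabla)\bs{u}=\nabla\tfrac12|\bs{u}|^2-\bs{u}\times{\rm curl}\,\bs{u}$ and $\partial_t\bs{R}=0$, I need the Coriolis term to take the form
\[
\frac{1}{Ro^a}\bs{u}^{a\perp}=-\frac{1}{Ro^a}\,\bs{u}^a\times{\rm curl}\,\bs{R}.
\]
This is the step I expect to be the main obstacle, since it is purely a matter of conventions. The perpendicular $\bs{u}^{\perp}$ must be taken as $\bs{\hat z}\times\bs{u}$, and the nondimensionalisation must absorb the rotation rate so that ${\rm curl}\,\bs{R}=2\bs{\hat z}\Omega$ equals $\bs{\hat z}$ in these units (with $f$ constant, or $\Omega(\bs{x})$ scaled into $Ro^a$). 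I will check the sign with $\bs{u}^a\times\bs{\hat z}=-\bs{\hat z}\times\bs{u}^a$ and fix the normalisation of $\bs{R}$ to make this hold. Spatially variable $\Omega(\bs{x})$ is allowed by letting $\bs{R}$ be any vector potential with the prescribed curl. With this in hand, \eqref{COUPLED_SWE_VELOC_A} becomes
\[
\partial_t\bs{v}-\bs{u}^a\times{\rm curl}\,\bs{v}+\nabla\Big(\tfrac12|\bs{u}^a|^2+\tfrac{1}{Ro^a}\theta^a\Big)=\frac{1}{Re^a}\triangle\bs{u}^a .
\]

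Adding $\nabla(\bs{u}^a\cdot\bs{v})$ to both sides gives
\[
\big(\partial_t+\mathcal{L}_{u^a}\big)v^\flat=\frac{1}{Re^a}\triangle\bs{u}^a\cdot d\bs{x}+d\Big(\bs{u}^a\cdot\bs{v}-\tfrac12|\bs{u}^a|^2-\tfrac{1}{Ro^a}\theta^a\Big).
\]
Integrating around the closed material loop, the exact differential contributes nothing. This yields the stated Kelvin theorem.

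It is worth remarking in the write-up that the temperature equation \eqref{COUPLED_SWE_T_A}, including the coupling $\gamma(\theta^a-\theta^o)$, plays no role. The buoyancy-type term $\frac{1}{Ro^a}\nabla\theta^a$ is a pure gradient here, unlike the $b\,dz$ term in the EB case, so it generates no circulation. The only source of circulation is therefore viscous.
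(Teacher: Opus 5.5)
Your proposal is correct and follows essentially the same route as the paper: pull back to the initial loop to get $\frac{d}{dt}\oint v^\flat=\oint(\partial_t+\mathcal{L}_{u^a})v^\flat$, expand the Lie derivative of the circulation 1-form, substitute the momentum equation, and drop the exact differentials around the closed loop. The paper uses the dynamic form of the Lie derivative for $\bs{u}^a$ and the Cartan form only for $\bs{R}$, whereas you use the Cartan form throughout. Your explicit identification $\frac{1}{Ro^a}\bs{u}^{a\perp}=-\frac{1}{Ro^a}\bs{u}^a\times\mathrm{curl}\,\bs{R}$, with $\bs{u}^\perp=\bs{\hat z}\times\bs{u}$ and $\mathrm{curl}\,\bs{R}=\bs{\hat z}$ in these units, is exactly what the paper's ``By the model'' step uses without comment. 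One caveat: your remark about spatially variable $\Omega(\bs{x})$ is consistent only if the model's Coriolis term is read as $2\Omega(\bs{x})\,\bs{u}^{a\perp}/Ro^a$.
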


\begin{proof}
By direct calculation, one shows that the deterministic atmospheric dynamics in the model above satisfies the relation in the Kelvin circulation theorem,
\begin{align*}
\frac{d}{dt}\oint_{c(u^a)} (\bs{u}^a + \frac{1}{Ro^a}\bs{R}(\bs{x}))\cdot d\bs{x}
&=
\oint_{c(u^a)} (\partial_t + \mathcal{L}_{u^a}) \big((\bs{u}^a + \frac{1}{Ro^a}\bs{R}(\bs{x}))\cdot d\bs{x}\big)
\\&= \oint_{c(u^a)} \Big( 
\partial_t \bs{u}^a + (\bs{u}^a\cdot\nabla)\bs{u}^a + u_j^a\nabla {u^a}^j
\\& \qquad- \bs{u}^a\times {\rm curl}\frac{1}{Ro^a}\bs{R}(\bs{x}) 
+ \nabla (\bs{u}^a\cdot\frac{1}{Ro^a}\bs{R})
\Big)\cdot d\bs{x} 
\\\hbox{By the model}\quad& =  \oint_{c(u^a)}\Big(- \frac{1}{Ro^a}\nabla\theta^a
+ \frac12\nabla |\bs{u}^a|^2 + \nabla (\bs{u}^a\cdot\frac{1}{Ro^a}\bs{R}) +  \triangle\U^a 
\Big)\cdot d\bs{x} 
\\& = \frac{1}{Re^a}\oint_{c(u^a)} \triangle\U^a \cdot d\bs{x}
\,.
\end{align*}
\end{proof}
\begin{remark}
In the proof above, the symbol $\mathcal{L}_u$ denotes the Lie derivative with respect to the vector field $u^a=\bs{u}^a \cdot\nabla$ with components $\bs{u}^a(\bs{x},t)$ and $c(u^a)$ denotes a material loop moving with the atmospheric Lagrangian transport velocity $\bs{u}^a(\bs{x},t)$.
Consequently, in the absence of viscosity, atmospheric circulation is conserved by the deterministic model because the viscous term is absent then and the loop integrals of gradients such as $u_j\nabla u^j=\frac12\nabla |\bs{u}|^2$ vanish on the right-hand side of the equation in the proof. \\
\end{remark}

Likewise, the dynamics of the ocean component of the model above satisfies the following Kelvin circulation theorem. 
\begin{theorem}[Kelvin theorem for the deterministic oceanic model in \eqref{COUPLED_SWE_VELOC_A} - \eqref{COUPLED_SWE_INCOMPRESS_O}]$\,$\\
The circulation dynamics around a loop $c(u^o)$ moving with the flow of the oceanic velocity $\bs{u}^o$ is given by
\begin{align*}
\frac{d}{dt}\oint_{c(u^o)} (\bs{u}^o + \frac{1}{Ro^o}\bs{R}(\bs{x}))\cdot d\bs{x}
= \oint_{c(u^o)} \Big( \sigma(\U^o-\bar{\U}^a)
+ \frac{1}{Re^o} \triangle\U^o 
\Big)\cdot d\bs{x} 
\,.
\end{align*}
\end{theorem}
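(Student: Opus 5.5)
The plan is to mirror the proof of Theorem \ref{KelvinThm_atmosphere_determnistic} line by line, with the ocean equation \eqref{COUPLED_SWE_VELOC_O} in place of the atmospheric one. The first step moves the time derivative inside the loop integral. Since the loop $c(u^o)$ is a material loop carried by the ocean velocity, the Lie chain rule \eqref{LieChainRule} allows us to pull back to a fixed loop, differentiate, and push forward again:
\[
\frac{d}{dt}\oint_{c(u^o)} \Big(\bs{u}^o + \tfrac{1}{Ro^o}\bs{R}\Big)\cdot d\bs{x}
= \oint_{c(u^o)} (\partial_t + \mathcal{L}_{u^o})\Big(\big(\bs{u}^o + \tfrac{1}{Ro^o}\bs{R}\big)\cdot d\bs{x}\Big),
\]
where $u^o=\bs{u}^o\cdot\nabla$.

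The second step expands the Lie derivative of the 1-form using the table in Section \ref{examp: LieDeriv}. For $\bs{u}^o$ itself we use the dynamic form $(\bs{u}^o\cdot\nabla)\bs{u}^o + u^o_j\nabla u^{o\,j}$. For the $\bs{R}$ term we use the Cartan form $-\bs{u}^o\times{\rm curl}\,\bs{R} + \nabla(\bs{u}^o\cdot\bs{R})$, noting that $\partial_t\bs{R}=0$. With ${\rm curl}\,\bs{R}=2\Omega\hat{\bs{z}}$ normalised so that the Coriolis term reproduces $\frac{1}{Ro^o}\bs{u}^{o\perp}$, the term $-\frac{1}{Ro^o}\bs{u}^o\times{\rm curl}\,\bs{R}$ becomes exactly the rotation term appearing in \eqref{COUPLED_SWE_VELOC_O}. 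This is the same sign and normalisation convention used in the atmospheric proof.

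The third step substitutes \eqref{COUPLED_SWE_VELOC_O} for $\partial_t\bs{u}^o + (\bs{u}^o\cdot\nabla)\bs{u}^o + \frac{1}{Ro^o}\bs{u}^{o\perp}$. The integrand then becomes
\[
-\tfrac{1}{Ro^o}\nabla(p^o+q^a) + \tfrac12\nabla|\bs{u}^o|^2 + \tfrac{1}{Ro^o}\nabla(\bs{u}^o\cdot\bs{R}) + \sigma(\bs{u}^o-\bar{\bs{u}}^a_{sol}) + \tfrac{1}{Re^o}\triangle\bs{u}^o.
\]
Every gradient term integrates to zero around a closed loop, because $\oint d f=0$. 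This includes the pressure and the gradient part $q^a$ of the Leray--Helmholtz decomposition of $\bs{u}^a$. Only the drag term and the viscous term survive, which gives the stated identity. The incompressibility condition \eqref{COUPLED_SWE_INCOMPRESS_O} plays no role here beyond fixing $p^o$.

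The main obstacle is bookkeeping rather than any real difficulty. The coupling term is written in the model as $\sigma(\bs{u}^o-\bar{\bs{u}}^a_{sol})$ but in the statement as $\bar{\bs{u}}^a$. I would reconcile the two by noting that $\bar{\bs{u}}^a = \bar{\bs{u}}^a_{sol} + \nabla q^a$ up to a constant. The additional gradient piece, along with the constant removed by the mean subtraction, integrates to zero on a closed loop, so the two loop integrals agree. I would also check the Coriolis sign convention carefully, exactly as in the atmospheric case.
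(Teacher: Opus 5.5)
Your proposal is correct and follows the paper's own route. The paper's proof says only that it proceeds analogously to Theorem~\ref{KelvinThm_atmosphere_determnistic}, which is exactly the computation you spell out with \eqref{COUPLED_SWE_VELOC_O} in place of \eqref{COUPLED_SWE_VELOC_A}. Your reconciliation of $\bar{\U}^a$ in the statement with $\bar{\U}^a_{sol}$ in the model fills in a step the paper leaves implicit: the difference is $\nabla q^a$ minus its spatial mean, that mean vanishes for periodic $q^a$, so the difference is exact and drops out of the loop integral.
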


\begin{proof}
The proof follows analogously to the proof of Theorem \ref{KelvinThm_atmosphere_determnistic}. 
\\
\end{proof}

\subsection*{Stochastic Advection by Lie Transport (SALT) atmospheric model.} 

Let $(\Xi ,\mathcal{F},(\mathcal{F}_{t})_{t},\mathbb{P})$ be a filtered probability space on which we have defined a sequence of independent Brownian motions  $(W^{i})_{i}$. Let $(\xi_{i})_{i}$ be a given sequence of sufficiently smooth vector 
fields that satisfies a certain condition discussed in \cite{crisan2023implementation}. In this work we assume the vector fields $(\xi_{i})_{i}$
to be given. For numerical simulations one defines these vector fields by extracting information from observational data. For examples we refer to \cite{CCHOS18a,CCHOS18b}.
The derivation of the SALT atmospheric model introduces the stochastic Lagrangian path 
\begin{align}
{\rmd x_t} := \U^a(\bs{x},t)dt-\sum_i\xi_i^a(\bs{x})\circ dW_i(t)
\,.\label{Atmos-SALT-Lag-Path}
\end{align}

Following \cite{Holm2015}, Section \ref{Section-SALT} discusses the introduction of the stochastic Lagrangian paths in \eqref{Atmos-SALT-Lag-Path} into Hamilton's variational principle for the atmospheric model equations. This step  leads to the SALT version of the idealized deterministic climate model comprising equations \eqref{COUPLED_SWE_VELOC_A} - \eqref{COUPLED_SWE_INCOMPRESS_O}. Namely, the SALT model is specified by the system of stochastic differential equations below:\\[3mm]

\noindent
Atmosphere:\footnote{%
As in the deterministic case we will write the Coriolis parameter as $\mathrm{curl}\,\mathbf{R}(%
\mathbf{x})=2\Omega \mathbf{(x}).$}
\begin{align}
& d\mathbf{u}^{a}+(d\mathbf{x}_{t}^{a}\cdot \nabla )\mathbf{u}^{a}+\frac{1}{%
Ro^{a}}d\mathbf{x}_{t}^{a\bot }+{\sum_{i}\Big(u_{j}^{a}\nabla \xi
_{i}^{j}+\frac{1}{Ro^{a}}\nabla \Big(R_{j}\mathbf{(x})\xi _{i}^{j}\Big)\Big)%
\circ dW_{t}^{i}}  \notag \\
& \hspace{3cm}+\frac{1}{Ro^{a}}\nabla \theta ^{a}=\frac{1}{Re^{a}}\triangle 
\mathbf{u}^{a},  \label{COUPLED_SWE_VELOC_A_STOCH} \\
& d\theta ^{a}+d\mathbf{x}_{t}^{a}\cdot \nabla \theta ^{a}=-\gamma (\theta
^{o}-\theta ^{a})+\frac{1}{Pe^{a}}\triangle \theta ^{a},
\label{COUPLED_SWE_T_A_STOCH} \\
& d\mathbf{x}_{t}^{a}=\mathbf{u}^{a}dt+{\sum_{i}\xi _{i}\circ
dW_{l}^{i}}  \label{stochasticpath}
\end{align}

\noindent Ocean: 
\begin{align}
\frac{\partial \mathbf{u}^{o}}{\partial t}+(\mathbf{u}^{o}\cdot \nabla )%
\mathbf{u}^{o}&+\frac{1}{Ro^{o}}\mathbf{u}^{o\bot }+\frac{1}{Ro^{o}}\nabla
p^{o}  \notag \\
& =\sigma (\mathbf{u}^{o}-\mathbb{E}\bar{\mathbf{u}^{a}})+\frac{1}{Re^{o}}%
\triangle \mathbf{u}^{o},  \label{COUPLED_SWE_VELOC_O_STOCH} \\
\frac{\partial \theta ^{o}}{\partial t}+(\mathbf{u}^{o}\cdot \nabla
)\theta ^{o}&=\frac{1}{Pe^{o}}\triangle \theta ^{o},
\label{COUPLED_SWE_T_O_STOCH} \\
{\ div}(\mathbf{u}^{o})&=0,  \label{COUPLED_SWE_INCOMPRESS_O_STOCH}
\end{align}%

\begin{theorem}Kelvin theorem for the SALT version of the atmospheric model in equations \eqref{COUPLED_SWE_VELOC_A_STOCH} - \eqref{stochasticpath}
\begin{align}
{\rm d}\oint_{c({\rmd x_t})} (\bs{u}^a + \frac{1}{Ro^a}\bs{R}(\bs{x}))\cdot d\bs{x}
=  \frac{1}{Re}\oint_{c({\rmd x_t})} \triangle\U^a \,dt \cdot d\bs{x}
\,,\label{Atmos-SALT-Kel}
\end{align}
where $c({\rmd x_t})$ denotes any closed material loop whose line elements follow stochastic  Lagrangian paths 
as in \eqref{Atmos-SALT-Lag-Path}.
\end{theorem}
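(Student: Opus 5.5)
The plan is to mimic the proof of the stochastic Kelvin--Noether theorem (Theorem \ref{thm:KelThm}), replacing the time derivative along the loop by the Kunita--It\^o--Wentzell formula \eqref{eq:kiwformula}. Write the circulation 1-form as $v := (\bs{u}^a + \frac{1}{Ro^a}\bs{R}(\bs{x}))\cdot d\bs{x}$ and let $\phi_t$ be the stochastic flow generated by ${\rmd x_t} = \bs{u}^a dt + \sum_i \xi_i\circ dW_t^i$ from \eqref{stochasticpath}. With $c({\rmd x_t}) = \phi_t\circ c_0$, pulling back gives $\oint_{c({\rmd x_t})} v = \oint_{c_0}\phi_t^* v$. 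Then \eqref{eq:kiwformula}, applied to the 1-form semimartingale $v$, yields
\begin{align*}
{\rm d}\oint_{c({\rmd x_t})} v = \oint_{c({\rmd x_t})} \big({\rm d}v + \mathcal{L}_{u^a}v\,dt + \textstyle\sum_i\mathcal{L}_{\xi_i}v\circ dW_t^i\big).
\end{align*}
The proof therefore reduces to computing the 1-form in brackets from the model equations.

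The core step is to expand the Lie derivatives with the 1-form formula in Section \ref{examp: LieDeriv}. For the $\bs{u}^a$ part, $\mathcal{L}_{{\rmd x_t}}(\bs{u}^a\cdot d\bs{x}) = \big((\,{\rmd x_t}\cdot\nabla)\bs{u}^a + u^a_j\nabla({\rmd x_t})^j\big)\cdot d\bs{x}$. For the $\bs{R}$ part, I will use the Cartan form: $\mathcal{L}_{{\rmd x_t}}(\bs{R}\cdot d\bs{x}) = \big(-{\rmd x_t}\times{\rm curl}\bs{R} + \nabla({\rmd x_t}\cdot\bs{R})\big)\cdot d\bs{x}$. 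With ${\rm curl}\bs{R}=2\Omega\hat{\bs z}$, the term $-{\rmd x_t}\times{\rm curl}\bs{R}$ is exactly the Coriolis term $\frac{1}{Ro^a}{\rmd x_t}^{\bot}$ appearing in \eqref{COUPLED_SWE_VELOC_A_STOCH}, up to the normalisation convention for $\bot$. Since $\bs{R}$ is time independent, ${\rm d}v = {\rm d}\bs{u}^a\cdot d\bs{x}$. Substituting \eqref{COUPLED_SWE_VELOC_A_STOCH} for ${\rm d}\bs{u}^a$ should cancel all transport, Coriolis, and noise terms $u^a_j\nabla\xi_i^j$, $\frac{1}{Ro^a}\nabla(R_j\xi_i^j)$. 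What remains is a collection of exact differentials, $\nabla\big(\tfrac12|\bs u^a|^2 + \tfrac{1}{Ro^a}\bs u^a\cdot\bs R - \tfrac{1}{Ro^a}\theta^a\big)dt + \tfrac{1}{Ro^a}\nabla(\xi_i\cdot\bs R)\circ dW^i_t$ together with the matching term, plus the viscous term $\frac{1}{Re^a}\triangle\bs{u}^a\,dt\cdot d\bs{x}$. The gradient terms integrate to zero around the closed loop, which leaves \eqref{Atmos-SALT-Kel}.

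I expect the main obstacle to be the bookkeeping in this cancellation. Specifically, I must check that the noise terms in \eqref{COUPLED_SWE_VELOC_A_STOCH} match $\sum_i\mathcal{L}_{\xi_i}v$ exactly modulo gradients, including the sign convention in \eqref{Atmos-SALT-Lag-Path} (a minus sign) versus \eqref{stochasticpath} (a plus sign). I will adopt \eqref{stochasticpath}, since it is the one that appears in the motion equation. The $\bs R$-dependent Stratonovich terms $\frac{1}{Ro^a}\nabla(R_j\xi_i^j)$ need particular care, because they must combine with $\xi_i\times{\rm curl}\bs R$ to produce the Cartan-form Lie derivative. The Stratonovich formulation avoids any It\^o correction, so no second-order terms arise. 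For regularity, I assume the smoothness that \cite{de2020implications} requires for \eqref{eq:kiwformula}.
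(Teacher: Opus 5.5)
Your proposal is correct and follows essentially the paper's own argument. Both apply the stochastic Lie chain rule $({\rm d}+\mathcal{L}_{{\rm d}x_t})$ (KIW) to the circulation 1-form, expand the $\bs{u}^a$ part with the 1-form Lie derivative and the $\bs{R}$ part in Cartan form, substitute the motion equation \eqref{COUPLED_SWE_VELOC_A_STOCH}, and discard the loop integrals of exact differentials so that only the viscous term survives. Your version is cleaner than the paper's display in three respects: the explicit pull-back to the initial loop, the care over the Coriolis normalisation, and the choice of the $+$ sign in \eqref{stochasticpath} rather than \eqref{Atmos-SALT-Lag-Path}, which is what makes the noise terms cancel. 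By contrast, the paper's penultimate line still carries the Coriolis and noise terms.
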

\begin{proof}
Upon suppressing the superscript $a$ in the velocity $\bs{u}^a$ for brevity of notation, we calculate
\begin{align*}
{\rm d}\oint_{c({\rmd x_t})} (\bs{u}+ \bs{R}(\bs{x}))\cdot d\bs{x}
&=
\oint_{c({\rmd x_t})} ({\rm d} + \mathcal{L}_{{\rmd x_t}}) \big((\bs{u} + \bs{R}(\bs{x}))\cdot d\bs{x}\big)
\\&= \oint_{c({\rmd x_t})} \Big( 
{\rm d} \bs{u} + (\rmd\bs{x}_t\cdot\nabla)\bs{u} + u_j \nabla {\rm d}x_t^j
\\& \qquad - {\rmd x_t}\times {\rm curl}\bs{R}(\bs{x}) 
+ \nabla ({\rmd x_t}\cdot\bs{R})
\Big)\cdot d\bs{x} 
\\\hbox{[By motion equation \eqref{COUPLED_SWE_VELOC_A_STOCH}]}\quad& =  \oint_{c({\rmd x_t})}\Big(- \nabla\theta dt
+ \frac12\nabla |\bs{u}|^2dt - {\rmd x_t}\times {\rm curl}\bs{R}(\bs{x})  + \nabla (\bs{u}\cdot\bs{R})dt
\\& \qquad  + { u_j  \nabla \sum \xi^j\circ dW(t)
+ \sum\nabla \big(\bs{\xi}\circ dW(t)\cdot\bs{R}\big)}
\Big)\cdot d\bs{x}
\\&=
\frac{1}{Re}\oint_{c({\rmd x_t})} \triangle\U\,dt \cdot d\bs{x}
\,.
\end{align*}

\end{proof}

\begin{remark} The stochastic equation for the potential temperature $\theta^a$ in the atmospheric model inherits the stochasticity of the Lagrangian trajectories ${\rmd x_t}$ in \eqref{Atmos-SALT-Lag-Path}, as a scalar tracer transport equation,
\begin{align}
{\rm d}\theta^a 
+ (\rmd\bs{x}_t\cdot\nabla)\theta^a
= 
 \big[ \gamma(\theta^a - \theta^o)
+\frac{1}{Pe^a}\triangle \theta^a\big]dt.
\label{Stoch_THETA_A}
\end{align}
\end{remark}



\subsection{Lagrangian-Averaged Stochastic Advection by Lie Transport (LA-SALT) atmospheric  model.} 

\color{black}
We next modify the SALT approach to the two-dimensional atmospheric component of the climate system in the previous section to make it non-local in probability space, in the sense that the expected velocity will replace the drift velocity in the semimartingale for the SALT transport velocity of the stochastic fluid flow. This stochastic fluid model is derived by exploiting a novel idea introduced in \cite{DrivasHolm2019} and developed further in \cite{alonso2020modelling,DrivasHolmLeahy2020}, of applying Lagrangian-averaging (LA) in probability space to the fluid equations governing stochastic advection by Lie transport (SALT) which were introduced in \cite{Holm2015}.

The LA-SALT approach achieves three results of potential interest in climate modelling. These results address three different components of the climate change problem. 
\begin{itemize}
    \item First, the LA-SALT approach introduces a sense of determinism into climate science, by replacing the drift velocity of the stochastic vector field for material transport by its expected value in equation \eqref{Atmos-SALT-Lag-Path}. In this step, the expected fluid velocity becomes deterministic. 
    \item Second, the LA-SALT approach reduces the dynamical equations for the fluctuations to a \emph{linear} stochastic transport problem with a deterministic drift velocity. Such problems are well-posed. We prove here that the LA-SALT version of the SALT climate model a possesses local weak solutions. 
    \item Third, the LA-SALT approach addresses the dynamics of the variances of the fluctuations. In particular, the third result enables the variances and higher moments of the fluctuation statistics to be found deterministically, as they are driven by a certain set of correlations of the fluctuations among themselves. 
\end{itemize}
In summary, the first LA-SALT result makes the distinction between climate and weather for the case at hand. Namely, the LA-SALT fluid equations for the 2D atmosphere-ocean climate model system may be regarded as a dissipative system akin to the Navier-Stokes equations for the expected motion (climate) which is embedded into a larger conservative system which includes the statistics of the fluctuation dynamics (weather). The second result provides a set of linear stochastic transport equations for predicting the fluctuations (weather) of the physical variables, as they are driven by the deterministic expected motion. The third result produces closed deterministic evolutionary equations for the dynamics of the variances and covariances of the stochastic fluctuations.
Thus, the LA-SALT approach to investigating the 2D atmosphere-ocean climate model system treated here reveals that its statistical properties are fundamentally dynamical. 
Specifically, the LA-SALT analysis of the 2D atmosphere-ocean model presented here defines its climate, climate change, weather, and change of weather statistics, in the context of a hierarchical systems of PDEs and SPDEs with unique local weak solutions.
\color{black}

\paragraph{LA-SALT}
The expectation terms in LA-SALT induce another modification of the model which preserves the Kelvin circulation theorem, whose expectation yields a deterministic equation,
\begin{theorem}[Kelvin theorem for the LA-SALT atmospheric model]
\begin{align}
{\rmd}\oint_{c({\rmd x_t})} \big(\bs{u}+ \bs{R}(\bs{x})\big)\cdot d\bs{x}
&=
\frac{1}{Re}\oint_{c({\rmd x_t})} \triangle\U \,dt\cdot d\bs{x}
\label{KelThm-LASALT}
\end{align}
where 
\[{\rmd x_t}^a := \mathbb{E}[\U^a](\bs{x},t)dt+\sum_i\xi_i^a\circ dW_i(t).\]
\end{theorem}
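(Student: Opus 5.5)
The proof follows the SALT atmospheric Kelvin theorem, with the drift $\bs{u}\,dt$ of the stochastic Lagrangian path replaced by $\mathbb{E}[\bs{u}]\,dt$. First the LA-SALT atmospheric motion equation must be written down explicitly. As in \eqref{LASALT-adv-form}, it is the SALT equation \eqref{COUPLED_SWE_VELOC_A_STOCH} with the transport vector field ${\rmd x_t}=\mathbb{E}[\U^a]dt+\sum_i\xi_i\circ dW_i(t)$. In Lie derivative form the motion equation reads
\[
({\rm d}+\mathcal{L}_{{\rmd x_t}})\big((\bs{u}+\bs{R})\cdot d\bs{x}\big) = -\,{\sf d}\,\Pi\,dt + \frac{1}{Re}\triangle\bs{u}\,dt\cdot d\bs{x},
\]
where $\Pi$ collects the pressure-like scalars: $\theta$ (or $\mathbb{E}[\theta]$ in the LA-SALT closure of the diamond term), and the Bernoulli-type term arising from ${\delta\ell}/{\delta D}$, which under LA-SALT enters as an expectation, as in \eqref{LA:SALT:Lie:Bou}. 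The Rossby factor on $\bs{R}$ is absorbed as in the SALT proof.

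The main step is to move the stochastic differential through the loop integral. We pull back by the stochastic flow $\phi_t$ generated by ${\rmd x_t}$, so that $\oint_{c({\rmd x_t})}\alpha_t=\oint_{c_0}\phi_t^*\alpha_t$. We then apply the Kunita--It\^o--Wentzell formula \eqref{eq:kiwformula} with $K_t=(\bs{u}+\bs{R})\cdot d\bs{x}$, and push forward again, exactly as in the proof of Theorem \ref{thm:KelThm}. This gives ${\rm d}\oint_{c}\alpha=\oint_c({\rm d}+\mathcal{L}_{{\rmd x_t}})\alpha$. The KIW formula only requires the drift of $\phi_t$ to be an adapted vector field. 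Since $\mathbb{E}[\U^a]$ is deterministic, the McKean--Vlasov character of the path causes no difficulty once the mean field is regarded as given.

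Substituting the motion equation, every term of the form ${\sf d}(\cdot)$ has zero loop integral. These include ${\sf d}\theta\,dt$, the Bernoulli gradient, $\nabla(\mathbb{E}[\bs{u}]\cdot\bs{R})dt$, and $\nabla(\bs{\xi}_i\cdot\bs{R})\circ dW$. The Coriolis term must also be checked. In the Lie derivative of $\bs{R}\cdot d\bs{x}$, the piece $-{\rmd x_t}\times{\rm curl}\bs{R}$ should cancel exactly against the term ${\rmd x_t}^{\perp}/Ro$ in the momentum equation. What remains is the viscous term $\frac{1}{Re}\oint\triangle\bs{u}\,dt\cdot d\bs{x}$, which proves \eqref{KelThm-LASALT}.

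The main obstacle is bookkeeping in the motion equation. With drift $\mathbb{E}[\bs{u}]$, the term $u_j\nabla\,{\rm d}x_t^j$ becomes $u_j\nabla\mathbb{E}[u]^j dt$, which is not an exact gradient. The LA-SALT equation must therefore be stated with this term built into $\mathcal{L}_{{\rmd x_t}}\bs{u}$, and not as $\frac12\nabla|\bs{u}|^2$ as in the SALT proof. I would first write the atmospheric equation directly in the form $({\rm d}+\mathcal{L}_{{\rmd x_t}})(\bs{u}+\bs{R})\cdot d\bs{x}=\dots$, mirroring \eqref{LASALT-EP}, so that the Kelvin computation becomes a one-line application of KIW plus exactness.
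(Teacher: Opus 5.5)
Your proposal is correct and follows the paper's own route. The paper's proof just says the argument goes as in the SALT case: use the Lie chain rule (KIW) to move ${\rm d}$ inside the moving loop, substitute the motion equation, and drop the exact terms, leaving only the viscous term.

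Your point that $u_j\nabla\mathbb{E}[u^j]\,dt$ and the Coriolis term must be built from the LA-SALT path ${\rmd x_t}$, not from $\bs{u}\,dt$, matches how the paper writes its Stratonovich LA-SALT atmospheric equations. The paper in fact obtains those equations by expanding this very Kelvin theorem, so the cancellation you describe goes through.
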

\noindent 
{\bf Proof.} The proof of the Kelvin theorem for LA-SALT follows the same lines as for SALT. $\Box$

\paragraph{LA-SALT atmospheric equations in Stratonovich form.}

The \emph{Stratonovich} LA-SALT equations are given 
in the standard notation for stochastic fluid dynamics by expanding out Kelvin's theorem in \eqref{KelThm-LASALT} to find
\begin{align}
{\rmd}\mathbf{u}^{a}+({\rmd\mathbf{X}_{t}}^{a}\cdot \nabla )\mathbf{u}%
^{a}+\frac{1}{Ro^{a}}{\rmd\mathbf{X}_{t}}^{a\bot }& 
+{ \sum_{i}\Big(u_{j}^{a}\nabla \xi _{i}^{j}+\frac{1}{Ro^{a}}\nabla \Big(R_{j}%
\mathbf{(x})\xi _{i}^{j}\Big)\Big)\circ dW_{t}^{i}} 
 \notag\\
& \hspace{-22mm}{+\,u_{j}^{a}\nabla \mathbb{E}[{u^{a}}^{j}]dt+%
\frac{1}{Ro^{a}}\nabla (\mathbb{E}[{\mathbf{u}}^{a}]\cdot \mathbf{R})dt}+%
\frac{1}{Ro^{a}}\nabla \theta ^{a}\,dt=\frac{1}{Re^{a}}\triangle \mathbf{u}^{a}\,dt\,,
 \notag\\
& d\theta ^{a}+{\rmd\mathbf{X}_{t}}^{a}\cdot \nabla \theta
^{a}=-\gamma (\theta ^{o}-\theta ^{a})\,dt + \frac{1}{Pe^{a}}\triangle \theta ^{a}\,dt\,.
\label{COUPLED_SWE_T_A_STOCH_LA-Strat-Intro}
\end{align}
where the \emph{Stratonovich} stochastic Lagrangian trajectory for LA-SALT is given by
\begin{equation}
{{\rmd}\mathbf{X}_{t}^{a}} 
:= {\mathbb{E}[\mathbf{u}^{a}]}(\bs{x},t)dt + \sum_{i}\xi _{i}^{a}(\bs{x})\circ dW_{i}(t).
\label{Lag-path-Strat}
\end{equation}

\paragraph{LA-SALT atmospheric equations in It\^o form.}
Likewise, the \emph{It\^o} LA-SALT equations are
given in the standard notation for stochastic fluid dynamics by
\begin{align}
& \rmd\mathbf{u}^{a}+({\rmd\mathbf{\wh{X}}_{t}}^{a}\cdot \nabla )\mathbf{u}%
^{a}+\frac{1}{Ro^{a}}{\rmd\mathbf{\wh{X}}_{t}}^{a\bot }
 + {
\sum_{i}\Big(u_{j}^{a}\nabla \xi _{i}^{j}+\frac{1}{Ro^{a}}\nabla \Big(R_{j}%
\mathbf{(x})\xi _{i}^{j}\Big)\Big) dW_{t}^{i}}  \nonumber \\
&+\frac12 \bigg[ \mathbf{\hat{z}}\times \xi \Big( {\rm div}\Big(\xi\,\big(\,\mathbf{\hat{z}}\cdot{\rm curl}\,(\,\mathbb{E}[{\mathbf{u}}^{a}] + \frac{1}{Ro^{a}}\mathbf{R}(\mathbf{x}) \big)\Big) \,\,\Big)  
- \nabla \bigg( \xi\cdot\nabla\Big(\xi \cdot\big(\mathbb{E}[{\mathbf{u}}^{a}] + \frac{1}{Ro^{a}}\mathbf{R}(\mathbf{x})\big) \Big)
\bigg)\bigg]dt
\nonumber \\& \hspace{22mm}{+\,u_{j}^{a}\nabla \mathbb{E}[{u^{a}}^{j}]dt+%
\frac{1}{Ro^{a}}\nabla (\mathbb{E}[{\mathbf{u}}^{a}]\cdot \mathbf{R})dt}+%
\frac{1}{Ro^{a}}\nabla \theta ^{a}\,dt = \frac{1}{Re^{a}}\triangle \mathbf{u}^{a}\,dt\,,
\label{COUPLED_SWE_VELOC_A_STOCH_LA-Ito-Intro} \\
& d\theta ^{a}+{\rmd\mathbf{\wh{X}}_{t}}^{a}\cdot \nabla \theta
^{a} - \frac12 \Big(\xi\cdot\nabla(\xi\cdot\nabla \theta^a)  \Big)dt 
=-\gamma (\theta ^{o}-\theta ^{a})\,dt +  \frac{1}{Pe^{a}}\triangle \theta ^{a}\,dt
\,,
\label{COUPLED_SWE_T_A_STOCH_LA-Ito-Intro}
\end{align}
where the \emph{It\^o} stochastic Lagrangian trajectory for LA-SALT is given by
\begin{equation}
{{\rmd}\mathbf{\wh{X}}_{t}^{a}} 
:= {\mathbb{E}[\mathbf{u}^{a}]}(\bs{x},t)dt + \sum_{i}\xi _{i}^{a}(\bs{x})dW_{i}(t).
\label{Lag-path-Ito}
\end{equation}

\begin{remark}[Expected LA-SALT atmospheric equations.]
Taking the expectation of equations \eqref{COUPLED_SWE_VELOC_A_STOCH_LA-Ito-Intro} and  \eqref{COUPLED_SWE_T_A_STOCH_LA-Ito-Intro} yields a closed set of deterministic PDE for the expectations 
$\mathbb{E}[{\mathbf{u}}^{a}]$ and $\mathbb{E}[\theta ^{a}]$. Subtracting the expectations from equations \eqref{COUPLED_SWE_VELOC_A_STOCH_LA-Ito-Intro} and  \eqref{COUPLED_SWE_T_A_STOCH_LA-Ito-Intro} yields \emph{linear equations} for the differences,   
\begin{align}
{\mathbf{u}}^{a'}:= {\mathbf{u}}^{a} - \mathbb{E}[{\mathbf{u}}^{a}]
\quad\hbox{and}\quad
\theta ^{a'} := \theta ^{a} - \mathbb{E}[\theta ^{a}]
\,.
\label{fluctuations_u_theta}
\end{align}
Since ${\mathbf{u}}^{a'}$ and $\theta ^{a'}$ satisfy $\mathbb{E}[{\mathbf{u}}^{a'}]=0$ and $\mathbb{E}[\theta ^{a'}]=0$, one may regard these difference variables as fluctuations of ${\mathbf{u}}^{a}$ and $\theta ^{a}$ away from their expected values. From here one can calculate the dynamical equations for the statistics of the atmospheric model, e.g., its variances and its other tensor moments, as detailed in \cite{DrivasHolm2019,DrivasHolmLeahy2020,alonso2020modelling}. Further details of these equations can be found in  \cite{crisan2023implementation,sharma2026structure}
\end{remark}

\paragraph{Oceanic part of the LA-SALT model.}
The oceanic part of the LA-SALT model \emph{coincides} with the oceanic part of the SALT model (\ref{COUPLED_SWE_VELOC_O_STOCH})-(\ref{COUPLED_SWE_INCOMPRESS_O_STOCH}).

\section{Summary conclusion and outlook from \cite{crisan2023implementation}}\label{Sec4}

\begin{enumerate}
    \item \cite{crisan2023implementation} showed that the Ocean-Atmosphere Climate Model (OACM) in equation set \eqref{COUPLED_SWE_VELOC_A} - \eqref{COUPLED_SWE_INCOMPRESS_O} analysed there for the well-known Gill-Matsuno class of models is simple enough to successfully admit the mathematical analysis required to prove the local well-posedness of these models. The physics underlying these models can be improved, of course. For example, one could naturally include heating by the Greenhouse Effect, and this heating would drive the statistical properties of the climate model. 

    \item In addition to proving well-posedness for both deterministic and stochastic OACM, \cite{crisan2023implementation} developed a new tool for climate science for predicting the \emph{evolution of climate statistics} such as the variance. Indeed, the application of LA-SALT to the OACM there established a method for also predicting the evolution of climate statistics such as the expectation of tensor moments of the fluctuations. The new tools introduced there in the context of LA-SALT show considerable promise for future applications.
    
    \item \cite{crisan2023implementation} also mentioned that the shared geometric structure of these OACM may facilitate the parallel development of their numerical simulations; for example, in stochastic ENSO simulations. The capability for these stochastic simulations has recently been developed in \cite{sharma2026structure} and they may be excellent candidates for applying the Data Analysis, Uncertainty Quantification and Particle Filtering methods for Data Assimilation of LA SALT solutions which have already been developed for SALT. See, e.g.,  \cite{CCHOS18a,CCHOS18b}. 
    
In particular, our collaborators have created a new \emph{Discrete Exterior Calculus} (DEC) framework for coupled atmosphere-ocean computations. The DEC computations in \cite{sharma2026structure} using the LA-SALT OACM model from \cite{crisan2023implementation} verify the predicted red-shift in the oceanic energy spectrum caused by stochastic atmospheric fluctuations, in agreement with the Hasselmann paradigm \cite{hasselmann1976stochastic}. This verification of the red-shift in the oceanic energy spectrum is important, because it is largely what makes computational oceanography possible at all.

\end{enumerate}

\subsection*{Acknowledgments}

I am enormously grateful to many friends and colleagues who have helped me to understand the recent development of stochastic geometric mechanics and its applications in oceanography. I particularly thank my collaborators, postdoctoral fellows and PhD students: C.J. Cotter, D. Crisan, A.B. Cruzeiro, A. Bethencourt de Le\'on, T. Drivas, S.R. Ephrati, A.D. Franken, F. Gay-Balmaz, B.J. Geurts, G. Gottwald, R. Hu, P. Korn, J.-M. Leahy, E. Luesink, J.P. Ortega, W. Pan, T.S. Ratiu, O.D. Street, S. Takao, and T. Tyranowski. The work presented here was partially supported by European Research Council (ERC) Synergy grant entitled ``Stochastic Transport in Upper Ocean Dynamics'', STUOD - DLV-856408.

\appendix

\section{Variational derivations of stochastic atmospheric models (SAM)} 
\label{sec: SAM1}

\paragraph{Summary.}
In this appendix we explain the Euler-Poincar\'e variational principle and use it to rederive the equations 
of the standard ideal 2D compressible atmospheric and incompressible oceanic flows. 
These ideal models separately conserve their corresponding energy and potential vorticity.
Next, we couple these models using the reduced Lagrange-d'Alembert method. Finally, we introduce stochasticity
 into the Lagrangian for the atmospheric flow and derive the full stochastic SALT and LA-SALT climate models that 
 were discussed in the text of \cite{crisan2023implementation} in both their Stratonovich and It\^o forms.\\

\subsection{Stochastic Euler-Poincar\'e variational principles for fluid dynamics}\label{Section-GeoMech}

\begin{definition}[Fluid trajectory]
A fluid trajectory starting from $X\in M$ in the flow domain manifold $M$ at time $t=0$ is given by $x(t)=g_t(X)=g(X,t)$, with $g:M\times\mathbb{R}^+\to M$ being a smooth one-parameter submanifold (i.e., a curve parameterised by time $t$) in the manifold of diffeomorphisms acting on $M$, denoted ${\rm Diff}(M)$. In the deterministic case, computing the time derivative, i.e., the tangent to the curve with initial data $g(X,0)=X$ along $g_t(X)$, leads to the following \emph{reconstruction equation}, given by
\begin{equation}
{\partial_t}g_t(X) = u(g_t(X),t),
\label{eq:reconstructiondeterministic}
\end{equation}
where $u_t(\,\cdot\,)=u(\,\cdot\,,t)$ is a time-dependent vector field whose flow, $g_t(\,\cdot\,)=g(\,\cdot\,,t)$, is defined by the characteristic curves of the vector field $u_t(\,\cdot\,)\in \mathfrak{X}(M)$. The vector fields in $\mathfrak{X}(M)$ comprise the Lie algebra associated to the class of time-dependent maps $g_t\in {\rm Diff}(M)$.  
\end{definition}

\begin{definition}[Advected quantities and Lie derivatives]
A fluid variable $a\in V^*$ defined in a vector space $V^*$ is said to be \emph{advected}, if it keeps its value $a_t=a_0$ along the fluid trajectories. Advected quantities are sometimes called \emph{tracers}, because the histories of scalar advected quantities with different initial values (labels) trace out the Lagrangian trajectories of each label, or initial value, via the \emph{push-forward} by the flow group, i.e., $a_t=g_{t\,*}a_0= a_0g_t^{-1}$, where $g_t\in {\rm Diff}(M)$ is the time-dependent curve on the manifold of diffeomorphisms whose action represents the evolution of the fluid trajectory by push-forward. An advected quantity $a_t$ satisfies an evolutionary partial differential equation (PDE) obtained from the time derivative of the pull-back relation $a_0 = g_t^*a_t$, as follows
\[
0 = {\partial_t}a_0 = {\partial_t}(g_t^*a_t )= g_t^*\big({\partial_t}a_t + \mathcal{L}_u a_t\big)
\quad\Longrightarrow\quad {\partial_t}a_t + \mathcal{L}_u a_t = 0 \,,
\]
where $\mathcal{L}_{u_t} (\,\cdot\,)$ denotes \emph{Lie derivative} by the vector field $u_t$ whose characteristic curves comprise the fluid trajectories. 
\end{definition}

\begin{definition}[Stochastic advection by Lie transport (SALT)\cite{Holm2015}]
In the setting of stochastic advection by Lie transport (SALT) the deterministic reconstruction equation in \eqref{eq:reconstructiondeterministic} is replaced by the semimartingale
\begin{equation}
{\rmd}g(X,t) = u(g_t(X),t)dt + \sum_{i=1}^M \xi_i(g_t(X))\circ dW_t^i,
\label{eq:reconstructionstochastic}
\end{equation}
where the symbol $\circ$ means that the stochastic integral is taken in the Stratonovich sense. The initial data is given by $g(X,0)=X$. The $W_t^i$ are independent, identically distributed Brownian motions, defined with respect to the standard stochastic basis $(\Omega,\mathcal{F},(\mathcal{F}_t)_{t\geq 0},\mathbb{P})$. The $\xi_i(\,\cdot\,)\in\mathfrak{X}$ are prescribed vector fields which are meant to represent uncertainty due to effects on advection of unknown rapid time dependence.

A stochastically advected quantity $a_t$ satisfies an evolutionary stochastic partial differential equation (SPDE) obtained as a  semimartingale relation via the pull-back relation $a_0 = g_t^*a_t$, as follows
\begin{equation}
0 = {\rmd} a_0 = {\rmd}(g_t^*a_t )= g_t^*\big({\rmd} a_t + \mathcal{L}_{{\rmd}x_t} a_t\big)
\quad\Longrightarrow\quad {\rmd} a_t + \mathcal{L}_{{\rmd}x_t} a_t = 0 \,,
\label{eq:KIWformula}
\end{equation}
where $\mathcal{L}_{u_t} (\,\cdot\,)$ denotes \emph{Lie derivative} by the vector field ${\rmd}x_t$ whose characteristic curves comprise the stochastic fluid trajectories in equation \eqref{eq:reconstructionstochastic}.  
\end{definition}
\index{stochastic!advection} \index{stochastic!advection}

\begin{remark}[Kunita-It\^o-Wentzell (KIW) formula.]
Equation \eqref{eq:KIWformula} is called the \emph{KIW formula}, after its discovery by Kunita as an extension of the It\^o-Wentzell formula to define a stochastic Lie derivative for tensors and differential $k$-forms. For references and a discussion of its recent role in stochastic advection for fluid dynamics, see \cite{de2024geometric}.
\end{remark}

\begin{remark}[The deterministic limit.]
In what follows, any of the stochastic fluid equations derived from the Euler-Poincar\'e variational approach will reduce to the corresponding deterministic fluid equations by simply setting $\xi_i\to0$ in the reconstruction equation for the stochastic fluid trajectory in \eqref{eq:reconstructionstochastic}. 
\end{remark}

\begin{definition}[The diamond operator]
The \emph{diamond operator} is defined for $a\in V^*$, $u\in\mathfrak{X}$ and fixed $b\in V$ as
\begin{equation}
\langle b\diamond a, u\rangle_{\mathfrak{X}^{*}\times\mathfrak{X}} := -\langle b,\mathcal{L}_u a\rangle_{V^*\times V}.
\end{equation}
Here, $\langle\,\cdot\,,\,\cdot\,\rangle_{\mathfrak{X}^{*}\times\mathfrak{X}}$ and  $\langle\,\cdot\,,\,\cdot\,\rangle_{\mathfrak{X}^{*}\times\mathfrak{X}}$ denote the real-valued, non-degenerate, symmetric pairings between corresponding dual spaces, which can be defined on a case-by-case basis. 
The diamond operator provides a map dual to the Lie derivative, as $\mathcal{L}_{(\,\cdot\,)}b:\mathfrak{X}\to V$ and $b\diamond(\,\cdot\,):V^*\to\mathfrak{X}^{*}$. This duality is crucial in defining the  Euler-Poincar\'e variational principle.
\end{definition}

\begin{definition}[The variational derivative] The variational derivative of a functional $F:\mathcal{B}\to\mathbb{R}$, where $\mathcal{B}$ is a Banach space, is denoted $\delta F/\delta \rho$ with $\rho\in\mathcal{B}$. The variational derivative $\delta F/\delta \rho$ can be defined via the linearisation of the functional $F$ with respect to the following infinitesimal deformation 
\begin{equation}
\delta F[\rho]:= \frac{d}{d\epsilon}\Big|_{\epsilon=0} F[\rho+\epsilon \delta\rho] = \int \frac{\delta F}{\delta \rho}(x)\delta\rho(x)\,dx =: \left\langle\frac{\delta F}{\delta \rho},\delta \rho\right\rangle.
\end{equation}
In the definition above, $\epsilon\ll1\in\mathbb{R}$ is a parameter, $\delta\rho\in\mathcal{B}$ is an arbitrary function and the  variation can be understood as a Fr\'echet derivative. With the definition of the functional derivative in place, the following lemma can be formulated.
\end{definition}

\begin{theorem}[Stochastic Euler-Poincar\'e theorem]\label{thm:SEP}
With the notation as above, the following statements are equivalent.
\begin{enumerate}[i)]
\item The constrained variational principle
\begin{equation}
\delta\int_{t_1}^{t_2}\ell(u,a)\,dt = 0
\end{equation}
holds on $\mathfrak{X}\times V^*$, using variations $\delta u$ and $\delta a$ of the form
\begin{equation}
\delta u = {\rmd}w - [{\rmd}x_t,w], \qquad \delta a = -\mathcal{L}_w a,
\label{EPstoch-var}
\end{equation}
where $w(t)\in \mathfrak{X}$ is arbitrary and vanishes at the endpoints in time for arbitrary times $t_1,t_2$.
\item The stochastic Euler-Poincar\'e equations
\begin{equation}
{\rmd}\frac{\delta \ell}{\delta u} + \mathcal{L}_{{\rmd}x_t}\frac{\delta \ell}{\delta u} 
= \frac{\delta \ell}{\delta a}\diamond a\,dt\,,
\label{eq:stochep}
\end{equation}
hold on $\mathfrak{X}^*$ and the stochastic advection equations
\begin{equation}
{\rmd}a + \mathcal{L}_{{\rmd}x_t}a = 0\,,
\label{eq:stochadv}
\end{equation}
hold on $\times V^*$.
\end{enumerate}
\end{theorem}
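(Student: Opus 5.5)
The plan is to follow the proof of the first version of Theorem \ref{thm:SEP} in Part \ref{EB-SALT}, now using the sign convention $\delta u = {\rmd}w - [{\rmd}x_t,w]$ of \eqref{EPstoch-var}. Here $[\,\cdot\,,\,\cdot\,]$ is the commutator with $\mathcal{L}_X Y=[X,Y]$ from \eqref{LieBrkt}, so that $-[{\rmd}x_t,w]=-\mathrm{ad}_{{\rmd}x_t}w$ in the sense used in \eqref{vars: u-D-b}.

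\textbf{Step 1: justify the variations.} Take a two-parameter family $g_{t,\epsilon}$ satisfying the stochastic reconstruction equation \eqref{eq:reconstructionstochastic}. The prescribed $\xi_i$ do not depend on $\epsilon$. Set $w=\partial_\epsilon g\, g^{-1}|_{\epsilon=0}$.

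Cross-differentiating ${\rmd}$ and $\partial_\epsilon$, exactly as in the lemma preceding the earlier Theorem \ref{thm:SEP}, gives $\delta({\rmd}x_t)=\delta u\,dt = {\rmd}w - \mathrm{ad}_{{\rmd}x_t}w$. Applying the KIW formula \eqref{eq:KIWformula} to $a_{t,\epsilon}=g_{t,\epsilon *}a_0$, with the $\epsilon$-derivative in place of the time differential, gives $\delta a=-\mathcal{L}_w a$.

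\textbf{Step 2: (i) $\Rightarrow$ (ii).} Compute
\[
\delta\int_{t_1}^{t_2}\ell\,dt=\int_{t_1}^{t_2}\Big\langle\frac{\delta\ell}{\delta u},{\rmd}w-\mathrm{ad}_{{\rmd}x_t}w\Big\rangle+\Big\langle\frac{\delta\ell}{\delta a},-\mathcal{L}_wa\Big\rangle dt .
\]
Handle the three terms as follows.
\begin{itemize}
\item Integrate the first term by parts in time. The endpoint terms vanish because $w(t_1)=w(t_2)=0$, leaving $-\langle{\rmd}\frac{\delta\ell}{\delta u},w\rangle$.
\item Move $\mathrm{ad}$ across the pairing in the second term to get $-\langle\mathrm{ad}^*_{{\rmd}x_t}\frac{\delta\ell}{\delta u},w\rangle$. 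By the computation following \eqref{eq-EP-Eul}, $\mathrm{ad}^*$ on one-form densities equals the Lie derivative, so this is $-\langle\mathcal{L}_{{\rmd}x_t}\frac{\delta\ell}{\delta u},w\rangle$.
\item The third term is $\langle\frac{\delta\ell}{\delta a}\diamond a,w\rangle\,dt$ by the definition of the diamond operator.
\end{itemize}
Since $w$ is arbitrary, the integrand must vanish, which gives \eqref{eq:stochep}. The advection equation \eqref{eq:stochadv} follows from the KIW formula applied to $a_t=g_{t*}a_0$, as in \eqref{eq:KIWformula}.

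\textbf{Step 3: (ii) $\Rightarrow$ (i).} Every step in Step 2 is an identity. So if (ii) holds, the same computation run backwards shows the variation vanishes for all admissible $w$.

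\textbf{Main obstacle.} The delicate point is the time integration by parts and the cross-differentiation, because $g$ is only continuous in $t$. Both must be read in integral (Stratonovich) form. The key observation is that Stratonovich calculus obeys the ordinary product rule, so $\int\langle m,{\rmd}w\rangle=-\int\langle{\rmd}m,w\rangle$ holds with no It\^o correction.

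A second point to check is sign consistency. Under the convention $\mathcal{L}_XY=[X,Y]$, we have $-[{\rmd}x_t,w]=-\mathcal{L}_{{\rmd}x_t}w$. Its dual must produce $+\mathcal{L}_{{\rmd}x_t}\frac{\delta\ell}{\delta u}$ on the left of \eqref{eq:stochep}. I would confirm this against the explicit coordinate formula for $\mathrm{ad}^*_u m$ derived earlier.
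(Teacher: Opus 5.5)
Your route is the paper's own. You integrate by parts in time using $w(t_1)=w(t_2)=0$, dualise the bracket term to a Lie derivative of the one-form density $\delta\ell/\delta u$ via $\mathrm{ad}^*=\mathcal{L}$, recognise the diamond term, use arbitrariness of $w$, and take the advection law from the KIW formula. Your Step 1 re-derives variations that (i) simply postulates, and your Step 3 spells out the converse that the paper leaves implicit; both are fine.

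The one thing that fails as written is your reading of the bracket in \eqref{EPstoch-var}. This is exactly the sign check you deferred, and your opening sentence makes two incompatible identifications. By \eqref{vars: u-D-b}, $\partial_t w-\mathrm{ad}_uw=(\partial_t\mathbf{w}+\mathbf{u}\cdot\nabla\mathbf{w}-\mathbf{w}\cdot\nabla\mathbf{u})\cdot\nabla$. Hence $\mathrm{ad}_uw=-[u,w]=-\mathcal{L}_uw$ when $[\,\cdot\,,\,\cdot\,]$ is the commutator of \eqref{LieBrkt}.

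Suppose the bracket in \eqref{EPstoch-var} were that commutator. Then $\delta u={\rmd}w-\mathcal{L}_{{\rmd}x_t}w={\rmd}w+\mathrm{ad}_{{\rmd}x_t}w$. For a one-form density $m$ we have $\langle m,\mathcal{L}_Xw\rangle=-\langle\mathcal{L}_Xm,w\rangle$, so your computation would yield
\[
{\rmd}\frac{\delta\ell}{\delta u}-\mathcal{L}_{{\rmd}x_t}\frac{\delta\ell}{\delta u}=\frac{\delta\ell}{\delta a}\diamond a\,dt ,
\]
which has the wrong sign. So the check in your last paragraph fails under the convention you state.

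The statement, and the paper's one-line dualisation in its proof, are correct only if $[X,Y]$ in \eqref{EPstoch-var} denotes the Lie-algebra bracket $\mathrm{ad}_XY=-\mathcal{L}_XY$. This is the convention of the vector-field entry in the paper's list of Lie-derivative examples for 3D EB, where $\partial_tw-\mathrm{ad}_uw=\partial_tw-[u,w]$. Equivalently, $\delta u={\rmd}w+\mathcal{L}_{{\rmd}x_t}w$. That is precisely what your Step 1 cross-differentiation produces; it is the Part III lemma's ${\rmd}v+[{\rmd}\chi_t,v]$, written with the commutator. Drop the identification with \eqref{LieBrkt} and keep $\mathrm{ad}$ throughout; then Steps 1 and 2 go through exactly as you wrote them.
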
 

\begin{proof}
Using integration by parts and the endpoint conditions $w(t_1)=0=w(t_2)$, the variation can be computed to be
\begin{equation}
\begin{aligned}
\delta\int_{t_1}^{t_2}\ell(u,a)\,dt 
&= 
\int_{t_1}^{t_2}\left\langle\frac{\delta\ell}{\delta u},\delta u\right\rangle + \left\langle\frac{\delta\ell}{\delta a},\delta a\right\rangle\,dt\\
&= \int_{t_1}^{t_2}\left\langle\frac{\delta\ell}{\delta u},{\rmd}w-[{\rmd}x_t,w]\right\rangle + \left\langle\frac{\delta\ell}{\delta a}\,dt,-\mathcal{L}_w a\right\rangle\\
&= \int_{t_1}^{t_2}\left\langle -{\rmd}\frac{\delta\ell}{\delta u} - \mathcal{L}_{{\rmd}x_t}\frac{\delta\ell}{\delta u} + \frac{\delta\ell}{\delta a}\diamond a\,dt,w\right\rangle\\
&= 0\,.
\end{aligned}
\label{eq:StochEPeqns}
\end{equation}
Since the vector field $w$ is arbitrary, one obtains the stochastic Euler-Poincar\'e equations. Finally, the advection equation \eqref{eq:stochadv} follows by applying the KIW formula to $a(t)=g_{t*}a_0$.
\end{proof}

\begin{remark}
This version of the stochastic Euler-Poincar\'e theorem is equivalent to the version presented in \cite{Holm2015}, which uses stochastic Clebsch constraints. In \cite{Holm2015} one can also find an investigation of the It\^o formulation of the stochastic Euler-Poincar\'e equation. 
\end{remark}

\begin{theorem}[Stochastic Kelvin-Noether Theorem] \label{thm:Kelvin}
Let $c$ denote a compact embedded one-dimensional smooth submanifold of $M$ and denote $c_t=g_t(c)$ for all $t\in [0,T]$. If the mass density $D_0$ (a top-form) is initially non-vanishing, then
\[
{\rmd}\oint_{c_t}\frac{1}{D_t}\frac{{\delta} \ell}{{\delta} u}(u_t,a_t)
= \oint_{c_t}\frac{1}{D_t}\frac{{\delta} \ell}{{\delta} a}(u_t,a_t)\diamond a_t\,.
\]
The integrated form of this relation is 
\[
\oint_{c_t}\frac{1}{D_t}\frac{{\delta} \ell}{{\delta} u}(u_t,a_t)
=\oint_{c_0}\frac{1}{D_0}\frac{{\delta} \ell}{{\delta} u}(u_0,a_0)
+ \int_0^t\oint_{c_s}\frac{1}{D_s}\frac{{\delta} \ell}{{\delta} a}(u_s,a_s)\diamond a_s \rmd s.
\]
\end{theorem}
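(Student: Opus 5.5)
The plan is to transform the moving-loop integral into a fixed-loop integral by pulling back along the stochastic flow $g_t$. Then I will differentiate using the Kunita--It\^o--Wentzell formula \eqref{eq:KIWformula} and substitute the stochastic Euler--Poincar\'e equation \eqref{eq:stochep} from Theorem \ref{thm:SEP}. This follows the proof of Theorem \ref{thm:KelThm}, but with the density handled more carefully.

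\textbf{Step 1 (pull-back).} Write $m_t := \frac{\delta\ell}{\delta u}(u_t,a_t)$, a one-form density. Since $D_t$ is a nonvanishing top-form, the ratio $m_t/D_t$ is a genuine one-form, so its line integral along a curve makes sense. Because $c_t = g_t(c)$, the change of variables formula for line integrals gives
\[
\oint_{c_t}\frac{m_t}{D_t} = \oint_{c} g_t^*\Big(\frac{m_t}{D_t}\Big).
\]
The loop $c$ on the right is fixed in time.

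\textbf{Step 2 (KIW and product rule).} Apply the KIW formula to the semimartingale one-form $m_t/D_t$; this requires enough spatial regularity of $m_t$, $D_t$ and the $\xi_i$, which I will assume. It gives
\[
{\rmd}\, g_t^*\Big(\frac{m_t}{D_t}\Big) = g_t^*\Big(({\rmd} + \mathcal{L}_{{\rmd}x_t})\frac{m_t}{D_t}\Big).
\]
Next I expand the right-hand side with the Leibniz rule for $({\rmd}+\mathcal{L}_{{\rmd}x_t})$. This is legitimate in Stratonovich calculus because the ordinary chain rule holds there. Since $D_t$ is advected, the advection equation \eqref{eq:stochadv} gives $({\rmd}+\mathcal{L}_{{\rmd}x_t})D_t=0$. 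Consequently
\[
({\rmd}+\mathcal{L}_{{\rmd}x_t})(m_t/D_t) = \frac{1}{D_t}({\rmd}+\mathcal{L}_{{\rmd}x_t})m_t .
\]
Equivalently, one can use $g_t^*D_t = D_0$ and write $g_t^*(m_t/D_t) = g_t^*m_t/D_0$, which is how the argument in Theorem \ref{thm:KelThm} proceeds.

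\textbf{Step 3 (substitution).} By \eqref{eq:stochep}, $({\rmd}+\mathcal{L}_{{\rmd}x_t})m_t = \frac{\delta\ell}{\delta a}\diamond a_t\,dt$. Inserting this, interchanging the stochastic differential with the integral over the fixed loop $c$ (justified for smooth data by a stochastic Fubini theorem), and pushing forward again by $g_t$ yields the differential form of the statement, ${\rmd}\oint_{c_t}\cdots = \oint_{c_t}\frac{1}{D_t}\frac{\delta\ell}{\delta a}\diamond a_t\,dt$. Integrating from $0$ to $t$ and using $g_0=\mathrm{id}$, so that $c_0=c$, gives the integrated form.

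\textbf{Main obstacle.} I expect the delicate point to be Step 2: justifying the KIW formula and the Stratonovich Leibniz rule for the quotient $m_t/D_t$, and the interchange of ${\rmd}$ with the loop integral. I would handle this by citing the regularity hypotheses of \cite{de2020implications}, where the KIW formula is proved for $k$-form-valued semimartingales. All the remaining steps are algebraic.
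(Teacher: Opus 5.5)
Your proposal is correct and follows essentially the same route as the paper's proof: pull the circulation back to the fixed loop $c=c_0$, apply the KIW formula to the one-form $\frac{1}{D_t}\frac{\delta\ell}{\delta u}$, use the advection of $D_t$ to pass $1/D_t$ through $({\rmd}+\mathcal{L}_{{\rmd}x_t})$, substitute the stochastic Euler--Poincar\'e equation, and push forward to $c_t$. Your explicit Leibniz-rule step and the factor $dt$ you keep on the right-hand side make precise what the paper's one-line computation leaves implicit; the paper's statement omits that $dt$.
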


\begin{proof}
The KIW formula \eqref{eq:KIWformula} -- also known as the \emph{Lie chain rule} -- implies that 
\begin{align*}
{\rmd}\oint_{c_t}\frac{1}{D_t}\frac{{\delta} \ell}{{\delta} u}(u_t,a_t)
&=
\oint_{c_0}
{\rmd}
g_t^*\Big(\frac{1}{D_t} \frac{\delta\ell}{\delta u}\Big)\bigg)
=
\oint_{c_0}
g_t^*\bigg(\big({\rmd} + \mathcal{L}_{{\rmd}x_t} \big)\Big(\frac{1}{D_t} \frac{\delta\ell}{\delta u}\Big)\bigg)
\\&=
\oint_{c_t}
\big({\rmd} + \mathcal{L}_{{\rmd}x_t} \big)\Big(\frac{1}{D_t} \frac{\delta\ell}{\delta u}\Big)
= \oint_{c_t}\frac{1}{D_t}\frac{{\delta} \ell}{{\delta} a}(u_t,a_t)\diamond a_t\,,
\end{align*}
where the first step also uses the stochastic advection equation for $D_t$ in \eqref{eq:stochep} and 
the final step follows by substituting the stochastic Euler-Poincar\'e equations in \eqref{eq:stochadv}.

\end{proof}

\subsection{2D SALT Stochastic Ocean-Atmospheric Model (SO-AM) }\label{Section-SALT}\label{sec-SAM}

\paragraph{Summary.}
This part of the appendix derives the SALT ocean-atmospheric model (SO-AM) for the isothermal ideal gas in equations \eqref{Atmos-SALT-Kel} and \eqref{Stoch_THETA_A}. \\ 	\index{SALT!ocean-atmospheric model (SO-AM)}

In the present notation, the Lagrangian for the deterministic 2D Compressible Atmospheric
Model (CAM) in Eulerian $(x,y)$ coordinates is,
\begin{align}
\ell\big[\bs{u}, D,\theta \big] = 
\int_{\Omega} \frac{D}{2} |\bs{u}|^2 
+ D \bs{u}\cdot\bs{R}(\bs{x})
- c_vD\theta\Pi  \,dx\,dy,
\label{CAM-Lag}
\end{align}
where $\bs{u}$ denotes 2D fluid velocity, $D$ is mass density, $\theta$ denotes the potential temperature, 
the function $\bs{R}(\bs{x})$ with ${\rm curl} \bs{R}=2\bs{\Omega}$ denotes the vector potential for the Coriolis parameter, 
$c_v$ is specific heat at constant volume, and $\Pi$ is the well-known Exner function, given by
\[
\Pi = \left(
\frac{p}{p_0}
\right)^{R/c_p},
\]
in which  $p_0$ is a reference pressure level, $c_p$ is specific heat at constant pressure and $R=c_p-c_v$ is the gas
constant. In these variables, the equation of state for an ideal gas in 2D with $n$ degrees of freedom is expressed as
\[
\Pi = \left(\frac{RD\theta}{p_0}\right)^{R/c_p} = \left(\frac{RD\theta}{p_0}\right)^{1-\gamma^{-1}}
= \left(\frac{RD\theta}{p_0}\right)^{2/(n+2)}
\,,
\]
since the specific heat ratio $\gamma=c_p/c_v = 1+2/n$ for ideal gases whose molecules possess $n$ degrees of freedom,   
comprising spatial translations, rotations and oscillations. Ideal gases of diatomic molecules in 3D have three translations, 
plus rotations and oscillations, so $n=5$ and $\gamma=7/5$ in that case.

Accordingly, the Lagrangian in \eqref{CAM-Lag} specialises for a ideal gas in 2D to 
\begin{align}
\ell\big[\bs{u}, D,\theta \big] = 
\int_{\Omega} \frac{D}{2} |\bs{u}|^2 
+ D \bs{u}\cdot\bs{R}(\bs{x})
- \kappa (D\theta)^\alpha  \,dx\,dy,
\label{2DAM-Lag}
\end{align}
where the constants $(\kappa,\alpha)$ take the values, 
\[
\kappa=c_v (R/p_0)^{2/(n+2)} 
\quad\hbox{and}\quad
\alpha = \frac{n+4}{n+2} = 1 + \frac{2}{n+2} = 2 -\gamma^{-1}
\,.
\]

We obtain the following variational derivatives of the Lagrangian in \eqref{CAM-Lag},
\begin{align}
\begin{split}
\frac1D\dede{\ell}{\bs{u}}  &=  \bs{u} + \bs{R}(\bs{x})
\,, \\
\dede{\ell}{D} & =  \frac{1}{2} |\bs{u}|^2 +  \bs{u}\cdot\bs{R}(\bs{x}) - \kappa\alpha(D\theta)^{\alpha-1}\theta \,, \\
\dede{\ell}{\theta}  &= -\, \kappa\alpha(D\theta)^{\alpha-1}D
\,.
\end{split}
\label{CAM-Lag-vars}
\end{align}

Substitution of the variational derivatives (\ref{CAM-Lag-vars}) of
the Lagrangian (\ref{CAM-Lag}) into the stochastic Euler-Poincar\'e equations
with SALT in \eqref{eq:stochep} gives the SAM system
 \begin{align}
\begin{split}
\left({\rmd} + \mathcal{L}_{{\rmd}x_t}\right) \left( \big( \bs{u} + \bs{R}(\bs{x}) \big) \cdot d\bs{x} \right)
 &=  \,{d\,}\left(\frac{1}{2} |\bs{u}|^2 +  \bs{u}\cdot\bs{R}(\bs{x}) -(\kappa/\gamma) (D\theta)^{\alpha} \right) dt
\,,\\
({\rmd} + \mathcal{L}_{{\rmd}x_t}) (D\theta \,dxdy) & =  0
\,.
\end{split}
\label{EPSD-geom-CAM}
\end{align}
Consequently, we recover the Kelvin circulation conservation law for the SAM in the compact form
\begin{align}
{\rmd}\oint_{c_t}\hspace{-2mm} 
 \big( \bs{u} + \bs{R}(\bs{x}) \big) \cdot d \bs{x} 
=
0 
\,,
\label{SAM-circons}
\end{align}
where $c_t=g_t(c_0)$ for all $t\in [0,T]$ denotes the push-forward by the SAM flow of the initial $c_0$,
a compact embedded one-dimensional smooth submanifold of $M$.

\begin{corollary}\label{CAM-PV}
  The system of SAM equations in \eqref{EPSD-geom-CAM} implies that
  potential vorticity $q:= \omega / (D\theta)$ is conserved along flow lines of the stochastic fluid
  trajectory ${\rmd}x_t$,
\begin{align}
{\rmd} q +{\rmd}\bs{x}_t \cdot\nabla q = 0
\quad\hbox{with potential vorticity }q:= \omega / (D\theta) 
\quad\hbox{and} \quad
\omega := \bs{\widehat{z}}\cdot{\rm curl}\big( \bs{u} + \bs{R}(\bs{x}) \big) 
\,.
\label{SAM-vorticitythm}
\end{align}
In turn, this formula implies that the following infinite family of integral quantities is conserved
\begin{align}
C_\Phi = \int_\Omega (D\theta)\Phi(q)\,dxdy
\,,
\label{SAM-enstrophy-thm}
\end{align}
for any differentiable function $\Phi$.
\end{corollary}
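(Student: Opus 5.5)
The plan is to follow the deterministic RSW argument given earlier in the paper and carry it over verbatim to the stochastic transport operator $({\rmd} + \mathcal{L}_{{\rmd}x_t})$. That operator obeys the same Leibniz rule on wedge and tensor products and commutes with the exterior derivative. Both facts follow from the KIW formula \eqref{eq:kiwformula}, because pull-back by $g_t$ is natural with respect to $d$ and $\wedge$, and Stratonovich calculus obeys the ordinary product rule. First, apply the exterior derivative $d$ to the first equation of \eqref{EPSD-geom-CAM}. The right-hand side is exact, so $d^2=0$ kills it. Commuting $d$ with $\mathcal{L}_{{\rmd}x_t}$ then gives
\[
\left({\rmd} + \mathcal{L}_{{\rmd}x_t}\right)\big(\omega\,dx\wedge dy\big)=0,
\]
where, as in the RSW exercise, $d\big((\bs{u}+\bs{R})\cdot d\bs{x}\big) = \bs{\widehat{z}}\cdot{\rm curl}(\bs{u}+\bs{R})\,dx\wedge dy=\omega\,dx\wedge dy$.

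Next, write the vorticity two-form as $\omega\,dxdy = q\,(D\theta\,dxdy)$ with $q=\omega/(D\theta)$. This requires $D\theta$ to be nonvanishing; I will assume this initially, and it is preserved because $D\theta\,dxdy$ is advected. Apply the product rule for $({\rmd}+\mathcal{L}_{{\rmd}x_t})$ acting on a scalar times a density:
\[
0=\big(({\rmd}+\mathcal{L}_{{\rmd}x_t})q\big)(D\theta\,dxdy) + q\,({\rmd}+\mathcal{L}_{{\rmd}x_t})(D\theta\,dxdy).
\]
The second term vanishes by the second equation of \eqref{EPSD-geom-CAM}. Dividing by the nonvanishing density leaves ${\rmd}q+\mathcal{L}_{{\rmd}x_t}q=0$, which for a scalar is ${\rmd}q+{\rmd}\bs{x}_t\cdot\nabla q=0$. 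An equivalent route is to pull back by $g_t$: both $g_t^*(\omega\,dxdy)$ and $g_t^*(D\theta\,dxdy)$ are constant in time, so their ratio $g_t^*q$ is constant as well. This version avoids the product rule and uses only KIW.

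For the Casimirs, note that $\Phi(q)\,(D\theta\,dxdy)$ is also advected, by the chain rule for the scalar $\Phi(q)$ (Stratonovich, so no correction term) together with the product rule. Hence $({\rmd}+\mathcal{L}_{{\rmd}x_t})\big(\Phi(q)D\theta\,dxdy\big)=0$. Integrating over $\Omega$ gives ${\rmd}C_\Phi=-\int_\Omega \mathcal{L}_{{\rmd}x_t}(\Phi(q)D\theta\,dxdy)=-\int_\Omega d\big({\rmd}x_t\contract \Phi(q)D\theta\,dxdy\big)$ by Cartan's formula. This reduces to a boundary integral that vanishes under the tangency conditions $\bs{u}\cdot\hat{\bs n}=0$ and $\xi_i\cdot\hat{\bs n}=0$ on $\partial\Omega$, or trivially in the periodic case. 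Alternatively, pull the integral back to the fixed domain by $g_t$ and observe that the integrand is time independent.

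The main obstacle is rigour in the stochastic calculus rather than algebra. One must justify commuting $d$ with the stochastic Lie derivative, and the Stratonovich product and chain rules for form-valued semimartingales, with enough spatial regularity for the KIW formula of \cite{de2020implications}. One must also make the nonvanishing assumption on $D\theta$ explicit. I would handle all of this by appealing to the pull-back formulation, where everything becomes a statement about time-independent quantities on the reference configuration.
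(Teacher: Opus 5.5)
Your argument is correct and follows essentially the route the paper intends. The corollary is stated without a separate proof, but it is the stochastic transcription of the paper's RSW exercise: apply $d$ to the Kelvin-form equation, then use $d^2=0$ and the commutation of $d$ with $\mathcal{L}_{{\rmd}x_t}$ to get $({\rmd}+\mathcal{L}_{{\rmd}x_t})(\omega\,dx\wedge dy)=0$, and divide by the advected density $D\theta\,dx\wedge dy$ via the product rule, with the Casimirs following from advection of $\Phi(q)\,D\theta\,dx\wedge dy$. The nonvanishing assumption on $D\theta$, the tangency or periodic boundary conditions, and the pull-back variant (which handles merely differentiable $\Phi$ without invoking a Stratonovich chain rule) are sensible details that the paper leaves implicit.
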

\begin{corollary}
The \emph{Deterministic} AM equations in (\ref{EPSD-geom-CAM}) with $\bs{\xi}_i\to0$ are Hamiltonian, with conserved energy\footnote{The energy $E$ in \eqref{SAM-erg} is not conserved for $\bs{\xi}_i\ne0$, though, because $\bs{\xi}_i\ne0$ injects the explicit time dependence of stochastic Lagrangian trajectories into the Euler-Poincar\'e variations \eqref{EPstoch-var} in Hamilton's principle.}
\begin{align}
E = \int_{\Omega} \frac{D}{2} |\bs{u}|^2 + \kappa (D\theta)^\alpha \,dxdy.
\label{SAM-erg}
\end{align}
The Lie-Poisson Hamiltonian structure of deterministic fluid equations is discussed in \cite{HMR1998} from the viewpoint of the Euler-Poincar\'e of Hamilton's principle for fluid dynamics. 
\end{corollary}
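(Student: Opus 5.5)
We will prove the corollary by passing from the Euler--Poincar\'e form \eqref{EPSD-geom-CAM} with $\bs{\xi}_i\to0$ to a Lie--Poisson form through the deterministic Legendre transform \eqref{Legendre-xform}. Energy conservation will then follow from antisymmetry of the bracket, and we will confirm it by a direct computation.

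\textbf{Step 1 (Legendre transform).} From \eqref{CAM-Lag-vars}, the momentum is $m:=\delta\ell/\delta\bs{u}=D(\bs{u}+\bs{R})$. This relation is invertible, $\bs{u}=m/D-\bs{R}$, so the Lagrangian \eqref{2DAM-Lag} is hyperregular. We set
\[
h(m,D,\theta)=\langle m,\bs{u}\rangle-\ell
=\int_\Omega \frac{D}{2}\Big|\frac{m}{D}-\bs{R}\Big|^2+\kappa(D\theta)^\alpha\,dx\,dy .
\]
The $D\,\bs{u}\cdot\bs{R}$ terms cancel, so $h$ equals the energy $E$ in \eqref{SAM-erg} when written in the original variables. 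Differentiating at fixed $(m,D,\theta)$ gives
\[
\delta h/\delta m=\bs{u},\qquad
\delta h/\delta D=-\tfrac12|\bs{u}|^2-\bs{u}\cdot\bs{R}+\kappa\alpha(D\theta)^{\alpha-1}\theta=-\delta\ell/\delta D,\qquad
\delta h/\delta\theta=-\delta\ell/\delta\theta .
\]
The identities $\delta h/\delta a=-\delta\ell/\delta a$ for $a=(D,\theta)$ are those asserted after \eqref{Legendre-xform}.

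\textbf{Step 2 (Lie--Poisson form).} We substitute these derivatives into the semidirect-product Hamiltonian operator \eqref{Ham-matrix-det}, taking the advected quantities to be the density $D\,dx\,dy$ and the scalar $\theta$, exactly as in Example~\ref{EB-Ham-example2.1} but without the pressure constraint. Since $\mathrm{ad}^*_{\bs{u}}m=\mathcal{L}_{\bs{u}}m$ for one-form densities, the $m$-row gives
\[
\partial_t m+\mathcal{L}_{\bs{u}}m=-\tfrac{\delta h}{\delta D}\diamond D-\tfrac{\delta h}{\delta\theta}\diamond\theta=\tfrac{\delta\ell}{\delta a}\diamond a .
\]
This is \eqref{eq:stochep} with $\bs{\xi}_i=0$, and the remaining rows give the advection equations \eqref{eq:stochadv}. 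Dividing by $D$ and using the continuity equation recovers the first line of \eqref{EPSD-geom-CAM}.

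The bracket is skew-symmetric under $L^2$ integration by parts because of the definition \eqref{diamond-def} of the diamond operation. It satisfies the Jacobi identity because it is the Lie--Poisson bracket on the dual of the semidirect-product Lie algebra $\mathfrak{X}\circledS(\mathrm{Den}\oplus C^\infty)$, by the same argument as Proposition~\ref{LPB-EB5}. Hence the deterministic AM equations are Hamiltonian.

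\textbf{Step 3 (energy).} By skew-symmetry, $dE/dt=\{h,h\}=0$. As an independent check we compute directly. The momentum equation can be written as
\[
\partial_t\bs{u}+(\bs{u}\cdot\nabla)\bs{u}+2\Omega\,\hat{\bs z}\times\bs{u}=-\nabla\big(\kappa\alpha(D\theta)^{\alpha-1}\big)\,\theta+\dots,
\]
and the density $D\theta\,dx\,dy$ satisfies a continuity equation. The Coriolis term drops out of $\int D\bs{u}\cdot\partial_t\bs{u}$ because $\bs{u}\cdot(\hat{\bs z}\times\bs{u})=0$. The kinetic transport term and the internal-energy flux then cancel against $\partial_t\big(\kappa(D\theta)^\alpha\big)$ after integration by parts, provided $\bs{u}\cdot\hat{\bs n}=0$ on $\partial\Omega$ or the domain is periodic.

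\textbf{Main obstacle.} We expect the main difficulty to lie in the thermodynamic bookkeeping. The force term must depend on $D$ and $\theta$ only through $D\theta$ for the diamond terms to combine into a gradient times $\theta$. Matching the coefficient in \eqref{EPSD-geom-CAM} (written there as $\kappa/\gamma$) against $\kappa\alpha$ from \eqref{CAM-Lag-vars} needs care, and we would first check it on the combined advected density $D\theta$. The statement's footnote also needs a justification: for $\bs{\xi}_i\ne0$ the Hamiltonian \eqref{eq:reducedstochlegendre} acquires the term $\sum_i\langle m,\xi_i\rangle\circ dW^i_t$, so $dE=\{E,\sum_i\langle m,\xi_i\rangle\}\circ dW^i_t$. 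This is generically nonzero, which is why energy conservation holds only in the deterministic case.
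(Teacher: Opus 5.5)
Your route (deterministic Legendre transform giving $h=E$, the semidirect-product Lie--Poisson bracket, and $\{h,h\}=0$) is what the paper's appeal to \cite{HMR1998} amounts to. The paper offers no further argument, and your Steps 1 and 3 are computed correctly.

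The gap is the claim in Step 2 that the Lie--Poisson equation, divided by $D$, ``recovers the first line of \eqref{EPSD-geom-CAM}''. It does not. The mismatch is also not the coefficient $\kappa/\gamma$ versus $\kappa\alpha$ that your last paragraph anticipates. Inserting \eqref{CAM-Lag-vars} into the diamond terms gives
\[
(\partial_t+\mathcal{L}_u)\big((\bs{u}+\bs{R})\cdot d\bs{x}\big)
= d\big(\tfrac12|\bs{u}|^2+\bs{u}\cdot\bs{R}\big)-\kappa\alpha\,\theta\,d\big((D\theta)^{\alpha-1}\big)
= d\big(\tfrac12|\bs{u}|^2+\bs{u}\cdot\bs{R}\big)-\tfrac1D\,dp\,,
\qquad p=(\alpha-1)\kappa(D\theta)^\alpha .
\]
This force is baroclinic: for $\alpha\neq1$ it is not even closed unless $d\theta\wedge d(D\theta)=0$. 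By contrast, \eqref{EPSD-geom-CAM} displays an exact differential on the right-hand side, so no choice of constant reconciles the two. For the system as displayed (with $D$ and $\theta$ advected) one finds
\[
\frac{dE}{dt}=\kappa\int_\Omega (D\theta)^\alpha\Big(\gamma^{-1}\,\mathrm{div}(D\bs{u})-(\alpha-1)\,\mathrm{div}\,\bs{u}\Big)\,dx\,dy ,
\]
which is generically nonzero. For example, in the isothermal case $\alpha=\gamma=1$ the true Euler--Poincar\'e force vanishes, while the displayed one is $-\kappa\nabla(D\theta)$. Your Step 3 silently uses the correct force $-\theta\nabla\big(\kappa\alpha(D\theta)^{\alpha-1}\big)$, so Steps 2 and 3 refer to different systems.

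To close the argument, do the following:
\begin{enumerate}
\item State the corollary for the Euler--Poincar\'e system actually generated by \eqref{2DAM-Lag}, equivalently the Lie--Poisson system with Hamiltonian $h=E$.
\item Drop the claim that this system coincides with the first line of \eqref{EPSD-geom-CAM}.
\item Record that this line, and with it the circulation law \eqref{SAM-circons}, does not follow from \eqref{CAM-Lag-vars}.
\end{enumerate}
With that correction, your skew-symmetry argument and your direct computation both establish conservation of \eqref{SAM-erg}.
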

\begin{remark}
The system of SAM equations in \eqref{EPSD-geom-CAM} may also be written equivalently in \emph{standard} fluid dynamics notation as
\begin{align}
\begin{split}
{\rmd} \bs{u} + {\rmd}\bs{x}_t \cdot\nabla \bs{u} - {\rmd}\bs{x}_t  \times 2 \bs{\Omega}
 + \sum_{i=1}^M \Big( u_j  \nabla  \xi_i^j  +  \nabla \big(\bs{\xi}_i  \cdot \bs{R}\big)\Big) \circ dW_t^i 
&=  -\,(\kappa/\gamma) \nabla (D\theta)^\alpha  dt
\,, \\
{\rmd} (D\theta) + \nabla  \cdot \big( D\theta\, {\rmd}\bs{x}_t  \big) & =  0
\,.\end{split}
\label{Eady-EPSDeqns2}
\end{align}
\end{remark}
If one assumes low Mach number, so that $D\approx1$, and then adds viscosity and diffusion of heat, this set of equations will reproduce the SALT atmospheric model in equations \eqref{Atmos-SALT-Kel} and \eqref{Stoch_THETA_A} when one also sets $\alpha=1=\gamma$, which holds for the isothermal case of the ideal gas. 

\subsection{2D LA-SALT Stochastic Atmospheric Model (LA-SAM) }\label{app-LASALT}

\paragraph{Summary.}
Here we provide a geometric derivation of the Lagrangian Averaged Stochastic Advection by Lie Transport (LA-SALT) atmospheric model (LA-SAM) for the isothermal ideal gas discussed in section \ref{sec-SAM}. \\  \index{LA-SALT!atmospheric model (LA-SAM)}

The simplest way to derive the 2D LA-SALT Stochastic Atmospheric Model (LA-SAM) is to alter the \emph{Stratonovich} stochastic Lagrangian trajectory in equations \eqref{EPSD-geom-CAM} to the \emph{Stratonovich} LA-SALT stochastic path, which reads
 
\begin{equation}
\rmd x_{t}^{a}\rightarrow \ {{\rmd}\mathbf{X}_{t}^{a}} 
:= {\mathbb{E}[\mathbf{u}^{a}]}(\bs{x},t)dt + \sum_{i}\xi _{i}^{a}(\bs{x})\circ dW_{i}(t).
\label{LA-SALTpath}
\end{equation}

The LA-SAM system then emerges in \emph{Stratonovich} stochastic geometric form as
 \begin{align}
\begin{split}
\left({\rmd} + \mathcal{L}_{{{\rmd}\mathbf{X}_{t}^{a}}} \right)  
\big( \bs{u} + \frac{1}{Ro^{a}}\bs{R}(\bs{x}) \big) \cdot d\bs{x}  
 &=  \,{d\,}\left(\frac{1}{2} |\bs{u}|^2 +  \frac{1}{Ro^{a}}\bs{u}\cdot\bs{R}(\bs{x}) 
 -\frac{1}{Ro^{a}}(\kappa/\gamma) (D\theta)^{\alpha} \right) dt
\,,\\
\left({\rmd} + \mathcal{L}_{{{\rmd}\mathbf{X}_{t}^{a}}}\right) (D\theta \,dxdy) & =  0
\,,
\end{split}
\label{EPSD-geom-LASAM-Strat}
\end{align}
where $\mathcal{L}_{{{\rmd}\mathbf{X}_{t}^{a}}}$ denotes Lie derivative with respect to \emph{Stratonovich} LA-SALT stochastic path in \eqref{LA-SALTpath}.
The stochastic geometric LA-SAM system in equation \eqref{EPSD-geom-LASAM-Strat} is given in its corresponding \emph{It\^o} form by
 \begin{align}
\begin{split}
\left({\rmd} + \mathcal{L}_{{{\rmd}\mathbf{\wh{X}}_{t}^{a}}}\right)
\Big( \big( \bs{u} + \frac{1}{Ro^{a}}\bs{R}(\bs{x}) \big) \cdot d\bs{x} \Big) 
& - \frac12  \sum_{i} \mathcal{L}_{\xi _{i}^{a}}\left(\mathcal{L}_{\xi _{i}^{a}}
\Big( \big( \bs{u} + \frac{1}{Ro^{a}}\bs{R}(\bs{x}) \big) \cdot d\bs{x} \Big) \right)
\\&=  \,{d\,}\left(\frac{1}{2} |\bs{u}|^2 +  \frac{1}{Ro^{a}}\bs{u}\cdot\bs{R}(\bs{x}) 
 -\frac{1}{Ro^{a}}(\kappa/\gamma) (D\theta)^{\alpha} \right) dt
\,,\\
\left({\rmd} + \mathcal{L}_{{{\rmd}\mathbf{\wh{X}}_{t}^{a}}}\right) (D\theta \,dxdy) 
& - \frac12  \sum_{i} \mathcal{L}_{\xi _{i}^{a}}\left(\mathcal{L}_{\xi _{i}^{a}} (D\theta \,dxdy) \right)
 =  0
\,,
\end{split}
\label{EPSD-geom-LASAM-Ito}
\end{align}
where the \emph{It\^o} Lagrangian trajectory for the LA-SALT, reads
 
\begin{equation*}
{{\rmd}\mathbf{\wh{X}}_{t}^{a}} 
:= {\mathbb{E}[\mathbf{u}^{a}]}(\bs{x},t)dt + \sum_{i}\xi _{i}^{a}(\bs{x})dW_{i}(t).
\end{equation*}

In standard notation for fluid dynamics and with $\alpha=1=\gamma$, the \emph{Stratonovich} LA-SAM equations in \eqref{EPSD-geom-LA-SAM-Strat} become

\begin{align}
d\mathbf{u}^{a}+({\rmd\mathbf{X}_{t}}^{a}\cdot \nabla )\mathbf{u}
^{a}+\frac{1}{Ro^{a}}{\rmd\mathbf{X}_{t}}^{a\bot }
& +{
\sum_{i}\Big(u_{j}^{a}\nabla \xi _{i}^{j}+\frac{1}{Ro^{a}}\nabla \Big(R_{j}
\mathbf{(x})\xi _{i}^{j}\Big)\Big)\circ dW_{t}^{i}} 
 \notag\\
& \hspace{-22mm}{+\,u_{j}^{a}\nabla \mathbb{E}[{u^{a}}^{j}]dt+
\frac{1}{Ro^{a}}\nabla (\mathbb{E}[{\mathbf{u}}^{a}]\cdot \mathbf{R})dt}+
\frac{1}{Ro^{a}}\nabla \theta ^{a}\,dt=\frac{1}{Re^{a}}\triangle \mathbf{u}^{a}\,dt\,,
 \notag\\
& d\theta ^{a}+{\rmd\mathbf{X}_{t}}^{a}\cdot \nabla \theta
^{a}=-\gamma (\theta ^{o}-\theta ^{a})\,dt + \frac{1}{Pe^{a}}\triangle \theta ^{a}\,dt\,.
\label{COUPLED_SWE_T_A_STOCH_LA-Strat}
\end{align}
In the previous equation, we have used the continuity equation to eliminate the areal density $D$.

Likewise, in standard notation for fluid dynamics and with $\alpha=1=\gamma$, the \emph{It\^o} LA-SAM equations in \eqref{EPSD-geom-LA-SAM-Ito} become

\begin{align}{\rmd\mathbf{\wh{X}}_{t}}
& d\mathbf{u}^{a}+({\rmd\mathbf{\wh{X}}_{t}}^{a}\cdot \nabla )\mathbf{u}
^{a}+\frac{1}{Ro^{a}}{\rmd\mathbf{\wh{X}}_{t}}^{a\bot }
 + {
\sum_{i}\Big(u_{j}^{a}\nabla \xi _{i}^{j}+\frac{1}{Ro^{a}}\nabla \Big(R_{j}
\mathbf{(x})\xi _{i}^{j}\Big)\Big) dW_{t}^{i}}  \nonumber \\
&+\frac12 \bigg[ \mathbf{\hat{z}}\times \xi \Big( {\rm div}\Big(\xi\,\big(\,\mathbf{\hat{z}}\cdot{\rm curl}\,(\,\mathbb{E}[{\mathbf{u}}^{a}] + \frac{1}{Ro^{a}}\mathbf{R}(\mathbf{x}) \big)\Big) \,\,\Big)  
- \nabla \bigg( \xi\cdot\nabla\Big(\xi \cdot\big(\mathbb{E}[{\mathbf{u}}^{a}] + \frac{1}{Ro^{a}}\mathbf{R}(\mathbf{x})\big) \Big)
\bigg)\bigg]dt
\nonumber \\& \hspace{22mm}{+\,u_{j}^{a}\nabla \mathbb{E}[{u^{a}}^{j}]dt+
\frac{1}{Ro^{a}}\nabla (\mathbb{E}[{\mathbf{u}}^{a}]\cdot \mathbf{R})dt}+
\frac{1}{Ro^{a}}\nabla \theta ^{a}\,dt = \frac{1}{Re^{a}}\triangle \mathbf{u}^{a}\,dt\,,
\label{COUPLED_SWE_VELOC_A_STOCH_LA-Ito} \\
& d\theta ^{a}+{\rmd\mathbf{\wh{X}}_{t}}^{a}\cdot \nabla \theta
^{a} - \frac12 \Big(\xi\cdot\nabla(\xi\cdot\nabla \theta^a)  \Big)dt 
=-\gamma (\theta ^{o}-\theta ^{a})\,dt +  \frac{1}{Pe^{a}}\triangle \theta ^{a}\,dt
\,.
\label{COUPLED_SWE_T_A_STOCH_LA-Ito}
\end{align}
In the final equation, we have again used the continuity equation to eliminate the areal density $D$.



\bibliography{biblio}
\bibliographystyle{alpha}

\printindex

\end{document}